\theoremstyle{definition}
\newtheorem{definition}{Definition}[section]   
\newtheorem{example}[definition]{Example}
\theoremstyle{plain}
\newtheorem{theorem}[definition]{Theorem}
\newtheorem{prop}[definition]{Proposition}
\newtheorem{lemma}[definition]{Lemma}
\newtheorem{cor}[definition]{Corollary}
\theoremstyle{remark}
\newtheorem{remark}[definition]{Remark}
\newtheorem{notation}[definition]{Notation}
\numberwithin{equation}{section}
\newcommand{\dd}{\textrm{d}}
\newcommandx{\unsure}[2][1=]{\todo[linecolor=red,backgroundcolor=red!25,bordercolor=red,#1]{#2}}
\newcommandx{\change}[2][1=]{\todo[linecolor=blue,backgroundcolor=blue!25,bordercolor=blue,#1]{#2}}
\newcommandx{\info}[2][1=]{\todo[linecolor=OliveGreen,backgroundcolor=OliveGreen!25,bordercolor=OliveGreen,#1]{#2}}
\newcommandx{\improvement}[2][1=]{\todo[linecolor=Plum,backgroundcolor=Plum!25,bordercolor=Plum,#1]{#2}}
\newcommandx{\thiswillnotshow}[2][1=]{\todo[disable,#1]{#2}}
\newcommandx{\Tr}{\textrm{Tr}}
\newcommandx{\functionallycontrolled}{functionally controlled }
\newcommandx{\FunctionallyControlled}{Functionally Controlled }
\newcommandx{\Functionallycontrolled}{Functionally controlled }
\newcommand{\controllingfunctions}{controlling functions }
\newcommand{\controllingfunctionsnospace}{controlling functions}
\title{{Signature Methods in Stochastic Portfolio Theory}}
\author{Christa Cuchiero\thanks{
Vienna University, Department of Statistics and Operations Research, Data Science Uni Vienna, Kolingasse 14-16 1, A-1090 Wien, Austria, christa.cuchiero@univie.ac.at}
\and Janka M\"oller\thanks{Vienna University, Department of Statistics and Operations Research, Kolingasse 14-16, A-1090 Wien, Austria, janka.moeller@univie.ac.at
\newline
The authors gratefully acknowledge financial support through grant Y 1235 of the Austrian Science Fund.  }
}
\date{\today}
\begin{document}
	\maketitle
	
	\begin{abstract}

In the context of stochastic portfolio theory we introduce a novel class of portfolios which we call linear path-functional portfolios. These are portfolios which are determined by certain transformations of linear functions of a collections of feature maps that are non-anticipative path functionals of an underlying semimartingale. As main example for such feature maps we consider the signature of the (ranked) market weights. We prove that these portfolios are universal in the sense that every continuous, possibly path-dependent, portfolio function of the market weights can be uniformly approximated by signature portfolios. We also show that signature portfolios can approximate the growth-optimal portfolio in several classes of non-Markovian market models arbitrarily well and  illustrate numerically that the trained signature portfolios are remarkably close to the theoretical growth-optimal portfolios. 
Besides these universality features, the main numerical advantage lies in the fact that several optimization tasks like maximizing (expected) logarithmic wealth or mean-variance optimization within the class of linear path-functional portfolios reduce to a convex quadratic optimization problem, thus making it computationally highly tractable.
We apply our method also to real market data based on several indices. Our results point towards out-performance on the considered out-of-sample data, also in the presence of transaction costs.

\end{abstract}

\noindent\textbf{Keywords:}  signature methods, linear feature maps in machine learning, stochastic portfolio theory, growth-optimal portfolio, portfolio selection, convex quadratic optimization\\
\noindent \textbf{MSC (2020) Classification:} 91G10, 60L10, 90C20, 62P05

	\tableofcontents
	
	\section{Introduction}

    Portfolio theory and portfolio optimization are a core topic in mathematical finance and have been the subject of vivid research for decades. The most prominent example is of course the mean-variance optimization problem as formulated by  Harry Markowitz, the founder of \emph{Modern Portfolio Theory} (see \cite{markovitz1959portfolio}). His framework takes into account risk-preferences of investors while offering a tractable form of optimization, namely being convex and quadratic. Although the choice of constant portfolio weights is arguably oversimplified, the elegance of the Markowitz model can not be denied. Ever since, researchers have been striving for more and more realistic models with fewer limiting assumptions while hoping to preserve tractability. 
    
    In the spirit of coming up with more realistic assumptions, for instance  without the need of claiming a specific form of the hardly measurable drift, Robert Fernholz developed \emph{Stochastic Portfolio Theory} (SPT) (see~\cite{Fern02} and also~\cite{Fernholz_Karatzas}). 
    A key feature of SPT, again
    in the realm of more realistic modeling assumptions,  is 
    the rather relaxed no-arbitrage condition. Indeed,
     the price process is only assumed to be a (continuous) semimartingale satisfying 
    \emph{No Unbounded Profit with Bounded Risk} and not necessarily the stronger  condition of 
     \emph{No free lunch with vanishing risk} (see \cite{delbaen1994general}).
     It is also central that the portfolios' performance is measured with respect to a benchmark portfolio, which is usually chosen to be the market portfolio, i.e.~large indices like the S\&P500. At least since the introduction of very liquid ETFs replicating market indices the out-performance of the market is indeed a major challenge for investors. In this context Fernholz introduced \emph{functionally generated portfolios} and derived the so-called \emph{master formula} describing the relative wealth process of such portfolios and allowing for the detection for relative arbitrages with respect to the market portfolio. These portfolios are constructed via the log-gradient of a function of the market weights. While appreciating the beauty of the framework and also its robustness in view of  certain optimal worst case long-run growth rates
     (see \cite{kardaras2021ergodic, itkin2022ergodic}), one may point out that the log-gradient form  as well as the neglect of  past information is somewhat restrictive.
    To reduce these limitations  various generalizations of the original framework have been developed, e.g., towards  replacing the market weights by market-to-book ratios~\cite{Kim_22}, considering trading strategies generated by Lyapunov functions~\cite{Karatzas_Ruf}, adding extra information in the form of a finite variation process~\cite{Ruf_Xie_19} and incorporating past information in a path-dependent setting~\cite{Schied_18, Karatzas_Kim_20}. In this context it is worth mentioning that the latter two papers do not work in a stochastic setting but rather in a path-wise one, using functional It\^o-calculus as in~\cite{Cont_2010, Dupire_2019} based on F\"ollmer integration. An alternative path-wise formulation allowing for more general strategies (beyond gradient type) and building on the theory of rough path has been studied in~\cite{Promel_1}. \\
    
    In this paper, we consider for simplicity the stochastic setting of continuous semimartingales (even though a path-wise formulation in the spirit of~\cite{Promel_1} would also be possible), take inspiration from the functionally generated portfolios and generalize them to what we call \emph{path-functional portfolios}. These are portfolios constructed via an auxiliary portfolio $\tau$ and non-anticipative path-functionals with a general semimartingale as  input. In doing so, we get rid of any gradient-type form of the functionals and allow for a general benchmark portfolio $\tau$ and past information of the market or other exogenous relevant factors in form of a general continuous semimartingale. The choice of continuous semimartingales is here only for the ease of exposition, but could be replaced by other (predictable stochastic) processes. {This generalization can be achieved by using rough path theory, where similar universal approximation results hold for more general processes, see e.g.~\cite{Cuchiero_Primavera_Svaluto-Ferro2022} for rough paths with jumps.}
    
    One of our main interest concerns \emph{tractable portfolio optimization} in the framework of SPT using such path-functional portfolios. To do so, we focus on \emph{linear path-functional portfolios}, which are constructed via a linear combination of possibly path-dependent feature maps and constant optimization parameters. 
    
    This allows us to incorporate modern machine learning techniques for time-series data, in particular \emph{signature methods} (see e.g.~\cite{Kalsi,Bayer20, Blanka_23} in the area of optimal control,~\cite{Buhler,NSSBWL:21} on applications to data generation/simulation,~\cite{Lyons2019,Cuchiero_Svaluto-Ferro_Gazzani, Cuchiero_Primavera_Svaluto-Ferro2022} for topics in financial modelling,~\cite{Compagnoni_2022,Cuchiero_Teichmann,cuchiero2021expressive, Gambara_2023} on randomization techniques and their application,~\cite{levin13, kiraly19,chevyrev22,andres2022signature,cuchiero2023joint, Gambara_2022} for other applications in mathematical finance and machine learning), but also other tools, like random neural networks~\cite{Herrera_Krach, Huang2006, gallicchio2021,CAO2018278}.
    In the current paper, our focus lies on signature methods, which play an important role in rough path theory (see e.g.~\cite{Lyons1998, Friz_Victoir}) and whose appeal  stems from the \emph{(global) universal approximation theorem.} It states that linear functions on the signature can approximate continuous (with respect to certain variation distances) path-functionals arbitrarily well on compact sets of paths or even globally when using the setup of weighted spaces (see \cite{Cuchiero_Schmocker_Teichmann_2023}). This result then motivates the notion of \emph{signature portfolios,} which are  linear path-functional portfolios with feature maps
    being elements of the signature.
    Indeed, with this choice we can approximate generic path-functional portfolios including functionally generated portfolios and the growth-optimal portfolio in a large class of non-Markovian markets, arbitrary well.
    
    Despite this versatility, signature portfolios and linear path-functional portfolios in general lead to remarkably tractable optimization tasks. More concretely, we show that 
    maximizing the (expected) logarithmic wealth as well as the mean-variance optimization of portfolio returns lead to \emph{convex quadratic optimization problems.} We would like to highlight that the Markowitz portfolio itself is included in the class of signature portfolios and moreover, that any chosen benchmark portfolio $\tau$ can always be \emph{attained} by signature portfolios constructed via $\tau$. 
    
    In addition to our theoretical results, we optimize signature portfolios in numerical experiments using simulated and real market data.
    By means of simulated data generated from the Black-Scholes model, volatility stabilized models \cite{Karatzas_Vol_Stab} and so-called signature market models, we
    illustrate that the trained signature portfolios (of small degree) are  remarkably
close to the theoretical growth-optimal portfolios.  

In the real market situation   we additionally incorporate transaction costs in our optimization by proposing a regularization  under which the optimization problems remain convex and quadratic. To deal with high-dimensional market indices such as the NASDAQ and S\&P500, we use dimension reduction techniques leading to what we call randomized signature portfolios and JL-signature portfolios, where the former rely on randomized signatures (see~\cite{Gambara_2022, Compagnoni_2022, Cuchiero_Teichmann, cuchiero2021expressive} and also \cite{cirone2023neural} for a connection to neural signature kernels), while the latter are based on an explicit Johnson-Lindenstrauss projection. Note that we do not reduce the dimension of the underlying process, but only of its signature. To make the computation of the JL-signature portfolios feasible, we present a memory efficient algorithm to do so. We train the portfolios in both a log-relative wealth and mean-variance optimization in the NASDAQ, the SMI and  S\&P500 markets.  We use randomized- and JL-signature portfolios in the high-dimensional markets (NASDAQ and S\&P500) and signature portfolios for the SMI.
In the NASDAQ market the portfolios are  computed from the 
 ranked market weights, while in the other two cases the name-based market weights are used.
    We incorporate proportional transaction costs of 1\% and 5\% respectively in the name-based markets. 
    In almost all configurations, our trained portfolios outperform the market portfolios during the out-of-sample period, even under 5\% of proportional transaction costs. 
    It is worth noting that in the setting with transaction costs we still re-balance our portfolios daily whereas the market portfolio \emph{does not} pay any transaction costs. \\

    To provide more context to our findings, we highlight two papers which are particularly related to our research, namely~\cite{Wong} and~\cite{Blanka_23}. 
    In~\cite{Wong}, the authors study portfolio optimization in SPT over 
    a family of rank-based functionally generated portfolios parameterized by an exponentially concave function. Their objective is to maximize the 
    relative logarithmic growth rate, which in their parameterization leads to a convex optimization problem.  In their empirical study
they use (ranked) market data of the 100 largest US stocks, which they manage to out-perform during the out-of-sample period with their trained portfolios. As they do not include transaction costs in the learning procedure (which however could be done by adopting e.g. similar transaction cost treatments as ours), the out-performance does no longer work with 0.45\% of proportional transaction costs.
This could also be related to the fact that the considered portfolios are long only functionally generated and thus a rather small class. Indeed,  the class of portfolios we consider is more general in several respects: we allow for short-selling, the inclusion of information from the past and from exogenous signals as well as for a general benchmark portfolio $\tau$ in the construction of the portfolios. At the same time  the portfolios of~\cite{Wong} can be approximated
arbitrarily well using signature portfolios, which is a result of the  universal approximation theorem. 

    Very recently, the work of~\cite{Blanka_23} appeared, where the authors study mean-variance optimization of the wealth process using signature methods. In contrast to our multiplicative setting, they consider an additive approach where the trading strategies correspond to numbers of shares, which are then directly 
    parameterized  as linear functions on the signature. This means in particular that the strategies are only self-financing if a bank-account is included, while in our case the strategies are always self-financing \emph{without} a bank-account, since we use portfolio weights. Even though the formulation of ~\cite{Blanka_23}, where the \emph{wealth process} itself is the quantity of interest,
    is the standard one in the literature of the continuous-time mean-variance portfolio selection problem (see e.g.~\cite{zhou2000continuous}), it differs from the original approach of Markowitz \cite{markovitz1959portfolio}, where mean-variance optimization of the \emph{returns} was actually the objective. The latter is  advantageous when optimizing (in a modelfree setup) along the observed trajectory since the returns are more likely to be stationary than the assets themselves. For these reasons we use returns and a multiplicative approach, which allows in particular to include the \emph{classical Markowitz portfolio} (as a very special case) in our setup (which is not possible in \cite{Blanka_23}).
    Another difference is that we consider general \emph{linear} path-functionals and not only signature portfolios, while still obtaining a convex quadratic optimization problem.
Additionally we include transaction costs and dimension reduction techniques to be able to deal with 500-dimensional price processes, whereas in the numerical implementations of~\cite{Blanka_23} only  two-dimensional price  processes are considered.  Let us also 
point out that in~\cite{Blanka_23} the signal  is augmented with the two-dimensional lead-lag process to express It\^o-integrals as linear functions on the corresponding signature. We usually only augment with time, which -- under certain conditions on the quadratic variation -- still allows to get linear expressions for the It\^o-integrals (see Remark \ref{rem:simplification}).
 As the computation of the signature becomes expensive for higher dimensions, this might be of relevance. \\

The remainder of the article is structured as follows. In Section \ref{sec:Foundations} we recall the notion of the signature of continuous semimartingales, provide two versions of the universal approximation theorem, state the Johnson-Lindenstrauss Lemma and specify the financial market setting as well as classes of important portfolios.
Section~\ref{sec:pathfunctionalportfolios} is dedicated to the introduction of (linear) path-functional portfolios with signature portfolios as special case, while Section \ref{sec:approx_sig_port}
addresses their
approximation properties, in particular in view of approximating the growth optimal portfolio. In Section \ref{sec:optimizing_signature_port} the convex quadratic optimization tasks for the mean-variance and (expected) log-(relative) wealth are discussed and Section \ref{sec:transcost} explains the treatment of proportional transaction costs. Section \ref{sec:numerical_results}
	concludes the paper with  a presentation of the numerical results. Some proofs are gathered in the Appendix.
 
	\section{Preliminaries}\label{sec:Foundations}
	
	\subsection{The Signature of a Continuous Semimartingale}\label{subsec:Intro_Signature}

    In this section, we first introduce the signature of an $\mathbb{R}^n$-valued continuous semimartingale $X$ and then state the so-called \emph{universal approximation theorem} (see Theorem~\ref{thm:UAT_sig}),  which motivates the use of signature methods. In the following, we denote by $X$ an $\mathbb{R}^n$-valued continuous semimartingale defined on some filtered probability space $(\Omega, \mathcal{F}, (\mathcal{F}_t)_{t\in[0,T]})$ with $T>0$ finite. We start by introducing several preliminary notions needed to define  the signature. Similar introductions can for instance be found in~\cite{Lyons_2007, Friz_Victoir}. Readers who are neither interested in a mathematical introduction of the signature nor the associated approximation theorems of linear functions of the signature in general or signature portfolios in particular, may choose to skip Sections~\ref{subsec:Intro_Signature},~\ref{subsec:UAT},~\ref{subsec:GUAT} and~\ref{sec:approx_sig_port}.

    \begin{definition}[(Truncated) Extended Tensor Algebra and Group]\label{def:tensor_alg}
        The signature of $X$, denoted by $\mathbb{X}$, is an element of the extended tensor algebra $T((\mathbb{R}^n))$ defined as 
	    \begin{equation*}
		    T((\mathbb{R}^n)) := \left\{ (a_0, a_1, ...) | a_k \in (\mathbb{R}^n)^{\otimes k}, k\geq 0 \right\} 
	    \end{equation*}
	    with $(\mathbb{R}^n)^{\otimes 0}:= \mathbb{R}$. For $\mathbf{a}, \mathbf{b} \in T((\mathbb{R}^n))$ we define the following operations:
	    \begin{equation*}
	    \mathbf{a}+\mathbf{b} = (a_0+b_0, a_1 +b_1 ,\dots ),
	    \end{equation*}
	    \begin{equation*}
	    \mathbf{a}\otimes\mathbf{b} = (c_0, c_1 ,\dots ) \textrm{ for } c_k= \sum_{j=0}^k a_j \otimes b_{k-j},
	    \end{equation*}
	    \begin{equation*}
	    \lambda\mathbf{a}= (\lambda a_0, \lambda a_1 ,\dots ) \textrm{ for } \lambda\in\mathbb{R}.
	    \end{equation*}
	    Note that $T((\mathbb{R}^n))$ forms a real non-commutative unitial algebra under these operations with unit element $\mathbf{1}=(1, 0, 0, \dots)$. Moreover, $T_1((\mathbb{R}^n))\subset T((\mathbb{R}^n))$, defined as 
	    $$T_1((\mathbb{R}^n)) := \left\{\mathbf{a} \mid \mathbf{a} \in T((\mathbb{R}^n)) \textrm{ with } a_0 = 1\right\}$$
	    is a group under the operation $\otimes$ with unit element $\mathbf{1} = (1, 0, 0, \dots ) $. For $\mathbf{a}\in T_1((\mathbb{R}^n))$ its inverse is defined as $$ \mathbf{a}^{-1} = \sum_{k\geq 0} \left( \mathbf{1} - \mathbf{a} \right)^k.$$ 
        We define the \emph{truncated} tensor algebra at level $N$ by 
        \begin{equation*}
		    T^N(\mathbb{R}^n) := \left\{ (a_0, a_1, ..., a_N ) | a_k \in (\mathbb{R}^n)^{\otimes k}, 0 \leq k \leq N\right \} 
	    \end{equation*}
	    and denote by $T_1^N(\mathbb{R}^n)$ the corresponding \emph{truncated} tensor group. We denote by $(e_1, ..., e_n)$ the canonical basis of $\mathbb{R}^n$. Let $I= (i_1, ..., i_m )$ be a multi-index and denote its length by $|I|=m$, then $e_I := e_{i_1} \otimes ... \otimes e_{i_m}$ a basis element of $(\mathbb{R}^n)^{\otimes m}$. We use the convention that $|I|=0$ means that $I=\emptyset$, the empty word. In this formulation, we can write any $\mathbf{a}\in T^N(\mathbb{R}^n)$ as $$\mathbf{a} = \sum_{0\leq|I|\leq N} \langle e_I, \mathbf{a} \rangle e_I,$$
	    where $\langle \cdot , \cdot \rangle$ is defined as the inner product of $(\mathbb{R}^n)^{\otimes k} $ for each $k\geq 0$.
	    Finally we introduce the two canonical projections 
	    \begin{align*}
	        \Pi_k : T((\mathbb{R}^n)) &\rightarrow (\mathbb{R}^n)^{\otimes k} \\
	                \mathbf{a} &\mapsto a_k,
	    \end{align*}
        \begin{align*}
	        \Pi_{\leq N} : T((\mathbb{R}^n)) &\rightarrow T^N(\mathbb{R}^n) \\
	                \mathbf{a} &\mapsto (a_0, ..., a_N).
	    \end{align*}
	    Of course, these projections can also be applied on the truncated tensor algebra, i.e.~$\Pi_k :T^N(\mathbb{R}^n) \rightarrow (\mathbb{R}^n)^{\otimes k}  $ for $k\leq N$ and $\Pi_{\leq N}: T^M(\mathbb{R}^n) \rightarrow T^N(\mathbb{R}^n)$ for $M\geq N$. \\ \\
        We define accordingly $T_0^N(\mathbb{R}^N):=\{\mathbf{a}\in T^N(\mathbb{R}^d) \lvert a_0=0\}$. In fact, it holds that $T_1^N(\mathbb{R}^d)$ is a Lie group under the tensor product $\otimes$ and $T_0^N(\mathbb{R}^d)$ is a Lie algebra with Lie bracket $T_0^N(\mathbb{R}^d)\times T_0^N(\mathbb{R}^d) \ni (\mathbf{a}, \mathbf{b}) \mapsto [\mathbf{a}, \mathbf{b}]:= \Pi_{\leq N}(\mathbf{a}\otimes \mathbf{b} - \mathbf{b}\otimes \mathbf{a}) \in T_0^N(\mathbb{R}^d)$. The two are related by the exponential map 
        \begin{align*}
        \exp: T_0^N(\mathbb{R}^d) &\rightarrow  T_1^N(\mathbb{R}^d)\\
        \mathbf{a} &  \mapsto \Pi_{\leq N} \left(1 + \sum_{k=1}^N \frac{\mathbf{a}^{\otimes k}}{k!}\right).
        \end{align*}
        We call $\mathfrak{g}^N(\mathbb{R}^d) = \mathbb{R}^d \oplus [\mathbb{R}^d, \mathbb{R}^d] \oplus \dots \oplus [\mathbb{R}^d, [\dots, [\mathbb{R}^d,\mathbb{R}^d]]]$
        the \emph{free step-N nilpotent Lie algebra} and $G^N(\mathbb{R}^d)= \exp(\mathfrak{g}^N(\mathbb{R}^d))$ the \emph{free nilpotent group of step N}.
        
	\end{definition}
	
	We are now ready to define the signature of a continuous semimartingale. For an introduction of the signature of a continuous semimartingale in a similar spirit, see~\cite{Cuchiero_Svaluto-Ferro_Gazzani}. To this end, consider a finite time-horizon $T>0$ and some filtered probability space $(\Omega, \mathcal{F}, (\mathcal{F}_t)_{t\in[0,T]}, \mathbb{P})$.
	
	\begin{definition}[Signature of a Continuous Semimartingale]\label{def:def_signature}
	    For an $\mathbb{R}^n$-valued continuous semimartingale $X$, we denote its signature by $\mathbb{X}$ and define $\mathbb{X}$ as the process in $T_1((\mathbb{R}^n))$ whose value at time $t$ is given via
	    \begin{align*}
	        \Pi_k(\mathbb{X}_{t}) = \int_{0 \leq u_1 \leq \dots \leq u_k \leq t} \otimes \circ dX_{u_1 } \dots  \otimes \circ dX_{u_k},
	   \end{align*}
	   where $\circ$ denotes Stratonovic integration.
	   Accordingly, we define the truncated signature at level $N$, which we denote by $\mathbb{X}^N$ as
	   \begin{equation*}
	       \mathbb{X}^N_t = \Pi_{\leq N}(\mathbb{X}_t).
	   \end{equation*}
        Note that $\mathbb{X}^N_t$ takes values in $G^N(\mathbb{R}^n)$ for all $t\in [0,T]$. 
	   We define the increments of the signature $\mathbb{X}_{s,t}$ by $\mathbb{X}_{s,t} := \mathbb{X}_s^{-1} \otimes \mathbb{X}_t$
	   and note that 
	    \begin{align*}
	        \Pi_k(\mathbb{X}_{s,t}) = \int_{s \leq u_1 \leq \dots \leq u_k \leq t} \otimes \circ dX_{u_1 } \dots  \otimes \circ dX_{u_k},
	   \end{align*} 
	   in particular, $\Pi_1(\mathbb{X}_{s,t})= X_t - X_s$.
	\end{definition}
	
    Let us now state a few important and useful properties about the signature. 
    
    \begin{lemma}[Shuffle Product Property]
        For two multiindices $I$, $J$ we define the shuffle product $\shuffle$ as 
        $$ e_I \shuffle e_{\emptyset} = e_{\emptyset} \shuffle e_I = e_I$$
        $$ (e_I\otimes e_i ) \shuffle (e_J \otimes e_j) = (e_I \shuffle (e_J \otimes e_j)) \otimes e_i + ((e_I\otimes e_i )\shuffle e_J)\otimes e_j.$$
        Then, for any two multiindices $I$, $J$ and any continuous semimartingale $X$, it holds that 
        \begin{equation*}
            \langle e_I, \mathbb{X} \rangle \langle e_J, \mathbb{X} \rangle = \langle e_I \shuffle e_J, \mathbb{X} \rangle.
        \end{equation*}
    \end{lemma}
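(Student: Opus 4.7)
The plan is to prove the identity by induction on the total length $|I|+|J|$, leveraging the fact that Stratonovich integration obeys the classical Leibniz product rule (without an Itô correction), which is precisely why the signature is defined via Stratonovich integrals in Definition~\ref{def:def_signature}.

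For the base case, if either $I$ or $J$ is empty, then $\langle e_{\emptyset}, \mathbb{X}\rangle = 1$ and $e_I \shuffle e_{\emptyset} = e_I$ by definition, so the claim is immediate. For the induction step I assume the identity for all pairs of multiindices of total length strictly less than $|I|+|J|$. Writing $I = I' \cdot i$ and $J = J' \cdot j$ for the decompositions into a prefix and a last letter, Definition~\ref{def:def_signature} yields the Stratonovich recursions
\begin{equation*}
\langle e_I, \mathbb{X}_t\rangle = \int_0^t \langle e_{I'}, \mathbb{X}_s\rangle \circ dX^i_s, \qquad \langle e_J, \mathbb{X}_t\rangle = \int_0^t \langle e_{J'}, \mathbb{X}_s\rangle \circ dX^j_s.
\end{equation*}

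Next I would apply Stratonovich integration by parts to the product $\langle e_I, \mathbb{X}_t\rangle \langle e_J, \mathbb{X}_t\rangle$. Since Stratonovich obeys the classical product rule, this gives
\begin{equation*}
\langle e_I, \mathbb{X}_t\rangle \langle e_J, \mathbb{X}_t\rangle = \int_0^t \langle e_I, \mathbb{X}_s\rangle \langle e_{J'}, \mathbb{X}_s\rangle \circ dX^j_s + \int_0^t \langle e_J, \mathbb{X}_s\rangle \langle e_{I'}, \mathbb{X}_s\rangle \circ dX^i_s.
\end{equation*}
Applying the induction hypothesis to each integrand (both $|I|+|J'|$ and $|I'|+|J|$ are strictly smaller than $|I|+|J|$) rewrites the right-hand side as
\begin{equation*}
\int_0^t \langle e_I \shuffle e_{J'}, \mathbb{X}_s\rangle \circ dX^j_s + \int_0^t \langle e_{I'} \shuffle e_J, \mathbb{X}_s\rangle \circ dX^i_s,
\end{equation*}
which by the recursive integral definition of the signature equals $\langle (e_I \shuffle e_{J'}) \otimes e_j + (e_{I'} \shuffle e_J) \otimes e_i,\; \mathbb{X}_t\rangle$. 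By the recursive definition of the shuffle product stated in the lemma, this bracket is precisely $\langle e_I \shuffle e_J, \mathbb{X}_t\rangle$, completing the induction.

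The main conceptual obstacle is the justification of the Stratonovich Leibniz rule applied to $\langle e_I, \mathbb{X}\rangle$ and $\langle e_J, \mathbb{X}\rangle$: one must verify that these iterated Stratonovich integrals are themselves continuous semimartingales so that the Stratonovich product rule applies. This is handled by an inner induction showing that at each level each coordinate $\langle e_I, \mathbb{X}\rangle$ is a continuous semimartingale, which is standard and is in fact what makes the Stratonovich iterated integrals well-defined in the first place; beyond this, the calculation is purely algebraic and matches the combinatorial definition of $\shuffle$.
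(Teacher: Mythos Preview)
Your proof is correct and is precisely the standard argument the paper alludes to: the paper's own proof consists of the single sentence ``This follows directly from the properties of the Stratonovich integral,'' and your induction via the Stratonovich product rule is exactly the unpacking of that sentence. There is no difference in approach; you have simply supplied the details the paper omits.
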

    \begin{proof}
        This follows directly from the properties of the Stratonovic integral. 
    \end{proof}
	
	\begin{lemma}[Uniqueness of the Signature]\label{lem:uniqueness_sig}
	    Let $X$ be a continuous semimartingale with at least one strictly monotone component. Then the signature uniquely determines $X$ up to vertical translations of the trajectories.
    \end{lemma}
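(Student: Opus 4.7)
The plan is to invoke the classical uniqueness-of-signature theorem for geometric rough paths and then exploit the strict monotonicity hypothesis to verify its structural assumption. The freedom to translate the path vertically is intrinsic to the statement: $\mathbb{X}$ depends only on the increments of $X$, so we may assume $X_0=0$ throughout.

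First, I would recall that any continuous semimartingale $X$ admits, almost surely, a canonical lift to a weakly geometric $p$-rough path for every $p\in(2,3)$, and that the iterated Stratonovich integrals of Definition~\ref{def:def_signature} coincide pathwise with the signature of this lift. The uniqueness question for $\mathbb{X}_{0,T}$ therefore reduces to the corresponding question at the level of geometric rough paths.

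Second, I would invoke the Hambly--Lyons uniqueness theorem, in its extension to geometric rough paths by Boedihardjo--Geng--Lyons--Yang: two continuous geometric rough paths with the same starting point share the same signature if and only if they differ by a tree-like excursion. Equivalently, every tree-reduced path is uniquely recovered from its signature. Third, and this is where the hypothesis enters, I would show that strict monotonicity of a single coordinate, say $X^1$, rules out any non-trivial tree-like sub-path of $X$. Any tree-like path $h$ satisfies $h(0)=h(T)$, so a tree-like restriction $X|_{[s,u]}$ would in particular force $X^1_s = X^1_u$, contradicting the strict monotonicity of $X^1$ on $[s,u]$. Consequently $X$ itself is tree-reduced on $[0,T]$, and the rough-path uniqueness theorem recovers $X$ from $\mathbb{X}$ up to the already-identified vertical translation.

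The step I expect to be most delicate is the first one: carefully identifying the Stratonovich iterated integrals on the stochastic base $(\Omega,\mathcal{F},(\mathcal{F}_t),\mathbb{P})$ with the signature of the canonical rough-path lift, outside a fixed $\mathbb{P}$-null event, so that the monotonicity argument can be applied $\omega$-wise. Once this pathwise identification is secured, the remainder of the proof is a direct combination of the rough-path uniqueness theorem with the elementary observation that strict monotonicity of one coordinate excludes tree-like excursions.
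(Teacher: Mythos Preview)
Your proposal is correct and follows essentially the same route as the paper: the paper's proof simply cites \cite{Horatio} (Boedihardjo--Geng--Lyons--Yang) for the fact that the signature determines the path up to tree-like equivalence, and implicitly uses that a strictly monotone coordinate forbids tree-like excursions. Your write-up is more explicit about each step (the rough-path lift, the tree-reduced characterisation, the monotonicity argument) but the underlying strategy is identical; the pathwise identification of Stratonovich iterated integrals with the rough-path signature that you flag as delicate is exactly the content of Corollary~\ref{cor:lyons_lift} and the references there to \cite{Friz_Victoir}.
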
 
    \begin{proof}
        This follows from the fact, that the signature determines the underlying semimartingale uniquely up to tree-like equivalences, see~\cite{Horatio}. A more direct proof, in the case where that component is the time, can  be found e.g.~in~\cite{Cuchiero_Primavera_Svaluto-Ferro2022}.
    \end{proof}
    
    We now define a class of functions which plays a major role in this paper, namely \emph{linear functions on the signature}:
    
    \begin{definition}[Linear Functions on the Signature]
        Let $X$ be an $\mathbb{R}^n$-valued continuous semimartingale and $\mathbb{X}$ its signature. We call a function $L: T((\mathbb{R}^n)) \rightarrow \mathbb{R}$ a \emph{linear function on the signature} if $L$ is given by 
        $$ L(\mathbb{X}_t) = \sum_{0\leq|I|\leq N} l_I \langle e_I, \mathbb{X}_t \rangle$$
        for some coefficients $l_I \in \mathbb{R}$ and any $N\in \mathbb{N}$. It is important to note that all $l_I$ are \emph{constant in time}.
        We denote the set of all linear functions on the signature by $\mathfrak{L}$. 
    \end{definition}

\subsection{Universal Approximation Theorem for Linear Functions on the Signature} \label{subsec:UAT}   
    
    Linear functions on the signature are dense in a set of certain continuous path-functionals defined on compact sets of paths. We call this the \emph{universal approximation property} and state it rigorously in Theorem~\ref{thm:UAT_sig}. To this end we need the following preparations which have been formulated 
    in a similar manner in~\cite{Bayer20, Kalsi, Oberhauser_2014}. 

Let us first define for some $p\in (2,3)$ the \emph{p-variation metric} of paths $\mathbf{x}, \mathbf{y}$ with values in  the group $T_1^N(\mathbb{R}^n)$  by
        \begin{equation}\label{eq:p-var_metric}
			d_{p-var;[0,t]}( \mathbf{x}, \mathbf{y}) = \max_{k\in {1,...,N}} \sup_{D \in\mathcal{D}} \left(\sum_{t_i, t_{i-1} \in D} |\Pi_k(\mathbf{x}_{t_{i-1},t_i} - \mathbf{y}_{t_{i-1},t_i})|^{\frac{p}{k}} \right)^{\frac{k}{p}},
		\end{equation}
        where $\mathcal{D}$ is the set of all partitions of $[0,t]$ and $|\cdot|$ is the Euclidean distance. The definition of increments of paths with values in  the group $T_1^N(\mathbb{R}^n)$ is as for the signature in Definition~\ref{def:def_signature}, namely $\mathbf{x}_{s,t} := \mathbf{x}_{s}^{-1}\otimes \mathbf{x}_t $.

    \begin{definition}[Lifted Paths]
        Let $\varphi: [0,T] \rightarrow \mathbb{R}$ be a strictly monotone increasing function. For $N\geq 2$ and $t \in [0,T]$ we define by 
        \begin{align*}
         \mathcal{G}^N_t(\varphi, x) = \bigcup_{X \textrm{ cont. semimart.}} \{  \hat{\mathbb{X}}^N_{[0,t]}(\omega) \, | \, \hat{X}_s=(\varphi(s), X_s), \omega \in \Omega_X, X_0=x\} 
        \end{align*}
        the space of \emph{lifted paths}  $\hat{\mathbb{X}}^N_{[0,t]}(\omega)$
        of all $\mathbb{R}^n$-valued continuous semimartingales with initial value $x$ and for each $X$, ${\Omega}_X \subset \Omega$  is taken to be a probability-one set on which the  metric given in \eqref{eq:p-var_metric} is well-defined for all $t\in[0,T]$. Note that we denote by $\hat{\mathbb{X}}$ the signature of $\hat{X}$ and we use the subscript $[0,t]$ to emphasize that we take the \emph{entire path} between $[0,t]$.  We equip $\mathcal{G}_t^N(\varphi, x)$ with the metric $d_{p-var; [0,t]}$ (now for paths with values in $T_1^N(\mathbb{R}^{n+1})$). 
    \end{definition}

    \begin{remark}
        We would like to point out that using a strictly monotone increasing function $\varphi$ other than the identity can be useful in practice. For example, we use the time-augmentation $\varphi_T(t):= \frac{t}{T}$ for our experiments using real market data in Section~\ref{sec:realdata} to account for the different length of the training and testing period. 
    \end{remark}

    \begin{definition}[Space of Lifted (Stopped) Paths]\label{def:stoppedlifted}
        We define the space of \emph{lifted stopped paths} via
        \begin{align*} {\Lambda}^N_T(\varphi, x) = \bigcup_{t\in[0,T]} \mathcal{G}_t^N(\varphi, x) \end{align*}
        We then equip ${\Lambda}_T^N(\varphi, x)$ with the metric $d_{\Lambda}$ with its form being based on~\cite{Cont_2010, Dupire_2019} 
        \begin{align*}
            &d_{\Lambda}\left(\hat{\mathbb{X}}^N_{[0,t]}(\omega_1), {\hat{\mathbb{Y}}}^N_{[0,s]}(\omega_2)\right) \notag \\&\quad = |t-s| + d_{p-var; [0,t\vee s]}\left(\widehat{\mathbb{X}^t}^N_{[0,t \vee s]}(\omega_1), \widehat{{\mathbb{Y}}^s}^N_{[0,t \vee s]}(\omega_2)\right),
        \end{align*}
        where we denote by $\widehat{X^t}$  the process  $(\widehat{X^t})_u = (\varphi(u),X^t_u)$, i.e. the process where we stop the semimartingale $X$ at time $t$ but not the time-augmentation $\varphi$ and write $\widehat{\,\mathbb{X}^t}$ for the signature of $\widehat{X^t}$. See \cite{Bayer20} for a similar definition in the setting of general rough paths.       
    \end{definition}
        
    \begin{remark}
        Note that if $\hat{\mathbb{X}}^N_{[0,t]}(\omega) \in \Lambda_T^N(\varphi, x)$, then $\widehat{\,\mathbb{X}^t}^N_{[0,T]}(\omega) \in \mathcal{G}_T^N(\varphi, x) $ and hence $\widehat{\,\mathbb{X}^t}^N_{[0,T]}(\omega) \in \Lambda_T^N(\varphi, x) $. This is the reason, we do not stop the time-augmentation $\varphi$.  Indeed, $(\hat{\mathbb{X}}^{N})^t_{[0,T]}(\omega) \notin \mathcal{G}_T^N(\varphi, x) $, since $\varphi$ is required to be strictly monotone.\\ 
    \end{remark}       
    Based on~\cite{Cont_2010} we formulate the notion of non-anticipative path-functionals in our setting:
    \begin{definition}[Non-Anticipative Path-Functionals]
        We call functionals $f: {\Lambda}^2_T(\varphi, x) \rightarrow \mathbb{R}$ \emph{non-anticipative path-functionals}. 
    \end{definition}

The following continuity result for the signature is crucial for the universal approximation property.
    
    \begin{cor}[Lyon's Lift]\label{cor:lyons_lift}
        For all $t\in[0,T]$ and $N\geq 2$ there exists a unique bijection 
        \begin{align*}
            S^N : \mathcal{G}_t^2(\varphi, x) \mapsto \mathcal{G}_t^N(\varphi, x)
        \end{align*}
        with inverse $\Pi_{\leq2}$ and $S^N$ is continuous on bounded sets.
        We call $S^N$ \emph{Lyons' lift}. Moreover, for each $X$ and all  $\omega \in {\Omega}_X$ it holds that 
        \begin{equation*}
            S^N\left(\hat{\mathbb{X}}^2_{[0,t]}(\omega)\right) = \hat{\mathbb{X}}^N_{[0,t]}(\omega).
        \end{equation*}
    \end{cor}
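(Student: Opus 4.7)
My plan is to deduce the corollary from the classical Lyons extension theorem for geometric rough paths (see e.g.\ Lyons--Qian or \cite{Friz_Victoir}), after verifying that the time-augmented semimartingale lift falls into its scope almost surely. Thus the work splits naturally into three parts: a pathwise rough-path statement, its probabilistic compatibility with the Stratonovich signature of $X$, and the identification of the extension map with Lyons' lift.

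First, I would recall Lyons' extension theorem in the following form: for any $p\in[1,\infty)$ and any $N\ge\lfloor p\rfloor$, the truncated signature map
\begin{equation*}
S^N : G^{\lfloor p\rfloor}(\mathbb{R}^{n+1}) \text{-valued paths of finite } p\text{-var} \longrightarrow G^{N}(\mathbb{R}^{n+1}) \text{-valued paths of finite } p\text{-var}
\end{equation*}
is well defined as a unique multiplicative functional extension, and is continuous in the $p$-variation topology on bounded sets; moreover $\Pi_{\leq \lfloor p\rfloor}\circ S^N = \mathrm{id}$. Specialising to $p=2$ and an arbitrary $N\ge2$ gives the stated bijectivity (with $\Pi_{\le2}$ as inverse) and the continuity on bounded sets in $d_{2\text{-}\mathrm{var};[0,t]}$, provided the domain is indeed the set of $2$-rough paths. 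This is the content of the deterministic half of the corollary.

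Second, I would explain why the deterministic statement applies $\omega$-by-$\omega$ on the set $\bar{\Omega}$. By Karandikar's pathwise Stratonovich theory (or the standard rough-path-theoretic embedding of continuous semimartingales, cf.\ the construction used in the reference to~\cite{Cuchiero_Svaluto-Ferro_Gazzani}), for any continuous semimartingale $X$ the time-augmented process $\hat X = (\varphi,X)$ admits, off a null set, a canonical lift to a geometric $p$-rough path for every $p>2$, with $\hat{\mathbb{X}}^2_{[0,t]}(\omega)$ given by iterated Stratonovich integrals. Since $\varphi$ is strictly increasing and continuous, adding it does not destroy finite $p$-variation. Hence on $\bar\Omega$ the path $\hat{\mathbb{X}}^2_{[0,t]}(\omega)$ is an admissible input to Lyons' extension theorem, and the extension $S^N(\hat{\mathbb{X}}^2_{[0,t]}(\omega))$ is well defined.

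Third, I would identify the extension with the signature $\hat{\mathbb{X}}^N_{[0,t]}(\omega)$. The standard way is to verify that $\hat{\mathbb{X}}^N$, constructed directly as iterated Stratonovich integrals in Definition~\ref{def:def_signature}, is a multiplicative functional taking values in $G^N(\mathbb{R}^{n+1})$ with finite $p$-variation for $p>2$, and whose projection onto $G^2(\mathbb{R}^{n+1})$ coincides with $\hat{\mathbb{X}}^2$. By the uniqueness part of Lyons' theorem, any such extension must equal $S^N(\hat{\mathbb{X}}^2_{[0,t]}(\omega))$, which yields the final equality for $\omega\in\bar\Omega$.

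The main obstacle, in my view, is not the extension theorem itself but the probabilistic bookkeeping: ensuring that there is a \emph{single} full-measure set $\bar\Omega$ on which (i) the Stratonovich iterated integrals defining $\hat{\mathbb{X}}^N$ exist as continuous paths for \emph{all} $N\ge2$, (ii) they have finite $p$-variation for some $p\in(2,3)$, and (iii) the multiplicativity and group-like property $\hat{\mathbb{X}}^N_t\in G^N(\mathbb{R}^{n+1})$ hold for every $t\in[0,T]$. Once such a $\bar\Omega$ is fixed (which is why it was built into the definition of $\mathcal{G}_t^N(\varphi,x)$), the remainder of the argument is a direct application of the deterministic Lyons extension theorem.
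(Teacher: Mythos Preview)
Your proposal is correct and takes essentially the same route as the paper, which simply cites \cite{Friz_Victoir} (Theorems 8.10, 9.5 and Corollary 9.11 for the deterministic extension and its continuity; Proposition 17.1 and Exercise 17.2 for the a.s.\ identification of the Stratonovich signature with the Lyons lift). Your three-step structure---deterministic extension theorem, a.s.\ rough-path lift of continuous semimartingales, uniqueness of the multiplicative extension---is precisely the content behind those references; the only minor imprecision is that one should take $p\in(2,3)$ rather than $p=2$ throughout (semimartingale lifts have finite $p$-variation only for $p>2$), but since $\lfloor p\rfloor=2$ this does not affect the level-$2$-to-level-$N$ extension you invoke.
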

    
    \begin{proof}
        The first part of the theorem follows from \cite[Theorem 8.10, Theorem 9.5 and Corollary 9.11]{Friz_Victoir} and the second statement follows from \cite[Proposition 17.1 and Exercise 17.2]{Friz_Victoir}.
    \end{proof}

    Relying on two lemmas proved in  Appendix \ref{app:Proofs_Foundations}, we can now  show the universal approximation theorem for linear functions on the signature. We here state a version that holds uniformly in time. For this we also provide a rigorous proof in Appendix \ref{app:Proofs_Foundations} which is -- up to our knowledge --  not available in the literature. For similar assertions we however refer to~\cite{Bayer20, Kalsi}.
    In the setting of c\`adl\`ag rough paths an analogous result has been proved in \cite{Cuchiero_Primavera_Svaluto-Ferro2022}, however with different topologies.

    \begin{theorem}\label{thm:UAT_sig}
        Let $K\subset \mathcal{G}_T^2(\varphi,x)$ be a compact subset. For any continuous non-anticipative path-functional $f: \Lambda_T^2(\varphi,x) \rightarrow \mathbb{R}$ and for every $\epsilon> 0$ there exists a linear function on the signature $L$ such that 
        $$ \sup_{(t,  \hat{\mathbb{X}}^2_{[0,T]}(\omega)) \in [0,T] \times K} |f(\hat{\mathbb{X}}^2_{[0,t]}(\omega)) - L(\hat{\mathbb{X}}_t)(\omega)| < \epsilon.$$ 
 
    \end{theorem}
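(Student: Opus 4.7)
The plan is to reduce the statement to a direct application of the Stone--Weierstrass theorem on a suitable compact subset of $\Lambda_T^2(\varphi,x)$. First, I would define
\[
\widetilde{K} := \bigl\{\hat{\mathbb{X}}^2_{[0,t]}(\omega) : t \in [0,T],\ \hat{\mathbb{X}}^2_{[0,T]}(\omega) \in K\bigr\} \subset \Lambda_T^2(\varphi,x),
\]
which is the image of the compact product space $[0,T] \times K$ under the stopping map $(t,Y) \mapsto Y|_{[0,t]}$. The first of the two preparatory lemmas should establish that this stopping map is continuous from $[0,T] \times \mathcal{G}_T^2(\varphi,x)$ into $(\Lambda_T^2(\varphi,x), d_\Lambda)$, so that $\widetilde{K}$ is a compact metric space and $f|_{\widetilde{K}}$ is a continuous real-valued function on it.

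Next, I would consider the collection $\mathcal{A}$ of functions on $\widetilde{K}$ of the form $F_L: \hat{\mathbb{X}}^2_{[0,t]} \mapsto L(\hat{\mathbb{X}}_t)$, where $L \in \mathfrak{L}$ and the higher levels of $\hat{\mathbb{X}}_t$ are recovered from the endpoint $\hat{\mathbb{X}}^2_t$ of the level-$2$ path via Lyons' lift $S^N$ (Corollary~\ref{cor:lyons_lift}), taken to any $N$ at least as large as the order of $L$. Closure under addition is immediate; closure under multiplication is exactly the shuffle-product property; the constants arise from $L = c\langle e_\emptyset, \cdot\rangle$; and each $F_L$ is continuous on $\widetilde{K}$ because endpoint evaluation $\hat{\mathbb{X}}^2_{[0,t]} \mapsto \hat{\mathbb{X}}^2_t$ is continuous from $d_\Lambda$ into $G^2(\mathbb{R}^{n+1})$, $S^N$ is continuous on the bounded set $\Pi_{\leq 2}(\widetilde{K})$, and $L$ is a polynomial in the coordinates of the truncated signature.

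The point-separating property of $\mathcal{A}$ would be the content of the second appendix lemma, which I would prove by case analysis. For distinct $\hat{\mathbb{X}}^2_{[0,t]}, \hat{\mathbb{Y}}^2_{[0,s]} \in \widetilde{K}$ with $t \neq s$, the linear function $\langle e_1, \cdot\rangle$ picks out $\varphi(t)-\varphi(0)$ and separates them by strict monotonicity of $\varphi$. If $t = s$ but the two level-$2$ signature paths differ on $[0,t]$, then by Lyons' lift the full signature paths differ on $[0,t]$, and the uniqueness of the signature of processes with a strictly monotone component (Lemma~\ref{lem:uniqueness_sig}) together with the fact that the two lifted processes share the common initial value $(\varphi(0),x)$ forces the endpoint signatures $\hat{\mathbb{X}}_t$ and $\hat{\mathbb{Y}}_t$ themselves to differ in some coordinate $\langle e_I, \cdot\rangle$, which then separates them.

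With these conditions verified, Stone--Weierstrass produces an $L \in \mathfrak{L}$ with $\|F_L - f\|_{\infty,\widetilde{K}} < \epsilon$; restricting attention to the probability-one set $\bar\Omega$ on which signatures and Lyons' lifts are defined, this is exactly the estimate in the statement, since the sup there is taken precisely over points of $\widetilde{K}$. I expect the main technical obstacle to be the continuity of the stopping map in $d_\Lambda$: the metric compares $2$-variation of the full signatures of $\widehat{X^t}$ and $\widehat{X^s}$ on $[0, t \vee s]$, and one must quantify how the un-stopped time-augmentation $\varphi$ interacts with the stopped $X$-part on the interval $[t\wedge s, t\vee s]$ and control that contribution uniformly in $\hat{\mathbb{X}}^2_{[0,T]} \in K$ as $|t-s| \to 0$.
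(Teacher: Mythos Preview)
Your proposal is correct and follows essentially the same route as the paper: compactness of $\widetilde{K}$ via continuity of the stopping map (the paper's Lemma~\ref{lem:lambda_continuity}), then Stone--Weierstrass using the shuffle product for the algebra property, the empty word for non-vanishing, and the time component together with Lemma~\ref{lem:uniqueness_sig} for point separation. The only discrepancy is your guess about the second appendix lemma: in the paper it is Lemma~\ref{lem:e_I_continuous}, which establishes continuity of $\hat{\mathbb{X}}^2_{[0,t]} \mapsto \langle e_I, \hat{\mathbb{X}}_t\rangle$ on bounded subsets of $\Lambda_T^2(\varphi,x)$ (i.e.\ exactly the continuity of your $F_L$), while point separation is handled directly inside the proof of the theorem, just as you sketch.
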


  Let us point out that, conceptually, the reason for Theorem~\ref{thm:UAT_sig} to hold uniformly in time is due to the time-augmentation. To show the applicability of this result let us give some examples of non-anticipative path-functionals.
    
    \begin{example} [Examples of Continuous Non-Anticipative Path-Functionals] \label{cor:example_cont_func}
       
        \begin{enumerate}
        \item[]
            \item \textbf{Solutions of SDEs driven by $X$:} 
            For $\gamma>2$, an $\mathbb{R}^n$-valued continuous semimartingale $X$ and an $\mathbb{R}^m$-valued continuous process of bounded variation $H$, consider the SDE 
            \begin{equation*}
                dY_t = \sum_{i=1}^n V_i(Y_t) \circ dX^i_t + \sum_{j=1}^m W_j(Y_t)dH^j_t, \hspace{0.5cm} Y_0=y_0\in \mathbb{R}^d,
            \end{equation*}
            where $\{V_i\}_{1\leq i \leq n}$ is a collection of vector fields in $\mathrm{Lip}^{\gamma}(\mathbb{R}^d)$ and $\{W_i\}_{1\leq i \leq m}$ is a collection of $\mathrm{Lip}^{1}(\mathbb{R}^d)$ vector fields.\footnote{Following the definition in \cite[Definition 10.2]{Friz_Victoir} a vector field $V$ is in $\mathrm{Lip}^{\gamma}$ if $V$ is $\lfloor \gamma \rfloor$ times continuously differentiable and if the supremum norm of $V$ and the supremum norm of its first $\lfloor \gamma \rfloor$ derivatives admits a common bound.} Set $Z=(X,H)$ and $z=(X_0,H_0)$ and define $D_{Z}$ to be the set 
            $$ D_{Z} = \bigcup_{t\in[0,T]} \left\{ {\hat{\mathbb{Z}}}_{[0,t]}^2(\omega) | \omega \in  \Omega_Z \right \} \subset \Lambda_T^2(\varphi, z).$$
            
            Then  the components of the solution to this SDE $(Y^i_t)_{t\in[0,T]}$, can be seen as a map 
            \begin{align*}
                Y^i_t(\omega): \, &D_{Z} \rightarrow \mathbb{R}\\
                &{\hat{\mathbb{Z}}}_{[0,t]}^2(\omega) \mapsto Y_t^i(\omega)
            \end{align*}
            and it holds almost surely (i.e.~on ${\Omega}_Z$) that $Y_t^i\in C(D_{Z}; \mathbb{R})$ for every $t \in [0,T]$. The continuity follows from the fact that the solution to the corresponding rough differential equation solves the SDE almost surely by \cite[Theorem 17.3]{Friz_Victoir}  and the RDE solution is continuous \cite[Corollary 10.28 \& Theorem 12.10]{Friz_Victoir} in the lifted input path. Hence, we can apply Theorem~\ref{thm:UAT_sig} and obtain that for any compact set $K\subset \{ \hat{\mathbb{Z}}^2_{[0,T]}(\omega) | \omega \in \Omega_Z  \}$ and all $\epsilon>0$ it holds that for all $i\in\{1,\dots,d\}$ there exists a linear function $L^i$ on the signature such that 
            $$\sup_{(t\in [0,T], \hat{\mathbb{Z}}^2_{[0,T]}(\omega)) \in K} \left| Y_t^i(\omega) - L^i( \hat{\mathbb{Z}}_t(\omega)) \right|  < \epsilon.$$         
            
            \item \textbf{Continuous Functions Without Path-Dependence:}
            Consider an arbitrary but fixed $\mathbb{R}^n$-valued continuous semimartingale $X$ with initial value $X_0=x$. For any $t\in[0,T]$ let
            \begin{align*}
                h: \mathbb{R}^{d+1} &\rightarrow \mathbb{R} \\
                \hat{X}_t(\omega) &\mapsto h\left(\hat{X}_t(\omega)\right)
            \end{align*} 
            be a continuous function.
            Such a function can for almost all  $\omega \in \Omega$ be extended to a non-anticipative path-functional $\tilde{h}$ on $\Lambda_T^2(\varphi, x)$. 
            To be precise, simply define $\tilde{h}$ as
            \begin{align*}
                \tilde{h}:\, \Lambda_T^2(\varphi, x) &\rightarrow \, \mathbb{R} \\
                \hat{\mathbb{X}}^2_{[0,t]}(\omega) &\mapsto \, \tilde{h}\left(\hat{\mathbb{X}}^2_{[0,t]}(\omega)\right):= h\left((\hat{X}_{[0,t]})_{\odot}(\omega) \right) = h\left(\hat{X}_t(\omega)\right) .
            \end{align*} 

            In particular, $\tilde{h} \in C\left( \Lambda_T^2(\varphi, x); \mathbb{R}\right)$. To see this, note that for any $\hat{\mathbf{x}}_{[0,t]}, \hat{\mathbf{y}}_{[0,s]} \in \Lambda_T^2(\varphi, x)$ it holds that
            \begin{align*}
                d_{\Lambda}(\hat{\mathbf{x}}, \hat{\mathbf{y}}) \geq |\hat{x}_{0,t} - \hat{y}_{0,s}| = |\hat{x}_t - \hat{y}_s|
            \end{align*}
            with $\hat{x}=\Pi_1(\hat{\mathbf{x}})$, $\hat{y}=\Pi_1(\hat{\mathbf{y}})$, where we replaced the $\max_{k\in \{1,\dots, N\}}$ by $k=1$ and $\sup_{D\in\mathcal{D}}$ by $D=[0,t\vee s]$ in the definition of $d_{p-var;[0,t\vee s]}$ to obtain the inequality. The last equality follows from the fact that all paths in $\Lambda_T^2(\varphi, x)$ have the same initial value. 
Due to the continuity of $h$, for every $\epsilon>0$ there exists a $\delta >0$ such that for all $t_1, t_2 \in[0,T]$ and all paths $X(\omega_1), {Y}(\omega_2)$ satisfying $\left|\hat{X}_{t_1}(\omega_1) - \hat{Y X}_{t_2}(\omega_2)\right|< \delta$ we have
$$\left|h\left(\hat{X}_{t_1}(\omega_1)\right)-h\left(\hat{X}_{t_2}(\omega_2)\right)\right|< \epsilon.$$ 
By the above estimate, $d_{\Lambda}\left(\hat{\mathbb{X}}^2_{[0,t_1]}(\omega_1),\hat{\mathbb{X}}^2_{[0,t_2]}(\omega_2)\right) < \delta$ therefore implies
            \begin{align*}
            \left|\tilde{h}\left(\hat{\mathbb{X}}^2_{[0,t_1]}(\omega_1)\right)-\tilde{h}\left(\hat{\mathbb{X}}^2_{[0,t_2]}(\omega_2)\right)\right|< \epsilon \end{align*}
       and thus proves the claim.     
        \end{enumerate}
  \end{example}
    
    \subsection{Global Universal Approximation Theorem for Linear Functions on the Signature} \label{subsec:GUAT}
    In this subsection, we state an alternative universal approximation theorem which is due to \cite{Cuchiero_Schmocker_Teichmann_2023}. We will refer to Theorem~\ref{thm:GUAT} also as \emph{global} universal approximation theorem because we are no longer limited to compact sets, however this comes at the price of approximating with respect to weighted norms instead of the uniform norm.
    The following is based on the statements and the results of \cite{Cuchiero_Schmocker_Teichmann_2023} adapted to the setting considered in this paper. However, we do formulate everything in the path-functional setting here, leading to a approximation \emph{uniformly in time} which has not been done in \cite{Cuchiero_Schmocker_Teichmann_2023}. 

    Let us start by introducing the following metrics:

    \begin{definition}[The Carnot-Carathéodory Norm and Homogeneous $\alpha$-Hölder Metric]
    \hspace*{0.1cm}
    \begin{enumerate}
        \item Let $\mathbf{a}\in G^N(\mathbb{R}^n)$, we define the \emph{Carnot-Carathéodory} norm $\|\cdot \|_{cc}$ by 
        \begin{align*}
        \|\mathbf{a}\|_{cc}:= \inf \Big\{ \int_0^T \|dX_t\|\,  \Big\lvert \, &X:[0,T]\rightarrow \mathbb{R}^d \textrm{ continuous and of finite variation,}\\ &\textrm{such that } \mathbb{X}^N_T=\mathbf{a}\Big\}.
        \end{align*}
        Moreover, for $\mathbf{a}, \mathbf{b}\in G^N(\mathbb{R}^n)$ the corresponding metric $d_{cc}(\cdot, \cdot)$ is induced by 
        $$ d_{cc}(\mathbf{a},\mathbf{b}):= \|\mathbf{a}^{-1}\otimes \mathbf{b}\|_{cc}\,.$$

        \item Recall the space of lifted path $\mathcal{G}_t^N(\varphi, x)$ for $N\geq 2$ and $t\in[0,T]$. For two lifted paths $\hat{\mathbb{X}}^N_{[0,t]}(\omega_1), \hat{\mathbb{Y}}^N_{[0,t]}(\omega_2) \in \mathcal{G}_t^N(\varphi, x)$ we define the \emph{homogeneous }$\alpha$\emph{-H\"older metric} $d_{\alpha,[0,t]}(\cdot,\cdot)$ for $0 \leq \alpha <\frac{1}{2}$ by 
        \begin{equation*}
            d_{\alpha, [0,t]}\left(\hat{\mathbb{X}}^N_{[0,t]}(\omega_1), \hat{ \mathbb{Y}}^N_{[0,t]}(\omega_2)\right) = \sup_{\substack{r,s\in[0,t]\\ r<s}} \frac{d_{cc}\left(\hat{\mathbb{X}}^N_{r,s}(\omega_1), \hat{ \mathbb{Y}}^N_{r,s}(\omega_2)\right)}{|r-s|^{\alpha}}.
        \end{equation*}
    \end{enumerate}
        
    \end{definition}

    \begin{notation}   
     In the following we shall denote by $\mathcal{G}_t^{N,\alpha}(\varphi, x):=(\mathcal{G}_t^N(\varphi, x), d_{\alpha,[0,t]})$ the space of lifted paths equipped with the homogeneous $\alpha$-H\"older metric for some $\frac{1}{3} <\alpha< \frac{1}{2}$.  Moreover, we write $\Lambda^{N,\alpha}_T(\varphi, x):=(\Lambda^{N}_T(\varphi, x), d_{\Lambda^{\alpha}})$ for the corresponding space of lifted stopped paths equipped with $$
    d_{\Lambda^{\alpha}}\left(\hat{\mathbb{X}}^N_{[0,t]}(\omega_1), \hat{\mathbb{Y}}^N_{[0,s]}(\omega_2)\right):= |t-s| + d_{\alpha, [0, t\vee s]}\left(\widehat{\,  \mathbb{X}^t}^N_{[0,t\vee s]}(\omega_1), \widehat{\mathbb{Y}^s}^N_{[0,t\vee s]}(\omega_2)\right). $$

 We shall assume without loss of generality that for each continuous semimartingale $X$, the associated probability-one set ${\Omega}_X$ in the definition of $\mathcal{G}_t^N(\varphi, x)$ is such that the homogeneous $\alpha$-H\"older norm is finite for all lifted paths and all $0\leq \alpha < \frac{1}{2}$.
    
   \end{notation}

  \begin{remark}
    Due to the fact that we are now considering different topologies on the space of lifted (stopped) paths, the continuity statements of Lemma~\ref{lem:lambda_continuity}, Corollary~\ref{cor:lyons_lift} and Lemma~\ref{lem:e_I_continuous} need to be proved again. In fact, they do hold analogously:
    \begin{itemize}
    \item The proof that the map $\lambda$ defined in Lemma~\ref{lem:lambda_continuity} is again continuous in the $\alpha$-H\"older topology works very similar to the one of Lemma~\ref{lem:lambda_continuity}, where one ultimately chooses $\delta>0$ such that $$\delta + d_{\alpha, [0,t+\delta]}\left(\widehat{ \mathbb{Y}^s}^N_{[0,t+\delta]}(\omega_1), \widehat{\,  \mathbb{Y}^t}^N_{[0,t+\delta]}(\omega_1)\right) < \epsilon.$$
    \item The continuity of Lyons' lift holds by \cite[Theorem 9.5, Corollary 9.11]{Friz_Victoir}.
    \item The proof of Lemma~\ref{lem:e_I_continuous} holds analogously in the $\alpha$-H\"older topology. 
    \end{itemize}
    \end{remark}

In order to formulate the global universal approximation theorem we need the notion of a weighted space and certain weighted function spaces being generalizations of continuous functions.

\begin{definition}[Weighted Space and Weighted Function Space]
        Let $(X, \tau_X)$ be a completely regular Hausdorff space and consider a function $\psi:X \rightarrow (0,\infty)$. We call the pair $(X, \psi)$ a \emph{weighted space} if it holds that $K_R:= \psi^{-1}((0,R])= \{x\in X \lvert \psi(x) \leq R\}$ is $\tau_X$-compact for all $R>0$. Moreover, for a weighted space $(X, \psi)$, we define $$B_{\psi}(X)=\left\{ f: X\rightarrow \mathbb{R} \, \lvert \,\sup_{x\in X} \frac{|f(x)|}{\psi(x)} < \infty \right\}$$ and equip it with $\|f\|_{\mathcal{B}_{\psi}(X)}:= \sup_{x\in X} \frac{|f(x)|}{\psi(x)}$. We denote by $\mathcal{B}_{\psi}(X)$ the closure of the space of bounded continuous functions $C_b(X)$ under $\|f\|_{\mathcal{B}_{\psi}(X)}$. We refer to $\mathcal{B}_{\psi}(X)$ as weighted function space. 
    \end{definition}

The global universal approximation theorem via linear functions on the signature hinges on an application of the weighted Stone-Weierstrass theorem \cite[Theorem 3.6]{Cuchiero_Schmocker_Teichmann_2023}. In contrast to~\cite[Theorem 5.4]{Cuchiero_Schmocker_Teichmann_2023} we want to extend the result to be uniform in time and hence to the space of lifted stopped paths. While $\Lambda_T^2(\varphi, x)$ is \emph{not} a weighted space, we can show that a larger space which contains $\Lambda_T^2(\varphi, x)$ is a weighted space. In fact, we need to consider the "stopped paths" of all \emph{weakly geometric $\alpha$-H\"older rough path} and not just those stemming from a continuous semimartingale. To this end we introduce the space of \emph{weakly geometric $\alpha$-H\"older rough paths} for $\frac{1}{3} < \alpha < \frac{1}{2}$ starting in $x\in \mathbb{R}^n$ and time-augmented by a continuous and strictly increasing function $\varphi$ and denote it by \begin{align*}
\hat{C}^{\alpha}_{t}(\varphi, x):= \Big\{ \hat{\mathbf{X}}_{[0,t]} \in C^{\alpha}([0,T]; G^2(\mathbb{R}^{n+1}))\, \big\lvert \, &\langle e_1, \hat{\mathbf{X}}_s \rangle = \varphi(s) \textrm{ for all } s\in [0,t] \\ &\textrm{ and } \langle e_i, \hat{\mathbf{X}}_0 \rangle= x^i \textrm{ for all } i \in \{2, \dots, n+1\} \Big\}.
\end{align*}
The paths which are a realization of a continuous semimartingale, i.e. $\mathcal{G}_t^{2,\alpha}(\varphi, x)$, is a subspace of $\hat{C}^{\alpha}_{t}(\varphi, x)$. In the following, we will slightly abuse notation and extend $d_{\alpha, [0,t]}$ to elements of $\hat{C}^{\alpha}_{t}(\varphi, x)$, as well as $d_{\Lambda^{\alpha}}$ to elements of $\bigcup_{t\in [0,T]} \hat{C}^{\alpha}_{t}(\varphi, x)$. To do so, we need a notion of how to stop a weakly geometric rough path such that this is compatible with how we defined the signature of the stopped path in the semimartingale setting. 

\begin{definition}
For any $\frac{1}{3} < \alpha < \frac{1}{2}$, $t\in [0,t]$ and $\hat{\mathbf{X}}_{[0,t]}\in \hat{C}^{\alpha}_t(\varphi,x)$ we define the associated \emph{ stopped rough path } $\widehat{\mathbf{X}^t}_{[0,T]}$ in the following way; for all $s\in [0,t]$, $(\widehat{\mathbf{X}^t})_s :=\hat{\mathbf{X}}_s $  and for all $r \in [t, T]$
\begin{equation*}
    \langle e_I , (\widehat{\mathbf{X}^t})_r \rangle := \begin{cases} \varphi(r) \qquad \textrm{ for } I=(1) \\ \frac{1}{2}(\varphi(r))^2 \,\textrm{ for } I=(11) \\
    \langle e_I , \hat{\mathbf{X}}_t \rangle \,\,\, \textrm{ for } I= (i) \textrm{ or } I=(ji) \textrm{ where } i\in \{2, \dots, n+1\}, j \in \{1, \dots, n+1\} \\ \varphi(r)\cdot \langle e_i, \hat{\mathbf{X}}_t \rangle - \langle e_{(1i)} , \hat{\mathbf{X}}_t \rangle \textrm{ for } I= (i1) \textrm{ where } i\in \{2, \dots, n+1\}.
    \end{cases}
\end{equation*}
\end{definition}
Let us point out that the above definition is constructed such that any stopped weakly geometric rough path remains weakly geometric and such that it coincides with how we constructed the lifts of stopped semimartingales. 

We are now ready to  formulate the following lemma which we prove in Appendix~\ref{app:Proofs_Foundations}.

\begin{lemma}\label{lem:weighted}
Let $\frac{1}{3} < \alpha < \frac{1}{2}$ and define the weight function 
\begin{equation}\label{eq:weight}
\psi(\hat{\mathbf{X}}_{[0,t]}):= \exp\left( \zeta \|\widehat{\, \mathbf{X}^t}_{[0,T]}\|^{\xi}_{\alpha, [0,T]}\right)
\end{equation}
for $\hat{\mathbf{X}}_{[0,t]} \in \bigcup_{t\in [0,T]} \hat{C}^{\alpha}_{t}(\varphi, x)$, $\zeta>0$, $\xi>2$, $\alpha < \frac{1}{2}$ and $\|\cdot\|_{\alpha, [0,T]}$ being the norm induced by $d_{\alpha, [0,T]}$. Then $\bigcup_{t\in [0,T]} \hat{C}^{\alpha}_{t}(\varphi, x)$ equipped with $d_{\Lambda^{\alpha'}}$ for $0\leq \alpha' < \alpha$ is a weighted space. 
\end{lemma}

We have thus shown that $\left(\bigcup_{t\in [0,T]} \hat{C}^{\alpha}_{t}(\varphi, x), d_{\Lambda^{\alpha'}} \right)$ is a weighted space for the weight function given by \eqref{eq:weight} and $0\leq \alpha' < \alpha$. We would like to emphasise that to achieve this result it is crucial to equip $\bigcup_{t\in [0,T]} \hat{C}^{\alpha}_{t}(\varphi, x)$ with a weaker topology than the $\alpha$-H\"older topology. We are finally ready to formulate the global universal approximation theorem for non-anticipative path-functionals. Its proof is also given in Appendix \ref{app:Proofs_Foundations}.

    \begin{theorem}[Global Universial Approximation Theorem]\label{thm:GUAT}
Let $\frac{1}{3} <\alpha < \frac{1}{2}$, $0 \leq \alpha' < \alpha$ and consider the weight function $\psi$ given in \eqref{eq:weight}.
    Then, for every non-anticipative path-functional $f\in \mathcal{B}_{\psi}\left( \big( \bigcup_{t\in [0,T]} \hat{C}_{t}^{\alpha}(\varphi, x), d_{\Lambda^{\alpha'}}\big)\right)$  (i.e.~$\bigcup_{t\in [0,T]} \hat{C}_{t}^{\alpha}(\varphi, x)$ equipped with $d_{\Lambda^{\alpha'}}$)
    and for every $\epsilon >0$ there exists a linear function on the signature $L$ such that 
    \begin{equation*}
        \sup_{\hat{\mathbb{X}}^2_{[0,t]}(\omega) \in \, \Lambda_T^{2, \alpha'}(\varphi,x)} \frac{\left| f(\hat{\mathbb{X}}^2_{[0,t]}(\omega)) - L(\hat{\mathbb{X}}_t)(\omega)  \right|}{\psi(\hat{\mathbb{X}}^2_{[0,t]}(\omega) )} < \epsilon.
    \end{equation*}

    \end{theorem}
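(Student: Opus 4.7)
I would prove Theorem \ref{thm:GUAT} by applying the weighted Stone--Weierstrass theorem from \cite[Theorem 3.6]{Cuchiero_Schmocker_Teichmann_2023} to the algebra $\mathfrak{L}$ of linear functions on the signature, regarded as real-valued functions on the weighted space $(\Lambda_T^{2,\alpha'}(\varphi,x), \psi)$. The weighted-space structure is already guaranteed by Lemma~\ref{lem:weighted}, so what remains is to verify that $\mathfrak{L}$ is a point-separating subalgebra containing the constants, that its elements are $d_{\Lambda^{\alpha'}}$-continuous, and that they are dominated by $\psi$ so that $\mathfrak{L} \subset \mathcal{B}_{\psi}\big(\Lambda_T^{2,\alpha'}(\varphi,x)\big)$.

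\textbf{Verification of the algebraic hypotheses.} The shuffle-product identity (Lemma 2.3) immediately shows that pointwise products of linear functions on the signature lie again in $\mathfrak{L}$, so $\mathfrak{L}$ is an $\mathbb{R}$-algebra, and taking $l_{\emptyset}\neq 0$ with all other $l_I=0$ yields the constants. For point separation, consider two distinct lifted stopped paths $\hat{\mathbb{X}}^2_{[0,t_1]}(\omega_1), \hat{\mathbb{Y}}^2_{[0,t_2]}(\omega_2)$. If $t_1 \neq t_2$, the first-level time coordinate $\langle e_{\text{time}},\cdot\rangle = \varphi(t)-\varphi(0)$ separates them; otherwise $t_1=t_2$ but the underlying paths differ, and the strict monotonicity of the time component together with Lemma~\ref{lem:uniqueness_sig} forces some signature coefficient to distinguish them. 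For continuity in the $d_{\Lambda^{\alpha'}}$ topology, each basis functional $\langle e_I, \hat{\mathbb{X}}_t\rangle$ factors as the stopping map (continuous by the $\alpha$-Hölder analogue of Lemma \ref{lem:lambda_continuity} stated in the preceding remark), followed by Lyons' lift $S^{|I|}$ (continuous on bounded sets by Corollary \ref{cor:lyons_lift}), followed by the continuous coordinate projection (Lemma \ref{lem:e_I_continuous} as adapted in the remark). The growth bound uses the classical estimate $|\langle e_I, \widehat{\mathbb{X}^t}_{0,T}\rangle| \le C_I \, \|\widehat{\mathbb{X}^t}^N_{[0,T]}\|_{\alpha,[0,T]}^{|I|} T^{\alpha |I|}$, coming from $\|\mathbf{a}\|_{cc}^{|I|}$-control of projections combined with the comparison of the Carnot--Carathéodory and homogeneous Hölder norms; hence any $L \in \mathfrak{L}$ is bounded by a polynomial in the homogeneous $\alpha$-Hölder norm, which for $\xi>0$ is dominated by $\exp(\zeta\|\cdot\|_{\alpha,[0,T]}^{\xi})=\psi$. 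Thus $\mathfrak{L} \subset \mathcal{B}_{\psi}\big(\Lambda_T^{2,\alpha'}(\varphi,x)\big)$.

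\textbf{Conclusion and main obstacle.} With the four hypotheses in place, the weighted Stone--Weierstrass theorem delivers density of $\mathfrak{L}$ in $\mathcal{B}_{\psi}\big(\Lambda_T^{2,\alpha'}(\varphi,x)\big)$, which is exactly the desired approximation statement. The delicate point, and the place where the semimartingale hypothesis pays off, is reconciling two topologies: the coefficients $\langle e_I,\hat{\mathbb{X}}_t\rangle$ are naturally continuous in the $\alpha$-Hölder metric $d_{\Lambda^{\alpha}}$, whereas the weighted space is built with the strictly weaker $d_{\Lambda^{\alpha'}}$ so that sublevel sets of $\psi$ are compact. Fortunately, on $\{\psi\le R\}$ the $\alpha$-Hölder norm is uniformly bounded, and a standard Arzelà--Ascoli interpolation between $\alpha$ and $\alpha'$ upgrades continuity from $d_{\Lambda^{\alpha}}$ to $d_{\Lambda^{\alpha'}}$. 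This same interpolation argument is what allows the relaxation of the weight exponent from $\xi>2$ in \cite[Theorem 5.4]{Cuchiero_Schmocker_Teichmann_2023} to the current $\xi>1$ in the semimartingale setting, since continuous semimartingales supply finite $\alpha$-Hölder signatures for every $\alpha<1/2$ and thus a richer supply of compacta than the purely rough-path setup.
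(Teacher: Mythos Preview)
Your overall architecture---apply the weighted Stone--Weierstrass theorem of \cite[Theorem~3.6]{Cuchiero_Schmocker_Teichmann_2023} to the signature algebra on the weighted space $(\Lambda_T^{2,\alpha'}(\varphi,x),\psi)$, with Lemma~\ref{lem:weighted} supplying the weighted-space structure---matches the paper's approach exactly. The verification of the algebra property via shuffle, the inclusion of constants, and the continuity considerations are also in line with what is needed.

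Where your argument diverges, and where there is a genuine gap, is in the justification of the exponent $\xi>1$ (as opposed to $\xi>2$ required in \cite[Theorem~5.4]{Cuchiero_Schmocker_Teichmann_2023}). Your explanation---``Arzel\`a--Ascoli interpolation between $\alpha$ and $\alpha'$'' and ``a richer supply of compacta''---is not the operative mechanism. The compact-embedding interpolation is already fully accounted for in Lemma~\ref{lem:weighted} and plays no role in lowering $\xi$. The actual reason $\xi>1$ suffices is specific to semimartingales: continuous semimartingales almost surely satisfy the RIE property of \cite{Promel_2}, so their Stratonovich lifts can be constructed \emph{pathwise} from the first level alone. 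Consequently the first-level path determines the second-level lift, and one can separate points in $\Lambda_T^{2,\alpha'}(\varphi,x)$ using only the subalgebra
\[
\tilde{\mathcal{A}}=\mathrm{span}\Big\{\hat{\mathbb X}^2_{[0,t]}\mapsto \big\langle (e_i\shuffle e_1^{\otimes k})\otimes e_1,\hat{\mathbb X}_t\big\rangle : i\in\{1,\dots,n\},\,k\in\mathbb N\Big\},
\]
whose elements are time-integrals of $(X^i_s-X^i_0)\cdot(\varphi(s)-\varphi(0))^k/k!$ and hence grow \emph{linearly} in the H\"older norm. It is this linear growth of the point-separating generators that yields $\psi$-moderate growth already for $\xi>1$; by contrast, if one works with the full algebra $\mathfrak{L}$ as you do, the level-2 coordinates needed for point separation in the general rough-path setting grow quadratically, which is exactly why \cite{Cuchiero_Schmocker_Teichmann_2023} needs $\xi>2$. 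Your claim that ``polynomial growth is dominated by $\exp(\zeta\|\cdot\|^\xi)$ for any $\xi>0$'' shows only $\mathfrak{L}\subset B_\psi$, not the moderate-growth hypothesis actually required by the weighted Stone--Weierstrass theorem.
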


    \begin{remark}
    Note that the approximation stated in Theorem~\ref{thm:GUAT} is indeed uniform in time, as the supremum is taken over all paths in $\Lambda^{2, \alpha'}_T(\varphi,x)$, i.e. over all lifted stopped paths. One could equivalently take the supremum over $(t, \hat{\mathbb{X}}^{2}_{[0,T]}(\omega)) \in [0,T]\times \mathcal{G}^{2, \alpha'}_T(\varphi, x)$. In fact, as shown in the proof of Theorem~\ref{thm:GUAT}, we can even take the supremum over $\bigcup_{t\in [0,T]} \hat{C}_{t}^{\alpha}(\varphi, x)$.
    \end{remark}
 
	\subsection{The Johnson-Lindenstrauss Lemma}
	The Johnson-Lindenstrauss Lemma~\cite{JL_Lemma} is an important mathematical result, which is interesting for machine learning and data science, as it can be seen as a dimension reduction technique. There are many variants of the Johnson-Lindestrauss lemma, we will here use a version stated in \cite[Theorem 3.1]{JL_variant}:
	
	\begin{theorem}\label{thm:JL}
		Let $n\in \mathbb{N}$, $\epsilon \in (0,\frac{1}{2})$, $\delta \in (0,1)$ and set $k=C\epsilon^{-2} \log(\frac{\delta}{2})$ for a suitable constant $C$. Then consider a random matrix $A$ of dimensions $k\times n$, whose entries $A_{ij}$ are i.i.d. random variables with $\mathbb{E}[A_{ij}]=0$, $\mathrm{Var}[A_{ij}]=\frac{1}{k}$ and sub-Gaussian tails. Then it holds for all $x \in \mathbb{R}^n$ that 
		$$ \mathbb{P}\left[ (1-\epsilon)\|x\| \leq \|Ax\| \leq (1+\epsilon)\|x\|\right]\geq 1-\delta.$$
		
		\begin{remark}
		    Note that since the above holds for all $x\in\mathbb{R}^n$, it immediately follows that it holds in particular for all $x-y$, where $x,y\in\mathbb{R}^n$. Hence, Theorem~\ref{thm:JL} can be understood to preserve distances up to $(1 \pm \epsilon)$.
		\end{remark}
		
		\begin{remark}
		    In this paper, we will use Theorem~\ref{thm:JL} to obtain a random projection of the truncated signature. We will choose the components $A_{ij}$ to be i.i.d. with $A_{ij}\sim \mathcal{N}(0,\frac{1}{k})$.
		\end{remark}
	    
	\end{theorem}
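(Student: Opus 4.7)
The plan is to follow the classical route for Johnson--Lindenstrauss: reduce the statement to a concentration estimate for $\|Ax\|^2$ around $\|x\|^2$, and then invoke a Bernstein-type tail bound for independent sub-exponential summands. Throughout I denote by $\|\cdot\|_{\psi_2}$ and $\|\cdot\|_{\psi_1}$ the sub-Gaussian and sub-exponential Orlicz norms respectively.

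By homogeneity of the inequality $(1-\epsilon)\|x\|\leq \|Ax\|\leq (1+\epsilon)\|x\|$ in $x$, it suffices to fix $x\in\mathbb{R}^n$ with $\|x\|=1$ and show $\|Ax\|^2$ concentrates around $1$ with probability at least $1-\delta$. Writing $Y_i:=(Ax)_i=\sum_{j=1}^n A_{ij}x_j$, independence of the rows of $A$ together with the hypothesis that the entries are i.i.d.\ mean-zero sub-Gaussian with variance $1/k$ yields that the $Y_i$ are independent, mean-zero, with $\mathbb{E}[Y_i^2]=\|x\|^2/k=1/k$. Moreover each $Y_i$ is sub-Gaussian with $\|Y_i\|_{\psi_2}\lesssim K\|x\|=K$, where $K\sim 1/\sqrt{k}$ is a uniform $\psi_2$-norm for the entries $A_{ij}$ supplied by the sub-Gaussian tails assumption; this uses the standard fact that a weighted sum of independent sub-Gaussian variables is itself sub-Gaussian with norm controlled by the $\ell^2$-norm of the weights.

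Consequently the centered squares $Z_i:=Y_i^2-1/k$ are independent centered sub-exponential variables with $\|Z_i\|_{\psi_1}\lesssim \|Y_i\|_{\psi_2}^2\lesssim 1/k$. Bernstein's inequality for sums of independent centered sub-exponential random variables then yields, for an absolute constant $c>0$,
\begin{equation*}
\mathbb{P}\bigl(\,\bigl|\|Ax\|^2-1\bigr|>\epsilon\,\bigr)
=\mathbb{P}\Bigl(\Bigl|\sum_{i=1}^k Z_i\Bigr|>\epsilon\Bigr)
\leq 2\exp\!\bigl(-c\min(\epsilon^2 k,\epsilon k)\bigr)
=2\exp(-c\epsilon^2 k)
\end{equation*}
for $\epsilon\in(0,1/2)$. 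Imposing that the right-hand side is at most $\delta$ gives precisely the quantitative requirement $k\geq C\epsilon^{-2}\log(2/\delta)$ stated in the theorem.

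Finally, the event $|\|Ax\|^2-1|\leq\epsilon$ implies $\sqrt{1-\epsilon}\leq\|Ax\|\leq\sqrt{1+\epsilon}$, and the elementary bounds $1-\epsilon\leq\sqrt{1-\epsilon}$ and $\sqrt{1+\epsilon}\leq 1+\epsilon$ (valid on $(0,1/2)$) yield the desired two-sided estimate; rescaling by $\|x\|$ recovers the general case. The main technical obstacle is the careful tracking of the $\psi_2$- and $\psi_1$-norms through the two nonlinear operations (weighted sum followed by squaring); the proof closes cleanly only because the sub-Gaussian tails assumption delivers a uniform $\psi_2$-norm of order $1/\sqrt{k}$ for the entries of $A$, which is exactly the scaling needed so that the squared-and-summed error carries $\epsilon^2 k$ in the Bernstein exponent.
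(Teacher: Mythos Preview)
Your proof is correct and follows the standard route via sub-Gaussian concentration and Bernstein's inequality for sub-exponential sums. Note, however, that the paper does not actually prove this theorem: it is quoted verbatim from Matou\v{s}ek's variant of the Johnson--Lindenstrauss lemma (the reference \cite{JL_variant}), and the paper's ``proof'' consists solely of that citation. Your argument is essentially the one given in that reference, so there is nothing substantive to compare; you have supplied the details the paper chose to outsource. One small remark: you silently (and correctly) read the stated $k=C\epsilon^{-2}\log(\delta/2)$ as $k=C\epsilon^{-2}\log(2/\delta)$, which is clearly the intended expression since the former would be negative for $\delta\in(0,1)$.
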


	\subsection{The Market and Important Portfolio Specifications}
	
	Consider a finite time-horizon $T>0$ and some filtered probability space $(\Omega, \mathcal{F}, (\mathcal{F}_t)_{t\in[0,T]}, \mathbb{P})$. We assume that the market consists of $d$ companies, with market capitalizations given by a vector $(S_t)_{t \in [0,T]}= (S^1_t,\ldots, S^d_t)_{t \in [0,T]}$ of $d$ positive continuous semimartingales with fixed initial value $S_0=s_0$. We assume this market fulfils the \emph{No Unbounded Profit with Bounded Risk (NUPBR)}~\cite{Schweizer_Hulley} condition. Furthermore, we assume that each company has issued a single infinitely divisible share, hence $S^1_t, \ldots, S^d_t$ correspond to the prices of these shares. We denote the relative capitalization of the companies, also called \emph{market weights} by $\mu_t =(\mu_t^1, ..., \mu_t^d)$ with $$\mu_t^i= \frac{S_t^i}{S_t^1 + \cdots+ S_t^d}.$$

 \begin{definition}\label{def:portfolios}
     We call a vector $\pi_t=  (\pi^1_t, ..., \pi^d_t)$ of predictable processes fulfilling $\sum_{i=1}^d \pi_t^i \equiv 1$ and being integrable with respect to $R$, with $R_t^i:=\int_0^t \frac{dS_u^i}{S_u^i}$, 
     an \emph{($R$-integrable)  portfolio}. 
 \end{definition}
 
    The component $\pi^i_t$ denotes the proportion of wealth invested in stock $i$ at time $t$, hence every portfolio is self-financing. Consider investing with a portfolio $\pi$ over a time-horizon $[t_0, t_1]$ for $0\leq t_0 < t_1 \leq T$. We denote its wealth process portfolio $\pi$ with initial wealth $w$ by $(W^{\pi, w}_{t})_{t\in[t_0,t_1]}$. Without loss of generality we will assume from now on that the initial investment is always one unit of currency, i.e. that $w=1$ and therefore denote $W^{\pi}_{t}:= W^{\pi, 1}_{t}$. The wealth process of a portfolio fulfills the stochastic differential equation 
	\begin{equation}\label{eq:val_process_S}
		\frac{d W_t^{\pi}}{W_t^{\pi}}= \sum_{i=1}^d \pi^i_t \frac{\dd S_t^i}{S_t^i}
	\end{equation}
	which is understood in It\^o-sense. 
	
	A special portfolio is the \emph{market portfolio} whose portfolio weights are given by the relative capitalizations $\mu_t$, whence we also refer to $\mu_t$ as the market weights. The wealth of the market portfolio is given by $$W^{\mu}_t= \frac{S_t^1+ \ldots+ S_t^d}{S_{t_0}^1+ \ldots+S_{t_0}^d}$$ since we require $W_{t_0}^{\mu}= 1$.
	
	We define the relative wealth process of a portfolio $\pi$ to be $$V_t^{\pi}:= \frac{W_t^{\pi}}{W_t^{\mu}}.$$ By using It\^o's-formula, it follows directly from \eqref{eq:val_process_S} that 
	\begin{equation}
		\frac{d V_t^{\pi}}{V_t^{\pi}}= \sum_{i=1}^d \pi^i_t \frac{d \mu_t^i}{\mu_t^i}.
	\end{equation}
	Furthermore, by applying It\^o's formula again, we obtain 
	\begin{equation}
		d\ln(V_t^{\pi})= \sum_{i=1}^d \frac{\pi^i_t}{\mu_t^i} d\mu_t^i- \frac{1}{2} \sum_{i,j=1}^d \frac{\pi^i_t}{\mu_t^i} \frac{\pi^j_t}{\mu_t^j} d[\mu^i, \mu^j]_t,
	\end{equation}
	where $[X,Y]_t$ denotes the co-variation process of continuous semimartingales $X,Y$.
	
    \begin{notation}
        We denote by $\Delta^n$ the unit simplex of $\mathbb{R}^n$ and by $\Delta_+^n$ its interior. We call a portfolio $\pi$ long-only, if $\pi_t \in \Delta^d$ for all $t\in[0,T]$ almost surely.
    \end{notation}
	
	We now turn to some portfolios which are of special interest, those are the \emph{numeraire},  \emph{log-optimal} and \emph{growth-optimal} portfolio.
	
	\begin{definition}[Numeraire Portfolio]
	   We call a portfolio $\rho$ the \emph{numeraire portfolio} if
    $\frac{W^{\pi}}{W^{\rho}}$ is a supermartingale
   for any other portfolio $\pi$.
	\end{definition}
	
	\begin{theorem}\label{thm:NUPBR_numeraire}
	   In a semimartingale market the NUPBR condition (see e.g.~\cite{Schweizer_Hulley}) holds if and only if the numeraire portfolio $\rho$ exists and $W^{\rho}_T < \infty$ almost surely.  
	\end{theorem}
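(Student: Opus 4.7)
The plan is to prove the two implications separately, relying on the standard characterization (as discussed in~\cite{Schweizer_Hulley}) that NUPBR is equivalent to the assertion that the set of terminal wealths
$$\mathcal{W}_T := \{ W_T^\pi : \pi \text{ is an $R$-integrable portfolio}\}$$
is bounded in probability, i.e.\ $\lim_{K\to\infty} \sup_{\pi} \mathbb{P}[W_T^\pi > K] = 0$. This reformulation is the workhorse of both directions.

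For the easy direction (\emph{numeraire $\Rightarrow$ NUPBR}), suppose the numeraire portfolio $\rho$ exists with $W_T^\rho < \infty$ almost surely. By the defining property, $W^\pi/W^\rho$ is a nonnegative supermartingale with initial value $1$, so $\mathbb{E}[W_T^\pi/W_T^\rho] \le 1$. Given $\varepsilon>0$, I would first pick $M$ with $\mathbb{P}[W_T^\rho > M] < \varepsilon/2$ (possible since $W_T^\rho$ is a.s.\ finite), and then apply Markov's inequality to $W_T^\pi/W_T^\rho$ to obtain
\begin{equation*}
\mathbb{P}[W_T^\pi > K] \le \mathbb{P}[W_T^\rho > M] + \mathbb{P}\!\left[\frac{W_T^\pi}{W_T^\rho} > \frac{K}{M}\right] \le \frac{\varepsilon}{2} + \frac{M}{K},
\end{equation*}
which is less than $\varepsilon$ for $K$ large, uniformly in $\pi$. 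Hence $\mathcal{W}_T$ is bounded in probability.

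For the converse (\emph{NUPBR $\Rightarrow$ numeraire exists}), I would follow the Karatzas--Kardaras strategy and construct $\rho$ as the log-optimal portfolio, i.e.\ as the maximizer of $\pi \mapsto \mathbb{E}[\log W_T^\pi]$ (possibly after localization to make the expectation well-defined). The \textbf{main obstacle} is producing the maximizer in the non-locally-compact setting of $L^0$: the cleanest route is to invoke a Komlós-type forward convex combination lemma, using convexity of $\mathcal{W}_T$ together with boundedness in probability (granted by NUPBR) to extract an almost sure limit that is itself attained by some portfolio $\rho$. An equivalent, and for semimartingales often more transparent, route is to build $\rho$ pointwise from the semimartingale characteristics of $R$: at each $(\omega,t)$, the predictable portfolio weights solving the pointwise concave program for the instantaneous log-growth rate define a predictable $R$-integrable process, which gives $\rho$. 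Either way, the final step is a standard perturbation argument: for any portfolio $\pi$ and small $\epsilon \in (0,1)$, consider $\pi_\epsilon := (1-\epsilon)\rho + \epsilon \pi$; differentiating $\mathbb{E}[\log W_T^{\pi_\epsilon}]$ at $\epsilon = 0$, the first-order optimality condition for $\rho$ translates exactly into the supermartingale property of $W^\pi/W^\rho$. Finiteness of $W_T^\rho$ is then immediate, since $W_T^\rho \in \mathcal{W}_T$ and NUPBR forces $\mathcal{W}_T$ to be finite-valued in probability.
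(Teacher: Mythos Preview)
Your proposal is correct and follows essentially the same route as the paper, which simply cites \cite[Theorem 4.12]{Karatzas07} (Karatzas--Kardaras) and notes that the constraint set $\mathfrak{C}=\{\pi\in\mathbb{R}^d:\sum_i\pi^i=1\}$ fits their framework of predictable closed convex constraints. Your sketch is precisely an outline of the argument behind that cited theorem: the easy direction via the supermartingale bound and Markov's inequality is exactly right, and your hard direction (Koml\'os-type compactness or pointwise maximization over characteristics, followed by the convex-perturbation first-order condition yielding the supermartingale property) is the Karatzas--Kardaras strategy. The only point worth flagging is that in the hard direction you should not literally maximize $\mathbb{E}[\log W_T^\pi]$, which may be $-\infty$ or $+\infty$; the actual construction in \cite{Karatzas07} works with the \emph{relative} rate (equivalently, the local characteristics), which is what your parenthetical ``possibly after localization'' and your second route via characteristics correctly gesture at. Since the constraint set here is affine, your perturbation $\pi_\epsilon=(1-\epsilon)\rho+\epsilon\pi$ remains admissible, so no modification is needed for the no-bank-account setting.
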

	
	\begin{proof}
	    See~\cite[Theorem 4.12]{Karatzas07}, where in our case the predictable closed convex constraints $\mathfrak{C}$ are 
	    $$\mathfrak{C} = \left\{ \pi \in \mathbb{R}^d \, \middle\vert \, \sum_{i=1}^d \pi^i = 1 \right\}.$$
	    Note that \cite[Theorem 4.12]{Karatzas07} considers the setting with bank-account, however our constraint set $\mathfrak{C}$ prohibits us from investing into the bank account, i.e. $\pi^0\equiv 0$ (in the notation of \cite{Karatzas07}). 
	\end{proof}

 To connect the numeraire portfolio with the log-optimal one introduced in Definition~\ref{def:log} below, we recall  Proposition 4.19 from~\cite{Karatzas07}.
 
	\begin{prop}
	    Let $U:(0,\infty) \rightarrow \mathbb{R}$ be concave and strictly increasing. The utility maximization problem 
	    \begin{equation*}
	        \sup_{\pi} \mathbb{E} \left[ U(W_T^{\pi})\right] 
	    \end{equation*}
	    has no solution or has infinitely many solutions, if the NUPBR condition does not hold. 
	\end{prop}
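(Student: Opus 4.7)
The plan is to argue by contraposition: assuming the utility maximization problem has a \emph{unique} optimizer $\pi^{\ast}$ with finite value $v^{\ast} := \mathbb{E}[U(W^{\pi^{\ast}}_T)]$, I would derive that NUPBR must hold by showing that NUPBR failing contradicts either the optimality or the uniqueness of $\pi^{\ast}$.

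The first step is to establish the convexity of the attainable set of terminal wealths
\[
\mathcal{X}_1 := \{W^\pi_T \mid \pi \text{ is a portfolio with } W^\pi_0 = 1\}.
\]
Given $W^{\pi^{(1)}}, W^{\pi^{(2)}}$ with $W^{\pi^{(i)}}_0 = 1$ and $\lambda \in (0,1)$, the process $W := \lambda W^{\pi^{(1)}} + (1-\lambda) W^{\pi^{(2)}}$ starts at $1$ and, via \eqref{eq:val_process_S}, satisfies $dW_t/W_t = \sum_i \tilde{\pi}^i_t \, dS^i_t/S^i_t$ for the weights
\[
\tilde{\pi}^i_t := \frac{\lambda W^{\pi^{(1)}}_t (\pi^{(1)})^i_t + (1-\lambda) W^{\pi^{(2)}}_t (\pi^{(2)})^i_t}{\lambda W^{\pi^{(1)}}_t + (1-\lambda) W^{\pi^{(2)}}_t},
\]
which sum to one. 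Hence $\tilde{\pi}$ is a portfolio with $W^{\tilde{\pi}} = W$, proving $\mathcal{X}_1$ is convex.

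By definition, NUPBR requires $\mathcal{X}_1$ to be bounded in probability, so its failure produces $c_0 > 0$ and $X^n \in \mathcal{X}_1$ with $\mathbb{P}[X^n \geq n] \geq c_0$. For $\lambda \in (0,1)$, the convex combination $Y_{n,\lambda} := \lambda W^{\pi^{\ast}}_T + (1-\lambda) X^n$ again lies in $\mathcal{X}_1$, and monotonicity of $U$ gives $U(Y_{n,\lambda}) \geq U(\lambda W^{\pi^{\ast}}_T)$. As $n \to \infty$, $Y_{n,\lambda} \to +\infty$ in probability on $\{W^{\pi^{\ast}}_T < \infty\}$, so $U(Y_{n,\lambda}) \to U(\infty) := \lim_{x \to \infty} U(x) \in (-\infty,+\infty]$ in probability. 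If $U(\infty) = +\infty$, Fatou's lemma yields $\mathbb{E}[U(Y_{n,\lambda})] \to +\infty$, contradicting $v^{\ast} < \infty$, so no optimizer can exist. If $U(\infty) < +\infty$, dominated convergence (with $U(\lambda W^{\pi^{\ast}}_T)$ as an integrable lower bound for $\lambda$ close to $1$ and the constant $U(\infty)$ as upper bound) forces $\mathbb{E}[U(Y_{n,\lambda})] \to U(\infty) > v^{\ast}$, again contradicting optimality of $\pi^{\ast}$. In the degenerate situation where $U$ is concave but not \emph{strictly} concave, the same convex-combination construction can match $v^{\ast}$ with equality at a continuum of wealths, which is the mechanism behind the ``infinitely many solutions'' alternative in the statement.

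The main obstacle is the integrability bookkeeping: since $U$ is only assumed concave and strictly increasing, it may be unbounded below near $0$, and $U(\lambda W^{\pi^{\ast}}_T)$ is not automatically in $L^1$. The remedy is to exploit concavity of $U$ together with $v^{\ast} < \infty$ and to choose $\lambda$ sufficiently close to $1$, which controls the dominating function and legitimises the Fatou and dominated-convergence steps. This delicate integrability analysis is the technical core of \cite[Proposition~4.19]{Karatzas07}, on which the statement above rests.
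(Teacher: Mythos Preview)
The paper does not give its own proof of this proposition; it merely recalls it from \cite[Proposition~4.19]{Karatzas07}. So there is no argument in the paper to compare against, and your final sentence deferring the technical core to that reference is in fact exactly what the paper does.

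That said, your sketch has a genuine gap before you ever reach the integrability issue. From the failure of NUPBR you only obtain a sequence $X^n\in\mathcal X_1$ with $\mathbb P[X^n\geq n]\geq c_0$ for some fixed $c_0>0$; the events $\{X^n\geq n\}$ may vary with $n$, so it is \emph{not} true that $Y_{n,\lambda}=\lambda W^{\pi^\ast}_T+(1-\lambda)X^n\to+\infty$ in probability, and hence neither your Fatou step (for $U(\infty)=+\infty$) nor your dominated-convergence step (for $U(\infty)<+\infty$) is justified as written. Upgrading ``not bounded in probability'' to ``there exists a sequence in $\mathcal X_1$ tending to $+\infty$ in probability'' does hold, but it requires an additional forward-convex-combination argument exploiting the convexity of $\mathcal X_1$ that you established (this is essentially how \cite{Karatzas07} characterise NUPBR); you have not supplied or cited it. Alternatively, one can bypass convergence in probability entirely by splitting expectations over $A_n:=\{X^n\geq n\}$ and its complement and using $\mathbb P[A_n]\geq c_0$ directly, but then the bounded-$U$ case needs a more careful estimate than the one you give, since on $A_n^c$ you only control $U(Y_{n,\lambda})$ from below by $U(\lambda W^{\pi^\ast}_T)$, which is strictly smaller than $U(W^{\pi^\ast}_T)$.
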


	\begin{definition}\label{def:log}
	   Consider the log-utility optimization problem
	   \begin{equation*}
	        \sup_{\pi} \mathbb{E} \left[ \log(W_T^{\pi})\right]. 
	    \end{equation*}
	    If it is well-posed, we call the corresponding optimal portfolio \emph{log-optimal} portfolio.
	\end{definition}

 The following lemma can also be found in ~\cite{Karatzas07}. 
	\begin{lemma} \label{lem:num}
	    Under the NUPBR condition, the log-utility optimization problem admits a solution, if it is finite, in which case the numeraire portfolio is the log-optimal portfolio. 
	\end{lemma}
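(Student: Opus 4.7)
The plan is to exploit the defining supermartingale property of the numeraire portfolio, combined with Jensen's inequality for the concave $\log$ function, to show that $\rho$ dominates every competitor in expected log-wealth.

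First, I would invoke Theorem~\ref{thm:NUPBR_numeraire} to deduce that under NUPBR the numeraire portfolio $\rho$ exists and satisfies $W^{\rho}_T < \infty$ almost surely, so that $\rho$ is itself an admissible candidate for the log-optimization. Fix an arbitrary portfolio $\pi$. By the very definition of the numeraire portfolio, the ratio process $M_t := W^{\pi}_t / W^{\rho}_t$ is a non-negative supermartingale with $M_0 = 1$, hence
\begin{equation*}
    \mathbb{E}[M_T] \;=\; \mathbb{E}\!\left[\frac{W^{\pi}_T}{W^{\rho}_T}\right] \;\leq\; 1.
\end{equation*}

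Next, I would apply Jensen's inequality to the concave function $\log$ (with the caveat discussed below on integrability):
\begin{equation*}
    \mathbb{E}\!\left[\log\!\frac{W^{\pi}_T}{W^{\rho}_T}\right] \;\leq\; \log \mathbb{E}\!\left[\frac{W^{\pi}_T}{W^{\rho}_T}\right] \;\leq\; \log 1 \;=\; 0.
\end{equation*}
Rewriting the left-hand side as $\mathbb{E}[\log W^{\pi}_T] - \mathbb{E}[\log W^{\rho}_T]$ gives $\mathbb{E}[\log W^{\pi}_T] \leq \mathbb{E}[\log W^{\rho}_T]$ for every admissible $\pi$. Taking the supremum over $\pi$ and recalling that $\rho$ is itself admissible yields
\begin{equation*}
    \sup_{\pi} \mathbb{E}[\log W^{\pi}_T] \;=\; \mathbb{E}[\log W^{\rho}_T],
\end{equation*}
so $\rho$ attains the supremum and is therefore the log-optimal portfolio.

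The main subtlety, which I would have to handle carefully, is the integrability needed to justify the Jensen step: one must argue that $\log M_T$ has a well-defined expectation (at worst equal to $-\infty$). I would split $\log M_T = (\log M_T)^+ - (\log M_T)^-$ and use the hypothesis that $\sup_{\pi}\mathbb{E}[\log W^{\pi}_T] < \infty$ (the finiteness assumption) together with the elementary bound $(\log x)^+ \leq x$ for $x>0$ applied to $M_T$, which combined with $\mathbb{E}[M_T]\leq 1$ controls the positive part. With this in place, Jensen's inequality applies in the extended sense and the chain of inequalities above goes through. The finiteness assumption then prevents the degenerate case $\mathbb{E}[\log W^{\rho}_T] = -\infty$ from being optimal trivially, and ensures that $\rho$ is a genuine maximizer.
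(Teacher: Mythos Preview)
Your argument is the standard and correct one. Note, however, that the paper does not actually give its own proof of this lemma: it simply records the statement and attributes it to \cite{Karatzas07}. So there is no ``paper's proof'' to compare against beyond that citation, and the supermartingale-plus-Jensen route you outline is precisely the argument one finds in that literature (e.g.\ Karatzas--Kardaras). Your handling of the integrability caveat is appropriate; the only point I would sharpen is the justification that the finiteness hypothesis rules out $\mathbb{E}[\log W^{\rho}_T]=-\infty$: rather than arguing by contradiction after splitting the expectation, it is cleaner to observe directly that $(\log W^{\pi}_T)^+ \leq (\log M_T)^+ + (\log W^{\rho}_T)^+$, so finiteness of $\sup_\pi \mathbb{E}[\log W^\pi_T]$ together with $\mathbb{E}[(\log M_T)^+]\leq \mathbb{E}[M_T]\leq 1$ already forces $\mathbb{E}[(\log W^{\rho}_T)^+]<\infty$, after which the additivity $\mathbb{E}[\log W^\pi_T]=\mathbb{E}[\log M_T]+\mathbb{E}[\log W^\rho_T]$ holds in the extended sense and the conclusion follows.
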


 Finally, let us introduce the notion of the growth-optimal portfolio.

	\begin{definition}\label{def:growth-opt}
	    Consider a market model 
	    $$ dS_t =\operatorname{diag}(S_t)( a_t dt + \Sigma_t dB_t), $$
	    where $a_t$ is a $d$-dimensional vector, $\Sigma_t$ a $d\times m$-matrix for $m\geq d$ and $B$ a $m$-dimensional Brownian motion. Assume $a_t$, $B_t$ are predictable processes and satisfy $$\sum_{i=1}^d \int_0^T |a_t^i| dt + \sum_{i=1}^d \sum_{j=1}^m \int_0^T |\Sigma_t^{ij}|^2 dt < \infty \hspace{0.3cm} \mathbb{P}\textrm{-a.s.}$$ and $\Sigma_t\Sigma_t^{\mathsf{T}}$ is almost surely invertible for all $t\in[0,T]$. The log-wealth-process of a portfolio $\pi$ in this market model is then given by 
	    $$ d\ln(W^{\pi}_t) = \left(\pi_t^{\mathsf{T}}a_t - \frac{1}{2} \pi_t^{\mathsf{T}} \Sigma_t \Sigma_t^{\mathsf{T}} \pi_t \right) dt + \pi_t^{\mathsf{T}} \Sigma_t dW_t.$$
	    We call $g^{\pi}_t =\pi_t^{\mathsf{T}}a_t - \frac{1}{2} \pi_t^{\mathsf{T}} \Sigma_t \Sigma_t^{\mathsf{T}} \pi_t$ the portfolios \emph{growth-rate} and the portfolio with maximal growth-rate, if it exists, the \emph{growth-optimal} portfolio. 
	    
	\end{definition}
	
	\begin{lemma}\label{lem:growth-opt}
	    Consider a market model given in Definition~\ref{def:growth-opt} and let us define  the (instantaneous) market price of risk $$\theta_t := \Sigma_t^{\mathsf{T}}(\Sigma_t\Sigma_t^{\mathsf{T}})^{-1}a_t.$$
	    The growth-optimal portfolio exists if and only if $\int_0^T \|\theta_t\|^2 dt < \infty$ almost surely. Moreover, the log-optimal portfolio exists if and only if $\mathbb{E}\left[\int_0^T \|\theta_t\|^2 dt \right]< \infty$. 
        Moreover, if the growth-optimal portfolio $\pi^{(g)}$ exists, it is given by 
		\begin{equation*}
			\pi^{(g)}_t = (\Sigma_t\Sigma_t^{\mathsf{T}})^{-1} \left( a_t - \kappa \mathbf{1}\right) \textrm{ for } \kappa = \frac{\sum_i ((\Sigma_t\Sigma_t^{\mathsf{T}})^{-1}a_t)_i - 1}{\sum_{i,j} (\Sigma_t\Sigma_t^{\mathsf{T}})^{-1}_{i,j}}
		\end{equation*}
		where $\mathbf{1}=(1, ..., 1 )^{\mathsf{T}}$.
	\end{lemma}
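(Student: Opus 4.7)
My plan is to attack the lemma in the natural order: derive the candidate growth-optimal portfolio by pointwise optimization, analyze the integrability conditions under which this candidate is a bona fide $R$-integrable portfolio, and finally distinguish almost-sure from expectation-wise conditions to separate the growth-optimal from the log-optimal assertion. For the explicit form, I would fix $(t,\omega)$ and view $\pi \mapsto g_t^\pi = \pi^{\mathsf{T}} a_t - \tfrac{1}{2}\pi^{\mathsf{T}} \Sigma_t\Sigma_t^{\mathsf{T}} \pi$ as a strictly concave quadratic on the affine subspace $\mathcal{A} = \{\pi \in \mathbb{R}^d : \mathbf{1}^{\mathsf{T}} \pi = 1\}$ (strict concavity uses that $\Sigma_t\Sigma_t^{\mathsf{T}}$ is positive definite). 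Introducing a Lagrange multiplier $\kappa$ for the budget constraint and setting the gradient to zero yields the stationarity equation $a_t - \Sigma_t\Sigma_t^{\mathsf{T}} \pi = \kappa \mathbf{1}$, hence $\pi = (\Sigma_t\Sigma_t^{\mathsf{T}})^{-1}(a_t - \kappa\mathbf{1})$; imposing $\mathbf{1}^{\mathsf{T}}\pi = 1$ solves explicitly for $\kappa$ and gives the claimed formula as the unique pointwise maximizer.

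To establish the first equivalence, I would show that this pointwise candidate is an $R$-integrable portfolio in the sense of Definition~\ref{def:portfolios} precisely when $\int_0^T \|\theta_t\|^2\,dt < \infty$ almost surely. Predictability of $\pi^{(g)}$ is inherited from $a$ and $\Sigma$, so the nontrivial requirements are $\int_0^T |(\pi_t^{(g)})^{\mathsf{T}} a_t|\,dt < \infty$ and $\int_0^T \|\Sigma_t^{\mathsf{T}} \pi_t^{(g)}\|^2\,dt < \infty$. The key algebraic identity is
\begin{equation*}
\|\Sigma_t^{\mathsf{T}} \pi_t^{(g)}\|^2 = (a_t - \kappa_t \mathbf{1})^{\mathsf{T}} (\Sigma_t\Sigma_t^{\mathsf{T}})^{-1}(a_t - \kappa_t\mathbf{1}) = \|\theta_t\|^2 - 2\kappa_t\, \mathbf{1}^{\mathsf{T}}(\Sigma_t\Sigma_t^{\mathsf{T}})^{-1}a_t + \kappa_t^2\, \mathbf{1}^{\mathsf{T}}(\Sigma_t\Sigma_t^{\mathsf{T}})^{-1}\mathbf{1},
\end{equation*}
together with the analogous expansion of $g_t^{\pi^{(g)}}$. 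Using the explicit form of $\kappa_t$, both quantities differ from $\tfrac{1}{2}\|\theta_t\|^2$ and $\|\theta_t\|^2$ respectively only by terms controlled by the continuous (hence pathwise locally bounded) quantities $\mathbf{1}^{\mathsf{T}}(\Sigma_t\Sigma_t^{\mathsf{T}})^{-1}\mathbf{1}$ and $\mathbf{1}^{\mathsf{T}}(\Sigma_t\Sigma_t^{\mathsf{T}})^{-1}a_t$, and the integrability conditions on $a$ and $\Sigma$ in the statement take care of the remainder. This gives both directions of the equivalence.

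For the log-optimal assertion, I would invoke Lemma~\ref{lem:num}: under NUPBR the log-optimal portfolio coincides with the numeraire whenever the utility problem is finite. Itô's formula gives
\begin{equation*}
\ln W_T^{\pi^{(g)}} = \int_0^T g_t^{\pi^{(g)}}\,dt + \int_0^T (\pi_t^{(g)})^{\mathsf{T}} \Sigma_t\,dB_t,
\end{equation*}
so $\mathbb{E}[\ln W_T^{\pi^{(g)}}]$ is finite exactly when the stochastic integral is a true martingale, i.e.\ when $\mathbb{E}\bigl[\int_0^T \|\Sigma_t^{\mathsf{T}} \pi_t^{(g)}\|^2\,dt\bigr] < \infty$, which by the identity above is equivalent to $\mathbb{E}\bigl[\int_0^T \|\theta_t\|^2\,dt\bigr] < \infty$. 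Under this condition, the pointwise inequality $g_t^\pi \leq g_t^{\pi^{(g)}}$ for every competing portfolio $\pi$ lifts to expectations (the martingale parts drop out when genuine, and a Fatou-type argument rules out competitors with non-martingale stochastic parts exceeding $\pi^{(g)}$), yielding global log-optimality.

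The main obstacle I anticipate is the bookkeeping around the Lagrange multiplier $\kappa_t$: it depends on both $a_t$ and $(\Sigma_t\Sigma_t^{\mathsf{T}})^{-1}$ and could a priori be large, so one must show cleanly that the correction terms in the identity for $\|\Sigma_t^{\mathsf{T}} \pi_t^{(g)}\|^2$ do not swamp $\|\theta_t\|^2$. I expect this to be cleanest via a completion-of-the-square reformulation, recognizing $\pi_t^{(g)}$ as the orthogonal projection of the unconstrained Merton ratio $(\Sigma_t\Sigma_t^{\mathsf{T}})^{-1}a_t$ onto $\{\mathbf{1}^{\mathsf{T}}\pi=1\}$ in the $\Sigma_t\Sigma_t^{\mathsf{T}}$-inner product; this geometric description makes the two-sided equivalence between finiteness of $\int_0^T \|\theta_t\|^2\,dt$ and admissibility of $\pi^{(g)}$ essentially transparent and similarly handles the expectation version.
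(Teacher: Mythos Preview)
The paper's own proof is a two-line appeal to the literature: the existence equivalences are deferred to Hulley--Schweizer and the explicit formula to Filipovi\'c--Platen. Your proposal instead attempts a self-contained argument, so the comparison is ``citation'' versus ``direct proof''. Your Lagrange-multiplier derivation of the formula is correct and is exactly how one would reconstruct the Filipovi\'c--Platen result; the geometric reading of $\pi^{(g)}$ as the $C_t:=\Sigma_t\Sigma_t^{\mathsf{T}}$-orthogonal projection of the unconstrained Merton ratio onto $\{\mathbf{1}^{\mathsf{T}}\pi=1\}$ is also the right picture.

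There are, however, two soft spots in the equivalence parts. First, you invoke that $\mathbf{1}^{\mathsf{T}} C_t^{-1}\mathbf{1}$ and $\mathbf{1}^{\mathsf{T}} C_t^{-1}a_t$ are ``continuous (hence pathwise locally bounded)'', but Definition~\ref{def:growth-opt} only assumes $a,\Sigma$ predictable with the stated integrability; continuity is not available. The fix is purely algebraic: writing $p_t=\mathbf{1}^{\mathsf{T}} C_t^{-1}a_t$ and $q_t=\mathbf{1}^{\mathsf{T}} C_t^{-1}\mathbf{1}$ one has $\|\Sigma_t^{\mathsf{T}}\pi_t^{(g)}\|^2=\|\theta_t\|^2-(p_t^2-1)/q_t$, and since $1/q_t=\min_{\mathbf{1}^{\mathsf{T}} x=1}x^{\mathsf{T}} C_t x\le d^{-2}\sum_{i,j}(C_t)_{ij}$, the standing assumption $\int_0^T|\Sigma_t^{ij}|^2\,dt<\infty$ already controls the correction term pathwise; Cauchy--Schwarz in the $C_t^{-1}$-inner product handles $p_t$. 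Second, in the log-optimal part you identify ``the log-optimal portfolio exists'' with ``$\mathbb{E}[\ln W_T^{\pi^{(g)}}]$ is finite'' and the latter with square-integrability of the stochastic integrand; both identifications need an extra sentence (finiteness of the drift expectation, and that no competitor can have a higher finite expected log-wealth when $\pi^{(g)}$ already dominates pointwise). These are exactly the details that the Hulley--Schweizer reference packages, which is why the paper simply cites it.
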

	\begin{proof}
	   Regarding the existence of the growth-optimal and log-optimal portfolios, see~\cite{Schweizer_Hulley}, also for the general case of continuous semimartingale markets. The specific form of the growth-optimal portfolio is due to~\cite{Platen}.
	\end{proof}

	\begin{lemma}\label{lem:numeraire_go}
        Consider the setting of a market as in Definition~\ref{def:growth-opt}. Then the growth-optimal portfolio exists if and only if the numeraire portfolio exists  in which case they are the same\footnote{Assuming that their generated wealth is almost surely finite  this holds if and only if NUPBR is satisfied (see Theorem 2.25).}.  
        Moreover, if the log-utility maximization problem is finite, then they also coincide with the log-optimal portfolio.
    \end{lemma}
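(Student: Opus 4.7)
The plan is to first verify that the explicit pointwise maximizer $\pi^{(g)}$ from Lemma~\ref{lem:growth-opt} satisfies the supermartingale property characterizing the numeraire portfolio whenever $\pi^{(g)}$ exists, then treat the converse direction, and finally conclude the log-optimal statement via Lemma~\ref{lem:num}.

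First I would assume $\pi^{(g)}$ exists (so $\int_0^T \|\theta_t\|^2\, dt < \infty$ almost surely) and compute, for an arbitrary portfolio $\pi$, the dynamics of the ratio $W^\pi_t/W^{\pi^{(g)}}_t$. Using the log-wealth dynamics from Definition~\ref{def:growth-opt} for both $W^\pi$ and $W^{\pi^{(g)}}$ and applying It\^o's formula to pass from $\log(W^\pi/W^{\pi^{(g)}})$ to $W^\pi/W^{\pi^{(g)}}$, the drift of $W^\pi/W^{\pi^{(g)}}$ divided by $W^\pi/W^{\pi^{(g)}}$ simplifies after routine algebraic cancellation to $(\pi_t - \pi^{(g)}_t)^{\mathsf{T}}\bigl(a_t - \Sigma_t\Sigma_t^{\mathsf{T}}\pi^{(g)}_t\bigr)$. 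Substituting the explicit form $\pi^{(g)}_t = (\Sigma_t\Sigma_t^{\mathsf{T}})^{-1}(a_t - \kappa_t\mathbf{1})$ yields $a_t - \Sigma_t\Sigma_t^{\mathsf{T}}\pi^{(g)}_t = \kappa_t\mathbf{1}$, whence the drift equals $\kappa_t(\pi_t - \pi^{(g)}_t)^{\mathsf{T}}\mathbf{1} = 0$ since both $\pi$ and $\pi^{(g)}$ sum to one. Thus $W^\pi/W^{\pi^{(g)}}$ is a positive local martingale, hence a supermartingale, which identifies $\pi^{(g)}$ as the numeraire portfolio; the finiteness $W^{\pi^{(g)}}_T<\infty$ then follows from the integrability condition and Theorem~\ref{thm:NUPBR_numeraire} gives NUPBR.

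For the converse, if the numeraire portfolio $\rho$ exists then the drift of $W^\pi/W^\rho$, computed by the same calculation, must be non-positive for every portfolio $\pi$. Choosing perturbations $\pi = \rho + \varepsilon h$ with $h^{\mathsf{T}}\mathbf{1} = 0$ (which remain portfolios for $|\varepsilon|$ small, and inherit $R$-integrability in the sense of Definition~\ref{def:portfolios} from $\rho$) and letting $\varepsilon$ be of either sign yields the first-order condition $h^{\mathsf{T}}(a_t - \Sigma_t\Sigma_t^{\mathsf{T}}\rho_t) = 0$ almost surely for all such $h$. This forces $a_t - \Sigma_t\Sigma_t^{\mathsf{T}}\rho_t$ to be parallel to $\mathbf{1}$, so $\rho$ has exactly the form of $\pi^{(g)}$ in Lemma~\ref{lem:growth-opt}; reading off this expression and using that $\rho$ produces an $R$-integrable portfolio with finite wealth yields $\int_0^T\|\theta_t\|^2\,dt<\infty$ almost surely. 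Alternatively, one may shortcut this step by invoking the known equivalence between NUPBR and existence of the market price of risk in It\^o markets (e.g.~\cite{Schweizer_Hulley}) and combining it with Theorem~\ref{thm:NUPBR_numeraire}. The log-optimal claim is then immediate: once $\pi^{(g)} = \rho$ and NUPBR holds, Lemma~\ref{lem:num} guarantees that when the log-utility problem is finite its optimizer is precisely the numeraire portfolio.

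The main obstacle I anticipate is the converse direction. The drift computation itself is routine, but translating ``$W^\pi/W^\rho$ is a supermartingale for every portfolio $\pi$'' into the pointwise first-order identification of $\rho$ with $\pi^{(g)}$ requires a careful choice of admissible perturbations $h$ that remain portfolios and retain the integrability assumptions in Definition~\ref{def:growth-opt}, together with a measurable-selection style argument to ensure the resulting identity holds $dt\otimes d\mathbb{P}$-almost everywhere rather than merely pointwise for each fixed $h$.
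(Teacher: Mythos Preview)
Your proposal is correct and essentially self-contained, whereas the paper's proof simply cites \cite{Platen} and \cite{Schweizer_Hulley} for the equivalence between growth-optimal and numeraire, and then invokes Lemma~\ref{lem:num} for the log-optimal statement (exactly as you do in your final step). So the two proofs agree on the last assertion but differ substantially in how they establish the main equivalence: the paper defers entirely to the literature, while you unpack the mechanism explicitly. Your drift computation for $W^{\pi}/W^{\pi^{(g)}}$ is correct and makes transparent \emph{why} the growth-optimal portfolio is numeraire in this constrained setting---the Lagrange-multiplier form $a_t-\Sigma_t\Sigma_t^{\mathsf{T}}\pi^{(g)}_t=\kappa_t\mathbf{1}$ annihilates any difference $\pi-\pi^{(g)}$ lying in the hyperplane $\{\sum_i h^i=0\}$. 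This is genuinely more informative than a bare citation. For the converse you correctly identify the delicate point (passing from the supermartingale property to a $dt\otimes d\mathbb{P}$-a.e.\ first-order condition via admissible perturbations and measurable selection); your suggested shortcut through \cite{Schweizer_Hulley} and Theorem~\ref{thm:NUPBR_numeraire} is exactly what the paper relies on anyway, so there is no loss in taking it.
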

    
    \begin{proof}
        See~\cite{Platen} and~\cite{Schweizer_Hulley}. The last assertion then follows from Lemma~\ref{lem:num}
    \end{proof}
    
    In SPT, there is another class of portfolios of special interest, those are \emph{functionally generated portfolios}, see~\cite{Fern02}.
    
    \begin{definition}[Functionally Generated Portfolios]
    Let $U$ be a neighbourhood of $\Delta^d$
    and consider a $C^2$-function $G: U \rightarrow \mathbb{R}_+$ such that $x_i D_i\log G(x)$ is bounded on $\Delta^d$. Then $G$ defines the functionally generated portfolio via
    $$\pi^i_t = \mu_t^i\left(D_i \log{G(\mu_t)} +1 - \sum_{j=1}^d \mu_t^j D_j \log{G(\mu_t)} \right).$$
    The function $G$ is called the \emph{portfolio generating function} and if it is concave, $\pi$ is a long-only portfolio.
    \end{definition}
    
    Before we conclude this section, we introduce the \emph{ranked market-weights} which are of particular interest in SPT, due to the remarkable stability of the capital distribution curves, see for example~\cite{Fern02, Karatzas_Ruf}.
	
	\begin{definition}[Ranked Capitalization and Market Weight Processes]
	    For an $\mathbb{R}^n_+$-valued continuous semimartingale $X$ we denote its \emph{ranked process} by $\mathbf{X}= (X^{(1)}, ..., X^{(n)})$ which is defined as 
	    \begin{equation*}
	        \max_{i\in \{1,\dots, n\}} X_t^i = X^{(1)}_t \geq X^{(2)}_t \geq \dots \geq X^{(n-1)}_t \geq X^{(n)}_t= \min_{i\in \{1,\dots, n\}} X_t^i,
	    \end{equation*}
	    where we break ties by allocating a lower rank to smaller labels. To be precise, if $X_t^i= X_t^j$ with $i\geq j$, we set $X_t^{(r_i)}=X_t^i$ and $X_t^{(r_j)}=X_t^j$ where $r_i\geq r_j$.
	    
	    Accordingly, we denote by $\boldsymbol{\mu}$ the \emph{ranked market weights} and by $\mathbf{S}$ the \emph{ranked capitalization process}.

	\end{definition}
	
	\begin{cor}
	    Given an $\mathbb{R}^n$-valued continuous semimartingale $X$, its ranked process is again an $\mathbb{R}^n$-valued continuous semimartingale. 
	\end{cor}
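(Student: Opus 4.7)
The plan is to reduce the statement to the fact that the maximum and minimum of two continuous semimartingales are again continuous semimartingales, and then express each order statistic as a combination of such operations.

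First, I would recall the classical identity $\max(Y,Z) = \frac{1}{2}(Y+Z+|Y-Z|)$ and $\min(Y,Z) = \frac{1}{2}(Y+Z-|Y-Z|)$. For continuous semimartingales $Y$ and $Z$, Tanaka's formula applied to $Y-Z$ shows that $|Y-Z|$ is a continuous semimartingale, so $\max(Y,Z)$ and $\min(Y,Z)$ inherit the continuous semimartingale property. By an obvious induction on the cardinality, for any finite index set $J$ the processes $\max_{j\in J} X^j$ and $\min_{j\in J} X^j$ are continuous semimartingales whenever each $X^j$ is.

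Next, I would use the combinatorial representation of the $k$-th largest order statistic:
\begin{equation*}
X^{(k)}_t \;=\; \max_{\substack{J\subseteq\{1,\dots,n\}\\ |J|=k}} \; \min_{j\in J} X^j_t, \qquad k\in\{1,\dots,n\}.
\end{equation*}
A quick combinatorial check (one direction: for any $J$ of size $k$, $\min_{j\in J} X^j \le X^{(k)}$; other direction: the set of indices whose $X^j_t$ values are among the $k$ largest achieves the supremum) confirms this identity pointwise in $\omega$ and $t$. Since the right-hand side is a finite composition of the maximum and minimum operations on the continuous semimartingales $X^1,\dots,X^n$, it is itself a continuous semimartingale by the first step. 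Hence $X^{(k)}$ is a continuous semimartingale for each $k$, and therefore so is $\mathbf{X}=(X^{(1)},\dots,X^{(n)})$.

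Regarding the tie-breaking convention: note that the \emph{values} $X^{(k)}_t$ in the above formula do not depend on any choice of tie-breaking; the convention only specifies which label $i$ is assigned to rank $(k)$ at times when several components coincide. Thus the convention plays no role in establishing the semimartingale property of $\mathbf{X}$.

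I do not anticipate a genuine obstacle: the only subtle point is ensuring the order-statistic identity, which is purely combinatorial. Everything else reduces to the standard fact (via Tanaka) that the absolute value of a continuous semimartingale is a continuous semimartingale, together with finitely many linear combinations.
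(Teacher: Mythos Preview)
Your argument is correct. The order-statistic identity $X^{(k)}_t=\max_{|J|=k}\min_{j\in J}X^j_t$ holds pointwise, and the reduction to Tanaka's formula via $\max(Y,Z)=\tfrac12(Y+Z+|Y-Z|)$ is a clean, self-contained way to see that each $X^{(k)}$ is a continuous semimartingale. Your remark that tie-breaking only affects the labelling, not the values $X^{(k)}_t$, is also to the point.

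The paper does not prove this; it simply cites \cite[Theorem~2.2]{Banner}. That reference actually establishes more than the bare semimartingale property: it gives the explicit semimartingale decomposition of each ranked coordinate in terms of the original components plus local-time terms arising from collisions (i.e.\ times when two or more components coincide). So the trade-off is that your route is entirely elementary and avoids any external machinery, whereas the cited result, while heavier, delivers the precise dynamics of $\mathbf{X}$, which is what one typically needs in stochastic portfolio theory when working with rank-based portfolios. For the purposes of the corollary as stated, however, your proof is perfectly adequate.
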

	\begin{proof}
	    See~\cite[Theorem 2.2]{Banner}.
	\end{proof}

	\section{(Linear) Path-Functional Portfolios}\label{sec:pathfunctionalportfolios}
	Before we present our application of signature methods in the context of SPT and portfolio optimization, we generalize the concept of functionally generated portfolios by introducing so-called \emph{path-functional portfolios}. 
	
	\begin{definition}[Path-Functional Portfolios]\label{def:path-func-port}
		We introduce \emph{path-functional portfolios} which are generated by a family of non-anticipating path functionals $\{f^i\}_{1\leq i \leq d}$ depending on time and the path of $X$ and which are constructed via an auxiliary portfolio $\tau$, where $\tau$ is required to be uniformly bounded. Moreover, we assume that the non-anticipating path functionals $\{f^i\}_{1\leq i \leq d}$ fulfill the necessary integrability conditions such that the following processes $\pi$ are portfolios in the sense of Definition~\ref{def:portfolios}. We consider two types of such portfolios:
		\begin{enumerate}[label={(\Roman*)}]
		    \item $\pi_t^i(\tau, {X})= \tau_t^i \left( f^i\left(t, X_{[0,t]}\right) +1 - \sum_{j=1}^d \tau_t^i f^j\left(t, X_{[0,t]}\right) \right)$
		    \item $\pi_t^i(\tau, {X})=  f^i\left(t, X_{[0,t]}\right) + \tau_t^i \left(1 - \sum_{j=1}^d f^j\left(t, X_{[0,t]}\right) \right)$
		\end{enumerate}
		where we use the notation of the subscript ${[0,t]}$ to make the dependence of on the entire path explicit. Moreover, we refer to the functionals $\{f^i\}_{1\leq i \leq d}$ as the \emph{portfolio \controllingfunctionsnospace}.
	\end{definition}
	
	\begin{remark}[Relation to (Classical) Stochastic Portfolio Theory]
        In the above definition, we consider a general continuous semimartingale and a general auxiliary portfolio $\tau$ in the construction of the path-functional portfolios. In the spirit of SPT, one would choose $\tau=\mu$ the market portfolio and $X=\mu$, the process of market weights. One can then recover the classical functionally generated portfolios from the path-functional portfolios of type $I$ by setting $X=\mu$ and choosing $f^i(t, \mu_{[0,t]}) =  f^i(\mu_t) = D_i \log(G(\mu_t))$. We would like to highlight that beyond this relation to classical SPT, constructing the path-functional portfolios in this form (i.e. via additive normalization) in crucial for our results in Sections~\ref{sec:approx_sig_port} and~\ref{sec:optimizing_signature_port}, in particular in view of getting quadratic optimization problems in the current multiplicative setting where the strategies are in terms fractions of wealth.
	\end{remark}

    \begin{remark}[Role of the process $X$]
    In relation to classical SPT, we will often choose $X=\mu$, i.e. construct our portfolios via information on the relative market weights. However, we would like to point out that other information may be of interest such as the absolute capitalization, market-to-book ratios~\cite{Kim_22} or earnings data, to just name a few. Adding information beyond the market weights to the portfolio construction, has also appeared e.g.~in \cite{Ruf_Xie_Lyapunov} in the context of SPT. However, we would like to point out that in \cite{Ruf_Xie_Lyapunov} the additional information has to be a continuous process of finite variation, while we allow it to be a general continuous semimartingale. 
    \end{remark}

	\begin{remark}[Path-Functional Portfolios of Type $II$]
	   A simple example of a path-functional portfolio of type $II$  would be any portfolio generated by a deep (recurrent) neural network, see for example~\cite{ DeepLearning_Sharpe}. Denote by $\mathcal{NN}^i(t, X_{[0,t]})$ the $i$-th output of a deep (recurrent) neural network. If 
	   \begin{itemize}
	       \item the last layer of the neural network is for example a softmax function and the portfolio weights are defined by $\pi^{(\mathcal{NN}), i}_t = \mathcal{NN}^i(t, X_{[0,t]})$, or,
	       \item the portfolio weights are constructed as $\pi^{(\mathcal{NN}), i}_t = \frac{ \mathcal{NN}^i(t, X_{[0,t]})}{\sum_{j=1}^d \mathcal{NN}^j(t, X_{[0,t]})}$,
	   \end{itemize}
	   then, the portfolios $\pi^{(\mathcal{NN})}$ are in both cases  path-functional portfolios of type $II$ but not of type $I$. This is by identifying $f^{(\mathcal{NN}), i}(t, X_{[0,t]})= \mathcal{NN}^i(t, X_{[0,t]})$ or $f^{(\mathcal{NN}), i}(t, X_{[0,t]})= \frac{ \mathcal{NN}^i(t, X_{[0,t]})}{\sum_{j=1}^d \mathcal{NN}^j(t, X_{[0,t]})}$ respectively. Note that the choice of $\tau$ is irrelevant because $\sum_{j=1}f^{(\mathcal{NN}), j}(t, X_{[0,t]})=1$ by definition. 
\end{remark}
	   
	\begin{remark}[Conversion between Path-Functional Portfolios of Type I\&II]
	   Note that for a fixed auxiliary portfolio $\tau$ with weight-processes which are continuous semimartingales and for a continuous semimartingale $(X_t)_{t\in[0,T]}$ any path-functional portfolio of type $I$ with portfolio \controllingfunctions $f^{(I),i} : X_{[0,t]} \mapsto f^{(I),i}(X_{[0,t]})$ is a path-functional portfolio of type $II$ with portfolio controlling function $f^{(II),i} :  (\tau, X)_{[0,t]} \mapsto \tau_t^if^{(I),i}(X_{[0,t]})$. This also holds vice versa, if the components of $\tau$ are all non-zero. Indeed, any path-functional portfolio of type $II$ with portfolio controlling function $f^{(II),i}: X_{[0,t]} \mapsto f^{(II),i}(X_{[0,t]})$ is a path-functional portfolio of type $I$ with portfolio controlling function $f^{(I),i} :  (\tau, X)_{[0,t]} \mapsto \frac{f^{(II),i}(X_{[0,t]})}{\tau_t^i}$. However, note that this conversion always involves adding the process of the auxiliary portfolio $\tau$ to the input. Not only does this enlarge the dimensions of the inputs, but also is $\tau$ then required to be a continuous semimartingale. Hence, this conversion is not always possible. 
	\end{remark}

	We now introduce a special class of path-functional portfolios which are particularly useful for optimizing path functional portfolios, as we will make more explicit in Section~\ref{sec:optimizing_signature_port}. This class is the one of \emph{linear path-functional portfolios}. 
	\begin{definition}[Linear Path-Functional Portfolios]\label{def:sig_port}
	    For a path-functional portfolio of type $I$ or $II$, we call it a \emph{linear} path-functional portfolio of the corresponding type, if the portfolio  \controllingfunctions  are of the form 
	    \begin{equation*}
	        f^i(t, X_{[0,t]})= \sum_{\nu\in \mathcal{V}} l_{\nu}^i \phi^{\nu}(t, X_{[0,t]})
	    \end{equation*}
	    where $\{\phi^{\nu}\}_{\nu\in \mathcal{V}}$ is a collection of feature maps which are themselves non-anticipating path functionals, $\mathcal{V}$ is a finite set of features and $l_{\nu}^i \in \mathbb{R}$ for each $i, \nu$ are constant (optimization) parameters. 
	\end{definition}
	
	\begin{remark}[Significance of the Auxiliary Portfolio $\tau$]\label{rem:auxiliary}
	    Considering linear path-functional portfolios gives us a first idea on how to do portfolio optimization in that context. Namely, optimizing the parameters $\{l_{\nu}^i\}_{
	    {1\leq i\leq d, \, \nu\in \mathcal{V}}}$ for a given collection of feature maps. Note, that the auxiliary portfolio $\tau$ can always be attained in this way by 
	    \begin{itemize}
	        \item setting $l_{\nu}^i=l_{\nu}^j$ for all $i,j \in \{1,\dots ,d\}$, $\nu\in \mathcal{V}$ for portfolios of type $I$
	        \item setting $l_{\nu}^i=0$ for all $i \in \{1,\dots ,d\}$, $\nu\in \mathcal{V}$ for portfolios of type $II$.
	    \end{itemize}
	    Hence, if one wants to learn a linear path-functional portfolio which outperforms a given benchmark portfolio $\bar{\pi}$, one should use the benchmark portfolio as the auxiliary portfolio, i.e. set $\tau=\bar{\pi}$, since then the benchmark portfolio is included in the family of portfolios one optimizes over. In the context of stochastic portfolio theory, we therefore often choose $\tau=\mu$, because we aim to outperform the market portfolio.
	
	\end{remark}

	Among all linear path-functional portfolios, we are in particularly interested in those, whose feature maps are  (randomized) elements of the signature, resulting in \emph{(JL- or randomized-) signature portfolios}. We make this more precise in the following definition:
	
	\begin{definition}[Signature Portfolios,  JL- and Randomized-Signature Portfolios]
	    For a given auxiliary portfolio $\tau$ and any $\mathbb{R}^n$-valued continuous semimartingale $X$, we call linear path-functional portfolios \emph{(JL- or randomized-)signature portfolios} if the feature maps are elements of the (JL- or randomized-)signature of the time-augmented semimartingale $\hat{X}$. More precisely, let $\hat{X}_t=(\varphi(t), X_t)$ be the time-augmented process of $X$, where $\varphi$ is a strictly increasing function. Recall that $\hat{\mathbb{X}}^{N}$ denotes the signature of $\hat{X}$ truncated at level $N$. We then consider the following portfolios:
	    
	    \begin{itemize}
	        \item \emph{Signature Portfolios:} A \emph{signature portfolio of degree $N$} is a linear path-functional portfolio with $\mathcal{V}= \{ I \, \mid \, I = (i_1, ..., i_m) \in \{1, \dots, n\}^m \textrm{ for } 0 \leq m \leq N \}$ (the set of multiindices up to length $N$) and with feature maps \begin{equation*} \phi^I(t, X_{[0,t]}) = \langle e_I, \hat{\mathbb{X}}^{N}_t \rangle= \langle e_I, \hat{\mathbb{X}}_t \rangle. \end{equation*} In other words, signature portfolios are path-functional portfolios, where the portfolio \controllingfunctions are linear functions on the signature.
	        
	        \item \emph{JL-Signature Portfolios:} A \emph{JL-signature portfolio of dimension  $(P, N)$} 
	        is a linear path-functional portfolio with $\mathcal{V}= \{ 1, \dots , P \}$, where $P$ is the dimension of the projection, and  for each $p\in \{1, \dots, P\}$ with feature maps \begin{equation*} \phi^p(t, X_{[0,t]}) = \langle A^p, \hat{\mathbb{X}}^{N}_t \rangle, \end{equation*} where $A^p$ is the $p$-th column of the Johnson-Lindenstrauss projection introduced in Theorem~\ref{thm:JL}. Hence, JL-signature portfolios are path-functional portfolios, where the portfolio \controllingfunctions are linear functions on the Johnson-Lindenstrauss projected signature. 
	        
	        \item \emph{Randomized-Signature Portfolios:} Let $\mathcal{S}$ denote the solution to
	        \begin{equation*}
	            d\mathcal{S}_t= \sum_{i=1}^{n+1} \sigma\left( b^{(i)} + A^{(i)}\mathcal{S}_t\right)\circ d\hat{X}_t^i \hspace{1cm} \mathcal{S}_0 = (1, 0, 0, \dots) \in \mathbb{R}^p,
 	        \end{equation*}
 	        where $\sigma$ is an activation function, randomly chosen  $b^{(i)}\in \mathbb{R}^p$, $A^{(i)}\in \mathbb{R}^{p\times p}$ for all $i \in \{1, ...,n+1\}$ and $\hat{X}$ is the time-augmented process of an $\mathbb{R}^n$-valued continuous semimartingale $X$. We call $\mathcal{S}(\hat{X})$ the randomized signature of dimension $p$ of $\hat{X}$, see~\cite{Gambara_2022, Compagnoni_2022, Cuchiero_Teichmann, cuchiero2021expressive}. For a connection with neural signature kernels and controlled ResNets we refer to \cite{cirone2023neural}. A \emph{randomized-signature portfolio of dimension  $P$} is a linear path-functional portfolio with $\mathcal{V}= \{ 1, \dots , P \}$, where $P$ is the dimension of the randomized signature, and with feature maps \begin{equation*} \phi^p(t, X_{[0,t]}) = \langle e_p, \mathcal{S}_t \rangle. \end{equation*}  Hence, randomized-signature portfolios are path-functional portfolios, where the portfolio \controllingfunctions are linear combinations of the elements of the randomized signature. 
	    \end{itemize}
	
	\end{definition}
	
	\begin{remark}\label{rem:examples_linear_port}
	    We have highlighted signature, JL- and randomized-signature portfolios as linear path-functional portfolios of special interest in this paper. However, we would like to mention some other examples of linear path-functional portfolios whose portfolio \controllingfunctions are
	    \begin{itemize} 
            \item linear functions on increments of the signature. That is, for a fixed time span $\Theta$, we observe a rolling window of length $\Theta<T$ and forget the information before that, i.e. $$\phi^I(t, X_{[0,t]})= \langle e_I, \hat{\mathbb{X}}_{0\vee(t-\Theta), t} \rangle.$$
            We will refer to such portfolios as signature portfolios \emph{with rolling windows}. We shall not put particular emphasis on these portfolios in this paper, since our approximation results in Section~\ref{sec:approx_sig_port} are tailored to signature portfolios \emph{without} rolling windows and since portfolios with rolling windows are numerically less tractable, as we outline in Section~\ref{sec:numerical_results}.

	        \item random neural networks, which are neural networks where the parameters of the hidden layers are not trained but randomly sampled and only the linear read-out layer is trained, see~\cite{Herrera_Krach}. Training the read-out layer, exactly amounts to training the parameters $\{l_{\nu}^i\}_{1\leq i\leq d, \nu \in \mathcal{V}}$ of path-functional portfolios. Such neural networks may have an infinite-dimensional input space, corresponding in our context to path spaces, as e.g. considered in
~\cite{Cuchiero_Schmocker_Teichmann_2023} or ~\cite{Benth_2022}.
         
	        \item given by reservoir computers with linear read-out layers, see for instance~\cite{Ortega_SAS}.  A special class of which are Echo State Networks~\cite{Jaeger, Ortega_Echo, Gonon} or Quantum Reservoir Computers~\cite{Nakajima_Exp, Nakajima_Theo}. Note, that there is an interesting connection between the Johnson-Lindenstrauss projection of signatures, randomized signature and reservoir computing, which is worked out in~\cite{Cuchiero_Teichmann}.
	        \item given by a constant, which leads to the case of constant portfolio weights when using a constant auxiliary portfolio $\tau$. In this case we recover the Markowitz portfolio optimization as a very special case in Subsection~\ref{subsec:MV_optim_theory}. Moreover, since the first element of the signature is constant, this class of portfolios is also included in the class of signature portfolios.  
	    
	    \end{itemize}
	    
	\end{remark}

\section{Approximation Properties of Signature Portfolios}\label{sec:approx_sig_port}

This section is dedicated to prove universal approximation properties of signature portfolios.
Let us recall that for a  compact subset $K \subset \mathcal{G}_T^2(\varphi, x)$ the set 
\begin{align}\label{eq:Klambda}
\mathcal{K}_{\Lambda}=\lambda([0,T]\times K) = \left\{\hat{\mathbb{X}}^2_{[0,t]}(\omega) \mid \, t \in [0,T] \textrm{ and }  \hat{\mathbb{X}}^2_{[0,T]}(\omega)\in K\right\}
\end{align} is a compact subset of $\Lambda_T^2(\varphi, x)$ by the continuity of $\lambda$, see Lemma \ref{lem:lambda_continuity}. Furthermore, recall that in Definition~\ref{def:path-func-port} we have introduced portfolio controlling functions $\{f^i\}_{1\leq i \leq d} $ as non-anticipative path-functionals taking as input the paths of some semimartingale $X$, i.e. the $\{f^i\}_{1\leq i \leq d}$ were of form  $f ^i(t, X_{[0,t]})$. In the following, we will require a more specific form, namely portfolio controlling functionals depending on the lifted path, that is $f^i(\hat{\mathbb{X}}^2_{[0,t]})$, where the dependence on the lifted path is important for the continuity statements. However, this is not a  contradiction to the general definition, in particular since $\hat{\mathbb{X}}^2$ is a continuous semimartingale itself. Note that the functions $\{f^i\}_{1\leq i \leq d}$ can of course only depend on the first level $X_{[0,t]}$ as well.
	\begin{theorem}\label{thm:approx_by_sig_port}
        Consider an arbitrary but fixed auxiliary portfolio $\tau$ and a path-functional portfolio $\pi(\tau, \cdot)$ of type $I$ (type $II$) with portfolio \controllingfunctions $f^i \in C(\mathcal{K}_{\Lambda}; \mathbb{R})$ for all $i \in \{1,\dots,d\}$. Then for every $\epsilon >0$ there exists a signature portfolio $\pi^*(\tau, \cdot)$ of type I (type II) such that it holds for all $i\in\{1,\dots,d\}$ 
        \begin{equation*}
            \sup_{(t,\hat{\mathbb{X}}^2_{[0,T]}(\omega))\in [0,T]\times K}| \pi^i(\tau, \hat{X}_{[0,t]})(\omega) - \pi^{*,i}(\tau, \hat{X}_{[0,t]})(\omega)|< \epsilon.
        \end{equation*}
        
		Moreover, take $\frac{1}{3} < \alpha < \frac{1}{2}$, $0\leq \alpha' < \alpha$, an arbitrary but fixed auxiliary portfolio $\tau$ and a path-functional portfolio $\pi(\tau, \cdot)$ of type I (type II) with portfolio controlling functions such that for all $i\in\{1,\dots,d\}$ $f^i \in \mathcal{B}_{\psi}\left( \big( \bigcup_{t\in [0,T]} \hat{C}_{t}^{\alpha}(\varphi, x), d_{\Lambda^{\alpha'}}\big)\right)$. Then for every continuous semimartingale $X$ starting in $x$ and fulfilling $\mathbb{E}\left[\exp(\zeta\|\hat{\mathbb{X}}^2_{[0,T]}\|^{\xi}_{\alpha, [0,T]})\right]< \infty$ with $\zeta>0$ and $\xi>2$, 
   it holds that for every $\epsilon, \delta>0$ there exists a signature portfolio $\pi^*(\tau, \cdot)$ of type I (type II) such that it holds for all $i\in\{1,\dots,d\}$
        $$ \mathbb{P}\left[\sup_{t\in[0,T]} \Big|\pi^i(\tau, \hat{X}_{[0,t]})-\pi^{*,i}(\tau, \hat{X}_{[0,t]})\Big| > \epsilon \right] < \delta.$$
	\end{theorem}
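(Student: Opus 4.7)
The overall plan is to approximate the controlling functions $f^i$ componentwise by linear functions on the signature (via Theorem~\ref{thm:UAT_sig} in the compact case and Theorem~\ref{thm:GUAT} in the weighted case) and then propagate the resulting pointwise error through the explicit algebraic expression defining $\pi^i$. Since $\pi^i$ is \emph{linear} in $(f^1,\dots,f^d)$ with coefficients built from $\tau$, and $\tau$ is uniformly bounded, say $\|\tau\|_\infty\leq C$, the portfolio error will be a constant multiple (depending only on $d$ and $C$) of $\max_i |f^i - L^i|$. For type I one computes
$$\pi^i - \pi^{*,i} = \tau^i\Bigl((f^i - L^i) - \sum_{j=1}^d \tau^j(f^j - L^j)\Bigr),$$
and an analogous identity holds for type II; in either case $|\pi^i - \pi^{*,i}|\leq C(1+dC)\max_j|f^j - L^j|$ pointwise.

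For the compact case I would first note that $\mathcal{K}_\Lambda$ defined in~\eqref{eq:Klambda} is a compact subset of $\Lambda_T^2(\varphi,x)$ thanks to the continuity of $\lambda$ (Lemma~\ref{lem:lambda_continuity}). Applying Theorem~\ref{thm:UAT_sig} in turn to each $f^i$, for any $\eta>0$ I obtain linear functions $L^i$ on the signature with
$$\sup_{(t,\hat{\mathbb{X}}^2_{[0,T]}(\omega))\in[0,T]\times K}\bigl|f^i(\hat{\mathbb{X}}^2_{[0,t]}(\omega)) - L^i(\hat{\mathbb{X}}_t)(\omega)\bigr|<\eta$$
almost surely. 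Defining $\pi^*$ to be the signature portfolio of the same type as $\pi$ with controlling functions $L^i$ and choosing $\eta < \epsilon/(C(1+dC))$, the algebraic bound above delivers the first claim simultaneously for every $i\in\{1,\dots,d\}$.

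The global case proceeds identically but invokes Theorem~\ref{thm:GUAT} instead: for any $\eta>0$ I select linear $L^i$ satisfying
$$\sup_{\hat{\mathbb{X}}^2_{[0,t]}\in\Lambda_T^2(\varphi,x)}\frac{|f^i(\hat{\mathbb{X}}^2_{[0,t]}) - L^i(\hat{\mathbb{X}}_t)|}{\psi(\hat{\mathbb{X}}^2_{[0,t]})} < \eta,$$
so that pointwise $|f^i - L^i|\leq \eta\,\exp(\zeta\|\widehat{\,\mathbb{X}^t}^2_{[0,T]}\|_{\alpha,[0,T]}^\xi)$. Propagating as above and applying Markov's inequality yields
$$\mathbb{P}\!\left[\sup_{t\in[0,T]}|\pi^i-\pi^{*,i}|>\epsilon\right] \leq \frac{C(1+dC)\,\eta}{\epsilon}\, \mathbb{E}\!\left[\sup_{t\in[0,T]}\exp(\zeta\|\widehat{\,\mathbb{X}^t}^2_{[0,T]}\|_{\alpha,[0,T]}^\xi)\right],$$
which is finite by hypothesis, whence $\eta$ small enough gives the probability bound. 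The main (and essentially only non-mechanical) obstacle is precisely this last step: the weighted approximation theorem only controls $|f^i - L^i|$ up to the unbounded weight $\psi$, so turning the resulting almost-sure but unbounded pointwise bound into a uniform-in-time in-probability statement requires exactly the $\exp(\zeta\|\cdot\|^\xi)$-integrability assumption, which is deliberately matched to the weight function chosen in Theorem~\ref{thm:GUAT}.
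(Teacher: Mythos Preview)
Your proposal is correct and follows essentially the same approach as the paper: approximate each $f^i$ by a linear function on the signature via Theorem~\ref{thm:UAT_sig} (respectively Theorem~\ref{thm:GUAT}), exploit the linearity of $\pi^i$ in $(f^1,\dots,f^d)$ together with the uniform bound on $\tau$ to propagate the error, and in the global case apply Markov's inequality against the assumed integrability of the weight. The paper's write-up is virtually identical, differing only in that it records the constant as $M(d+1)$ rather than your (slightly more careful) $C(1+dC)$.
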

	
	\begin{proof} 
	    We present the proof for the case of portfolios of type $I$, however the proof is analogous for the case of type $II$. 
	    
	    Let us start by showing the first  statement of the theorem. By the universal approximation result (Theorem~\ref{thm:UAT_sig}), we know that for each $\{f^i\}_{1\leq i \leq d}$ there exits a linear function on the signature of $\hat{X}$ that approximates $f^i$ arbitrarily well, on compact sets. Those linear functions can be chosen as the portfolio \controllingfunctions $f^{i,\*}$ of a signature portfolio $\pi^*$. Hence, there exists for each $\epsilon' >0$ a signature portfolio $\pi^*$ of type $I$ such that for all $i \in \{1, ..., d\}$ it holds that 
	    \begin{equation}\label{eq:approx_f}
	         \sup_{(t, \hat{\mathbb{X}}^2_{[0,T]}(\omega)) \in[0,T]\times K} |f^i(\hat{\mathbb{X}}^2_{[0,t]}(\omega)) - f^{*,i}(\hat{\mathbb{X}}^2_{[0,t]}(\omega))|< \epsilon'
	    \end{equation}
	    for almost all $\omega \in \Omega$. 
	    
	    Now, we show that this translates to the portfolio weights. The statement is trivial for all $\omega \in \Omega$ where $\hat{\mathbb{X}}^2_{[0,T]} \notin K$. 
        
        On $\Omega^{(K)}:=\{\omega \mid \hat{\mathbb{X}}^2_{[0,T]}(\omega) \in K \}$, it is implied by~\eqref{eq:approx_f} that 
        \begin{align*} 
            &\sup_{t\in[0,T]} \left|\pi^i_t - \pi^{*,i}_t \right| = \sup_{t\in[0,T]} \Bigg|\tau^i_t\left( f^i(\hat{\mathbb{X}}^2_{[0,t]}) + 1 - \sum_{j=1}^d \tau_t^j f^j(\hat{\mathbb{X}}^2_{[0,t]}) \right) \\ &- \tau^i_t\left( f^{*,i}(\hat{\mathbb{X}}^2_{[0,t]}) + 1 - \sum_{j=1}^d \tau_t^j f^{*,j}(\hat{\mathbb{X}}^2_{[0,t]})\right) \Bigg|  \\  &= \sup_{t\in[0,T]} \left| \tau_t^i\left(f^i(\hat{\mathbb{X}}^2_{[0,t]}) - f^{*,i}(\hat{\mathbb{X}}^2_{[0,t]})\right) - \sum_{j=1}^d\tau_t^j\left(f^j(\hat{\mathbb{X}}^2_{[0,t]}) - f^{*,j}(\hat{\mathbb{X}}^2_{[0,t]})\right) \right| \\ &\leq \sup_{t\in[0,T]} \left| \tau_t^i\left(f^i(\hat{\mathbb{X}}^2_{[0,t]}) - f^{*,i}(\hat{\mathbb{X}}^2_{[0,t]})\right)\right| + \sum_{j=1}^d \sup_{t\in[0,T]} \left|\tau_t^j\left(f^j(\hat{\mathbb{X}}^2_{[0,t]}) - f^{*,j}(\hat{\mathbb{X}}^2_{[0,t]})\right) \right| \\ &\leq \, M \sup_{t \in [0,T]} \left|f^i(\hat{\mathbb{X}}^2_{[0,t]}) - f^{*,i}(\hat{\mathbb{X}}^2_{[0,t]}) \right| + M\sum_{j=1}^d \sup_{t \in [0,T]} \left|f^j(\hat{\mathbb{X}}^2_{[0,t]}) - f^{*,j}(\hat{\mathbb{X}}^2_{[0,t]}) \right| \\  &< M(d+1)\epsilon'
        \end{align*}
	    holds almost surely, where $M$ is the bound of $\tau$. The result follows by choosing $\epsilon'= \frac{\epsilon}{M(d+1)}$.

    The proof of the second approximation statement follows by similar arguments. Recall Theorem~\ref{thm:GUAT} and use the same reasoning as above to translate the approximation result from the portfolio controlling functions to the portfolio weights itself, i.e. we obtain that there exists for each $\epsilon' >0$ a signature portfolio $\pi^*$ of type $I$ such that for all $i \in \{1, ..., d\}$ it holds that 
    \begin{equation}\label{eq:guat_pi}
	    \sup_{t \in[0,T]} \frac{|\pi^i(\tau, \hat{X}_{[0,t]}) - \pi^{*,i}(\tau, \hat{X}_{[0,t]})|}{\exp(\zeta\|\widehat{\, \mathbb{X}^t}^2_{[0,T]}\|^{\xi}_{\alpha, [0,T]})}< \epsilon'
    \end{equation}
    almost surely. Therefore, 
    \begin{align*}
    \mathbb{P}\left[\sup_{t\in[0,T]} \Big|\pi^i(\tau, \hat{X}_{[0,t]})-\pi^{*,i}(\tau, \hat{X}_{[0,t]})\Big|> \epsilon \right] &\leq \epsilon^{-1}\cdot \mathbb{E}\left[\sup_{t\in[0,T]} \Big|\pi^i(\tau, \hat{X}_{[0,t]})-\pi^{*,i}(\tau, \hat{X}_{[0,t]})\Big|\right] \\
    &<\frac{\epsilon'}{\epsilon} \mathbb{E}\left[\sup_{t\in[0,T] }\exp(\zeta\|\widehat{\, \mathbb{X}^t}^2_{[0,T]}\|^{\xi}_{\alpha, [0,T]})\right],
    \end{align*}
    where the first inequality is due to Markov's inequality and the second one follows from \eqref{eq:guat_pi}. Note that by the properties of $\|\cdot \|_{cc}$ and $\| \cdot \|_{\alpha,  [0,T]}$ it holds that $\sup_{t\in [0,T]} \| \widehat{\, \mathbb{X}^t}^2_{[0,T]}\|_{\alpha,[0,T]} =   \| \hat{\mathbb{X}}^2_{[0,T]}\|_{\alpha,[0,T]}$.
    Hence the statement is obtained by setting $\delta = \frac{\epsilon'}{\epsilon} \mathbb{E}\left[\exp(\zeta\|\hat{\mathbb{X}}^2_{[0,T]}\|^{\xi}_{\alpha, [0,T]})\right]$ and choosing $\epsilon'$ accordingly.
    
    \end{proof}

    \begin{remark}
        Note, that by Corollary~\ref{cor:example_cont_func}, Theorem~\ref{thm:approx_by_sig_port} also ensures that signature portfolios approximate classical functionally generated portfolios arbitrarily well, on compact sets. In particular, this includes the functionally generated portfolios as introduced by Fernholz~\cite{Fern02} and those considered for functional portfolio optimization in~\cite{Wong}.
    \end{remark}

    \begin{remark}
        The requirements of Theorem~\ref{thm:approx_by_sig_port} on $f^i$ is a simple continuity condition and examples for such continuous path-functionals are provided in Example~\ref{cor:example_cont_func}.
    \end{remark}
	
	Having obtained a universal approximation theorem of path-functional portfolios, we will now apply it to obtain universal approximation results of the growth-optimal portfolio by signature portfolios in several markets. 

    \subsection{Universal Approximation of the Growth-Optimal Portfolio}
	In this subsection, we present a large class of markets where the growth-optimal portfolio can be regarded as a path-functional portfolio and show that it can therefore be  approximated by a signature portfolio. \\

    The following statements are to be understood in the market setting outlined in Definition~\ref{def:growth-opt} and we therefore assume that the corresponding necessary conditions are satisfied, in particular those ensuring the existence of the growth-optimal portfolio (Lemma~\ref{lem:growth-opt}).
    Moreover, recall also the definition of the compact set $\mathcal{K}_{\Lambda}$ given in \eqref{eq:Klambda}.

    \begin{theorem}\label{thm:approx_go}
        Consider the following class of markets with $d$ stocks:
        \begin{equation}\label{eq:SDEmarket}
            dS_t= \operatorname{diag}(S_t)(a(\hat{\mathbb{X}}^2_{[0,t]}) dt + \Sigma(\hat{\mathbb{X}}^2_{[0,t]}) dB_t),
        \end{equation}
        where the components of $a$, $\Sigma$ are non-anticipative path-functionals in $C(\mathcal{K}_{\Lambda}; \mathbb{R})$ and $X$ some $\mathbb{R}^n$-valued semimartingale with $n \in \mathbb{N}$. Then for any auxiliary portfolio $\tau$ there exists a signature portfolio of type $II$ which approximates the weights of the growth-optimal portfolio arbitrarily well on $\mathcal{K}_{\Lambda}$. Moreover, if the components of the auxiliary portfolio $\tau$ are deterministic, then there also exists a signature portfolio of type $I$ which approximates the weights of the growth-optimal portfolio almost surely arbitrarily well on $\mathcal{K}_{\Lambda}$.\\ 
        
        Alternatively, let $\frac{1}{3}< \alpha < \frac{1}{2}$, $0\leq \alpha' < \alpha$ and let $a, \Sigma$ be such that the components of $\pi^{(g)}$ are non-anticitpative path-functionals in $\mathcal{B}_{\psi}\left( \big( \bigcup_{t\in [0,T]} \hat{C}_{t}^{\alpha}(\varphi, x), d_{\Lambda^{\alpha'}}\big)\right)$.\footnote{Note that a sufficient condition for this to hold is if the components of $a$ are non-anticipative path-functionals in $\mathcal{B}_{\psi}\left( \big( \bigcup_{t\in [0,T]} \hat{C}_{t}^{\alpha}(\varphi, x), d_{\Lambda^{\alpha'}}\big)\right)$ and $\Sigma$ is constant.} Then for every continuous semimartingale $X$ staring in $x$ and fulfilling $\mathbb{E}\left[\exp(\zeta\|\hat{ \mathbb{X}}^2_{[0,T]}\|^{\xi}_{\alpha, [0,T]})\right]< \infty$ with $\zeta>0$ and $\xi>2$, it holds that for every $\epsilon, \delta>0$ there exists a signature portfolio $\pi^*(\tau, \cdot)$ for type $II$ for any auxiliary portfolio $\tau$ (or of type $I$ for a deterministic portfolio $\tau$) such that it holds for all $i\in\{1,\dots,d\}$
        $$ \mathbb{P}\left[\sup_{t\in[0,T]} \Big|\pi^{(g),i}(\tau, \hat{X}_{[0,t]})-\pi^{*,i}(\tau, \hat{X}_{[0,t]})\Big| > \epsilon \right] < \delta,$$
        where $\pi^{(g)}$ is the growth-optimal portfolio of the respective market.
        \end{theorem}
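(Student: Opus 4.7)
The plan is to exhibit the growth-optimal portfolio $\pi^{(g)}$ as a path-functional portfolio with a suitably regular controlling function, and then to invoke Theorem~\ref{thm:approx_by_sig_port}. By Lemma~\ref{lem:growth-opt} we have the pointwise formula
\[
\pi^{(g),i}_t \;=\; \bigl[(\Sigma_t\Sigma_t^{\mathsf T})^{-1}(a_t - \kappa_t \mathbf 1)\bigr]^i,
\qquad
\kappa_t = \frac{\mathbf 1^{\mathsf T}(\Sigma_t\Sigma_t^{\mathsf T})^{-1}a_t - 1}{\mathbf 1^{\mathsf T}(\Sigma_t\Sigma_t^{\mathsf T})^{-1}\mathbf 1},
\]
and substituting $a_t = a(\hat{\mathbb X}^2_{[0,t]})$, $\Sigma_t = \Sigma(\hat{\mathbb X}^2_{[0,t]})$ exhibits each $\pi^{(g),i}$ as a non-anticipative path-functional of the lifted path. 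For the first (compact) assertion the key preliminary step is to verify $\pi^{(g),i}\in C(\mathcal K_\Lambda;\mathbb R)$: the entries of $a$ and $\Sigma$ are continuous on $\mathcal K_\Lambda$ by hypothesis, and since $\Sigma\Sigma^{\mathsf T}$ is pointwise invertible on the compact set $\mathcal K_\Lambda$, its smallest eigenvalue is bounded below there, so matrix inversion, and hence $\pi^{(g),i}$, are continuous.

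Next I realise $\pi^{(g)}$ as a path-functional portfolio in the sense of Definition~\ref{def:path-func-port}. For type~$II$ and \emph{any} auxiliary portfolio $\tau$, I choose $f^i := \pi^{(g),i}$; since $\sum_j \pi^{(g),j} \equiv 1$, the type-$II$ formula collapses to $\pi^i = f^i + \tau_t^i(1-1) = \pi^{(g),i}$ regardless of $\tau$. For type~$I$ with a deterministic $\tau$ (taken, without loss of generality, to be bounded away from zero), I choose $f^i := \pi^{(g),i}/\tau^i_t$; then $\sum_j \tau_t^j f^j = \sum_j \pi^{(g),j} = 1$, so the type-$I$ formula yields $\tau_t^i f^i = \pi^{(g),i}$. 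In both cases the controlling function inherits continuity on $\mathcal K_\Lambda$ from $\pi^{(g)}$ and from the deterministic factor $1/\tau^i_t$.

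With this representation the first assertion follows immediately from the compact version of Theorem~\ref{thm:approx_by_sig_port}, applied to the controlling functions $\{f^i\}_{1\le i\le d}\subset C(\mathcal K_\Lambda;\mathbb R)$, which produces a signature portfolio $\pi^*$ of the same type whose weights approximate $\pi^{(g),i}$ uniformly on $[0,T]\times K$ within any given $\epsilon$, almost surely. For the second, global, assertion the same representation is re-used: the hypothesis $\pi^{(g),i}\in \mathcal B_\psi(\Lambda_T^{N,\alpha'}(\varphi,x))$ passes to $f^i$ (using boundedness of $\tau$ and, in the type-$I$ case, the reciprocal bound on $\tau^i$), and the second statement of Theorem~\ref{thm:approx_by_sig_port}, combined with the assumed exponential moment $\mathbb E[\sup_{t\in[0,T]}\exp(\zeta\|\widehat{\,\mathbb X^t}^2_{[0,T]}\|_{\alpha,[0,T]}^\xi)] < \infty$, then delivers the in-probability bound.

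The principal technical hurdle is the continuity step: one must ensure that $\Sigma\Sigma^{\mathsf T}$ is uniformly non-singular on $\mathcal K_\Lambda$ so that matrix inversion remains continuous, which is settled by the compactness of $\mathcal K_\Lambda$ together with pointwise invertibility. A secondary bookkeeping point is the type-$I$ representation, where deterministic $\tau$ is implicitly taken bounded away from zero; otherwise $\pi^i$ is forced to vanish wherever $\tau^i$ vanishes, which would impose an unnatural compatibility condition on $\pi^{(g)}$.
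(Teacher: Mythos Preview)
Your proposal is correct and follows essentially the same route as the paper: represent $\pi^{(g)}$ as a path-functional portfolio with controlling functions $f^i=\pi^{(g),i}$ for type~$II$ (respectively $f^i=\pi^{(g),i}/\tau^i_t$ for type~$I$ with deterministic $\tau$), verify continuity via continuity of matrix inversion (the paper cites Lemma~\ref{lem:matrix_inv} for this), and then apply Theorem~\ref{thm:approx_by_sig_port} in its compact and global versions respectively. Your caveat that the type-$I$ argument implicitly requires $\tau$ to be bounded away from zero is a fair observation that the paper leaves unaddressed.
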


\begin{remark} \label{rem:pathdependentSDE}
Note that the semimartingale $X$ in \eqref{eq:SDEmarket} could be (a functions of) $S$, so that we are actually dealing with path-dependent SDEs. In the particular case where $a$ and $\Sigma$ are entire functions of the signature as introduced in \cite{Cuchiero_Svaluto-Ferro_Teichmann2023} we are then in the tractable setup of \emph{signature SDEs.}
\end{remark}

    \begin{proof}[Proof of Theorem~\ref{thm:approx_go}]
        Recalling Lemma~\ref{lem:growth-opt}, it is straightforward that for any auxiliary portfolio $\tau$, the growth-optimal portfolio $\pi^{(g)}$ is a path-functional portfolio of type $II$ with portfolio \controllingfunctions $$ f^{(g, II), i}= \pi^{(g), i}.$$ Likewise, $\pi^{(g)}$ is a path-functional portfolio of type $I$ for any deterministic auxiliary portfolio and portfolio \controllingfunctions $$f^{(g, I) , i}= \tau^i_{\cdot} \cdot \pi^{(g), i}.$$  By Lemma~\ref{lem:matrix_inv} and the form of $\pi^{(g),i}$  Theorem~\ref{thm:approx_by_sig_port} is applicable for $\pi^{(g),i}$ (viewed as a path-functional portfolio of type $I$ resp. type $II$). Hence, the statements follow. 
    \end{proof}
    
    \begin{remark}
        If $X=\mu$ or $X=S$, the growth optimal is also a path-functional portfolio of type $I$ with auxiliary portfolio $\tau=\mu$.
    \end{remark}

    \begin{remark}[Examples of Markovian Markets]
		We here give two prominent examples of markets, to which  Theorem~\ref{thm:approx_go} can be applied. 
		\begin{itemize}
			\item Black-Scholes Market: 
			$$\dd S_t =\operatorname{diag}(S_t)( a dt + \Sigma dB_t),$$
			where $a$ and $\Sigma$ are constant. 
			\item Volatility Stabilized Markets:
			$$\frac{\dd S^i_t}{S^i_t} = \frac{1+\alpha}{2} \frac{1}{\mu_t^i} dt + \sqrt{ \frac{1}{\mu_t^i}} dB_t^i \textrm{ for all } 1\leq i \leq d$$
			and $\alpha\geq 0$. Volatility stabilized markets are of great interest in SPT because they reflect the observation in real markets that smaller stocks tend to have greater volatility than larger stocks, see for example~\cite{Fernholz_Karatzas}. See also~\cite{Cuchiero_Poly_SPT, Karatzas_Vol_Stab} for further properties of volatility stabilized markets.
		\end{itemize}
	\end{remark}

    \subsection{A Class of Non-Markovian Markets where the Growth-Optimal Portfolio is a Signature Portfolio}
	Let us now turn to a class of (possibly) non-Markovian markets, where the growth-optimal portfolio can not only be approximated by a signature portfolio but \emph{is} a signature portfolio. 

	\begin{theorem}
		Let $X$ be a continuous semimartingale. For a market of $d$ stocks, consider the class of \emph{Sig-market} models
		$$ dS_t = \operatorname{diag}(S_t)(a(\hat{X}_{[0,t]}) dt + \Sigma dB_t)$$ where 
		$ a(\hat{X}_{[0,t]})_i = \sum_{ 0 \leq|I|\leq N} \alpha^{(i)}_I \langle e_I, \hat{\mathbb{X}}_t \rangle $ for $N\geq 0$, $\alpha^{(i)}_I \in \mathbb{R}$, $\Sigma$ is a constant $d\times m$ matrix and $B_t$ is an $m$-dimensional Brownian motion for $m\geq d$. 
  Then, the growth-optimal portfolio is for any auxiliary portfolio a signature portfolio of type $II$ and a signature portfolio of type $I$ for any constant and deterministic auxiliary portfolio $\tau$.
	\end{theorem}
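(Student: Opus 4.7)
The plan is to apply Lemma~\ref{lem:growth-opt}, which gives the explicit formula
\[
\pi^{(g)}_t = (\Sigma\Sigma^{\mathsf{T}})^{-1}\left(a(\hat{X}_{[0,t]}) - \kappa_t \mathbf{1}\right), \qquad \kappa_t = \frac{\sum_i \big((\Sigma\Sigma^{\mathsf{T}})^{-1} a(\hat{X}_{[0,t]})\big)_i - 1}{\sum_{i,j}\big((\Sigma\Sigma^{\mathsf{T}})^{-1}\big)_{i,j}},
\]
and to show that under the Sig-market hypothesis every component $\pi^{(g),i}_t$ is itself a linear function on the signature $\hat{\mathbb{X}}$. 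Since $\Sigma$ is constant, so is $(\Sigma\Sigma^{\mathsf{T}})^{-1}$, and by assumption each $a_i(\hat{X}_{[0,t]})$ is a linear combination of $\{\langle e_I, \hat{\mathbb{X}}_t\rangle : 0\leq |I|\leq N\}$. Consequently both $(\Sigma\Sigma^{\mathsf{T}})^{-1} a(\hat{X}_{[0,t]})$ and the constant vector $(\Sigma\Sigma^{\mathsf{T}})^{-1} \mathbf{1}$ are, componentwise, linear functions on the signature (using that the constant $1$ equals $\langle e_\emptyset, \hat{\mathbb{X}}_t\rangle$).

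Next I would argue that $\kappa_t$ is also a linear function on the signature. Its denominator is a nonzero constant (positive definiteness of $(\Sigma\Sigma^{\mathsf{T}})^{-1}$ gives $\mathbf{1}^{\mathsf{T}}(\Sigma\Sigma^{\mathsf{T}})^{-1}\mathbf{1}>0$), and the numerator is a linear function on the signature by the previous step. Hence $\kappa_t \in \mathfrak{L}$, and therefore each $\pi^{(g),i}_t$ lies in $\mathfrak{L}$.

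For the type $II$ claim with arbitrary auxiliary portfolio $\tau$, set $f^{(g,II),i} := \pi^{(g),i}$; this is a linear function on the signature by the above. Since $\pi^{(g)}$ is itself a portfolio, $\sum_{j=1}^d f^{(g,II),j} = 1$, so the defining formula of a type $II$ path-functional portfolio from Definition~\ref{def:path-func-port} collapses to $f^{(g,II),i} + \tau^i_t(1-1) = \pi^{(g),i}$. For the type $I$ claim with constant deterministic $\tau$ having strictly nonzero components, set $f^{(g,I),i} := \pi^{(g),i}/\tau^i$, which remains a linear function on the signature since division by a nonzero scalar preserves linearity. Then $\sum_j \tau^j f^{(g,I),j} = \sum_j \pi^{(g),j} = 1$, and the type $I$ formula reduces to $\tau^i(f^{(g,I),i} + 1 - 1) = \pi^{(g),i}$.

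The proof is essentially a book-keeping exercise with no substantial obstacle; the only subtle point is that the type $I$ construction implicitly requires each $\tau^i$ to be nonzero so that division by $\tau^i$ is legitimate, which is the natural restriction underlying the statement about constant deterministic auxiliary portfolios.
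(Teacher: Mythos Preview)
Your proposal is correct and follows essentially the same approach as the paper: invoke Lemma~\ref{lem:growth-opt} and observe that, since $\Sigma$ is constant and each $a_i$ is a linear function on the signature, the resulting growth-optimal weights are themselves linear functions on the signature, hence signature portfolios of the stated types. The paper's proof is terser (it simply asserts that applying Lemma~\ref{lem:growth-opt} yields weights that are linear on the signature), while you spell out the computation for $\kappa_t$ and the reductions of the type~$I$ and type~$II$ formulas; your remark that the type~$I$ construction requires each $\tau^i\neq 0$ is a valid caveat that the paper leaves implicit.
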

	
	\begin{proof}
        The integrability and measurability assumptions required in Definition~\ref{def:growth-opt} and  Lemma \ref{lem:growth-opt} follow from the integrability and measurablilty of elements of the signature. Applying Lemma~\ref{lem:growth-opt}, it follows that the growth-optimal weights are linear functions on the signature of $\hat{X}$. This is for any auxiliary portfolio a signature portfolio of type $II$, or a signature portfolio of type $I$ for any constant and deterministic auxiliary portfolio $\tau$.
	\end{proof}
	
	\begin{remark}[Existence of Solutions to Sig-Markets]
 As addressed in Remark \ref{rem:pathdependentSDE},
	the existence of solutions is non-trivial for Sig-markets if $X=h(S)$, i.e. if the semimartingale of which we construct the signature is a function of the price process itself. In particular, if  $h$ is a real analytic function, then we deal with signature-SDEs in the spirit of \cite{Cuchiero_Svaluto-Ferro_Teichmann2023}. The study of existence of solutions to such equations goes beyond the subject of this paper, however, we give a simple example below, where the existence is guaranteed. To this end, let us denote by $$\mathcal{I}_1:=\left\{ I | I=(i_1, ..., i_m) \textrm{ for } 0\leq n\leq N \textrm{ and } i_2=i_3=\dots=i_n=1  \right\}$$
	Let $X=\log{S_t}$. If $\alpha_I^{(i)}=0$ for all $I\notin \mathcal{I}_1$, then 
	   \begin{equation}\label{eq:strong_sol} {dS_t}= \operatorname{diag}(S_t)\cdot \left(a((\widehat{\log{S}})_{[0,t]})  dt + \Sigma dB_t \right) \end{equation} admits a unique strong solution.
	      To see this, note that \eqref{eq:strong_sol} can be seen a part of a (larger) system of linear equations. Setting $Y=\log{S}$, $\hat{Y}= (t, Y)$ and denoting by $\hat{\mathbb{Y}}^N\lvert_{\mathcal{I}_1}= (\langle e_I, \hat{\mathbb{Y}}\rangle)_{I\in \mathcal{I}_1}$ the restriction to elements of the signature corresponding to multiindices in $\mathcal{I}_1$. Let us denote the vectorization of the truncated signature denoted by $\mathbf{vec}(\cdot)$ and fix a labelling function $\mathscr{L}: \mathcal{I}_1 \rightarrow \{1, \dots, | \mathcal{I}_1|\}$ such that $\mathbf{vec}(\hat{\mathbb{Y}}^N\lvert_{\mathcal{I}_1})_{\mathscr{L}(I)}= \langle e_I , \hat{\mathbb{Y}}^N\lvert_{\mathcal{I}_1} \rangle$. Moreover, for a multiindex $I=(i_1, \dots, i_{|I|})$, we denote by $I':=(i_1, \dots, i_{|I|-1})$ corresponding multiindex shortend by the last letter. Then said linear system of equations is
	   
	   \begin{equation*} d\left(\mathbf{vec}(\hat{\mathbb{Y}}^N\lvert_{\mathcal{I}_1})_t\right)= \left(b + A\cdot\mathbf{vec}(\hat{\mathbb{Y}}^N\lvert_{\mathcal{I}_1})_t \right)dt + \tilde{\Sigma} dB_t. \end{equation*}
	where 
    $\tilde{\Sigma}_{j,r}= \Sigma_{s, r}\delta_{j,\mathscr{L}(s)}$ for $1\leq j \leq |\mathcal{I}_1|$, $1 \leq r \leq m$ and for $k,l\in\{1, \dots, |\mathcal{I}_1|\}$
    \begin{align*} b_k= &\begin{cases} -\frac{1}{2} \Sigma^{\mathsf{T}}\Sigma+ a_{\emptyset}^{(i)} &\textrm{ if } k=\mathscr{L}(i), \, i\neq 1 \\
    1 &\textrm{ if }  k=\mathscr{L}(1)\\
    0 &\textrm{ else}\end{cases}\\  A_{k,l}= &\begin{cases} a^{(i)}_{\mathscr{L}(J)} &\textrm{ if } k=\mathscr{L}(i),\, l= \mathscr{L}(J), \, i\neq 1 \\
    1 &\textrm{ if } |I|\geq2, \, k=\mathscr{L}(I), \, l=\mathscr{L}(I') \\ 0 & \textrm{ else.}\end{cases}
    \end{align*}
    \end{remark}

    \begin{remark}[NUPBR in Sig-Markets]
        The NUPBR condition in the above Sig-markets holds. This follows from the existence of the growth-optimal portfolio by Lemma~\ref{lem:growth-opt},  Lemma~\ref{lem:numeraire_go} and Theorem~\ref{thm:NUPBR_numeraire}. 
    \end{remark}

	\section{Optimization Tasks for Linear Path-Functional Portfolios}\label{sec:optimizing_signature_port}
	
	We now want to study some tasks for portfolio optimization and their form for linear path-functional portfolios.
    To formulate them in the most general way, let us introduce what we call a \emph{universe} of stocks. 
	
		\begin{definition}[Universe]
	    Consider a market of $d$ stocks with capitatlization process $S= (S^1, ..., S^d)$. We call $\mathcal{U}\subseteq \{1,...,d\}$ a \emph{universe} of stocks with capitalization process $S^{\mathcal{U}} = (S^u)_{u\in \mathcal{U}}$. Moreover, we define the universe weights (resp.~universe portfolio) $\mu^{\mathcal{U} }$ to be 
	    \begin{equation*}
	        \mu^{\mathcal{U}, i}_t = \begin{cases} \frac{S^{\mathcal{U},i}_t}{\sum_{j\in\mathcal{U}}S^{\mathcal{U},j}_t} & \textrm{if }i \in \mathcal{U}  \\ 0 & \textrm{otherwise} \end{cases}.
	    \end{equation*}
	    Moreover, denote by $(W_t^{\mathcal{U}})_{t \in [0,T]}$ the wealth process of universe $\mathcal{U}$, which is given by $$W_t^{\mathcal{U}}= \frac{\sum_{i\in \mathcal{U}} S_t^i}{\sum_{i\in \mathcal{U}} S_{t_0}^i}.$$
	\end{definition}
	
	This notion of a universe of stocks is useful if one does not necessarily want to invest in all the stocks in the market but just in a subset of stocks. Of course, this can always be achieved by fixing certain weights of a portfolio to be zero, however using our notion of a universe of stocks is particularly useful, if one wants to compare the wealth process of a portfolio to the wealth process of the universe. Let us make this precise:
	
	\begin{cor}
		Take a universe $\mathcal{U} \subseteq \{1,...,d\}$ and consider a portfolio $\pi$ where $\pi^i \equiv 0$ for $i\not\in \mathcal{U}$. Then, the relative wealth process of $\pi$ with respect to $\mu^{\mathcal{U}}$ is given by 
		\begin{equation}
		    \dd \left(\frac{W^{\pi}_t}{W^{{\mathcal{U}}}_t}\right) = \left(\frac{W^{\pi}_t}{W^{{\mathcal{U}}}_t}\right)\sum_{i\in \mathcal{U}} \pi_t^i \frac{\dd \mu_t^{\mathcal{U},i}}{\mu_t^{\mathcal{U},i}}.
		\end{equation}
		Or, equivalently 
		\begin{equation}\label{eq:rel_val_process_universe}
		    \dd \log \left( \frac{W^{\pi}_t}{W^{{\mathcal{U}}}_t}\right) = \sum_{i\in \mathcal{U}} \pi_t^i \frac{\dd \mu_t^{\mathcal{U},i}}{\mu_t^{\mathcal{U},i}}- \frac{1}{2} \sum_{i,j\in \mathcal{U}} \frac{\pi^i_t}{\mu^{\mathcal{U}, i}_t}  \frac{\pi^j_t}{\mu^{\mathcal{U}, j}_t} \dd[\mu^{\mathcal{U}, i}, \mu^{\mathcal{U}, j}]_t.
		\end{equation}
	\end{cor}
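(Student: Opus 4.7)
The plan is to reinterpret the universe $\mathcal{U}$ as a self-contained sub-market and then invoke the relative-wealth identities already derived for full markets in Section~\ref{sec:Foundations}. First I would note that, since $\pi^i\equiv 0$ for $i\notin\mathcal{U}$ and $\sum_{i=1}^d \pi^i_t\equiv 1$, the restricted family $(\pi^i_t)_{i\in\mathcal{U}}$ sums to one and hence is itself a portfolio of the sub-market $(S^i)_{i\in\mathcal{U}}$; the self-financing equation of $\pi$ therefore reduces to $\dd W^\pi_t/W^\pi_t = \sum_{i\in\mathcal{U}} \pi^i_t \, \dd S^i_t/S^i_t$. Next I would identify $W^{\mathcal{U}}$ as the wealth of the ``market portfolio'' of this sub-market: since $W^{\mathcal{U}}_t = c_0^{-1}\sum_{i\in\mathcal{U}} S^i_t$ with the constant $c_0=\sum_{i\in\mathcal{U}} S^i_{t_0}$, direct differentiation gives
$$\frac{\dd W^{\mathcal{U}}_t}{W^{\mathcal{U}}_t} = \sum_{i\in\mathcal{U}} \mu^{\mathcal{U},i}_t \frac{\dd S^i_t}{S^i_t},$$
which is exactly the wealth dynamics of the portfolio with weights $\mu^{\mathcal{U}}$ inside the sub-market.

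At this stage the universe analogues of the identities for $\dd V^\pi_t/V^\pi_t$ and $\dd\log V^\pi_t$ derived for the full market at the beginning of Section~\ref{sec:Foundations} apply verbatim, yielding both claimed equations. If one prefers a direct verification, I would apply It\^o's formula to $V^\pi_t = W^\pi_t/W^{\mathcal{U}}_t$, obtaining
$$\frac{\dd V^\pi_t}{V^\pi_t} = \frac{\dd W^\pi_t}{W^\pi_t} - \frac{\dd W^{\mathcal{U}}_t}{W^{\mathcal{U}}_t} - \frac{\dd[W^\pi, W^{\mathcal{U}}]_t}{W^\pi_t W^{\mathcal{U}}_t} + \frac{\dd[W^{\mathcal{U}}]_t}{(W^{\mathcal{U}}_t)^2},$$
expand each bracket in terms of $\dd S^i_t/S^i_t$ using the two displays above, and compare with the analogous expansion of $\dd\mu^{\mathcal{U},i}_t/\mu^{\mathcal{U},i}_t$ obtained from $\mu^{\mathcal{U},i}_t = S^i_t/(c_0 W^{\mathcal{U}}_t)$. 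Upon contracting with $\pi^i_t$ and using $\sum_{i\in\mathcal{U}}\pi^i_t = \sum_{i\in\mathcal{U}}\mu^{\mathcal{U},i}_t = 1$, the $W^{\mathcal{U}}$-dependent drift and covariation contributions cancel and the first identity emerges. The log-identity then follows from a further application of It\^o to $\log V^\pi_t$, producing the standard $-\tfrac{1}{2}\, \dd[V^\pi]_t/(V^\pi_t)^2$ correction that, after substitution, takes precisely the bilinear form in the claim.

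I do not anticipate any substantive obstacle: the entire argument is essentially the same It\^o bookkeeping that produced the full-market relative-wealth identities, applied to the sub-market. The only mildly delicate step in the direct route is verifying the cancellation of the $W^{\mathcal{U}}$-dependent covariation terms, and this is precisely what the sub-market reinterpretation makes automatic.
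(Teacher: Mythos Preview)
Your proposal is correct and is essentially the same approach as the paper's own proof, which simply states that the result follows directly from the wealth dynamics \eqref{eq:val_process_S} and an application of It\^o's formula. Your reinterpretation of the universe as a self-contained sub-market is a clean way to phrase why the full-market identities carry over verbatim, but it does not constitute a different argument.
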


	\begin{proof}
	    Using the definition of $\pi$ and $W^{\mathcal{U}}$, the statement follows directly from \eqref{eq:val_process_S} and applying It\^o's formula. 
	\end{proof}
	
	\begin{remark}
	    When we consider path-functional portfolios investing in a universe $\mathcal{U}$, it is convenient to construct them via the universe portfolio $\mu^{\mathcal{U}}$. Note that for path-functional portfolios $\pi^{(I)}$ of type $I$  it directly follows that $\pi^{(I),i}(\mu^{\mathcal{U}}, \cdot) =0$ for $i\not\in \mathcal{U}$. However, for  path-functional portfolios $\pi^{(II)}$ of type $II$, we have to require additionally  that $\pi^{(II),i}(\mu^{\mathcal{U}}, \cdot) =0$ for $i\not\in \mathcal{U}$ by setting $f^i\equiv0$ for $i\not\in \mathcal{U}$.
	\end{remark}
	
	\begin{remark}[Investments in Ranked Markets] Rank-based portfolios are of particular interest in SPT, see~\cite{Fern02}. We aim to therefore incorporate those into our considerations. Recall that our very basic assumptions on the financial market were that the stocks' capitalizations are positive continuous semimartingales. Considering a market $M$ which fulfills these assumptions, we know that there exists another market  which also fulfills these assumptions and for which  the stocks' capitalizations  are the same as the ranked capitalizations of the market $M$. This follows simply from the fact that the ranked capitalizations are again positive continuous semimartingales. However, 
    the assumption of NUPBR does not translate directly (see~\cite[Section 3.2]{Karatzas_Ruf} for a comment on the potential lack of a local martingale deflator in ranked markets and~\cite{Kardaras12} for a connection between the existence of a local martingale deflator and the NUPBR condition). For the optimization problems to be meaningful in the ranked market, we have to assume NUPBR again additionally. Hence, under the assumption of NUPBR for the ranked market weights, the optimization problems we present in the following hold also for investments in the ranked market.
	\end{remark}

	It  is sometimes useful to write optimization tasks in a vectorized form. To this end, let us introduce what we call a \emph{labelling function}.
	\begin{definition}
	    For a given linear path-functional portfolio $\pi$ investing in a universe of stocks $\mathcal{U}$, let $\mathcal{V}$ be the set of features, $|\mathcal{V}|$ the number of features and $|\mathcal{U}|$ the number of stocks in a universe.
		A bijective function $\mathscr{L}$ of the form
		\begin{align*}
			\mathscr{L}: \mathcal{U} \times \mathcal{V} &\longrightarrow \{1,...,|\mathcal{U}|\cdot|\mathcal{V}|\}\\
			(i, \nu) &\longmapsto \mathscr{L}(i,\nu)
		\end{align*}
		is called \emph{labelling function.}
	\end{definition}
	
    Before studying two concrete  optimization tasks for linear path-functional portfolios, let us make the following statement about a more general class of optimization tasks, which turn out to be \emph{quadratic optimization problems} for linear path-functional portfolios:

    \begin{prop}
        Let us denote by $\mathbb{S}^d$ the space of symmetric matrices of dimension $d\times d$.
        For a universe $\mathcal{U}$, consider a class of linear path-functional portfolios $\Pi$ (of either type) given by a feature space $\mathcal{V}$, a collection of feature maps $\{\phi^{\nu}\}_{\nu \in \mathcal{V}}$ consisting of non-anticipative path-functionals, a continuous semimartingale $X$ being the underlying process and an auxiliary portfolio $\tau$. Recall that the portfolio controlling functionals $\{f^i\}_{ i \in \mathcal{U}}$ of such portfolios are given by $f^i(t, X_{[0,t]}) = \sum_{\nu \in \mathcal{V}} l^i_{\nu} \phi^{\nu}(t, X_{[0,t]})$ for $i \in \mathcal{U}$. Then, any optimization problem of the form
        \begin{equation*}
            \inf_{\pi\in \Pi} \mathbb{E}\left[ \int_{t_0}^t \pi^{\mathsf{T}}_s C_s \pi_s \lambda_1(ds) - \int_{t_0}^t b_s^{\mathsf{T}} \pi_s \lambda_2(ds)\right]
        \end{equation*}
        where $b$, $C$ are stochastic processes with values in $\mathbb{R}^d$, $\mathbb{S}^d$ respectively and $\lambda^i$ are signed measures on $[t_0, t]$, is a quadratic optimization problem in the linear coefficients $\{l_{\nu}^i\}_{\nu \in \mathcal{V}, 1\leq i \leq d}$.
    \end{prop}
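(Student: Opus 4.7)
The plan is to reduce the problem to an explicit quadratic form in the coefficient vector by first vectorizing the coefficients via the labelling function $\mathscr{L}$, and then substituting the resulting affine representation of $\pi_s$ into the objective. Concretely, I will first stack the parameters $\{l_{\nu}^i\}_{i \in \mathcal{U},\, \nu \in \mathcal{V}}$ into a vector $\mathbf{l} \in \mathbb{R}^{|\mathcal{U}|\cdot|\mathcal{V}|}$ with $\mathbf{l}_{\mathscr{L}(i,\nu)} = l_{\nu}^i$. Then I will show that for each $s$ the portfolio vector $\pi_s \in \mathbb{R}^d$ depends affinely on $\mathbf{l}$, i.e. that there exist (random) $A_s \in \mathbb{R}^{d \times (|\mathcal{U}|\cdot|\mathcal{V}|)}$ and $c_s \in \mathbb{R}^d$, both independent of $\mathbf{l}$ and depending only on $\tau$ and the feature maps $\{\phi^{\nu}(s, X_{[0,s]})\}_{\nu \in \mathcal{V}}$, such that $\pi_s = A_s \mathbf{l} + c_s$.

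The affine representation follows directly from the explicit form of type I and type II portfolios. For type I, one has $\pi_s^i = \tau_s^i \sum_{\nu} l_{\nu}^i \phi^{\nu}(s, X_{[0,s]}) + \tau_s^i - \tau_s^i \sum_{j \in \mathcal{U}} \tau_s^j \sum_{\nu} l_{\nu}^j \phi^{\nu}(s, X_{[0,s]})$, so the entries of $A_s$ at column $\mathscr{L}(j,\nu)$ are $(\tau_s^i \delta_{ij} - \tau_s^i \tau_s^j)\phi^{\nu}(s, X_{[0,s]})$ and $c_s = \tau_s$; for type II, $\pi_s^i = \sum_{\nu} l_{\nu}^i \phi^{\nu}(s, X_{[0,s]}) + \tau_s^i - \tau_s^i \sum_{j \in \mathcal{U}} \sum_{\nu} l_{\nu}^j \phi^{\nu}(s, X_{[0,s]})$, which gives an analogous expression. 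In both cases $A_s, c_s$ are adapted processes.

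Substituting this into the integrand yields, for each $s$,
\begin{equation*}
\pi_s^{\mathsf{T}} C_s \pi_s = \mathbf{l}^{\mathsf{T}} (A_s^{\mathsf{T}} C_s A_s)\mathbf{l} + 2\mathbf{l}^{\mathsf{T}} A_s^{\mathsf{T}} C_s c_s + c_s^{\mathsf{T}} C_s c_s,
\end{equation*}
\begin{equation*}
b_s^{\mathsf{T}} \pi_s = \mathbf{l}^{\mathsf{T}} A_s^{\mathsf{T}} b_s + b_s^{\mathsf{T}} c_s.
\end{equation*}
Because $\mathbf{l}$ is constant (the optimization variable is a deterministic parameter, not a process), I can pull it out of the integral and the expectation. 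Under suitable integrability conditions ensuring that the processes $A_s^{\mathsf{T}} C_s A_s$, $A_s^{\mathsf{T}} C_s c_s$, $A_s^{\mathsf{T}} b_s$ are $\lambda_i$-integrable in expectation (which I will state as a standing assumption), Fubini's theorem applied to the signed measures gives
\begin{equation*}
\mathbb{E}\!\left[\int_{t_0}^{t} \pi_s^{\mathsf{T}} C_s \pi_s\, \lambda_1(ds) - \int_{t_0}^{t} b_s^{\mathsf{T}} \pi_s\, \lambda_2(ds)\right] = \mathbf{l}^{\mathsf{T}} M\, \mathbf{l} - v^{\mathsf{T}} \mathbf{l} + \mathrm{const},
\end{equation*}
with $M := \mathbb{E}[\int_{t_0}^{t} A_s^{\mathsf{T}} C_s A_s\, \lambda_1(ds)]$ (symmetric since $C_s$ is) and $v := \mathbb{E}[\int_{t_0}^{t} (A_s^{\mathsf{T}} b_s\, \lambda_2(ds) - 2 A_s^{\mathsf{T}} C_s c_s\, \lambda_1(ds))]$. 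This is manifestly quadratic in $\mathbf{l}$, proving the claim.

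The only real obstacle is justifying the interchange of expectation, time integration (against the signed measures $\lambda_i$), and the decomposition of the integrand; this is purely a matter of integrability and can be handled by decomposing each $\lambda_i$ into its Jordan parts and invoking Fubini on each piece. I would note in passing that the proposition only asserts that the problem is \emph{quadratic}, not necessarily convex; positive semidefiniteness of $M$ (and hence convexity) requires additional assumptions on $C_s$, such as $C_s \succeq 0$ $\lambda_1$-a.e., which will be the case in the concrete log-wealth and mean-variance applications treated later.
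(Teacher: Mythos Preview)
Your proof is correct and follows essentially the same approach as the paper's: observe that $\pi_s$ is affine in the parameters $\{l_\nu^i\}$, so the integrand is at most quadratic in them, and since the parameters are deterministic and time-constant they can be pulled outside both the integrals and the expectation. The paper's own proof is a two-line sketch of exactly this argument, whereas you have made the affine representation $\pi_s = A_s\mathbf{l} + c_s$ and the resulting quadratic form explicit, and have also (correctly) noted that only quadraticity, not convexity, is being claimed.
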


    \begin{proof}
        Recall the form of linear path-functional portfolios of type $I$ and $II$ respectively. Therefore, it is clear that the parameters $\{l_{\nu}^i\}$ appear at most in quadratic terms. Moreover, since they are constant in time and deterministic, they can be pulled outside of the integrals.  
        
    \end{proof}

	Let us now, present two prominent optimization tasks which fall into the above class, namely \emph{mean-variance and log-wealth optimization}. We assume that the following optimization problems are well-posed and that the expectations are finite.
	
	\subsection{Mean-Variance Optimization}\label{subsec:MV_optim_theory}
    
    In this optimization problem, we consider a strategy, where we invest at time $t$ with portfolio weights $\pi_t$ of a linear path-functional portfolio into a universe $\mathcal{U}$ and follow a buy-and-hold strategy over the time-span $\Delta$. The (relative) return of this strategy over the time-horizon $[t, t+\Delta]$ is given by 
	\begin{equation}
	    \mathcal{R}^{W, \pi}_{t, t+\Delta} = \frac{W^{\pi}_{t+\Delta}}{W^{\pi}_{t}} -1 = \sum_{j\in \mathcal{U}} \pi_{t}^j \frac{S_{t+\Delta}^j}{S_{t}^j} -1
	\end{equation}
	and respectively the relative return (relative to the universe portfolio $\mu^{\mathcal{U}}$)
	\begin{equation}
	    \mathcal{R}^{V, \pi}_{t, t+\Delta} = \frac{V^{\pi}_{t+\Delta}}{V^{\pi}_{t}} -1  = \sum_{j\in \mathcal{U}} \pi_{t}^j \frac{\mu_{t+\Delta}^{\mathcal{U},j}}{\mu_{t}^{\mathcal{U},j}} -1.
	\end{equation}
	
	In this setting, we can state the following corollary about the mean-variance optimization problem:
	\begin{cor}\label{cor:mv}
	    Let $\lambda \in \mathbb{R}_+$ be a risk-tolerance parameter, specifying the investor's risk-preferences. Then, for a linear path-functional portfolio $\pi(\tau, X)$ the following two optimization problems
        are convex quadratic optimization problems in the parameters $\{l_{\nu}^i\}_{i\in\mathcal{U}, \nu \in \mathcal{V}}$. More precisely, for the
	    \begin{enumerate} 
	    \item mean-variance optimization of returns we have
	    \begin{align*}
	        &\min_{\pi} \hspace{0.1cm} \mathrm{Var}\left(\mathcal{R}^{W, \pi}_{t, t+\Delta}\right) - \lambda \mathbb{E}\left(\mathcal{R}^{W, \pi}_{t, t+\Delta} \right)  \\
	        \Leftrightarrow &\min_{\{l_{\nu}^i\}} \hspace{0.1cm} \mathbf{l}^\mathsf{T} \mathrm{Var}\left( \mathbf{Y}^W \right) \mathbf{l}-  \mathbf{l}^\mathsf{T} \left(\lambda\mathbb{E}\left( \mathbf{Y}^W \right) - \boldsymbol{\sigma}^W \right)
	    \end{align*}
	    for $\mathbf{l}_{\mathscr{L}(i, \nu)} = l_{\nu}^i$ , $\boldsymbol{\sigma}^W_{\mathscr{L}(i,\nu)}= 2\mathrm{Cov}(\mathbf{Y}^W_{\mathscr{L}(i, \nu)},\mathcal{R}^{W,\tau}_{t, t+\Delta}) $,
	    \begin{equation*}
	     \mathbf{Y}^W_{\mathscr{L}(i, \nu)} =  \begin{cases} \tau_t^i \phi^{\nu}(t, X_{[0,t]})\left(\frac{S^i_{t+\Delta}}{S^i_t} - \mathcal{R}^{W,\tau}_{t, t+\Delta} -1\right) &\textrm{ if $\pi$ is of type $I$}  \\ \phi^{\nu}(t, X_{[0,t]})\left(\frac{S^i_{t+\Delta}}{S^i_t} - \mathcal{R}^{W,\tau}_{t, t+\Delta} -1\right) &\textrm{ if $\pi$ is of type $II$} \end{cases}
	    \end{equation*}
        and where $\mathrm{Var}(\mathbf{Y}^W)$ refers to the covariance matrix of $\mathbf{Y}^W$.
	    
	    \item mean-variance optimization of relative return we have
	    \begin{align*}
	       & \min_{\pi} \hspace{0.1cm} \mathrm{Var}\left(\mathcal{R}^{V, \pi}_{t, t+\Delta}\right) - \lambda \mathbb{E}\left(\mathcal{R}^{V, \pi}_{t, t+\Delta} \right) \\
	    \Leftrightarrow &\min_{\{l_{\nu}^i\}} \hspace{0.1cm} \mathbf{l}^\mathsf{T} \mathrm{Var}\left( \mathbf{Y}^V \right)\mathbf{l} -  \mathbf{l}^\mathsf{T} \left(\lambda\mathbb{E}\left( \mathbf{Y}^V \right)-\boldsymbol{\sigma}^V \right)
	    \end{align*}
	    for $\mathbf{l}_{\mathscr{L}(i, \nu)} = l_{\nu}^i$, $\boldsymbol{\sigma}^V_{\mathscr{L}(i,\nu)}= 2\mathrm{Cov}(\mathbf{Y}^V_{\mathscr{L}(i, \nu)},\mathcal{R}^{V,\tau}_{t, t+\Delta}) $ and
	    \begin{equation*}
	     \mathbf{Y}^V_{\mathscr{L}(i, \nu)} =  \begin{cases} \tau_t^i \phi^{\nu}(t, X_{[0,t]})\left(\frac{\mu^{\mathcal{U},i}_{t+\Delta}}{\mu^{\mathcal{U},i}_t} - \mathcal{R}^{V,\tau}_{t, t+\Delta}-1\right) &\textrm{ if $\pi$ is of type $I$}  \\ \phi^{\nu}(t, X_{[0,t]})\left(\frac{\mu^{\mathcal{U},i}_{t+\Delta}}{\mu^{\mathcal{U},i}_t} - \mathcal{R}^{V,\tau}_{t, t+\Delta}-1\right) &\textrm{ if $\pi$ is of type $II$} \end{cases}
	    \end{equation*}
         and where $\mathrm{Var}(\mathbf{Y}^V)$ refers to the covariance matrix of $\mathbf{Y}^V$.
	    \end{enumerate}
	 
	\end{cor}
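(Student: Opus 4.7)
The plan is to express the portfolio return $\mathcal{R}^{W,\pi}_{t,t+\Delta}$ as an affine function of the optimization parameters $\mathbf{l}$ and then read off the quadratic and linear parts of the mean-variance objective. First, I would substitute the linear ansatz $f^i(t,X_{[0,t]})=\sum_{\nu\in\mathcal{V}} l^i_\nu \phi^\nu(t,X_{[0,t]})$ into the two portfolio forms from Definition~\ref{def:path-func-port}. Writing $r^i := S^i_{t+\Delta}/S^i_t$ and using that $\sum_{i\in\mathcal{U}} \tau^i_t r^i = \mathcal{R}^{W,\tau}_{t,t+\Delta}+1$, a direct computation yields for both type~I and type~II
\begin{equation*}
\mathcal{R}^{W,\pi}_{t,t+\Delta} \;=\; \mathcal{R}^{W,\tau}_{t,t+\Delta} \;+\; \sum_{i\in\mathcal{U}}\sum_{\nu\in\mathcal{V}} l^i_\nu\, \kappa^i_t\, \phi^\nu(t,X_{[0,t]})\bigl(r^i - \mathcal{R}^{W,\tau}_{t,t+\Delta} - 1\bigr),
\end{equation*}
where $\kappa^i_t = \tau^i_t$ in type~I and $\kappa^i_t \equiv 1$ in type~II. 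Recognising the inner summand as $\mathbf{Y}^W_{\mathscr{L}(i,\nu)}$, this becomes $\mathcal{R}^{W,\pi}_{t,t+\Delta} = \mathcal{R}^{W,\tau}_{t,t+\Delta} + \mathbf{l}^{\mathsf{T}}\mathbf{Y}^W$, which is affine in $\mathbf{l}$.

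Given this affine representation, the objective splits into
\begin{equation*}
\mathrm{Var}\bigl(\mathcal{R}^{W,\pi}_{t,t+\Delta}\bigr) - \lambda\, \mathbb{E}\bigl[\mathcal{R}^{W,\pi}_{t,t+\Delta}\bigr]
= \mathbf{l}^{\mathsf{T}} \mathrm{Var}(\mathbf{Y}^W)\, \mathbf{l} + 2\,\mathbf{l}^{\mathsf{T}}\mathrm{Cov}(\mathbf{Y}^W,\mathcal{R}^{W,\tau}_{t,t+\Delta}) - \lambda\, \mathbf{l}^{\mathsf{T}}\mathbb{E}[\mathbf{Y}^W] + C,
\end{equation*}
where $C$ collects the terms independent of $\mathbf{l}$ (namely $\mathrm{Var}(\mathcal{R}^{W,\tau}_{t,t+\Delta}) - \lambda \mathbb{E}[\mathcal{R}^{W,\tau}_{t,t+\Delta}]$) and can be dropped from the minimisation. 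Setting $\boldsymbol{\sigma}^W_{\mathscr{L}(i,\nu)}=2\,\mathrm{Cov}(\mathbf{Y}^W_{\mathscr{L}(i,\nu)},\mathcal{R}^{W,\tau}_{t,t+\Delta})$ rewrites the linear term as $-\mathbf{l}^{\mathsf{T}}(\lambda\mathbb{E}[\mathbf{Y}^W]-\boldsymbol{\sigma}^W)$, matching the claim. Convexity is immediate since $\mathrm{Var}(\mathbf{Y}^W)$ is a covariance matrix and hence positive semidefinite, so the Hessian is constant and PSD. For the relative-return case, the argument is structurally identical: replacing $r^i=S^i_{t+\Delta}/S^i_t$ by the relative increment $\mu^{\mathcal{U},i}_{t+\Delta}/\mu^{\mathcal{U},i}_t$ and using $\sum_{i\in\mathcal{U}}\tau^i_t \mu^{\mathcal{U},i}_{t+\Delta}/\mu^{\mathcal{U},i}_t = \mathcal{R}^{V,\tau}_{t,t+\Delta}+1$ gives the analogous affine representation $\mathcal{R}^{V,\pi}_{t,t+\Delta} = \mathcal{R}^{V,\tau}_{t,t+\Delta} + \mathbf{l}^{\mathsf{T}}\mathbf{Y}^V$, and the same decomposition produces the stated quadratic programme.

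There is no substantive obstacle here; the entire content is the algebraic reduction of $\sum_i \pi^i_t r^i - 1$ to an affine function of $\mathbf{l}$, which hinges on the observation that in both portfolio types the term $1-\sum_j(\cdot)f^j$ is contracted against $\sum_i \tau^i_t r^i = \mathcal{R}^{W,\tau}_{t,t+\Delta}+1$, producing the shifted factor $r^i - \mathcal{R}^{W,\tau}_{t,t+\Delta}-1$ that defines $\mathbf{Y}^W$. The only point requiring mild care is bookkeeping the $\tau^i_t$ prefactor, which is present in type~I and absent in type~II, so the two cases should be handled in parallel with the unified notation $\kappa^i_t$ above. Integrability of $\mathbf{Y}^W$ and $\mathbf{Y}^V$ (so that the covariance matrices are well-defined) follows from the standing assumption that the optimisation problem is well-posed and the expectations in the statement are finite.
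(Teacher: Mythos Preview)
Your proof is correct and follows essentially the same approach as the paper: derive the affine representation $\mathcal{R}^{W,\pi}_{t,t+\Delta} = \mathcal{R}^{W,\tau}_{t,t+\Delta} + \mathbf{l}^{\mathsf{T}}\mathbf{Y}^W$, expand the variance and expectation, drop the $\mathbf{l}$-independent terms, and conclude convexity from positive semidefiniteness of the covariance matrix. Your unified treatment of type~I and type~II via $\kappa^i_t$ is a nice presentational touch but otherwise the arguments coincide.
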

	
	\begin{proof}
	    We proof the above statement for the case of $\mathcal{R}^{W, \pi}$ and for $\pi$ of type $I$. The proof of the other cases is completely analogous. 
	    \begin{align*} 
	        \mathcal{R}^{W, \pi}_{t, t+\Delta} &= \sum_{i\in \mathcal{U}} \tau_t^i \left( f^i(t, X_{[0,t]}) + 1 - \sum_{j\in \mathcal{U}} \tau_t^j f^j(t, X_{[0,t]}) \right) \left( \frac{S_{t+\Delta}^i}{S_{t}^i} \right) -1\\
	        & = \sum_{i\in \mathcal{U}} \tau^i_t f^i(t, X_{[0,t]}) \left( \frac{S^i_{t+\Delta}}{S_t^i} - \mathcal{R}^{W,\tau}_{t, t+\Delta} -1\right) + \mathcal{R}^{W,\tau}_{t, t+\Delta} +1 -1 \\ &= \mathbf{l}^{\mathsf{T}} \mathbf{Y}^W + \mathcal{R}^{W,\tau}_{t, t+\Delta} 
	    \end{align*}
	    Hence, 
	    \begin{align} 
	        \mathbb{E} \left(\mathcal{R}^{W, \pi}_{t, t+\Delta} \right)  = \mathbf{l}^{\mathsf{T}} \mathbb{E}\left(  \mathbf{Y}^W\right) +   \mathbb{E}\left( \mathcal{R}^{W,\tau}_{t, t+\Delta}\right). \label{eq:E_R}
	    \end{align}
	    Note that the  rightmost term of \eqref{eq:E_R} do not contribute in the optimization problem, as they are independent of $\mathbf{l}$. And, 
	    \begin{align} 
	        \mathrm{Var}\left(\mathcal{R}^{W, \pi}_{t, t+\Delta} \right)&= \mathrm{Var}\left(\mathbf{l}^{\mathsf{T}}\mathbf{Y}^W \right) + 2 \mathrm{Cov}\left(\mathbf{l}^{\mathsf{T}}\mathbf{Y}^W, \mathcal{R}^{W,\tau}_{t, t+\Delta} \right)+ \mathrm{Var}\left( \mathcal{R}^{W,\tau}_{t, t+\Delta} \right) \notag\\ &= \mathbf{l}^{\mathsf{T}}\mathrm{Var}\left(\mathbf{Y}^W \right)\mathbf{l} + \mathbf{l}^{\mathsf{T}} \boldsymbol{\sigma}^W + \mathrm{Var}\left( \mathcal{R}^{W,\tau}_{t, t+\Delta} \right),\label{eq:V_R}
	    \end{align}
	    where the rightmost term of \eqref{eq:V_R} does not contribute in the optimization problem.
	    The fact that the optimization problem is convex follows directly from the fact that the covariance matrix $\mathrm{Var}\left(\mathbf{Y}^W \right)$ is positive semi-definite.
	\end{proof}
    
	\begin{remark} 
	    Note that the expression of the optimization problem for the relative return simplifies considerably if we use a linear path-functional portfolio of type $I$ and choose the universe portfolio as the auxiliary portfolio, i.e. $\tau=\mu^{\mathcal{U}}$. To be precise, we obtain 
	    \begin{align*}
	       & \min_{\pi} \hspace{0.1cm} \mathrm{Var}\left(\mathcal{R}^{V, \pi}_{t, t+\Delta}\right) - \lambda \mathbb{E}\left(\mathcal{R}^{V, \pi}_{t, t+\Delta} \right) \\
	    \Leftrightarrow &\min_{\{l_{\nu}^i\}} \hspace{0.1cm} \mathbf{l}^\mathsf{T} \mathrm{Var}\left( \mathbf{Y}^V \right)\mathbf{l} - \lambda \mathbf{l}^\mathsf{T} \mathbb{E}\left( \mathbf{Y}^V \right)
	    \end{align*}
	    for $\mathbf{l}_{\mathscr{L}(i, \nu)} = l_{\nu}^i$ and $\mathbf{Y}^V_{\mathscr{L}(i, \nu)}= \phi^{\nu}(t, X_{[0,t]})(\mu^{\mathcal{U},i}_{t+\Delta} - \mu^{\mathcal{U},i}_t)$, since $\mathcal{R}^{V, \mu^{\mathcal{U}}}_{t, t+\Delta} =0$.
	\end{remark}

    \begin{remark}\label{rem:first_order_sol}
        If $\textrm{Var}(\mathbf{Y}^V)$ or $\textrm{Var}(\mathbf{Y}^W)$ respectively are almost surely invertible, that is the optimization problem is \emph{strictly} convex and hence admits a unique set of optimal parameters, then the optimal vector $\mathbf{l}^*$  can be found via first-order conditions, i.e. 
        $$ \mathbf{l}^* = \textrm{Var}(\mathbf{Y}^V)^{-1}\left( \lambda \mathbb{E}(\mathbf{Y}^V)- \mathbf{\sigma}^V\right) $$
        and likewise for the case of optimizing returns. 
        Note that the problem is never strictly convex for \emph{any} linear path-functional portfolio of type $I$, because due to the form of such portfolios, the weights are invariant under scalar shifts of the parameters $\{l_{\nu}^i\}_{\nu \in \mathcal{V}, 1\leq i\leq d}$. More precisely, for linear path-functional portfolios of type $I$, let $\pi^*(\tau, X)$ be optimal portfolio weights corresponding to an optimal set $\{l^{*, i}_{\nu}\}_{\nu \in \mathcal{V}, 1\leq i\leq d}$, then the set $\{l^{*,i}_{\nu}+\alpha\}_{\nu \in \mathcal{V}, 1\leq i\leq d}$ for $\alpha \in \mathbb{R}$ results in \emph{the same} portfolio weights $\pi^*(\tau, X)$. 

    \end{remark}

   \begin{remark}\label{rem:ergodic}
Let us here outline how the mean-variance optimization problem (of the relative returns) of Corollary \ref{cor:mv} can be related to a (model-free) real market situation where only the observation of one trajectory is available (see also Section \ref{sec:realdata}).
Tailored to Corollary \ref{cor:mv} we assume a discrete time situation with time steps of length $\Delta$ and suppose that the  conditional law of $\mu_{i\Delta}$ is time-homogeneous and depends on the last $n \wedge i$ values of $\mu$ for some $n \in \mathbb{N}$. More precisely, given $(\mu_{(i-n)\Delta \wedge 0}, \ldots, \mu_{(i-1)\Delta})=(x_0,\ldots,x_{(n-1) \wedge (i-1)}) \in (\Delta^d)^{n \wedge i}$, we know the conditional law $\rho(x_0,\ldots,x_{(n-1) \wedge (i-1)}, \cdot)$ of $\mu_{i\Delta}$.
Consider now a portfolio $\pi$ that is a non-anticipative path functional depending also only on the past $n$ values of $\mu$, i.e. a portfolio with rolling window in the spirit of Remark~\ref{rem:examples_linear_port}.
Then for $i \geq n$
\begin{align*}
\mathbb{E}\left[\frac{V^{\pi}_{i\Delta}}{V^{\pi}_{(i-1)\Delta}} \bigg| \mathcal{F}_{(i-1)\Delta} \right]&= \mathbb{E}[(R_{(i-1)\Delta, i\Delta}^{V,\pi} +1)| \mathcal{F}_{(i-1)\Delta} ]\\
&=\int_{\Delta^d }\sum_{j=1}^d\pi^j(x_0,\ldots,x_{n-1}) \frac{y^j}{x^j_{n-1}}\rho(x_0,\ldots,x_{n-1}, dy)
\end{align*}
and similarly for $i < n$, taking just the history starting from $0$ into account.
Assume now further that we have an invariant measure for the distribution of the rolling path segments of length $n$, which we  denote by $\rho$. Then under this invariant measure we have for each $i \geq n$
\begin{align*}
\mathbb{E}\left[\frac{V^{\pi}_{i\Delta}}{V^{\pi}_{(i-1)\Delta}}  \right] &=\int_{(\Delta^d)^n}\int_{\Delta^d }\sum_{j=1}^d\pi^j(x_0,\ldots,x_{n-1}) \frac{y^j}{x^j_{n-1}}\rho(x_0,\ldots,x_{n-1}, dy)\rho(dx_0, \ldots dx_{n-1})\\ &=:M^{\pi}
\end{align*}
and accordingly for $i < n$ assuming compatible invariant measures for the paths of shorter lengths.
Under these assumptions we then obtain the following ergodicity result
\[
\lim_{N \to \infty} \frac{1}{N} \sum_{i=1}^N \frac{V^{\pi}_{i\Delta}}{V^{\pi}_{(i-1)\Delta}}=
1+\lim_{N \to \infty} \frac{1}{N} \sum_{i=1}^N R_{(i-1)\Delta, i\Delta}^{V,\pi} =M^{\pi}.
\]
This can be deduced from 
Birkhoff’s ergodic theorem for discrete time Markov processes (see e.g.~[Theorem
2.2, Section 2.1.4]\cite{E:16} and also \cite[Section 3] {christa}
for the standard Markov situation with $n=1$).
Indeed, this becomes applicable by 
 identifying the states of the Markov process as the past path segements of length $n$, i.e. a state $z_{i\Delta}$ is given by $z_{i\Delta}=(\mu_{(i-n)\Delta}, \ldots, \mu_{i\Delta})$. The transition probabilities from $z_{i\Delta}$ to $z_{(i+1)\Delta}$ must then be chosen such that they assign $0$  probability to elements where there exists some $j \in \{1,\ldots, n-1\}$ with $z^j_{(i+1)\Delta} \neq z^{j+1}_{i\Delta}$. 

Clearly the above reasoning also holds true for 
$\textrm{Var}\left[V^{\pi}_{i\Delta}/ V^{\pi}_{(i-1)\Delta} \right]=\textrm{Var}[R_{(i-1)\Delta, i\Delta}^{V,\pi}].
$  Therefore, under such ergodicity assumptions all the expected values in Corollary \ref{cor:mv} can be approximated by time averages, which will be important in our numerical implementations as outlined in Section \ref{sec:realdata}.

    \end{remark}

	\subsection{Optimizing the (Expected) Log-(Relative)-Wealth}\label{sec:logutility}

	Similarly as the mean-variance portfolio optimization problem also the log-optimal portfolio over linear path functional portfolios can be found by solving a convex quadratic optimization task. We state it here in form of the relative log-utility optimization problem, i.e. the goal is to solve
 \[
\sup_{\pi}\mathbb{E} \left[ \log \left( \frac{W^{\pi}_{t}}{W^{{\mathcal{U}}}_{t}}\right) \right]=\sup_{\pi}\mathbb{E}[\log(V_t^{\pi})],
 \]
 where $V_t^{\pi}$ denotes again the relative wealth with respect to the universe $\mathcal{U}$.
 Note however that under similar ergodicity conditions as in Remark \ref{rem:ergodic} the use of time averages
 \[
\frac{1}{N} \log (V^{\pi}_{N\Delta})= \frac{1}{N} \sum_{i=1}^N \log\left( \frac{V^{\pi}_{i \Delta}}{V^{\pi}_{(i-1) \Delta}}\right)
 \]
is justified to approximate $\frac{1}{N}\mathbb{E}[\log(V_{N\Delta}^{\pi})]$.  
This can also be translated to continuous time, where
under appropriate assumptions (see e.g.~\cite[Section 4.2.3 and Theorem 4.9]{christa}) we also have that
 \[
 \lim_{t \to \infty}\frac{1}{t}\log(V_t^{\pi})=  \lim_{t \to \infty}\frac{1}{t}\mathbb{E}[\log(V_t^{\pi})],
 \]
whence 
\[
\frac{1}{t} \log(V_t^{\pi})\approx \frac{1}{t}\mathbb{E}[\log(V_t^{\pi})].
\]
In this case the expected values in the subsequent theorem, i.e.~$\mathbb{E}\left[\mathbf{Q}(t) \right]$
and  $\mathbb{E}\left[\mathbf{c}(t)\right]$ can therefore be just replaced by $\mathbf{Q}(t) $ and $\mathbf{c}(t)$ computed along the observed trajectory.
This will play again an important role in our empirical implementations.

	\begin{theorem}[Relative Log-Utility Optimization] \label{th:log_opt_task}
		Consider a universe $\mathcal{U}\subseteq \{1,...,d\}$ and a linear path functional portfolio $\pi(\mu^{\mathcal{U}}, X)$. Let $t_0\geq 0$ be the time at which we start investing. Denote by $\mathscr{L}$ an arbitrary but fixed labelling function. Then, for the (relative) log-utility optimization problem we have

		\begin{equation}\label{eq:opt_task_min}
			\sup_{\{l_{\nu}^{(i)}\}_{{i \in \mathcal{U}}, {\nu} \in \mathcal{V}}} \mathbb{E} \left[ \log \left( \frac{W^{\pi}_{t}}{W^{{\mathcal{U}}}_{t}}\right) \right] \Leftrightarrow \inf_{\mathbf{l}\in \mathbb{R}^{|\mathcal{U}|\cdot|\mathcal{V}|}} \frac{1}{2} \mathbf{l}^{\mathsf{T}} \mathbb{E}\left[\mathbf{Q}(t) \right]\mathbf{l} - \mathbb{E}\left[\mathbf{c}(t)^{\mathsf{T}}\right] \mathbf{l}.
		\end{equation}
	
		where $\mathbf{l}_{\mathscr{L}(i, \nu)} = l_{\nu}^{i}$ for $i \in \mathcal{U}$, $\nu \in \mathcal{V}$ and $$(\mathbf{c}(t))_{\mathscr{L}(i, \nu)}= \int_{t_0}^t 
		    \gamma_s^i \phi^{\nu}( s, X_{[0,s]}) \dd \mu^{\mathcal{U},i}_s$$ 
		    
		    $$ (\mathbf{Q}(t))_{\mathscr{L}(i, \nu), \mathscr{L}(j, \rho)}= \int_{t_0}^t  \gamma_s^i\gamma_s^j\phi^{\nu}( s, X_{[0,s]}) \phi^{\rho}( s, X_{[0,s]}) \dd[\mu^{\mathcal{U},i},\mu^{\mathcal{U},j}]_s$$
        for $$\gamma_s^i = \begin{cases} 1 &\textrm{ if } \pi \textrm{ is of type $I$} \\  \left(\mu^{\mathcal{U}, i}_s\right)^{-1} &\textrm{ if } \pi \textrm{ is of type $II$}\end{cases}. $$
		Moreover, it is a convex quadratic optimization problem. 
	\end{theorem}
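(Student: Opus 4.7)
The plan is to start from the SDE for the relative log-wealth derived earlier in the paper, namely
\begin{equation*}
d\log(V_t^{\pi}) = \sum_{i\in\mathcal{U}}\frac{\pi_t^i}{\mu_t^{\mathcal{U},i}}\, d\mu_t^{\mathcal{U},i} - \frac{1}{2}\sum_{i,j\in\mathcal{U}}\frac{\pi_t^i}{\mu_t^{\mathcal{U},i}}\frac{\pi_t^j}{\mu_t^{\mathcal{U},j}}\, d[\mu^{\mathcal{U},i},\mu^{\mathcal{U},j}]_t,
\end{equation*}
and substitute the specific form of a linear path-functional portfolio $\pi(\mu^{\mathcal{U}},X)$. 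The first useful observation is that, because the choice $\tau = \mu^{\mathcal{U}}$ is made, the ratio $\pi_t^i/\mu_t^{\mathcal{U},i}$ reduces, for type I portfolios, to $f^i + c_t$, where $c_t = 1-\sum_j \mu_t^{\mathcal{U},j}f^j$ does not depend on $i$; for type II portfolios it reduces to $f^i/\mu_t^{\mathcal{U},i} + \tilde{c}_t$, with $\tilde{c}_t = 1 - \sum_j f^j$ again $i$-independent. Since $\sum_{i\in\mathcal{U}}\mu^{\mathcal{U},i}\equiv 1$, we have $\sum_i d\mu_t^{\mathcal{U},i}=0$ and $\sum_i d[\mu^{\mathcal{U},i},\mu^{\mathcal{U},j}]_t=0$ for every $j$, so all terms proportional to $c_t$ or $\tilde{c}_t$ drop out. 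Using the uniform notation $\gamma_s^i$ from the theorem, this yields
\begin{equation*}
d\log(V_t^{\pi}) = \sum_{i\in\mathcal{U}}\gamma_t^i f^i(t,X_{[0,t]})\, d\mu_t^{\mathcal{U},i} - \frac{1}{2}\sum_{i,j\in\mathcal{U}}\gamma_t^i\gamma_t^j f^i f^j\, d[\mu^{\mathcal{U},i},\mu^{\mathcal{U},j}]_t.
\end{equation*}

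Next I would plug in $f^i(t,X_{[0,t]}) = \sum_{\nu\in\mathcal{V}} l_\nu^i \phi^\nu(t,X_{[0,t]})$. Since the coefficients $l_\nu^i$ are constants, they may be pulled outside of the stochastic integrals. Integrating from $t_0$ to $t$ and collecting terms via the labelling function $\mathscr{L}$, one obtains
\begin{equation*}
\log\left(\frac{W_t^{\pi}}{W_t^{\mathcal{U}}}\right) \;=\; \mathbf{c}(t)^{\mathsf{T}}\mathbf{l} - \tfrac{1}{2}\mathbf{l}^{\mathsf{T}}\mathbf{Q}(t)\mathbf{l},
\end{equation*}
with the entries of $\mathbf{c}(t)$ and $\mathbf{Q}(t)$ exactly as stated. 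Taking expectations (assuming the usual integrability implicit in the well-posedness hypothesis so that Fubini applies) turns the supremum over $\mathbf{l}$ into the infimum of $\tfrac{1}{2}\mathbf{l}^{\mathsf{T}}\mathbb{E}[\mathbf{Q}(t)]\mathbf{l} - \mathbb{E}[\mathbf{c}(t)]^{\mathsf{T}}\mathbf{l}$, which is the desired equivalence.

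The final step is convexity. I would argue that $\mathbf{Q}(t)$ is pathwise positive semi-definite, whence so is $\mathbb{E}[\mathbf{Q}(t)]$. For any $\mathbf{l}$, define the semimartingales
\begin{equation*}
M^i_t := \int_{t_0}^{t} \gamma_s^i\!\!\sum_{\nu\in\mathcal{V}} l_\nu^i\, \phi^\nu(s,X_{[0,s]})\, d\mu_s^{\mathcal{U},i},\qquad i\in\mathcal{U}.
\end{equation*}
Then $\mathbf{l}^{\mathsf{T}}\mathbf{Q}(t)\mathbf{l} = \sum_{i,j}[M^i,M^j]_t = \left[\sum_{i}M^i,\sum_{j}M^j\right]_t \geq 0$, giving positive semi-definiteness of the Hessian and hence convexity of the quadratic program.

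The main (and essentially only) obstacle is the bookkeeping in the first step, i.e.\ verifying cleanly that the $i$-independent pieces coming from the definition of a type I or type II linear path-functional portfolio are annihilated by the simplex constraints $\sum_i \mu^{\mathcal{U},i}=1$ in both the drift and the quadratic-variation terms; once this cancellation is in place, the rest of the argument is linear algebra and a standard quadratic-variation positivity argument.
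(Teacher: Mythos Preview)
Your proposal is correct and follows essentially the same approach as the paper. The paper isolates the cancellation step (your ``first useful observation'' that the $i$-independent pieces $c_t$ resp.\ $\tilde c_t$ are killed by $\sum_i d\mu^{\mathcal{U},i}=0$ and $\sum_i d[\mu^{\mathcal{U},i},\mu^{\mathcal{U},j}]=0$) into a separate lemma and then plugs in the linear parametrization, while for convexity it defines the single process $Y_t=\sum_{j}\int_{t_0}^{t}\gamma_s^j f^j\, d\mu_s^{\mathcal{U},j}$ and uses $\mathbf{l}^{\mathsf{T}}\mathbf{Q}(t)\mathbf{l}=[Y,Y]_t\ge 0$, which is exactly your $\sum_i M^i$ argument.
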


    \begin{remark} \label{rem:simplification}
    
    Let us highlight how the expression of $\mathbf{c}$, $\mathbf{Q}$ simplify in the case of a signature portfolio of type $I$, $t_0=0$, $\mathcal{U}=\{1,\dots,d\}$, $X=\mu$ and $\mu$ is such that it holds for some $N_0>0$
    $$ \dd[\mu^i,\mu^j]_t= \sum_{0\leq|I|\leq N_0} \alpha^{i,j}_I \langle e_I, \hat{\bbmu}_t\rangle,$$
    where $\alpha^{i,j}_I \in \mathbb{R}$ for all $i,j\in \{1,\dots,d\}$, $0\leq|I|\leq N_0$. Then we obtain
    $$(\mathbf{c}(t))_{\mathscr{L}(i, I)} = \langle e_I\otimes e_i, \hat{\bbmu}_t\rangle$$
    $$(\mathbf{Q}(t))_{\mathscr{L}(i,I), \mathscr{L}(j,J)}= \sum_{0\leq|K|\leq N_0} \alpha^{i,j}_K \langle (e_I\shuffle e_J)\otimes e_K, \hat{\bbmu}_t\rangle \, .$$

    More generally, consider a signature portfolio $\pi(\mu, \breve{\mu})$ of type $I$ where $\breve{\mu}_t:=(t, \mu_t, [\mu, \mu]_t)$ for all $t\in[0,T]$ in a vectorized form, i.e. fix a labelling function $\kappa(\cdot, \cdot)$ such that $\breve{\mu}^{\kappa(i,j)}= [\mu^i, \mu^j]$ for $1\leq i, j\leq d$ and set $t_0=0$, $\mathcal{U}=\{1,\dots,d\}$. Then the elements of $\mathbf{c}$ and $\mathbf{Q}$ are linear functions on the signature of $\breve{\mu}$, namely 
    $$(\mathbf{c}(t))_{\mathscr{L}(i, I)} = \langle e_I\otimes e_i, \breve{\bbmu}_t\rangle$$
    $$(\mathbf{Q}(t))_{\mathscr{L}(i,I), \mathscr{L}(j,J)}= \langle (e_I\shuffle e_J)\otimes e_{\kappa(i,j)}, \breve{\bbmu}_t\rangle \, .$$
    \end{remark}
    
	\begin{remark}
	    Note that although we start trading at time $t_0\geq 0$ we feed to the feature maps $\phi^{\nu}$ the semimartingale $X$ starting at time $0$. The interpretation of this is that we start ``observing'' the market at time $0$ and but only start trading at time $t_0\geq 0$. This time-difference between observing and starting to invest can be important in  non-Markovian settings.
	\end{remark}

	Before we prove Theorem~\ref{th:log_opt_task}, let us state the following lemma.
	
	\begin{lemma}\label{lem:rel_val_process}
		For a path-functional portfolio $\pi(\mu^{\mathcal{U}}, {X})$ it holds that 
		\begin{align}\label{eq:rel_val_process}
		\begin{split}
			\log\left( \frac{W^{\pi}_{t}}{W^{{\mathcal{U}}}_{t}}\right) =& \sum_{i \in \mathcal{U}} \int_{t_0}^t \gamma_s^i f^i(s, {X}_{[0,s]}) d \mu_s^{\mathcal{U},i} \\ &- \frac{1}{2} \sum_{i,j \in \mathcal{U}} \int_{t_0}^t \gamma_s^i \gamma_s^j f^i(s, {X}_{[0,s]}) f^j(s, {X}_{[0,s]}) d[\mu^{\mathcal{U},i},\mu^{\mathcal{U},j}]_s,
		\end{split}
		\end{align}
		where $$\gamma_s^i = \begin{cases} 1 &\textrm{ if } \pi \textrm{ is of type $I$} \\  \left(\mu^{\mathcal{U}, i}_s\right)^{-1} &\textrm{ if } \pi \textrm{ is of type $II$}\end{cases}. $$

	\end{lemma}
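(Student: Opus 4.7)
The plan is to start from the SDE for the relative wealth $V^{\pi}_t = W^{\pi}_t/W^{\mathcal{U}}_t$ already derived earlier in the paper (with $\mu$ replaced by the universe portfolio $\mu^{\mathcal{U}}$), apply It\^o's formula to $\log(V^\pi_t)$, and then simplify by exploiting the key structural identities
\begin{equation*}
\sum_{i\in\mathcal{U}} \mu^{\mathcal{U},i}_s \equiv 1 \quad \Longrightarrow \quad \sum_{i\in\mathcal{U}} d\mu^{\mathcal{U},i}_s = 0, \qquad \sum_{j\in\mathcal{U}} d[\mu^{\mathcal{U},i}, \mu^{\mathcal{U},j}]_s = 0.
\end{equation*}
Since $W^{\pi}_{t_0} = W^{\mathcal{U}}_{t_0} = 1$, the boundary term $\log(V^\pi_{t_0}) = 0$, so integration over $[t_0, t]$ will yield the claim.

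The first step is to write, for both types of portfolio, the ratio $\pi^i_s/\mu^{\mathcal{U},i}_s$ as $\gamma^i_s f^i(s,X_{[0,s]}) + h_s$, where $h_s$ is a process that does \emph{not} depend on the asset index $i$. Concretely, for Type I with $\tau = \mu^{\mathcal{U}}$ one reads off directly from Definition~\ref{def:path-func-port} that $\gamma^i_s = 1$ and $h_s = 1 - \sum_{j\in\mathcal{U}} \mu^{\mathcal{U},j}_s f^j(s,X_{[0,s]})$; for Type II one obtains $\gamma^i_s = (\mu^{\mathcal{U},i}_s)^{-1}$ and $h_s = 1 - \sum_{j\in\mathcal{U}} f^j(s,X_{[0,s]})$. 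This decomposition is the essential observation that makes both cases amenable to a single unified argument.

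The second step is to substitute this decomposition into the two sums appearing in the It\^o expansion of $\log(V^\pi_s)$. For the drift term, the $h_s$ contribution produces $h_s \sum_{i\in\mathcal{U}} d\mu^{\mathcal{U},i}_s = 0$, leaving only $\sum_{i\in\mathcal{U}} \gamma^i_s f^i(s,X_{[0,s]}) d\mu^{\mathcal{U},i}_s$. For the quadratic variation term, expanding $(\gamma^i_s f^i + h_s)(\gamma^j_s f^j + h_s)$ produces cross terms of the form $h_s \gamma^i_s f^i$ (and symmetrically in $j$) and a term $h_s^2$; when paired against $d[\mu^{\mathcal{U},i},\mu^{\mathcal{U},j}]_s$ and summed over $j$ (respectively $i, j$), these all vanish thanks to the second structural identity above, leaving precisely the announced expression $\sum_{i,j} \gamma^i_s \gamma^j_s f^i f^j d[\mu^{\mathcal{U},i},\mu^{\mathcal{U},j}]_s$.

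There is no real obstacle here; the proof is pure bookkeeping once the decomposition $\pi^i/\mu^{\mathcal{U},i} = \gamma^i f^i + h$ is noted. The only thing to be careful about is the integrability/summation conventions for indices $i \notin \mathcal{U}$ (for which $\mu^{\mathcal{U},i} = 0$ and, by the convention of the preceding remark, $f^i \equiv 0$ in the Type II case), so that all sums are tacitly restricted to $\mathcal{U}$ and the formal division by $\mu^{\mathcal{U},i}$ never actually occurs outside the universe. A final integration from $t_0$ to $t$ then produces the stated integral identity.
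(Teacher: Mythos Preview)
Your proof is correct and follows essentially the same route as the paper's: both start from the formula \eqref{eq:rel_val_process_universe}, write $\pi^i_s/\mu^{\mathcal{U},i}_s$ as $\gamma^i_s f^i + h_s$ with an $i$-independent remainder $h_s$, and then kill the $h_s$-contributions using $\sum_{i\in\mathcal{U}}\mu^{\mathcal{U},i}\equiv 1$. If anything, your write-up is slightly more explicit than the paper's in that you spell out why the cross terms $h_s\,\gamma^i_s f^i$ in the quadratic expansion also vanish (via $\sum_j d[\mu^{\mathcal{U},i},\mu^{\mathcal{U},j}]_s=0$), whereas the paper only displays the pure $h_s^2$ term.
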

	
    We refer the reader to Appendix~\ref{app:optimizing_signature_port} for a proof of Lemma~\ref{lem:rel_val_process}. 
	
	\begin{proof}[Proof of Theorem~\ref{th:log_opt_task}]
        To show the equivalence in \eqref{eq:opt_task_min}, we use Lemma~\ref{lem:rel_val_process} and make use of the fact that we are using linear path-functional portfolios.
		\begin{align}
			\log\left( \frac{W^{\pi}_{t}}{W^{{\mathcal{U}}}_{t}}\right)  =& \sum_{i \in \mathcal{U}} \sum_{\nu \in \mathcal{V}} l_{\nu}^{i} \int_{t_0}^t \gamma_s^i \phi^{\nu}(s, X_{[0,s]}) \dd \mu_s^{\mathcal{U},i} \\&- \frac{1}{2} \sum_{i,j \in \mathcal{U}} \sum_{\nu, \rho \in \mathcal{V}} l_{\rho}^{j} l_{\nu}^{i} \int_{t_0}^t \gamma_s^i\gamma_s^j \phi^{\nu}(s, X_{[0,s]}) \phi^{\rho}(s, X_{[0,s]})   \dd[\mu^{\mathcal{U},i},\mu^{\mathcal{U},j}]_s
		\end{align}
		Hence, we have shown that 
		\begin{equation}
			\log\left( \frac{W^{\pi}_{t}}{W^{{\mathcal{U}}}_{t}}\right)=  \mathbf{c}(t)^{\mathsf{T}} \mathbf{l}  - \frac{1}{2}\mathbf{l}^{\mathsf{T}} \mathbf{Q}(t) \mathbf{l}
		\end{equation}
		from which the equivalence in  \eqref{eq:opt_task_min} follows directly. 
		
	    To show that the optimization problem given in \eqref{eq:opt_task_min} is convex, we show that the matrix $\mathbf{Q}(t)$ is positive semidefinite. Consider the stochastic process $Y_t= \sum_{j=1}^d \int_{t_0}^t \gamma_s^jf^j(s, X_{[0,s]}) d\mu^{\mathcal{U},j}_s$ whose quadratic variation  $[Y,Y]_t $ is given by it holds that 
		\begin{align*}
			[Y,Y]_t = \sum_{i=1}^d \sum_{j=1}^d \int_{t_0}^t \gamma_s^i \gamma_s^j f^i(s, X_{[0,s]}) f^j(s, X_{[0,s]}) d[\mu^{\mathcal{U},i}, \mu^{\mathcal{U},j}]_s .
		\end{align*}
		On the other hand $$\mathbf{l}^T\mathbf{Q}(t) \mathbf{l} = \sum_{i=1}^d \sum_{j=1}^d \int_{t_0}^t \gamma_s^i \gamma_s^j f^i(s, X_{[0,s]}) f^j(s, X_{[0,s]}) d[\mu^{\mathcal{U},i}, \mu^{\mathcal{U},j}]_s.$$
		Hence, 
		$\mathbf{l}^T\mathbf{Q}(t) \mathbf{l} = [Y,Y]_t \geq 0 \hspace{0.2cm} \textrm{for all } t\in [t_0, T] \textrm{ and for all } \mathbf{l}\in \mathbb{R}^{(|\mathcal{U}|\cdot  |\mathcal{V}|)}$, proving that $\mathbf{Q}(t)$ is positive semidefinite for every $t \in [t_0, T]$. 
	\end{proof}
    
    \begin{remark}
    If the log-utility maximization problem is finite, it holds that
        \begin{equation}\label{eq:opt_task_max}
			\sup_{\{l_{\nu}^{(i)}\}_{{i \in \mathcal{U}}, {\nu} \in \mathcal{V}}} \mathbb{E} \left[ \log \left( {W^{\pi}_t}\right) \right] 
			\Leftrightarrow \sup_{\{l_{\nu}^{(i)}\}_{{i \in \mathcal{U}}, {\nu} \in \mathcal{V}}} \mathbb{E} \left[ \log \left( \frac{W^{\pi}_t}{W^{{\mathcal{U}}}_t}\right) \right].
		\end{equation} 
 
        This simply follows from the following equivalences
	    \begin{align*}
	        \sup_{\{l_{\nu}^{(i)}\}_{{i \in \mathcal{U}}, {\nu} \in \mathcal{V}}} \mathbb{E} \left[ \log \left( {W^{\pi}_t}\right) \right] 
			& \Leftrightarrow  \sup_{\{l_{\nu}^{(i)}\}_{{i \in \mathcal{U}}, {\nu} \in \mathcal{V}}} \mathbb{E} \left[ \log \left( {W^{\pi}_t}\right) - \log \left({W^{{\mathcal{U}}}_t}\right) \right] \\ & \Leftrightarrow \sup_{\{l_{\nu}^{(i)}\}_{{i \in \mathcal{U}}, {\nu} \in \mathcal{V}}} \mathbb{E} \left[ \log \left( \frac{W^{\pi}_t}{W^{{\mathcal{U}}}_t}\right) \right].
	    \end{align*}            
    \end{remark}

    \begin{remark}
        Similar to Remark~\ref{rem:first_order_sol}, the optimal vector $\mathbf{l}^*$ is given by $$\mathbf{l}^*= 2\mathbb{E}[\mathbf{Q}(t)]^{-1}\mathbb{E}[\mathbf{c}(t)]\, ,$$ if $\mathbb{E}[\mathbf{Q}(t)]$ is almost surely invertible. As mentioned this can never hold for linear path-functional portfolios of type $I$. 
    \end{remark}

	\subsection{Constraints and Regularization}
	Given an optimization task involving a linear path-functional portfolio which is convex and quadratic, we can make the following statements about adding constraints or  regularization:
	\begin{prop}\label{prop:mod_optim}
    	Given an optimization problem 
			$$\min_{\mathbf{l}} \hspace{0.2cm} -\mathbf{l}^T\mathbf{c}(t) + \frac{1}{2} \mathbf{l}^T \mathbf{Q}(t)\mathbf{l}$$
		which is convex quadratic, it remains convex quadratic under the following modifications.
		\begin{enumerate}
			\item {\bf Bounds Constraints}:
			 $|l_{\nu}^{i}|^2 \leq b_{\mathscr{L}(i, \nu)}$ for $b_{\mathscr{L}(i, \nu)}\geq 0$ for all $i\in \mathcal{U}$, $\nu \in \mathcal{V}$. 
			\item {\bf L2-Regularization}: 
		we can add an $L^2$-regularization term of the form $\gamma \mathbf{l}^T \mathbf{l}$ such that the optimization problem becomes
			\begin{align} 
				\min_{\mathbf{l}} \hspace{0.2cm} -\mathbf{l}^T\mathbf{c}(t) + \frac{1}{2} \mathbf{l}^T \mathbf{Q}(t)\mathbf{l} + \gamma \mathbf{l}^T \mathbf{l}.
			\end{align}
			Note that, this corresponds to a new matrix $\tilde{\mathbf{Q}}(t)= \mathbf{Q}(t)+ \gamma \mathbb{I}$, where $\mathbb{I}$ is the identity matrix. 
		\end{enumerate}
		Moreover, the above modifications can be combined and the optimization task remains convex.
	\end{prop}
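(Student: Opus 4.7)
The plan is to verify that each modification preserves both the quadratic nature of the objective and the convexity of the overall program, and that these two properties interact well enough that their combination is automatic. Since $\mathbf{Q}(t)$ was shown to be positive semidefinite in the proof of Theorem~\ref{th:log_opt_task} (and an analogous fact holds for the variance matrices appearing in Corollary~\ref{cor:mv}), the base objective $-\mathbf{l}^T\mathbf{c}(t) + \tfrac{1}{2}\mathbf{l}^T\mathbf{Q}(t)\mathbf{l}$ is convex on all of $\mathbb{R}^{|\mathcal{U}|\cdot|\mathcal{V}|}$, which is the fact I will repeatedly exploit.

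For the bounds constraints I would first rewrite $|l_{\nu}^i|^2 \leq b_{\mathscr{L}(i,\nu)}$ as the pair of linear inequalities $-\sqrt{b_{\mathscr{L}(i,\nu)}} \leq l_{\nu}^i \leq \sqrt{b_{\mathscr{L}(i,\nu)}}$ (using $b_{\mathscr{L}(i,\nu)} \geq 0$). Each individual inequality cuts out a half-space, the intersection of half-spaces is a (closed, convex) box, and the objective is untouched. Hence the constrained problem is still the minimization of the same convex quadratic function, now over a convex feasible set; equivalently, one may view it directly as a quadratic program with convex quadratic inequality constraints $\tfrac{1}{2}\mathbf{l}^T E_{\mathscr{L}(i,\nu)}\mathbf{l} \leq \tfrac{1}{2} b_{\mathscr{L}(i,\nu)}$, where $E_{\mathscr{L}(i,\nu)}$ is the rank-one positive semidefinite matrix with a single $1$ on the diagonal in the appropriate position.

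For the $L^2$-regularization step I would simply collect the quadratic terms: the new objective is $-\mathbf{l}^T\mathbf{c}(t) + \tfrac{1}{2}\mathbf{l}^T\tilde{\mathbf{Q}}(t)\mathbf{l}$ with $\tilde{\mathbf{Q}}(t) = \mathbf{Q}(t) + 2\gamma\mathbb{I}$ (the stated form $\mathbf{Q}(t)+\gamma\mathbb{I}$ up to the harmless factor of two absorbed in the $\tfrac{1}{2}$). Since for $\gamma \geq 0$ the matrix $2\gamma\mathbb{I}$ is positive semidefinite and the sum of two positive semidefinite matrices is positive semidefinite, $\tilde{\mathbf{Q}}(t)$ is positive semidefinite; moreover it is strictly positive definite when $\gamma > 0$, giving the added benefit that the modified problem becomes \emph{strictly} convex, so that a unique minimizer exists and first-order conditions as in Remark~\ref{rem:first_order_sol} apply.

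Finally, for the combined case, convexity of the feasible set is preserved under intersection and convexity of the objective is preserved under addition of convex functions, so combining bound constraints with an $L^2$-regularizer still yields a convex quadratic program. I do not expect any serious obstacle here; the entire content is bookkeeping with positive semidefiniteness and the observation that $\{l : |l_{\nu}^i|^2 \leq b_{\mathscr{L}(i,\nu)}\}$ is convex. The only mildly delicate point worth flagging is the factor-of-two convention in the definition of $\tilde{\mathbf{Q}}(t)$ and the implicit assumption $\gamma \geq 0$, which should be stated explicitly so that positive semidefiniteness of the regularization term is not accidentally lost.
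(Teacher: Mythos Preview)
Your proposal is correct and follows essentially the same approach as the paper: the bounds constraints are recognized as convex quadratic constraints via the rank-one positive semidefinite matrix with a single $1$ on the diagonal, and the $L^2$-regularization is handled by noting that the sum of positive semidefinite matrices is positive semidefinite. Your write-up is in fact more thorough than the paper's (you also give the equivalent box-constraint viewpoint, note the strict convexity for $\gamma>0$, and explicitly treat the combined case), and the factor-of-two and $\gamma\geq 0$ caveats you flag are sensible but inessential.
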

	
	\begin{proof}
		
		\begin{enumerate}
			\item $|l_{\nu}^{i}| \leq b_{\mathscr{L}(i,\nu)}$ for $b_{\mathscr{L}(i,\nu)}\geq 0$ corresponds to the constraint $$\mathbf{l}^{\mathsf{T}} \delta_{\mathscr{L}(i,\nu), \mathscr{L}(i,\nu)} \mathbf{l} - b^2_{\mathscr{L}(i,\nu)} \leq 0,$$ where $\delta_{\mathscr{L}(i,\nu), \mathscr{L}(i,\nu)}$ is the matrix $$\delta_{jk}= \begin{cases} 1 & \textrm{if }  j=k=\mathscr{L}(i, \nu) \\ 0 & \textrm{otherwise} \end{cases}, $$ which is obviously positive semi-definite.
 			
			\item Since $\mathbf{Q}(t)$ and $\mathbb{I}$ are positive semi-definite and the sum of positive semi-definite matrices is positive semi-definite, $\tilde{\mathbf{Q}}(t)$ is again positive semi-definite.
		\end{enumerate}
	\end{proof}
   
    \begin{remark}[Long-Only and Leverage Constraints]
    We can add several forms of short-selling constraints under which the convex quadratic optimization problems remain convex and quadratic for linear path-functional portfolios. However, we would like to emphasise that these constraints are only effective during training and there is no theoretical guarantee that the portfolio-weights respect such constraints during the out-of-sample period. The reason for this is that different dynamics of the underlying process $X$ may lead to violation of the constraints during the testing period. 
    \begin{enumerate}
        \item {\bf Long-Only}: $\pi^i(t) \geq 0$ for all $i\in \mathcal{U}$;
        \item {\bf Bounds for Short-Selling}:$-a \leq \pi^i(t) \leq b $ for all $i\in \mathcal{U}$ and some $a, b \in \mathbb{R}$;
        \item {\bf Leverage constraints} (see~\cite{Leverage}): $\pi^i(t) \geq 0$ for all $i\in \mathcal{U}_L \subseteq \mathcal{U}$ i.e.,~impose long-only constraints on a subset $\mathcal{U}_L$ of stocks and impose a leverage constraint $\sum_{i \in \mathcal{U}_L} \pi^i_t \leq b$ for some $b>0$.
                
    \end{enumerate}

    \end{remark}

\begin{prop}[Long-Only Path-Functional Portfolios] 
When choosing $\tau$ to be long-only, we can formulate the following sufficient conditions for a path-functional portfolio to be long-only which increase in strength in the sense that $(iv) \Rightarrow(iii) \Rightarrow (ii) \Rightarrow (i)$. The following are required to hold $\mathbb{P}$-a.s. for all $t\in [0,T]$. 
\begin{enumerate}[label=(\roman*)]
	\item $|f^i(t, X_{[0,t]})- f^{max}(t, X_{[0,t]})| \leq 1 $ where $f^{max}(t, X_{[0,t]}):= \max\{f^1(t, X_{[0,t]}), \dots, f^d(t, X_{[0,t]})\}$ for all $i\in\{1,...,d\}$ 
	\item $ |f^i(t, X_{[0,t]}) -f^j(t, X_{[0,t]})| \leq 1$ for all $i,j\in\{1,...,d\}$
	\item $0\leq f^i(t, X_{[0,t]}) \leq 1$ for all $i\in\{1,...,d\}$
    \item $ f^i(t, X_{[0,t]}) \in [\alpha, \alpha+1]$, $\alpha\in \mathbb{R}$ for all $i\in\{1,...,d\}$ 
\end{enumerate}

\end{prop}

\begin{proof}
By definition, we know that for each $i \in \{1,...,d\}$ for path-functional portfolios of type $I$ it holds:
\begin{align*}
	\pi^{i}(t) \geq 0 \Leftrightarrow \Big(f^i(t, X_{[0,t]})+ 1 - \sum_{j=1}^d f^j(t, X_{[0,t]}) \mu_t^j\Big)\geq 0 \Leftrightarrow 
	(f^i(t, X_{[0,t]})- \sum_{j=1}^d f^j(t, X_{[0,t]}) \mu_t^j)\geq -1.
\end{align*}
One sufficient condition for this to hold is
\begin{equation*}
	|f^i(t, X_{[0,t]})- f^{max}(t, X_{[0,t]})| \leq 1
\end{equation*}
where $f^{max}(t, X_{[0,t]}):= \max\{f^1(t, X_{[0,t]}), \dots, f^d(t, X_{[0,t]})\}$. That is because 
\begin{align*}
	\Big(f^i(t, X_{[0,t]}) - \sum_{j=1}^d f^j(t, X_{[0,t]}) \mu_t^j\Big)&\geq f^i(t, X_{[0,t]})- f^{max}(t, X_{[0,t]})\sum_{j=1}^d \mu_t^j\\
 &= f^i(t, X_{[0,t]})- f^{max}(t, X_{[0,t]}).
\end{align*}
Similarly, for path-functional portfolio of type $II$ we find:
\begin{align*}
\pi^i(t)\geq 0 \Leftrightarrow f^i(t,X_{[0,t]}) - \tau^i_t \sum_{j=1}^d f^j(t,X_{[0,T]}) \geq -\tau_t^i \geq -1
\end{align*}
and also
\begin{align*}
f^i(t,X_{[0,t]}) - \tau^i_t \sum_{j=1}^d f^j(t,X_{[0,T]}) &\geq f^i(t,X_{[0,t]}) - d\cdot\tau^i_t  f^{max}(t,X_{[0,T]})\\&\geq f^i(t, X_{[0,t]})- f^{max}(t, X_{[0,t]}).
\end{align*}

\end{proof}

The previous sufficient conditions are formulated on the level of portfolio controlling functions. How they translate to the optimization parameters $\{l_{\nu}^i\}_{\nu \in \mathcal{V}, 1\leq i\leq d}$ for linear path-functional portfolios depends on the form of the feature maps and on the properties of $X$. The following example holds for a particular case of signature-portfolios with $X=\mu$. 

\begin{example}
    Let us choose $X=\mu$, $\tau=\mu$ and a linear path-functional portfolio of either type $I$ or $II$ with feature maps $$\phi^{i_1\shuffle \dots \shuffle i_m}(t, \mu_{[0,t]})= \langle e_{i_1\shuffle \dots \shuffle i_m}, \hat{\bbmu}_t\rangle = \hat{\mu}^{i_1}_t  \cdots  \hat{\mu}^{i_m}_t$$
    for $0\leq m\leq N$ and $i_1, \dots, i_m \in \{1, \dots, d+1\}$. Hence, by the form of the above the feature maps are simply polynomials and it holds that 
    $$0\leq \phi^{i_1\shuffle \dots \shuffle i_m}(t, \mu_{[0,t]})\leq 1 \textrm{ for all } t\in [0,T] \textrm{ and all } i_1, \dots, i_m \in \{1, \dots, d+1\}, \, m\leq N$$
    assuming w.l.o.g that the time-augmentation $\varphi(t)\in [0,1]$ 
     for all $t\in [0,T]$ (recalling that $\hat{\mu}_t^1= \varphi(t)$). In this case it follows that
\begin{align}
	|f^j(t, X_{[0,t]})- f^k(t, X_{[0,t]})| = | \sum_{m=0}^N \sum_{1\leq i_1 \leq ... \leq i_m\leq d+1 }( l^{(j)}_{i_{1} \shuffle ... \shuffle \ {i_{m}}}-l^{(k)}_{i_{1} \shuffle ... \shuffle \ {i_{m}}}) \langle e_{i_1 \shuffle ... \shuffle i_m}, \bbmu \rangle_t | \\ \leq 
	\sum_{m=0}^N \sum_{1\leq i_1 \leq ... \leq i_m\leq d+1 } | l^{(j)}_{i_{1} \shuffle ... \shuffle \ {i_{m}}}-l^{(k)}_{i_{1} \shuffle ... \shuffle \ {i_{m}}}| \cdot|\langle e_{i_1 \shuffle ... \shuffle i_m}, \bbmu \rangle_t |  \\ \leq \sum_{m=0}^N \sum_{1\leq i_1 \leq ... \leq i_m\leq d+1 } | l^{(j)}_{i_{1} \shuffle ... \shuffle \ {i_{m}}}-l^{(k)}_{i_{1} \shuffle ... \shuffle \ {i_{m}}}|
\end{align}
We denote by $\tilde{N}$ the number of words of length $0\leq m \leq N$ with letters in $\{1,...,d+1\}$, where the order does not matter, more precisely, 
$$\tilde{N}= 1+ \sum_{m=1}^N \binom{d+m}{m}.$$
Then, $| l^{(j)}_{i_{1} \shuffle ... \shuffle \ {i_{N}}}-l^{(k)}_{i_{1} \shuffle ... \shuffle \ {i_{N}}}| \leq 1/\tilde{N}$
implies that
\begin{equation}
 \sum_{N=0}^n \sum_{1\leq i_1 \leq ... \leq i_N\leq d } | l^{(j)}_{i_{1} \shuffle ... \shuffle \ {i_{N}}}-l^{(k)}_{i_{1} \shuffle ... \shuffle \ {i_{N}}}| \leq 1.
\end{equation}
And hence the quadratic constraint, preserving the convex quadratic structure of the optimization problem,
$$(l^{(j)}_{i_{1} \shuffle ... \shuffle \ {i_{N}}}-l^{(k)}_{i_{1} \shuffle ... \shuffle \ {i_{N}}})^2 \leq \frac{1}{\tilde{N}^2}$$
is sufficient to ensure that the signature portfolios are long-only. Of course this is a
particular case of a signature portfolio based on multivariate polynomials whose coefficients should not be too far away from each other.
\end{example}

 \section{Transaction Costs for Portfolios with Short-Selling and Without Bank Account}\label{sec:transcost}
	
	When trading in the presence of (proportional) transaction costs, we need to consider re-balancing at discrete times. Let $t$ be an arbitrary but fixed time at which we re-balance. At time $t^-$ (just before re-balancing) our portfolio has weights
	$\pi_{t^-}$ and a wealth of $W^{\pi}_{t^-}$. We want to re-balance to some fixed target weights $\pi_t$. This situation has been studied in~\cite{Ruf_Xie_19} for the case of long-only weights. However, we allow for short-selling in our portfolios which makes the problem significantly more involved, as we illustrate below. \\
	
    As the transaction costs are not paid on the portfolio weights, but on the dollar amounts invested in a stock, we define $\psi^i_{t^-}$, $\psi^i_t$ to be the dollar amounts invested in stock $i$ directly before and after re-balancing. Note, that of course it holds 
	\begin{equation*}
	    \psi^i_t= \pi^i_t  \cdot W^{\pi}_t
	\end{equation*}
	and the transaction costs to be paid are
	\begin{equation*}
	    TC_t= c\sum_{i=1}^d | \psi^i_t- \psi^i_{t^-} | 
	\end{equation*}
	where $c$ is the percentage of transaction costs to be paid. The difficulty lies in solving for $\psi^i_t$, since the amount which we can invest in stock $i$ depends on the amount of transaction costs to be paid \emph{and vice-versa}. This difficulty arises from the fact, that we do not have a bank-account. Since we know the portfolio weights at each time, we need to solve for $W^{\pi}_t$.  We can use the self-financing identity which requires 
	\begin{equation}\label{eq:TC_1}
	    W^{\pi}_t = W^{\pi}_{t^-}- TC_t = W^{\pi}_{t^-} - c\sum_{i=1}^d | \psi^i_t- \psi^i_{t^-} |. 
	\end{equation}
	Moreover, it is convenient to define $\alpha(t):= \frac{W^{\pi}_t}{W^{\pi}_{t^-}}$ and rewrite \eqref{eq:TC_1} as 
	\begin{equation}\label{eq:TC_2}
	    1 - \alpha_t= c\sum_{i=1}^d | \alpha_t\pi^i_t- \pi^i_{t^-} |. 
	\end{equation}
	Hence, we reduced the problem to solving for $\alpha_t$. Since we consider an arbitrary but fixed re-balancing time $t$, let us set $\alpha:=\alpha_t$ in the following. We can make the following statement about solutions for $\alpha$:
	
	\begin{prop}\label{prop:transcosts}
	    Let $\alpha^*$ be a solution to the equation
	        \begin{equation*}
	            1 - \alpha= c\sum_{i=1}^d | \alpha\pi^i_t- \pi^i_{t^-} | \label{eq:alpha}
	        \end{equation*}
	   and denote by $L(\alpha)= 1- \alpha$ and $R(\alpha)= c\sum_{i=1}^d | \alpha\pi^i_t- \pi^i_{t^-} |$. We can make the following statements about $\alpha^*$:
	   \begin{enumerate} 
	    \item If $\sum_{i=1}^d |\pi^i_{t^-}| < \frac{1}{c}$ a unique solution $\alpha^* \in [0,1]$ exists.\\
	    \item If $\sum_{i=1}^d |\pi^i_{t^-}| \geq \frac{1}{c}$ there \emph{may not exist a solution}, or at least none for $\alpha \in [0,1]$. If a solution exists, $\alpha^*\leq 1$ and it \emph{may not be unique}.
	   \end{enumerate}
	\end{prop}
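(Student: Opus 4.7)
The approach is to study the single function $g(\alpha):=L(\alpha)-R(\alpha)=1-\alpha-c\sum_{i=1}^d|\alpha\pi_t^i-\pi_{t^-}^i|$, so that solutions of the equation correspond exactly to zeros of $g$. The key structural fact I will exploit is that $g$ is \emph{concave} (and continuous, piecewise affine) on all of $\mathbb{R}$: the affine map $L$ contributes linearly, and each summand $\alpha\mapsto|\alpha\pi_t^i-\pi_{t^-}^i|$ is convex, hence $R$ is convex and $g=L-R$ is concave. Essentially everything in the proposition will follow from concavity together with a careful look at $g$ at the endpoints $\alpha=0$ and $\alpha=1$.

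For part 1, I would first evaluate $g(0)=1-c\sum_i|\pi_{t^-}^i|$, which is strictly positive under the standing hypothesis $\sum_i|\pi_{t^-}^i|<1/c$, and $g(1)=-c\sum_i|\pi_t^i-\pi_{t^-}^i|\le 0$. The intermediate value theorem then delivers a zero $\alpha^*\in(0,1]$. For uniqueness I would argue by contradiction: assuming two positive zeros $0<\alpha_1<\alpha_2$, the concavity of $g$ on $[0,\alpha_2]$ gives
\[
g(\alpha_1)\;\ge\;\frac{\alpha_2-\alpha_1}{\alpha_2}\,g(0)+\frac{\alpha_1}{\alpha_2}\,g(\alpha_2)\;=\;\frac{\alpha_2-\alpha_1}{\alpha_2}\,g(0)\;>\;0,
\]
contradicting $g(\alpha_1)=0$. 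Hence the zero in $(0,\infty)$ is unique, and since $g(0)>0$ forces $\alpha^*\neq 0$, the solution in $[0,1]$ is unique.

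For part 2, I would first note the universal bound $\alpha^*\le 1$ whenever a solution exists: from $g(\alpha^*)=0$ one reads off $1-\alpha^*=R(\alpha^*)\ge 0$. The assumption $\sum_i|\pi_{t^-}^i|\ge 1/c$ only yields $g(0)\le 0$, so concavity no longer rules out zero, one, or two solutions. To substantiate the "may not exist" and "may not be unique" parts of the claim, the plan is simply to exhibit explicit two-dimensional examples respecting $\sum_i\pi_t^i=\sum_i\pi_{t^-}^i=1$. For non-existence, taking $d=2$, $c=1/2$, $\pi_{t^-}=(2,-1)$, $\pi_t=(0,1)$ yields $R(\alpha)=1+\tfrac12|\alpha+1|$, which strictly exceeds $L(\alpha)=1-\alpha$ on $[0,1]$ (and in fact on all of $\mathbb{R}_{\ge 0}$), so no solution exists. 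For non-uniqueness, one picks $\pi_{t^-},\pi_t$ so that the piecewise-linear graph of $R$ crosses below $L$ on an interior interval and back above outside it; since $g$ is concave with $g(0)\le 0$, concavity is compatible with $g$ rising to a positive maximum and returning to zero, producing two distinct roots.

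The main obstacle is the uniqueness argument in part 1, for which the concavity chord inequality above is decisive; without that observation it would not be obvious why the convex piecewise-linear $R$ cannot re-cross the affine $L$ inside $[0,1]$. The universal bound $\alpha^*\le 1$ in part 2 is immediate from the nonnegativity of $R$, and the construction of the counter-examples is routine once one has internalized that $R$ is piecewise-affine with slopes determined by the signs of the quantities $\alpha\pi_t^i-\pi_{t^-}^i$.
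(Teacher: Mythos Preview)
Your approach is correct and essentially the same as the paper's: both rely on the convexity of $R$ (equivalently, the concavity of $g=L-R$). Your chord-inequality argument for uniqueness in part~1 is in fact cleaner than the paper's, which invokes a ``tedious case-by-case study that where $R'(\alpha)$ exists it is piece-wise constant and increasing''---this is just convexity of $R$ phrased differently.

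One small correction on the examples in part~2: your non-existence example with $c=\tfrac12$, $\pi_{t^-}=(2,-1)$, $\pi_t=(0,1)$ actually does have a solution at $\alpha=-\tfrac13$ (for $\alpha\ge -1$ one computes $g(\alpha)=-\tfrac32\alpha-\tfrac12$), so it illustrates ``no solution in $[0,1]$'' rather than ``no solution anywhere''. The paper separates three sub-cases---(a) no solution on all of $\mathbb{R}$, (b) a solution only outside $[0,1]$, (c) multiple solutions in $[0,1]$---and supplies an explicit numerical example for each. To produce a genuine case~(a) you also need $c\sum_i|\pi_t^i|\ge 1$, since otherwise $g(\alpha)\to+\infty$ as $\alpha\to-\infty$ and a root is forced. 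Your non-uniqueness construction is also only sketched; the paper exhibits a concrete three-stock example with two roots in $[0,1]$.
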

	
	\begin{remark}
	   If we have long-only portfolios (and if $c< 1$ i.e. less than 100\% transaction costs) only the first case of Proposition~\ref{prop:transcosts} is relevant. Hence, in that case one always has a unique solution $\alpha^*\in[0,1]$. However, when we allow for short-selling, the second case of Proposition~\ref{prop:transcosts} becomes important. 
	\end{remark}
	
	We give a proof of Proposition~\ref{prop:transcosts} in Appendix~\ref{app:transcosts} and move directly to the interpretation of situation in the second case of Proposition~\ref{prop:transcosts}. 
	
	\begin{remark}[Interpretation of No or Non-Unique Solutions for $\alpha$]
	    We want to give some interpretation of the three possible outcomes of the second case of Proposition~\ref{prop:transcosts}. 
	    \begin{enumerate}[label=(\alph*)]

	            \item \emph{no solution}: The trade is infeasible. In order to pay the transaction costs, we have to reduce the dollar amounts we invest. However, reducing the dollar amounts $\psi_t$ leads to more transaction costs (because the difference to the previous investment amount grows, i.e. $|\psi_t- \psi_{t^-}|$ increases). If the transaction costs grow faster than the money we gain from reducing the dollar amount $\psi_t$, the trade is infeasible.

	            \item \emph{no solution in $[0,1]$}: Again, if we need to pay transaction costs without a bank-account we have to free-up the money by selling stocks. Unlike case a) this might be feasible, but the amount of transaction costs to be paid is more than the total wealth of the portfolio. This leads to a solution with negative $\alpha$.

	            \item \emph{no unique solution in $[0,1]$}: Here, we are in the case where the transaction costs do not exceed our total wealth. Since reducing the dollar amount leads to an increase in transaction costs, there might be several strategies with respect to dollar amounts invested leading to the desired portfolio weights after transaction costs. For example, one could do a more expensive trade (smaller $\psi_t$) and pay more transaction costs or a less expensive trade (larger $\psi_t$) where one pays less transaction costs. 
	        \end{enumerate}
	        
	        In practice, we treat the above cases as follows. We consider cases a) and b) to lead to ruin, because both are caused by strategies which are infeasible given our wealth. If one of those cases occurs, we set our wealth to zero and terminate the investment. In case c) we choose the solution with highest $\alpha$. This is what any reasonable investor would do: choose the trade with the least transaction costs.
	\end{remark}

	\section{Numerical Results}\label{sec:numerical_results}

    In this section we present our numerical results using simulated and real market data.\footnote{ The code corresponding to this section is available at \url{https://github.com/janka-moeller/Sig-SPT}}
	
	\subsection{Simulated Data}
	In this subsection, we aim to learn the  signature portfolios which maximize $\mathbb{E}[\log \frac{W_T^{\pi}}{W^{\mu}_T}]$ in three financial market models respectively, by using simulations of those market models. This corresponds to the optimization task described in Theorem~\ref{th:log_opt_task}. To apply this in a numerical setting, we use the following Monte-Carlo optimization to approximate the expected values:
	
	\begin{cor}
            Let $\tilde{\Omega}$ be a probability-one set on which $\mathbf{Q}(t)$ is positive-semidefinite. Consider a Monte-Carlo type optimization of the expected log-relative wealth. Recall that $V_t^{\pi}= \frac{W_t^{\pi}}{W^{\mu}_t}$, hence we optimize for $$\max_{\pi} \hspace{0.2cm} \frac{1}{M}\sum_{m=1}^M \log\left(V_t^{\pi}\right)( \omega_m)$$
			for $\omega_1,\ldots, \omega_m \in \tilde{\Omega}$. 
			Again, this can be formulated as a convex quadratic optimization problem, namely 
			\begin{align} 
				\min_{\mathbf{l}} \hspace{0.2cm}  \frac{1}{2} \mathbf{l}^T \hat{\mathbf{Q}}(t)\mathbf{l}-\mathbf{l}^T\hat{\mathbf{c}}(t) ,
			\end{align}
			where $\hat{\mathbf{Q}}(t)= \frac{1}{M}\sum_{m=1}^M \mathbf{Q}(t)(\omega_m)$ and $\hat{\mathbf{c}}(t)= \frac{1}{M}\sum_{m=1}^M \mathbf{c}(t)(\omega_m)$.
	\end{cor}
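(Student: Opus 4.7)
The plan is to mirror the proof of Theorem~\ref{th:log_opt_task}, applying the argument pathwise to each sample $\omega_m$ rather than under expectation. Lemma~\ref{lem:rel_val_process} is a pathwise identity (It\^o's formula applied to $\log V^{\pi}$), and the substitution $f^i(s, X_{[0,s]}) = \sum_{\nu \in \mathcal{V}} l_\nu^i \phi^\nu(s, X_{[0,s]})$ is linear in the parameter vector $\mathbf{l}$. Hence exactly the same rearrangement as in Theorem~\ref{th:log_opt_task} gives, for each fixed $\omega_m$,
\[
\log(V_t^{\pi})(\omega_m) \;=\; \mathbf{c}(t)(\omega_m)^{\mathsf{T}} \mathbf{l} \;-\; \tfrac{1}{2}\, \mathbf{l}^{\mathsf{T}}\, \mathbf{Q}(t)(\omega_m)\, \mathbf{l},
\]
with $\mathbf{c}(t)$ and $\mathbf{Q}(t)$ defined via the same labelling function $\mathscr{L}$ as there. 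No expectation is involved, so this step is a direct rerun of the first part of that proof.

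Next I would average over $m=1,\dots,M$. By linearity in $\mathbf{c}$ and $\mathbf{Q}$, the empirical objective becomes
\[
\frac{1}{M}\sum_{m=1}^M \log(V_t^{\pi})(\omega_m) \;=\; \hat{\mathbf{c}}(t)^{\mathsf{T}} \mathbf{l} \;-\; \tfrac{1}{2}\, \mathbf{l}^{\mathsf{T}}\, \hat{\mathbf{Q}}(t)\, \mathbf{l}.
\]
Maximizing this quantity is equivalent to minimizing its negative, yielding the claimed form $\min_{\mathbf{l}} \tfrac{1}{2}\mathbf{l}^{\mathsf{T}} \hat{\mathbf{Q}}(t)\mathbf{l} - \mathbf{l}^{\mathsf{T}} \hat{\mathbf{c}}(t)$.

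Finally, for convexity I would transfer the positive semidefiniteness of $\mathbf{Q}(t)$ to its empirical average. Since every $\omega_m \in \tilde{\Omega}$, each matrix $\mathbf{Q}(t)(\omega_m)$ is PSD by assumption, and the set of PSD matrices is a convex cone closed under nonnegative linear combinations; therefore $\hat{\mathbf{Q}}(t) = \tfrac{1}{M}\sum_{m=1}^M \mathbf{Q}(t)(\omega_m)$ is PSD, and the quadratic form is convex. I do not expect any genuine obstacle here, as the entire statement is a Monte-Carlo reformulation of Theorem~\ref{th:log_opt_task}; the only subtlety is ensuring the pathwise positive semidefiniteness carries over to the empirical average, which is precisely why the sample points are restricted to $\tilde{\Omega}$.
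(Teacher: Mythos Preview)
Your proposal is correct and follows essentially the same approach as the paper: both rely on the pathwise quadratic representation of $\log(V_t^{\pi})$ established in Theorem~\ref{th:log_opt_task} and then use that a sum of positive semidefinite matrices is positive semidefinite to conclude convexity of the averaged problem. Your write-up is in fact more explicit than the paper's, which simply references the proof of Theorem~\ref{th:log_opt_task} and immediately invokes closure of PSD matrices under sums.
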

	
	\begin{proof}
	    We have shown in the proof of Theorem~\ref{th:log_opt_task} that $\mathbf{Q}(t)$ is almost surely positive semi-definite for all $t\in[t_0,T]$, hence such a set $\tilde{\Omega}$ exists. Since the sum of positive semi-definite matrices is positive semi-definite, $\hat{\mathbf{Q}}(t)$ is positive semi-definite and the statement follows. 
	\end{proof}
	
	Moreover, we will compare the performance of the learned portfolio with the growth-optimal portfolio of the respective markets. This is justified by Lemma~\ref
 {lem:numeraire_go}, which states that the growth-optimal portfolio (which is also the numeraire portfolio) solves the relative log-utility optimization problem. 
	
	For each market, we simulate $M_{train}$ trajectories over the time-horizon $[0,1]$ of a given market model and consider investments over the whole time-horizon, i.e. $t_0=0$, $t=1$. We train a signature portfolio using the Monte-Carlo type optimization and the formulas for $\mathbf{Q}(t)(\omega)$ and the vector $\mathbf{c}(t)(\omega)$ provided in Theorem~\ref{th:log_opt_task}, where we use signature portfolios $\pi(\mu, \hat{\mu})$, i.e. choose $X=\mu$ to construct the portfolios.  Note, that calculating the matrix $\mathbf{Q}(t)(\omega)$ and vector $c(t)(\omega)$ for each training-sample can be  paralellized and computed offline.
 Having obtained the respective $\hat{\mathbf{Q}}(t)$ and $\hat{c}(t)$, we use the \texttt{gurobipy.model.optimize()}\footnote{for more informartion, see \url{https://www.gurobi.com/documentation/9.1/refman/py\_python\_api\_overview.html\#sec:Python}.} method to solve the convex quadratic optimization problem. We want to emphasize again, that we only need to solve the convex quadratic optimization task once and we never have to update $\hat{\mathbf{Q}}(t)$ and $\hat{c}(t)$, which is very beneficial for the computational tractability of the optimization task. 
	
	After the optimization, we compare the weights and performance of the trained signature portfolio and the theoretical growth-optimal portfolio on $M_{test}$ out-of-sample trajectories.
	
	More concretely, we consider the following three market models and signature portfolios: 
	\begin{enumerate}
		\item \textbf{Correlated Black-Scholes Model}:  $$\dd S_t = \operatorname{diag}(S_t)(a dt + \Sigma dB_t),$$
		where $a \in\mathbb{R}^d$ and $\Sigma \in \mathbb{R}^{d \times d}$, $B_t$ is a $d$-dimensional Brownian motion. 
   We here take $d=3$ and  train a signature portfolio $\pi(\mu, \hat{\mu})$ of type $I$ of degree three. We simulate  1'000 time-steps. 
		\item \textbf{Volatility Stabilized Market}:$$\frac{\dd S^i_t}{S^i_t} = \frac{1+\alpha}{2} \frac{1}{\mu_t^i} dt + \sqrt{ \frac{1}{\mu_t^i}} dB_t^i, \quad  1\leq i \leq d.$$
		We choose again $d=3$, so that $B$ is a three-dimensional Brownian motion. We take $\alpha=10$ and train a signature portfolio $\pi(\mu, \hat{\mu})$ of type $I$ of degree three. We simulate 10'000 time-steps.
		\item \textbf{Signature Market}: 
		\begin{equation*}
			dS_t= \operatorname{diag}(S_t)(\mathbf{a}_t dt + \Sigma \, dB_t).
		\end{equation*}
We choose again $d=3$, so that $B$ is a three-dimensional Brownian motion.
  Moreover, $(\mathbf{a}_t)_i = \sum_{0\leq |I| \leq 2} \alpha_I^{(i)} \langle e_I, \hat{\bbmu} \rangle_t$, $\alpha_I^{(i)} \in \mathbb{R}$ with the restriction that $\alpha_I^{(i)}=0$ if $I=(i_1, i_2)$ and $i_2\neq 1$ and $\Sigma$ is a constant $3\times3$ matrix. Here we train a signature portfolio $\pi(\mu, \hat{\mu})$ of type $II$ of degree two and simulate 1'000 time-steps. 
	\end{enumerate}
	
	We use $M_{train}=100'000$ in-sample trajectories for training and $M_{test}=100'000$ out-of-sample trajectories for evaluation, i.e. computing the average logarithmic relative wealth of the respective portfolios on the test-samples. Furthermore, we used the bound constraints $|l_I^{i}| \leq 10'000$ for all $I$ and for all $1 \leq i \leq d$ for each signature portfolio.

    \begin{remark}\label{rem:num_optim_lin_func_sig}
      In a volatility stabilized market, signature portfolios $\pi(\mu, \hat{\mu})$ of type $I$, with $\hat{\mu}=(t, \mu)$ fall into the first case of Remark~\ref{rem:simplification}, where 
      \begin{equation*}
            d\left[ \mu^i, \mu^j\right]_t = \begin{cases}
            \mu_t^i(1-\mu_t^i) dt \textrm{ for } i=j \\
            - \mu_t^i\mu_t^j dt \textrm{ for } i\neq j. \end{cases}
        \end{equation*}
        
    \end{remark}

	\begin{remark}
       In Lemma \ref{lem:numeraire_go} we have already established that the growth-optimal portfolio solves the relative log-utility optimization problem. We now want to outline that the growth-optimal portfolio exists \emph{and} that the relative log-utility optimization
       for signature portfolios is well-posed, i.e. that the expectation of this optimization task is finite for all signature portfolios. Let us denote the growth-optimal portfolio by $\pi^{(g)}$, recall its form given by Lemma~\ref{lem:growth-opt} and note that a sufficient condition for both of the above to be satisfied is if $\rvert \mathbb{E}[\log W_T^{\pi^{(g)}} ] \rvert < \infty$. We will now present the arguments for the existence of $\pi^{(g)}$ and the well-posedness of the optimization problem for each of the markets. 
	    \begin{enumerate}
	        \item Correlated Black-Scholes Model: It can easily be checked that for the growth-optimal portfolio $\pi^{(g)}$ it holds that $\rvert \mathbb{E}[\log W_T^{\pi^{(g)}} ] \rvert < \infty$.
	        
	        \item Volatility Stabilized Market:
            Our optimization problem is well-posed for signature portfolios, because volatility stabilized market models are  polynomial processes, see~\cite{Cuchiero_Poly_SPT} and  all coefficients of $\mathbf{c}$ and $\mathbf{Q}$ are linear functions on the signature of $\hat{\mu}$, as established in Remarks~\ref{rem:simplification} and~\ref{rem:num_optim_lin_func_sig}. Moreover, we know that the truncated signature of a polynomial process is again a polynomial process (see \cite{cuchiero2023joint}) and hence their expectation is finite. Therefore, the expectation of any coefficient of $\mathbf{c}$ and $\mathbf{Q}$ is finite and $|\mathbb{E}\left[ \frac{W^{\pi}}{W^{\mu}}\right]| <  \infty$ holds for any signature portfolio $\pi$ of type $I$. 
	        
	        \item Signature Market: Assuming that a solution exists, the coefficients $\mathbf{a}_i$ are all uniformly bounded on $[0,1]$. This follows form the fact that $\mu^i$ are positive and uniformly bounded for all $i=1,\dots, d$, hence so is $\int_0^T \mu^i_t dt$. It thereby follows $\rvert \mathbb{E}[\log W_T^{\pi^{(g)}} ] \rvert < \infty$ for the growth-optimal portfolio $\rho$.
	    \end{enumerate}
	    
	\end{remark}

	\begin{remark}
	    Note that the growth-optimal portfolios of the Black-Scholes and volatility stabilized markets are also functionally generated portfolios in the classical sense. Indeed, in the Black-Scholes case the portfolio generating function is 
	    \begin{equation*}
	        G^{(BS)}(\mu_t) = \prod_{i=1}^d (\mu_t^i)^{c_i},
          \end{equation*} with portfolio weights $\pi^{(BS), i}_t= c_i$, where
	    \begin{equation*}
	        c_i= ( (\Sigma^{\mathsf{T}}\Sigma)^{-1} a)_i - \sum_{j=1}^d ((\Sigma^{\mathsf{T}}\Sigma)^{-1})_{ij}\left( \frac{\sum_{k=1}^d (\Sigma^{\mathsf{T}}\Sigma)^{-1} a)_k -1}{\sum_{j,k=1}^d ((\Sigma^{\mathsf{T}}\Sigma)^{-1})_{jk}} \right).
	    \end{equation*}
	    In the volatility stabilized market case  the generating function of the growth optimal portfolio is given by
	    \begin{equation*}
	        G^{(Vol)}(\mu_t) = \prod_{i=1}^d (\mu_t^i)^{\frac{1+\alpha}{2}} \exp{\left(\mu^i_t \frac{d}{2}(\alpha-1)\right)},
         \end{equation*}with portfolio weights $$\pi^{(Vol), i}_t= \frac{1+\alpha}{2}+ \mu^i_t \frac{d}{2}(\alpha-1).
	 $$
	    Therefore, these two examples are not only examples of signature portfolios approximating the respective growth-optimal portfolios but also examples for signature portfolios approximating functionally generated portfolios. 
	\end{remark}
	
	Figure~\ref{fig:simulated_markets} shows the trained signature portfolios (right) and the growth-optimal portfolios (left) evaluated at one out-of-sample trajectory in each market respectively (rows). As  expected from the theoretical results, the signature weights are very similar to the theoretical growth-optimal weights in all markets. We want to emphasize, that the growth-optimal weights were never shown to the signature portfolio during training, but  the signature portfolio was trained to maximize the expected logarithmic relative wealth.
	
	\begin{figure}[ht] 
		\centering
		
		\begin{subfigure}{0.49\columnwidth}
			\includegraphics[width=\linewidth]{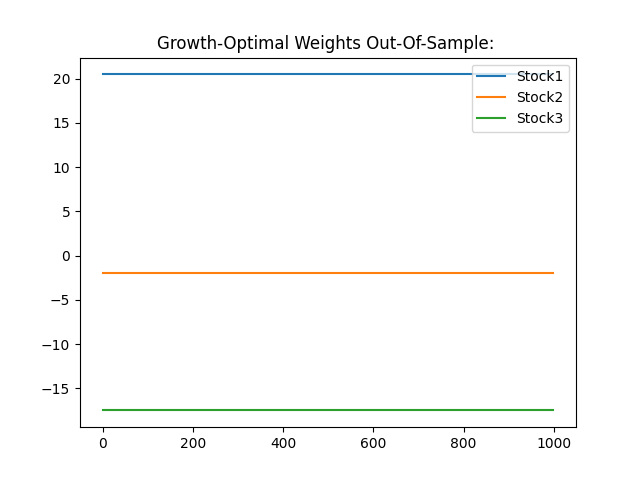}
			\caption{Black-Scholes market: growth-optimal weights}
		\end{subfigure}
		\begin{subfigure}{0.49\columnwidth}
			
			\includegraphics[width=\linewidth]{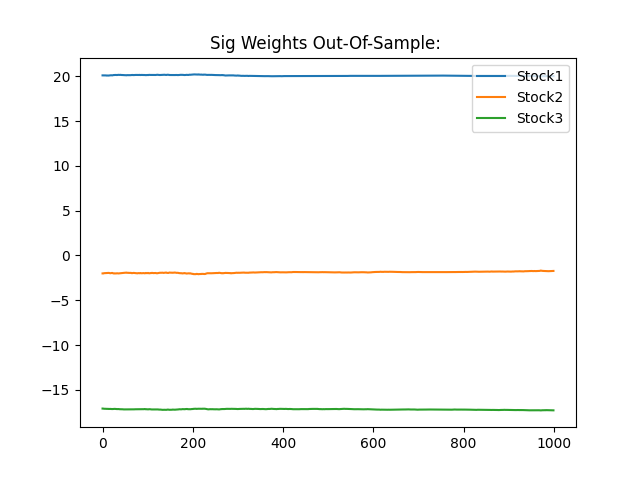}
			\caption{Black-Scholes market: signature weights}
		\end{subfigure}
		\vspace*{0.7cm}\\
		\begin{subfigure}{0.49\columnwidth}
			\includegraphics[width=\linewidth]{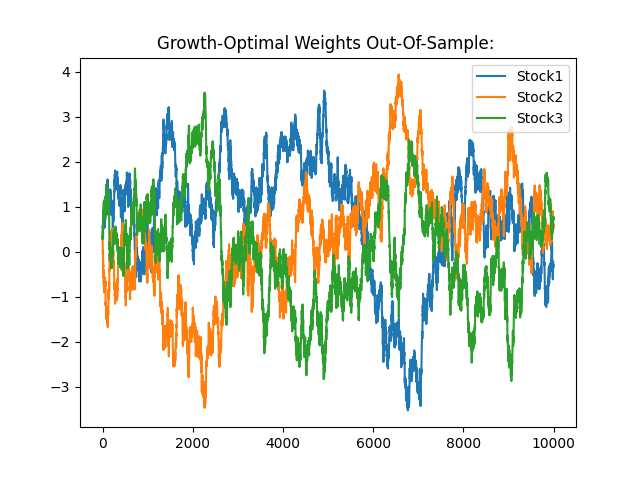}
			\caption{Vol.~stabilized market: growth-optimal weights}
		\end{subfigure}
		\begin{subfigure}{0.49\columnwidth}
			
			\includegraphics[width=\linewidth]{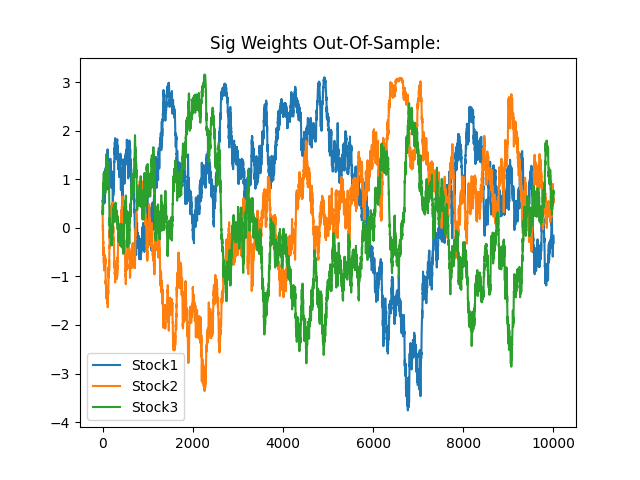}
			\caption{Vol.~stabilized Market: signature weights}
		\end{subfigure}
		\vspace*{0.7cm}\\
		\begin{subfigure}{0.49\columnwidth}
			\includegraphics[width=\linewidth]{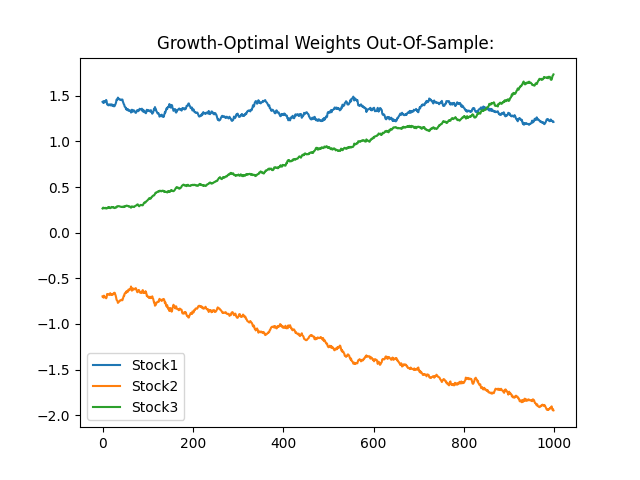}
			\caption{Signature market: growth-optimal weights}
		\end{subfigure}
		\begin{subfigure}{0.49\columnwidth}
			
			\includegraphics[width=\linewidth]{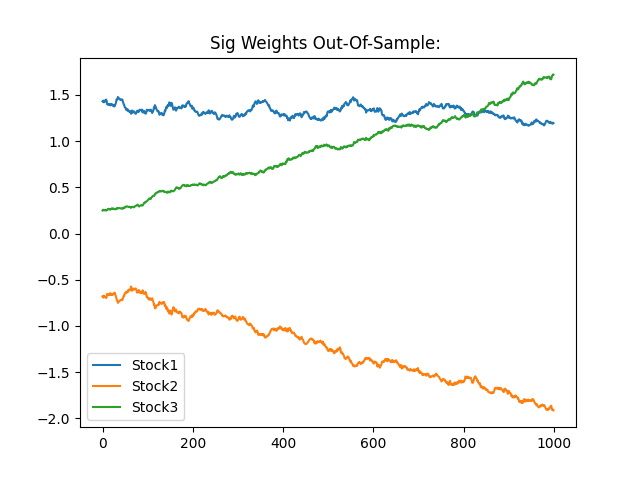}
			\caption{Signature market: signature weights}
		\end{subfigure}
		\caption{ \label{fig:simulated_markets} The theoretical growth-optimal weights (left) and the signature portfolio's weights (right) for the Black-Scholes, volatility stabilized and signature market respectively, evaluated at one test sample.}
	\end{figure}
	
	Furthermore, we want to highlight, that although the growth-optimal portfolio in the Black-Scholes market has constant weights, this approximation task is far from trivial for a signature portfolio of type $I$. Because signature portfolios of type $I$ approximate the \controllingfunctions of the growth-optimal portfolio, the approximation task was actually $$ f^{(BS), i}(t,\mu_{[0,t]}) \approx \frac{c_i}{\mu_t^i}. $$
	And likewise, in the volatility stabilized market, the approximation task was 
	$$ f^{(Vol), i}(t,\mu_{[0,t]}) \approx \frac{\alpha+1}{2\mu_t^i} + \frac{d}{2}(\alpha-1). $$
	We quantify the performance by comparing the average logarithmic relative wealth over the $M_{test}$ test samples of the growth-optimal and signature portfolios, where in both cases we compute the average logarithmic relative wealth numerically on the test samples. This explains why the signature portfolio sometimes even leads to higher values. We present these results in Table~\ref{tb:sim_mkt_results}. 
    
    \begin{table}
	\centering
	
	\begin{tabular}{ c | c | c } 
		Mean log-relative wealth of & growth-optimal portfolio & signature portfolio\\ \hline
		Black-Scholes market: & 9.0115  & 9.0122  \\ 
		Volatility stabilized market: & 8.7619  & 8.7417 \\ 
		Signature market: & 0.4399  & 0.4398   \\ 
	\end{tabular}
	
	\caption{ \label{tb:sim_mkt_results} Mean logarithmic relative wealth of the theoretical growth-optimal and signature portfolios, evaluated on 100'000 test-samples, in each market respectively.}
\end{table}

\subsection{Real Market Data}\label{sec:realdata}
    \subsubsection{Details on the Optimization and Investment Procedure}
    
        Before we present the performance of signature portfolios, as well as, JL- and randomized-signature portfolios in real markets, we want to give some details on the optimization and investment procedure.

        \paragraph{Approximating Expectations by Time-Averages}
        When working with real market data, we only have one realization available. Therefore, as already discussed in Remark~\ref{rem:ergodic} and at the beginning of Section \ref{sec:logutility} we replace the expectation by a time-average. More precisely, consider $[0, \Delta, 2\Delta, \dots, N \Delta=t]$ to be a equally spaced grid of $[0, t]$ with distance $\Delta$. Then,  for $s\in [0,t-\Delta]$ and analogously to Remark \ref{rem:ergodic} we consider the following approximations:
$$
\mathbb{E}\left[\log \left(\frac{V^{\pi}_{s+\Delta}}{V^{\pi}_s}\right)\right] \approx \frac{1}{N} \sum_{i=1}^N \log \left(\frac{V^{\pi}_{i\Delta}}{V^{\pi}_{(i-1)\Delta}}\right)=\frac{1}{N} \log(V_t^{\pi})
$$

        $$\mathbb{E}[\mathcal{R}^{V,\pi}_{s,s+\Delta}]\approx \frac{1}{N}\sum_{i=1}^{N} \mathcal{R}^{V,\pi}_{(i-1)\Delta,i\Delta}=:\overline{\mathcal{R}}^{V,\pi},$$

        $$\textrm{Var}[\mathcal{R}^{V,\pi}_{s,s+\Delta}]\approx \frac{1}{N-1}\sum_{i=1}^{N} (\mathcal{R}^{V,\pi}_{(i-1)\Delta,i\Delta}- \overline{\mathcal{R}}^{V,\pi})^2.$$
Recall that sufficient ergodicity/stationarity conditions guaranteeing that these approximations are justified have been discussed in Remark \ref{rem:ergodic}.

Note however that we cannot expect that  stationarity of the relative returns and log-relative returns is satisfied for   \emph{every} portfolio in our optimization class.     
       Consider for example signature portfolios of type $I$ with $\tau= (\frac{1}{d}, \dots \frac{1}{d})^{\mathsf{T}}$, $\hat{X}_t= (t, X_t)$ and $\mathcal{U}=\{1,\dots, d\}$. It is easy to see that the stationarity assumption can not hold for all portfolios in this class, just by considering the following portfolio $\pi^1$ where the only parameters that are non-zero are chosen to be $l^d_{(1)}$. Then 
        \begin{align*}
            \mathbb{E}\left[\mathcal{R}^{V, \pi^1}_{t, t+\Delta}\right]= \frac{1}{d} \mathbb{E}\left[t\cdot \frac{\mu^d_{t+\Delta}}{\mu^d_t} \right] + (\frac{1}{d}- \frac{t}{d^2})\mathbb{E}\left[\sum_{j=1}^d \frac{\mu^j_{t+\Delta}}{\mu^j_t}\right]- \frac{t+1}{d}.
        \end{align*}    
        
         However, there are of course classes of linear path-functional portfolios where the stationarity assumptions holds for all portfolios in the class. Indeed, the simplest example is  the class of Markowitz-type portfolios, i.e.~with $\tau$ being constant and constant portfolio maps. A far-reaching generalization thereof are
         signature portfolios with  \emph{with rolling windows}, as defined in Remark~\ref{rem:examples_linear_port} and for which stationarity holds under the conditions of Remark \ref{rem:ergodic}. Computing these portfolios is slightly more expensive, since one needs to compute the increments of the signature. Therefore we did not consider them in our implementations.

       Finally, let us emphasize that the optimization using time-averages does make sense in practice, even if the stationarity of returns may not hold for every portfolio. For example  consider $\Delta=1 \textrm{ day}$, then the mean-variance optimization using time-averages can be seen as looking for a portfolio with high average daily returns (in time) but without the daily returns varying too much over time and likewise for the log-relative wealth optimization. Moreover, we argue that the relative returns and log-relative returns of our optimized portfolios do exhibit some "stability" in time, as almost all the learnt portfolios perform very well in the out-of-sample period, as we will present in the following.
        Note, that all of the above optimization problems remain convex quadratic optimization problems if we replace expectations by time-averages.

        \paragraph{Optimization under Transaction Costs}
        In practice, we re-balance our portfolio-weights once a day since our market data is also of daily frequency. In this setting, we would like to incorperate transaction costs during optimization. \\
        
        As explained in detail in Section~\ref{sec:transcost}, including transaction costs is not trivial in our setting. In particular, since we have shown that \eqref{eq:alpha} does not necessarily have a (unique) solution, we cannot aim for any closed-form. Therefore, incorporating the exact amount of transaction-costs to be paid during the optimization would render the optimization task infeasible. However, we propose the following penalization which preserves the form of a convex quadratic optimization problem. Moreover, our empirical results in Subsection~\ref{subsec:SMI_SPX} verify that this penalization is effective in all cases and indeed useful for learning portfolios which perform well even under 5\% of transaction costs. 
        
        \begin{cor}
            Given a convex quadratic optimization problem of the form 
            $$ \min_{\mathbf{l}} \, \frac{1}{2}\mathbf{l}^{\mathsf{T}}\mathbf{Q}(T) \mathbf{l} + \mathbf{c}(T)^{\mathsf{T}} \mathbf{l}$$
            adding the penalization for linear path-functional portfolios $\pi$
            \begin{equation*}
                +\frac{\beta}{T} \sum_{t=0}^{T-1} \sum_{i \in \mathcal{U}} \left(\frac{\pi^i_{t+1}}{\mu^{\mathcal{U}, i}_{t+1}} -\frac{\pi^i_{t}}{\mu^{\mathcal{U}, i}_{t}} \right)^2 
            \end{equation*}
            preserves the form of a convex quadratic optimization problem. Here, $\beta$ is a hyperparameter that has to be chosen appropriately.
        \end{cor}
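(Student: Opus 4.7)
The plan is to exploit the fact that, for a linear path-functional portfolio of either type, the weights $\pi^i_t$ are \emph{affine} in the optimization parameters $\mathbf{l} = (l^i_\nu)_{i \in \mathcal{U}, \nu \in \mathcal{V}}$. Indeed, substituting $f^i(t, X_{[0,t]}) = \sum_{\nu \in \mathcal{V}} l^i_\nu \phi^\nu(t, X_{[0,t]})$ into Definition~\ref{def:path-func-port} shows that each $\pi^i_t(\omega)$ is a linear combination of the components of $\mathbf{l}$ with coefficients that depend only on $\tau$, the feature maps, and the observed path of $X$, but not on $\mathbf{l}$ itself. First I would write this out explicitly for both type I and type II to make the path-dependent coefficient maps transparent.

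Since the universe weight $\mu^{\mathcal{U},i}_{t}$ does not depend on $\mathbf{l}$, the scaled weight $\pi^i_t / \mu^{\mathcal{U},i}_t$ is also affine in $\mathbf{l}$, so that
$$
\frac{\pi^i_{t+1}}{\mu^{\mathcal{U}, i}_{t+1}} - \frac{\pi^i_{t}}{\mu^{\mathcal{U}, i}_{t}} = \mathbf{a}^i_t(\omega)^{\mathsf{T}} \mathbf{l} + b^i_t(\omega)
$$
for a path-dependent vector $\mathbf{a}^i_t(\omega) \in \mathbb{R}^{|\mathcal{U}| \cdot |\mathcal{V}|}$ and scalar $b^i_t(\omega) \in \mathbb{R}$. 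Squaring this expression expands to
$$
\left(\mathbf{a}^i_t(\omega)^{\mathsf{T}} \mathbf{l} + b^i_t(\omega)\right)^2 = \mathbf{l}^{\mathsf{T}} \mathbf{a}^i_t (\mathbf{a}^i_t)^{\mathsf{T}} \mathbf{l} + 2 b^i_t (\mathbf{a}^i_t)^{\mathsf{T}} \mathbf{l} + (b^i_t)^2,
$$
so that summing over $i \in \mathcal{U}$ and $t \in \{0, \dots, T-1\}$ and multiplying by $\beta/T$ writes the penalty as $\mathbf{l}^{\mathsf{T}} \tilde{\mathbf{P}} \mathbf{l} + \tilde{\mathbf{q}}^{\mathsf{T}} \mathbf{l} + \kappa$, where
$$
\tilde{\mathbf{P}} = \frac{\beta}{T} \sum_{t=0}^{T-1} \sum_{i \in \mathcal{U}} \mathbf{a}^i_t (\mathbf{a}^i_t)^{\mathsf{T}}, \qquad \tilde{\mathbf{q}} = \frac{2\beta}{T} \sum_{t=0}^{T-1} \sum_{i \in \mathcal{U}} b^i_t \, \mathbf{a}^i_t,
$$
and $\kappa$ is a constant independent of $\mathbf{l}$.

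The remaining step is to observe that each rank-one outer product $\mathbf{a}^i_t (\mathbf{a}^i_t)^{\mathsf{T}}$ is positive semidefinite and that a non-negative linear combination of positive semidefinite matrices is positive semidefinite; consequently $\tilde{\mathbf{P}}$ is positive semidefinite. Adding the penalty to the original objective therefore replaces $\mathbf{Q}(T)$ by $\mathbf{Q}(T) + 2 \tilde{\mathbf{P}}$ and $\mathbf{c}(T)$ by $\mathbf{c}(T) + \tilde{\mathbf{q}}$ (and adds the irrelevant constant $\kappa$, which does not affect the minimizer). Since the sum of two positive semidefinite matrices is positive semidefinite, the new Hessian is positive semidefinite and the modified task is again a convex quadratic optimization problem. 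There is no real obstacle in the argument; the only subtle point is verifying the affine dependence of $\pi^i_t$ on $\mathbf{l}$, which reduces to unpacking Definition~\ref{def:sig_port} in both type I and type II.
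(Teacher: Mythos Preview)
Your proposal is correct and follows essentially the same approach as the paper: the paper's proof simply observes that the penalty is quadratic in $\pi$ (hence in $\mathbf{l}$) and nonnegative as a sum of squares, so convexity is preserved. Your version spells out the affine dependence and the resulting positive semidefinite Hessian explicitly, which is a more detailed execution of the same idea.
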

        
        \begin{proof}
            Clearly the penalization is quadratic in $\pi$ and hence quadratic in the parameters $\{l_{\nu}^i\}_{i\in \mathcal{U}, \nu \in \mathcal{V}}$. Moreover, the penalization is positive for all $\{l_{\nu}^i\}_{i\in \mathcal{U}, \nu \in \mathcal{V}}$ and hence the convexity is preserved as well. 
        \end{proof}
        
        The motivation for this penalization is, first of all, that the universe portfolio $\mu^{\mathcal{U}}$ is not punished, which should be the case because this portfolio has no transaction costs at all. Moreover, the penalization punishes changes in the weights which exceed changes in the market weights. Note that those are exactly the changes that lead to transaction costs.\\
        
        \FloatBarrier
        \begin{algorithm}
            {\footnotesize
            \caption{Optimization Under Transaction Costs}\label{algo:find_beta}
            \begin{algorithmic}[1]
                
                \Function{Optimize\_Under\_TC}{beta}:
                \State $l \gets$  \textsc{Solve\_Convex\_Quadratic}(beta)
                \State weights $\gets$  \textsc{Get\_Insample\_Weights}($l$)
                \State wealth $\gets$  \textsc{Wealth\_Under\_TC}(weights)
                \State \Return $-1\cdot$  wealth
                \EndFunction
                \While{($-1 \cdot$\textsc{Optimize\_Under\_TC}(beta\_0)) $<10^{-4}$} 
                \State beta\_0+=0.5 \EndWhile
                \State optimal\_beta $\gets$\textsc{Minimize}(\textsc{Optimize\_Under\_TC}, beta\_0)
                \State optimal\_$l$ $\gets$ \textsc{Solve\_Convex\_Quadratic}(optimal\_beta)
            \end{algorithmic}
            }
        \end{algorithm}
        
        The downside of this penalization is, that for a given level of transaction costs, we do not know how to choose $\beta$. To find an appropriate $\beta$, we propose the procedure described in Algorithm~\ref{algo:find_beta}. That is, we solve the quadratic optimization problem for a given $\beta$ and calculate the in-sample performance \emph{under the true transaction costs} for the found optimal portfolio. We then minimize over $\beta$ in order to find the one which best penalizes for transaction costs over the in-sample period. Note that choosing the initial value $\beta_0$ can be delicate. Namely, if in a neighbourhood around $\beta_0$ all optimal strategies lead to ruin under transaction costs, one may not move away from the initial value. Therefore, we test if $\beta_0$ leads to ruin under transaction cost and, if so, increase it, in Algorithm~\ref{algo:find_beta}. Note that the condition in Line 8 must be false for some $\beta_0$ because the universe portfolio itself is included in the set of portfolios we optimize over.

        \FloatBarrier
        
        \paragraph{The Optimization Problems}
        In the following, we consider optimization problems of various signature-type linear path-functional portfolios with $L^2$-regularization, with and without a regularization for transaction costs. The two types of optimization problems are maximizing the expected log-relative wealth
        \begin{equation} \max_{\{l_{\nu}^{(i)}\}_{{i \in \mathcal{U}}, {\nu} \in \mathcal{V}}} \mathbb{E} \left[ \log \left( \frac{W^{\pi}_{t}}{W^{{\mathcal{U}}}_{t}}\right) \right] - \gamma_{\mathrm{LO}} \sum_{\substack{{i \in \mathcal{U}}\\ {\nu} \in \mathcal{V}}} (l_{\nu}^{(i)})^2 - \frac{\beta}{T} \sum_{t=0}^{T-1} \sum_{i \in \mathcal{U}} \left(\frac{\pi^i_{t+1}}{\mu^{\mathcal{U}, i}_{t+1}} -\frac{\pi^i_{t}}{\mu^{\mathcal{U}, i}_{t}} \right)^2 \tag{Log-Opt.}
        \end{equation}
        and maximizing mean-variance
        \begin{equation} \max_{\{l_{\nu}^{(i)}\}_{{{i \in \mathcal{U}} {\nu} \in \mathcal{V}}}}  \hspace{0.1cm} \lambda \mathbb{E}\left(\mathcal{R}^{V, \pi}_{t, t+\Delta} \right) - \mathrm{Var}\left(\mathcal{R}^{V, \pi}_{t, t+\Delta}\right) - \gamma_{\mathrm{MV}} \sum_{\substack{{i \in \mathcal{U}}\\ {\nu} \in \mathcal{V}}} (l_{\nu}^{(i)})^2 - \frac{\beta}{T} \sum_{t=0}^{T-1} \sum_{i \in \mathcal{U}} \left(\frac{\pi^i_{t+1}}{\mu^{\mathcal{U}, i}_{t+1}} -\frac{\pi^i_{t}}{\mu^{\mathcal{U}, i}_{t}} \right)^2 \tag{MV-$\lambda$}
        \end{equation}
        where in both cases we added bound constraints $\lvert l^{(i)}_{\nu}\rvert \leq 10'000$ for all ${i \in \mathcal{U}}, {\nu} \in \mathcal{V}$.

                \paragraph{Hyperparameters and Cross-Validation}
        The main hyperparameters of our optimization procedure are:
        \begin{itemize}
            \item $t_0$: this is the time at which we start to invest, with respect to the time when we start calculating the signature. The interpretation of this is that we can observe the market for some time before we start to invest, which is particularly relevant in the non-Markovian setting. In the following we choose $t_0=100$, hence, we always start calculating the signature 100 days before the respective investment period starts.

            \item $\gamma$: this is the parameter of the $L2$-regularization described in Proposition \ref{prop:mod_optim}. For the case of maximizing the (expected) log-relative wealth, we choose our $\gamma$ during cross-validation. We do not train this regularization parameter in the mean-variance optimization but fix it apriori to a small value, as the minimization of the variance itself can already be regarded as regularizing. 
        \end{itemize}
        
        For performing the cross-validation, we first split our data into a in-sample-period of $T_{ins}$ days, a cross-validation of $T_{cv}$ (consecutive to the in-sample period) and a testing period of $T_{test}$ days, starting at the end of the cross-validation period. Once we found the optimal hyper-parameters during cross-validation, we learn the parameters of our portfolios again on the $T_{ins}$ days prior to the testing period (using the obtained hyper-parameters). We then evaluate the performance of the learned portfolios during the testing-period.
        
        \paragraph{Calculating the JL-Signature}
        In order to calculate the JL-signature, the true truncated signature needs to be calculated before applying the JL-projection. However, calculating the true signature can be computationally too heavy in large markets. We therefore propose the following memory-efficient procedure (Algorithm~\ref{alg:randomsig}), which is based on the observation that for each component of the signature at level $l$, at most $l$ components of the underlying path appear in the integrals. Hence, one can instead compute the signature of combinations of $l$ components of the path and apply the projection "batch-wise". It is important to note that by doing so, some words are computed multiple times, for example $$\int_0^t \int_0^s \circ dX_u^1 \circ dX_s^1 $$
        can arise from the combination $(X^1, X^2 ,X^3)$ and from $(X^1, X^4, X^5)$ and from many more. Therefore, one needs to be careful which words to keep and our proposed algorithm takes care of that. Moreover, it is important to store the random projection matrix, once a realization is computed. 
        
        \begin{algorithm}
            {\footnotesize
            \caption{Calculate JL-Signature}\label{alg:randomsig}
            \begin{algorithmic}[1]
                
                \Function{JL\_Signature}{path, level}:
                    \State Initialize A\_list=\texttt{list}()
                     \State Initialize JL\_sig=0
                    \For{$l \in \{1,\dots, \textrm{level}\}$} 
                        \State combos $\gets$ \textsc{Combnations}(\texttt{dim}(path), $l$)
                        \For{$c \in  \textrm{combos}$} 
                            \State A\_slice $\sim$  N(0, 1/k)
                            \State sig, words $\gets$ \textsc{Signature}(path[c], level)
                            \State words\_to\_keep= \texttt{list}()
                            \For{word $\in$ words}
                                \If{\texttt{length}(\texttt{set}(word))== $l$} 
                                    \State words\_to\_keep.\texttt{append}(word)
                                \EndIf
                            \EndFor
                            \State sig= sig[words\_to\_keep]
                            \State A\_slice= A\_slice[\, :\, , words\_to\_keep]
                            \State JL\_sig+=\texttt{matrix\_product}(A\_slice, sig)
                        \EndFor
                    \EndFor
                    \State \Return JL\_sig  
                \EndFunction
            \end{algorithmic}}
        \end{algorithm}

        \paragraph{Parametrization of Time-Components}  We parameterize the time-component of the time augmentation in the following way: Let $\hat{X}_t = (\varphi_T(t), X_t)$. For a given trading horizon $T_{hor}$, we consider the parametrization $\varphi_{T_{hor}}(t)= \frac{t}{T_{hor}}$. Concretely, for the in-sample period of 2000 days, we set $T_{hor}=2000$ and for the out-of-sample period we set $T_{hor}=750$. This time-augmentation therefore contains information about the amount of time that is left (or has passed) in the current trading period. This is to compensate for the different training and testing periods. Another way to deal with this would be to use signature portfolios with rolling windows as defined in Remark~\ref{rem:examples_linear_port}.

        \paragraph{Performance Metrics}
        Recall that we re-balance our portfolios once a day and follow a buy-and-hold strategy in between. We evaluate the out-of-sample performance of portfolios by their log-relative wealth
        $$\log \left(V^{b\&h}_T\right) = \log \left( \prod_{t=t_0}^{T-1} \sum_{i \in \mathcal{U}} \pi^i_t \frac{\mu^{\mathcal{U}, i}_{t+1}}{\mu^{\mathcal{U}, i}_{t}}\right)$$ and their log-wealth
        $$\log \left(W^{b\&h}_T\right) = \log \left(\prod_{t=t_0}^{T-1} \sum_{i \in \mathcal{U}} \pi^i_t \frac{S^{\mathcal{U}, i}_{t+1}}{S^{\mathcal{U}, i}_{t}}\right).$$

    \paragraph{Survivorship Bias}
    In our name-based approach we only consider a universe of stocks to invest in which were present in the market for the entire in- and out-of-sample period. This opens up the issue of survivorship bias. Note that the same holds true for the ranked NASDAQ market, since there was always a 100-largest stock present, even though the company holding that rank may have changed over time. As a first measure to reduce this bias, we compare the performance of our trained portfolios only to the universe portfolio of the surviving stocks and not to the entire market portfolio. Nevertheless, our trained portfolios may benefit from the fact that stocks do not go bankrupt in other ways which the universe portfolio can not exploit. Let us address them in more detail:
    \begin{itemize}
    \item {\bf Leverage:} If bankruptcy of stocks is not possible, leverage (i.e. short-selling of some stocks in order to invest more heavily in others) is less risky. We do not impose any leverage constraints in our optimization. Nevertheless we do not observe extreme short-selling during the out-of-sample period. We report the minimum observed out-of-sample weights observed in Tables~\ref{tb:minpi_SMI} and~\ref{tb:minpi_SPX}, were we highlight those in bold which are long-only. At least for those which do not short-sell we can exclude leveraging as an unfair advantage. In the cases where only small negative weights were observed, we dare to say that an unfair advantage through leveraging is unlikely to have been the driving factor for potential out-performance of the universe portfolio. Aside from leveraging, we would like to point out that regularization for transaction costs reduces short-selling.

    \item {\bf Over-weighting small stocks:} If  stocks' capitalization cannot go to zero, a potential strategy may be to put a lot of weight on stocks with capitalizations close to zero, because the upside is much higher than the downside. To investigate this, we point to Figures~\ref{fig:JL_semilog}~and~\ref{fig:RSIG_semilog}, which show the average out-of-sample weights of the ranked NASDAQ universe. We observe that the trained portfolios \emph{underweight} small stocks on average compared to the universe portfolio.  
    \end{itemize}

    \begin{table}
    
	   \centering
	        \begin{subtable}[h!]{\textwidth}
                \centering
            {\footnotesize
            \begin{tabular}{ c | c | c | c | c | c| c | c| c | c} 
	    	    Portfolio & \multicolumn{3}{|c|}{Signature Portfolio }& \multicolumn{3}{|c|}{JL-Signature Portfolio }&\multicolumn{3}{|c}{Rand.-Signature Portfolio}\\ 
	    	    Reg. TC & 0\% & 1\% & 5\% & 0\% & 1\% & 5\%& 0\% & 1\% & 5\% \\\hline
		         Log-Opt. & -0.5 & -0.45   & -0.13   & -0.32  & -0.32 &-0.12  & -0.01 & -0.01 &  {$\mathbf{3\cdot 10^{-3}}$}\\ 
		        MV $\lambda=1$ & -3.31 & -1.45  & -0.16 & -3.70 & -1.67 & -0.18 & -3.37& -2.26 &-0.20\\
		        MV $\lambda=0.75$ & -2.43 & -1.72 & -0.16 & -2.72 &-2.10  & -0.18 & -2.48 & -2.22 & -0.21 \\
		        MV $\lambda=0.5$ & -1.56 & -1.54 & -0.17 &-1.75& -1.75 & -0.19 & -1.59  &-1.54 & -0.22\\
		        MV $\lambda=0.05$ &-0.05& -0.05 & -0.04 & -0.05& -0.05& -0.04 &-0.04 & -0.04& -0.03\\
    	    \end{tabular}}
            \subcaption{\label{tb:minpi_SMI} Minimum portfolio weight observed during the out-of-sample period for the SMI universe for the respective portfolios with and without regularization for transaction costs. }
            \end{subtable}
	        
	        \vspace{0.36cm}
	        
	        \begin{subtable}[h!]{\textwidth}
                \centering
            {\footnotesize
            \begin{tabular}{ c | c | c | c | c | c| c } 
	    	    Portfolio & \multicolumn{3}{|c|}{JL-Signature Portfolio }&\multicolumn{3}{|c}{Rand.-Signature Portfolio}\\ 
	    	    Reg. TC & 0\% & 1\% & 5\%& 0\% & 1\% & 5\% \\\hline
		         Log-Opt. & -1.67 &  -0.66   & -0.11  & -0.80  & -0.63 &-0.15\\ 
		        MV $\lambda=1$ & -0.14 & -0.14  & -0.11 & -0.04 & -0.04 & -0.04\\
		        MV $\lambda=0.75$ & -0.10 & -0.10  & -0.10 & -0.03 &-0.03  & -0.03 \\
		        MV $\lambda=0.5$ & -0.06& -0.06 &-0.06 & -0.01 & -0.01 & -0.01 \\
		        MV $\lambda=0.05$ &{$\mathbf{8\cdot 10^{-5}}$}& \textbf{$\mathbf{8\cdot 10^{-5}}$}  & \textbf{$\mathbf{8\cdot 10^{-5}}$} & \textbf{$\mathbf{8\cdot 10^{-5}}$}& \textbf{$\mathbf{8\cdot 10^{-5}}$} & \textbf{$\mathbf{8\cdot 10^{-5}}$}\\
    	    \end{tabular}}
            \subcaption{\label{tb:minpi_SPX} Minimum portfolio weight observed during the out-of-sample period for the S\&P500 universe for the respective portfolios with and without regularization for transaction costs. }
            \end{subtable}
            
            \caption{The tables show the minimum portfolio weights observed during the out-of-sample period of the respective universe. The ones which are positive (i.e. no short-selling) are highlighted in bold.}
        \end{table}

    \paragraph{Choice of a Benchmark}
    In the following numerical experiments we will use the universe weights as a auxiliary portfolio $\tau$. Recall from Remark~\ref{rem:auxiliary} that the auxiliary portfolio $\tau$ is the natural benchmark for linear path-functional portfolios because it can always be attained by them. As an additional benchmark we included the equally-weighted portfolio. We want to emphasise that of course other auxiliary portfolios and hence benchmark portfolios could be used.

    \subsubsection{Rank-based approach: NASDAQ}\label{subsec:NASDAQ}
         In this part of our empirical analysis we study a rank-based approach to portfolio optimization. We consider the \emph{ranked} NASDAQ market and choose as a universe the stocks with ranks $1, \dots, 100$. We obtained the data from the CRSP database.\footnote{The raw/processed data required to reproduce our findings cannot be shared at this time due to legal reasons.} We train JL-signature portfolios of dimension $(50,3)$ of type $I$ and randomized signature portfolios of dimension 50 of type $I$ investing in the universe of the 100 largest stocks and  use as input the universe weights i.e. $\hat{X}= \hat{\boldsymbol{\mu}}^{\mathcal{U}}$ and the universe portfolio as the auxiliary portfolio (i.e. $\tau= \boldsymbol{\mu}^{\mathcal{U}}$). 
        The portfolio is daily re-balanced. We train such portfolios in two ways, once in the log-wealth optimization and once in the mean-variance optimization, where in both optimization problems, we measure the performance of the portfolios by the log-relative wealth and the log-wealth achieved in the out-of-sample period. 
        
        We choose $t_0=100$ and take as an in-sample period $2000$ trading days and as an out-of-sample period the following $750$ trading days. For the mean-variance optimization task we set the $L^2$-regularization parameter $\gamma_{MV}= 10^{-6}$ and for the log-wealth optimization task we found $\gamma_{LO}= 4.849\cdot 10^{-3}$ for JL-signature portfolios and $\gamma_{LO}= 1\cdot 10^{-2}$ for randomize-signature portfolios during the cross-validation. The grid-search for $\gamma_{LO}$ was performed over 100 equally-distant points in $[10^{-6}, 10^{-2}]$. 
        
        We report the out-of-sample performance of the trained portfolios in Table~\ref{tb:Results_NASDAQ_100} in terms of log-(relative)-wealth with respect to the universe portfolio. All but one of the trained portfolios outperform the universe portfolio. 
        
        \begin{table}
	        \centering
	        {\small
	        \begin{tabular}{ c | c | c | c | c } 
	        Optimization & \multicolumn{2}{|c|}{JL-Signature }& \multicolumn{2}{|c}{Randomized Signature}\\
	    	    Task & $\log V_T^{b\&h}$ & $\log W_T^{b\&h}$ &$\log V_T^{b\&h}$ & $\log W_T^{b\&h}$\\ \hline
		        Log-Opt. & 0.1045 & 0.6622& 0.0002 & 0.5579 \\ 
		        MV $\lambda=4$ &0.3395 & 0.8972& -0.0045 & 0.5531 \\
		        MV $\lambda=3$ &0.3173 & 0.8750& 0.0081 & 0.5658 \\
		        MV $\lambda=2$ & 0.2531 & 0.8108 & 0.0131 & 0.5707\\
		        MV $\lambda=1$ & 0.1472 & 0.7049 & 0.0104 & 0.5680\\
		        MV $\lambda=0.75$ & 0.1143 & 0.6720 & 0.0085 & 0.5662\\
		        MV $\lambda=0.5$ & 0.0788 & 0.6364& 0.0061 & 0.5638\\
		        MV $\lambda=0.05$ & 0.0083 & 0.5660& 0.0007 & 0.5584\\
    	    \end{tabular}}
	
	        \caption{ \label{tb:Results_NASDAQ_100} Performance of the trained portfolios over the out-of-sample period in terms of log-(relative)-wealth. We ran the mean-variance optimization task for seven different risk-tolerances $\lambda$. All but one of the optimized portfolios out-performed the universe portfolio over the out-of-sample period. }
            
        \end{table}
        
        In Figure~\ref{fig:Value_NASDAQ_100} and Figure~\ref{fig:RSIG_Value_NASDAQ_100} we display the wealth processes of the trained portfolio, the one of the universe portfolio and as a benchmark also of the equally-weighted portfolio. The wealth processes of the mean-variance portfolio are the more volatile, the higher the risk-tolerance is, as one would expect. Moreover, the wealth-processes of the randomized signature portfolios are more tamed than their JL-counterparts.
        
        We present the average values (in time) of the portfolio weights for each rank in Figure~\ref{fig:Avg_Weight_NASDAQ_100}. The trained portfolios mainly take over- and under-weighted positions in the largest stocks. While the positions are extremer, the higher the risk-tolerance. Although we do not enforce any long-only constraints or regularization, we do not observe any extreme short-selling positions.
        
        \begin{figure}[ht] 
		\centering
		    \begin{subfigure}{0.48\columnwidth}
		    \includegraphics[width=\linewidth]{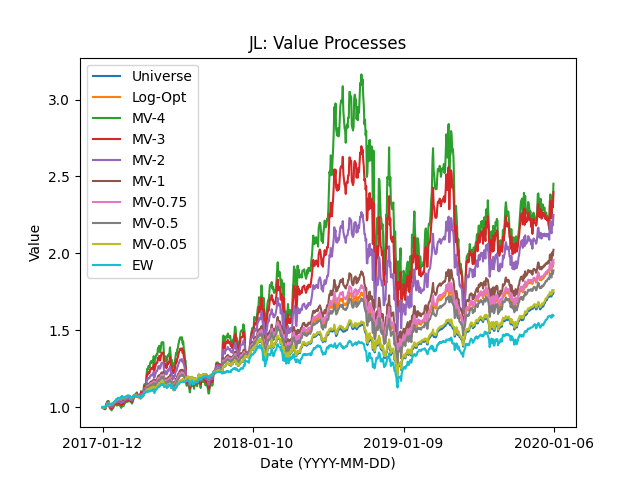}
			\caption{\label{fig:Value_NASDAQ_100} Wealth processes of the learned JL-signature portfolios. }
		    \end{subfigure}
		    \begin{subfigure}{0.48\columnwidth}
		    \includegraphics[width=\linewidth]{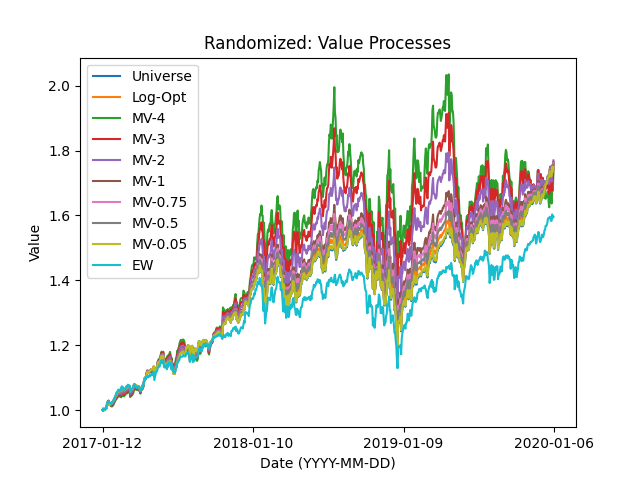}
			\caption{\label{fig:RSIG_Value_NASDAQ_100} Wealth processes of the randomized signature portfolios. }
		    \end{subfigure}
            \caption{ Wealth processes of NASDAQ the learned signature portfolios, the universe portfolio and the equally-weighted portfolio, over the out-of-sample period.}

	    \end{figure}
	   
	    \begin{figure}[ht] 
		\centering
		\begin{subfigure}{0.45\columnwidth}
		    \includegraphics[width=\linewidth]{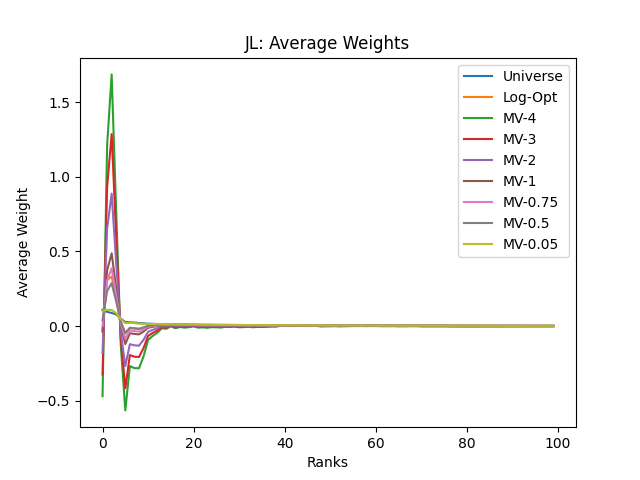}
			\caption{ Average weights of the JL-signature portfolios.}
		\end{subfigure}
        \begin{subfigure}{0.45\columnwidth}
		    \includegraphics[width=\linewidth]{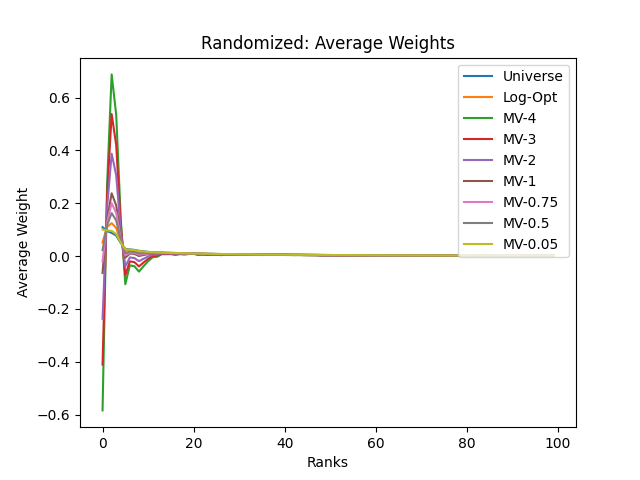}
			\caption{Average weights of the randomized signature portfolios.}
		\end{subfigure}
        \begin{subfigure}{0.45\columnwidth}
		    \includegraphics[width=\linewidth]{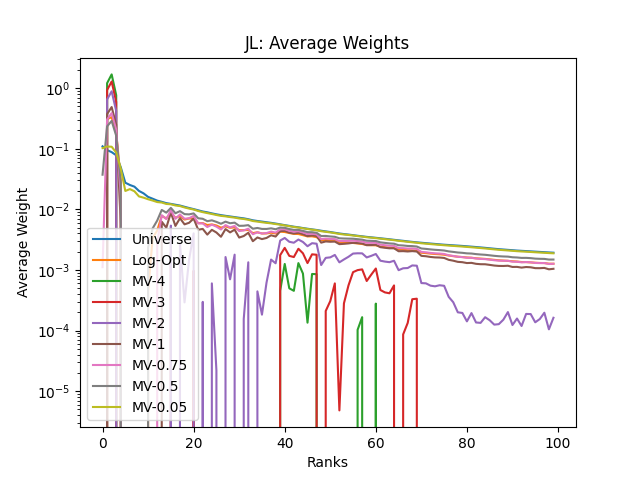}
			\caption{ \label{fig:JL_semilog}Plot of the logarithm of the average weights of the JL-signature portfolios.}
		\end{subfigure}
        \begin{subfigure}{0.45\columnwidth}
		    \includegraphics[width=\linewidth]{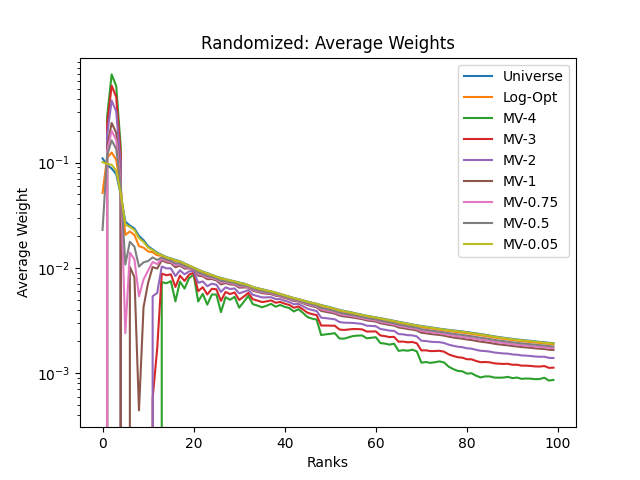}
			\caption{ \label{fig:RSIG_semilog}Plot of the logarithm of the average weights of the randomized-signature portfolios.}
		\end{subfigure}
  
		\caption{\label{fig:Avg_Weight_NASDAQ_100} Average weights in the NASDAQ of the signature portfolios and the universe portfolio, averaged over the out-of-sample period.}
		    
	    \end{figure}
    
    \subsubsection{Name-based approach: SMI and S\&P500}\label{subsec:SMI_SPX}
        In the name-based setting we tackle the mean-variance and log-relative wealth optimization problems under transaction costs. We do this in two markets; the Swiss Market Index (SMI) and S\&P500 Index. For the SMI, we consider the universe of stocks which survived between 2000-2022 and for the S\&P500 those that survived between 2001-2022. This amounts to 17 stock in the SMI universe and to 378 stocks in the S\&P500 universe. For both markets, we obtained the data from Reuters Datastream.\footnote{The raw/processed data required to reproduce our findings cannot be shared at this time due to legal reasons.} As inputs and as auxiliary portfolio we use the respective universe weights, i.e. $X= \tau= \mu^{\mathcal{U}}$. We compare the performance of the trained portfolios with the universe portfolio. Again, we choose $t_0=100$, an in-sample period of 2000 days and an out-of-sample period of 750 days. When we include the regularization cost, we choose $\beta_0=0.5$ and enforce a lower bound $\beta \geq 10^{-8}$. We want to emphasize, that even with transaction costs our portfolios are re-balanced daily.
        
        \paragraph{SMI Market}
        We train signature portfolios $\pi(\mu^{\mathcal{U}}, \hat{\mu}^{\mathcal{U}})$ of type $I$ with three configurations of feature maps 
        \begin{itemize}
            \item the true signature up to degree two;
            \item the JL-signature of dimension (30,2);
            \item the randomized signature of dimension 30.
        \end{itemize}  We set $\gamma_{MV}=10^{-8}$ and found $\gamma_{LO}= 2.02\cdot 10^{-3}$ for the true signature, $\gamma_{LO}= 2.42\cdot 10^{-3}$  for the JL-signature and $\gamma_{LO}= 1\cdot 10^{-2}$ for the randomized signature respectively during cross-validation, where the grid-search was performed over 100 equally-distant points in $[10^{-8}, 10^{-2}]$. We present the out-of-sample performance in terms of log-relative wealth in Tables~\ref{tb:Results_SMI},~\ref{tb:Results_SMI_JL},~\ref{tb:Results_SMI_RSIG} as well as the annualized Sharpe-ratios in Table~\ref{tb:SR_SMI}. The first three columns of Tables~\ref{tb:Results_SMI},~\ref{tb:Results_SMI_JL},~\ref{tb:Results_SMI_RSIG} show the performance of portfolios with and without transaction costs, where we added no regularization for transaction costs. The next four columns show the performance with and without transactions costs, but with a regularization for transaction costs at the respective level. The corresponding regularization parameters $\beta$ which we found during the in-sample training are shown in Table~\ref{tb:beta}. We note that the lower the risk-tolerance $\lambda$ the less regularization for transaction costs is needed. However, adding such a regularization, the signature portfolios trained under the mean-variance optimization often outperform the universe portfolio under transaction costs during the out-of-sample period. This is remarkable, since the universe portfolio does not pay any transaction costs. The portfolio trained under log-relative wealth optimization does not out-perform the universe portfolio under transaction costs for the true and JL-signature portfolios and neither do some of the portfolios trained using the JL- and randomized signature under mean-variance optimization with higher risk-tolerances. Nevertheless, the regularization for transaction cost proved effective, since all portfolios performed better with the regularization than without, under the respective level of transaction costs. 
        
        We show the wealth processes of the signature portfolios for the SMI universe in Figure~\ref{fig:SMI_value_processes}. It is obvious that the higher $\lambda$ is, the more volatile the portfolios are, which lead to some signature portfolios under-performing the universe portfolio under transaction costs during some parts of the out-of-sample period. The signature portfolio with $\lambda=0.05$, however, did rather well over the entire period even under transaction costs, which we highlight in Figures~\ref{fig:SMI_value_processes_TC1_onlyMV005} \&~\ref{fig:SMI_value_processes_TC5_onlyMV005}.

        Turning to the Sharpe-ratios in Table~\ref{tb:SR_SMI} we see that while the Sharpe-ratios of the optimized portfolios are often higher than the one of the universe portfolio without transaction costs, the universe portfolio is less often out-performed under transaction costs in terms of Sharpe-ratios. Only the mean-variance optimized portfolio with $\lambda=0.05$ achieves a consistently higher Sharpe ratio than both universe and equally-weighted portfolio. It is also worth highlighting, that often the Sharpe-ratio increases for higher transaction costs (and thereby higher regularization for transaction costs).

        \paragraph{S\&P500 Market}
        We trained JL-signature portfolios $\pi(\mu^{\mathcal{U}}, \hat{\mu}^{\mathcal{U}})$  of dimension (30,2) and type $I$ as well as randomized-signature portfolios $\pi(\mu^{\mathcal{U}}, \hat{\mu}^{\mathcal{U}})$  of dimension 30 and type $I$ in the S\&P500 market. For the mean-variance optimization, we set $\gamma_{MV}=10^{-6}$ and for obtained $\gamma_{LO}=1.02\cdot 10^{-4} $ for both the JL-signature portfolio and  the randomized signature portfolio during cross-validation over an equally-spaced grid of 100 points in $[10^{-6}, 10^{-2}]$. We report the out-of-sample results of the optimized portfolios in Table~\ref{tb:Results_SPX} \&~\ref{tb:Results_SPX_randomized} for the JL- and randomized-signature portfolios respectively. It is remarkable that many portfolios also out-performed the universe portfolio without any regularization for transaction costs. Indeed, in many cases the optimization yielded $\beta=10^{-8}$, as we report in Table~\ref{tb:beta}. In the cases where such a regularization was needed, the performance with transaction costs was always improved with the regularization. Apart from the log-wealth optimized portfolio under 5\% of transaction costs, the learned portfolio all out-performed the universe portfolio under transaction costs. We show the corresponding wealth processes in Figure~\ref{fig:SPX_value_processes}.

        The out-of-sample annualized Sharpe-ratios are reported in Table~\ref{tb:SR_SPX}. The mean-variance optimized portfolios manage to out-perform the universe and equally-weighted portfolios also in terms of Sharpe-ratios, even under transaction costs. Only the log-wealth optimized portfolio fails to achieve a higher Sharpe-ratio than the universe portfolio under transaction costs.

        \begin{figure}[ht] 
		\centering
		\begin{subfigure}{0.55\columnwidth}
		    \centering
		    \includegraphics[width=\linewidth]{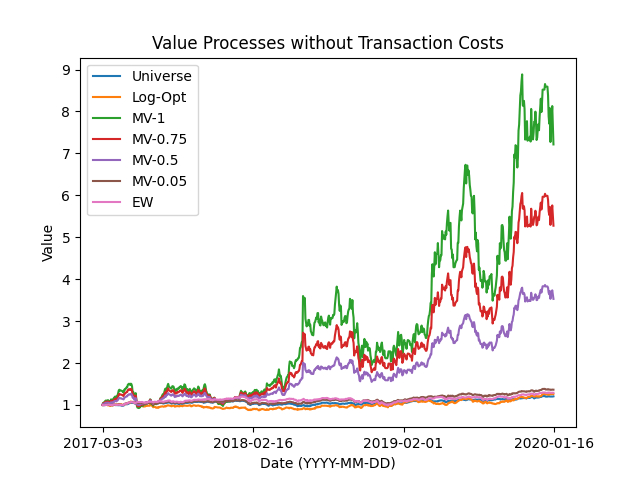}
		    \caption{\label{fig:SMI_value_processes_noTC}Wealth process without transaction costs}
	    \end{subfigure}\\
		\begin{subfigure}{0.48\columnwidth}
			\includegraphics[width=\linewidth]{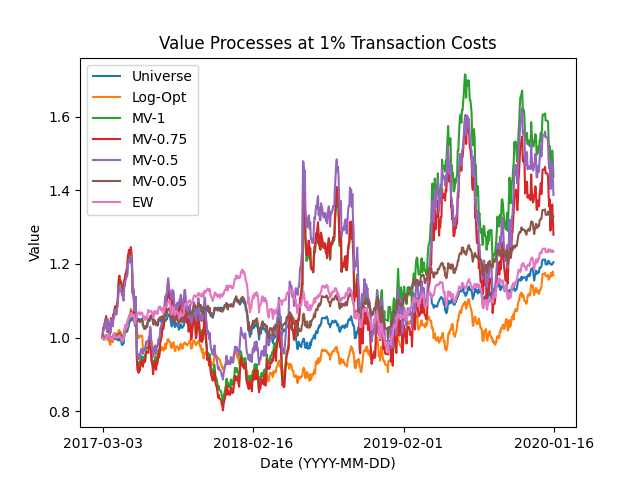}
			\caption{\label{fig:SMI_value_processes_TC1}Wealth process with 1\% proportional transaction costs}
		\end{subfigure}
		\begin{subfigure}{0.48\columnwidth}
			\includegraphics[width=\linewidth]{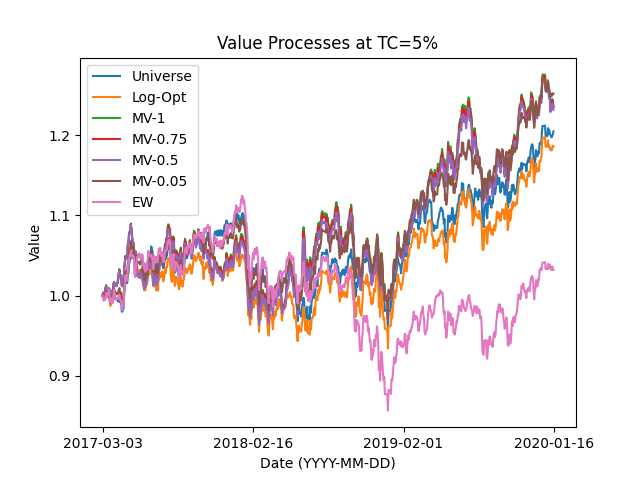}
			\caption{\label{fig:SMI_value_processes_TC5}Wealth process with 5\% proportional transaction costs}
		\end{subfigure}
		
		\begin{subfigure}{0.48\columnwidth}
			\includegraphics[width=\linewidth]{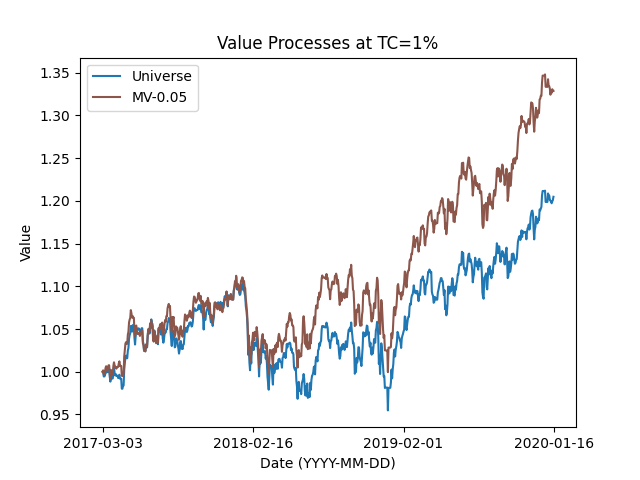}
			\caption{\label{fig:SMI_value_processes_TC1_onlyMV005}Wealth process with 1\% proportional transaction costs}
		\end{subfigure}
		\begin{subfigure}{0.48\columnwidth}
			\includegraphics[width=\linewidth]{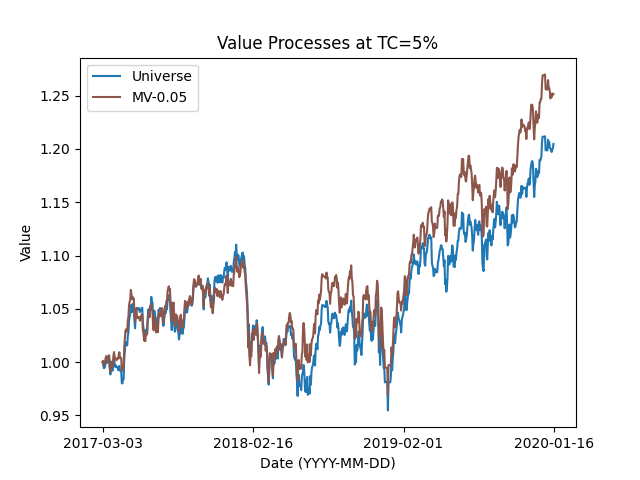}
			\caption{\label{fig:SMI_value_processes_TC5_onlyMV005}Wealth Process with 5\% Proportional Transaction Costs}
		\end{subfigure}
		\caption{ \label{fig:SMI_value_processes} SMI wealth processes of the signature portfolios, the universe portfolio and the equally-weighted portfolio with and without transaction costs. Note that signature portfolios in the settings with transaction cost are those, where we included the regularization for transaction costs. }
    \end{figure}

    \paragraph{Conclusion}
    We would like to offer some conclusion on the performance of our optimized portfolios in particular comparing the signature, JL-signature and randomized signature portfolios. We highlight that computing the true signature portfolio of the S\&P500 universe would not have been feasible, since the number of stocks would have been too high. However, the JL-signature portfolios did consistently achieve a better out-of-sample performance both in terms of log-relative wealth and Sharpe-ratios than the randomized signature portfolios. For the SMI universe we are impressed with the performance of the JL- and randomized-signature portfolios compared to those of the true signature portfolio, especially due to the vast difference in number of optimization parameters (510 for JL- \& rand.-signature each vs. 10'290 for the true signature portfolios!). In terms of Sharpe-ratio the JL- and randomized-signature portfolios even slightly out-performed the true signature portfolios, but no significant difference between the two randomization approached was observable. In terms of log-relative wealth, the true signature portfolio performed better than the JL-signature portfolios, which in turn achieved better results than the randomized-signature ones, however we would like to point out that the differences were small. Overall, in terms of feasibility and performance, we consider the JL-signature portfolios to be the most favorable.

    \begin{figure}[ht] 
		\centering
		
		\begin{subfigure}{1\columnwidth}
		\begin{subfigure}{0.48\columnwidth}
			\includegraphics[width=\linewidth]{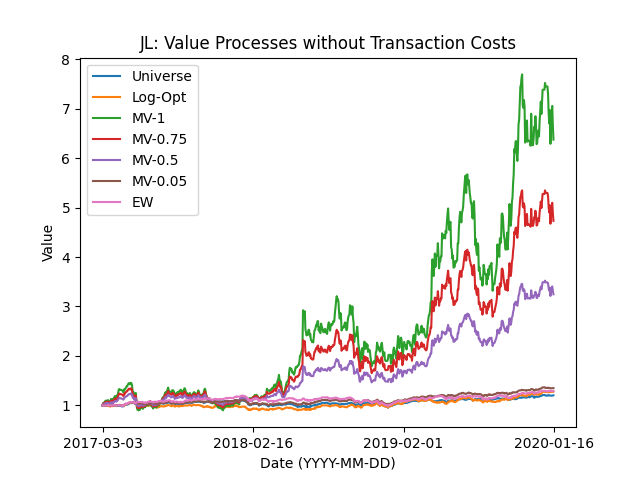}
		\end{subfigure}
		\begin{subfigure}{0.48\columnwidth}
			\includegraphics[width=\linewidth]{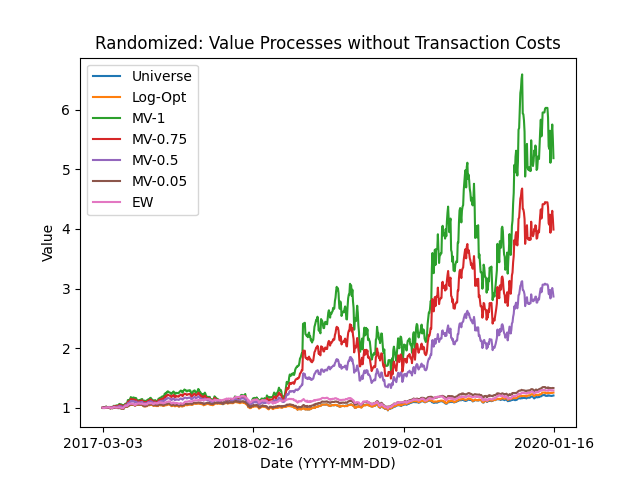}
		\end{subfigure}
		\caption{\label{fig:SMI_value_processes_noTC_rand}Wealth process without transaction costs for JL-signature portfolios (left) and randomized-signature portfolios (right)}
		\end{subfigure}
		
		\begin{subfigure}{1\columnwidth}
		\begin{subfigure}{0.48\columnwidth}
			
			\includegraphics[width=\linewidth]{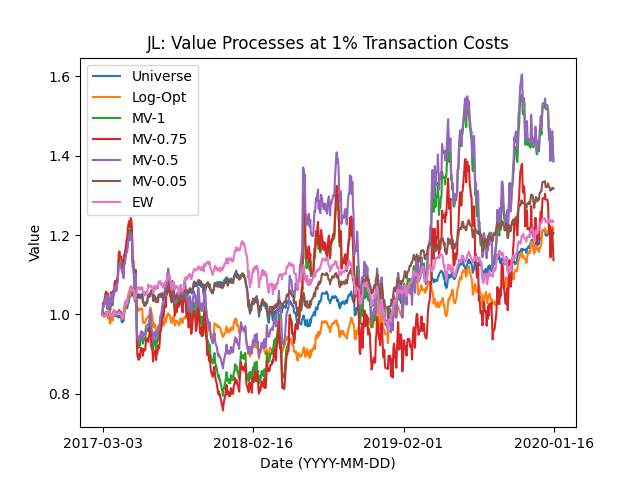}
		\end{subfigure}
		\begin{subfigure}{0.48\columnwidth}
			\includegraphics[width=\linewidth]{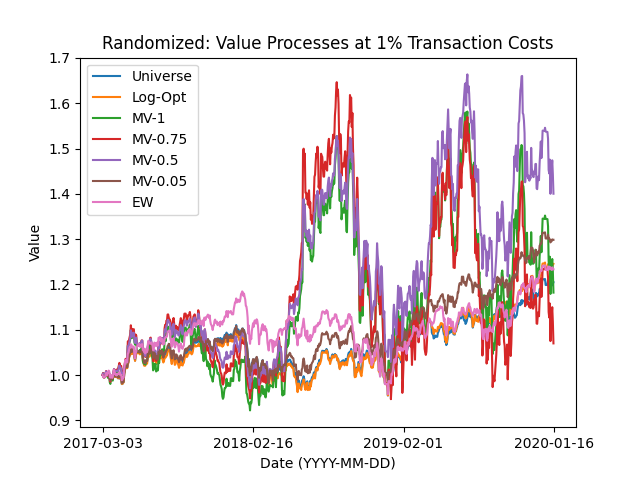}
		\end{subfigure}
		\caption{\label{fig:SMI_value_processes_TC1_rand}Wealth process with 1\% proportional transaction costs for JL-signature portfolios (left) and randomized-signature portfolios (right)}
		\end{subfigure}
		
		\begin{subfigure}{1\columnwidth}
		\begin{subfigure}{0.48\columnwidth}
			\includegraphics[width=\linewidth]{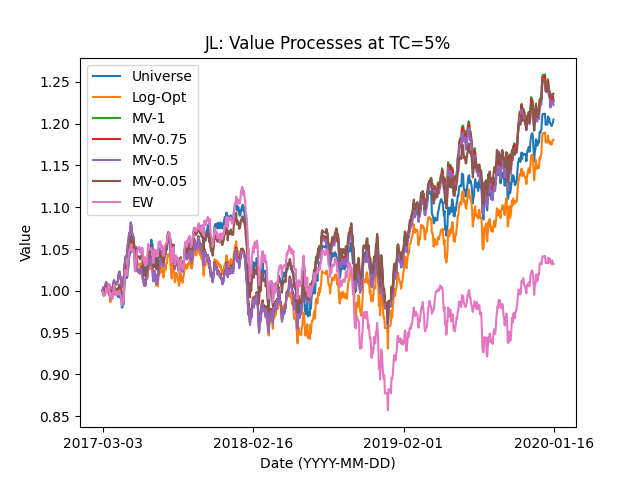}
		\end{subfigure}
		\begin{subfigure}{0.48\columnwidth}
			\includegraphics[width=\linewidth]{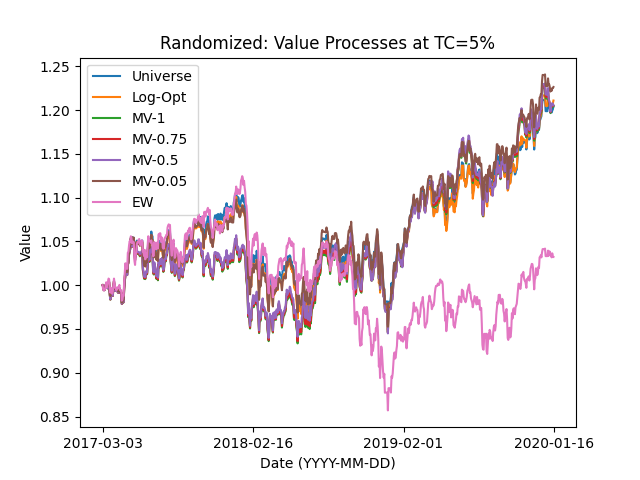}
		\end{subfigure}
		\caption{\label{fig:SMI_value_processes_TC5_rand}Wealth process with 5\% proportional transaction costs for JL-signature portfolios (left) and randomized-signature portfolios (right)}
		\end{subfigure}
		
		\caption{ \label{fig:SMI_value_processes_rand} SMI wealth processes of the JL-signature portfolios (left), randomized-signature portfolios (right), universe portfolio and equally-weighted portfolio with and without transaction costs. Note that signature portfolios in the settings with transaction cost are those, where we included the regularization for transaction costs. }
	\end{figure}

    \begin{table}
	   \centering
	        \begin{subtable}[h!]{\textwidth}
                \centering
            {\footnotesize
            \begin{tabular}{ c | c | c | c | c | c| c | c} 
	    	    Opt. Task & \multicolumn{3}{|c|}{$\log{V_T}$ without Reg. }& \multicolumn{2}{|c|}{$\log{V_T}$ with Reg. $\beta_{1\%}$ }&\multicolumn{2}{|c}{$\log{V_T}$ with Reg. $\beta_{5\%}$}\\ 
	    	    Prop. TC & 0\% & 1\% & 5\% & 0\% & 1\% & 0\% & 5\% \\\hline
		         Log-Opt. & 0.0458 & -0.0494   & -0.4333   & 0.0461  & -0.0293 &0.0540  & -0.0158  \\ 
		        MV $\lambda=1$ & 1.7896 & -1.9354  & -$\infty$ & 1.0514 & 0.1760 & 0.0257 & 0.2804\\
		        MV $\lambda=0.75$ & 1.4767 & -0.6214  & -14.6792 & 1.1547 &0.0601  & 0.2724 & 0.0247\\
		        MV $\lambda=0.5$ & 1.0748& 0.1155 &-4.3536  &1.0539& 0.1409 & 0.2620 & 0.0220   \\
		        MV $\lambda=0.05$ &0.1240& 0.0973  & -0.0096  & 0.1229 & 0.0979 & 0.1056 &0.0381  \\
    	    \end{tabular}}
            \subcaption{\label{tb:Results_SMI} Log-relative wealth of the signature portfolios with and without transaction costs trading in the SMI universe. }
            \end{subtable}
	        
	        \vspace{0.36cm}
	        
	        \begin{subtable}[h!]{\textwidth}
                \centering
            {\footnotesize
            \begin{tabular}{ c | c | c | c | c | c| c | c} 
	    	    Opt. Task & \multicolumn{3}{|c|}{$\log{V_T}$ without Reg. }& \multicolumn{2}{|c|}{$\log{V_T}$ with Reg. $\beta_{1\%}$ }&\multicolumn{2}{|c}{$\log{V_T}$ with Reg. $\beta_{5\%}$}\\ 
	    	    Prop. TC & 0\% & 1\% & 5\% & 0\% & 1\% & 0\% & 5\% \\\hline
		         Log-Opt. & 0.0620 & 0.0093   & -0.2009   & 0.0619  & 0.0094 &0.0420  & -0.0201\\ 
		        MV $\lambda=1$ & 1.6656 & -1.6336  & -$\infty$ & 0.9654 & 0.1402 & 0.2061 & 0.0187\\
		        MV $\lambda=0.75$ & 1.3676 & -0.4919  & -12.2094 & 1.1439 &-0.0583  & 0.2051 & 0.0178\\
		        MV $\lambda=0.5$ & 0.9913& 0.1398 &-3.7521  &0.9913& 0.1398 & 0.2074 & 0.0144   \\
		        MV $\lambda=0.05$ &0.1137& 0.0894  & -0.0072  & 0.1137& 0.0894 & 0.0947 &0.0254  \\
    	    \end{tabular}}
            \subcaption{\label{tb:Results_SMI_JL} Log-relative wealth of the JL-signature portfolios with and without transaction costs trading in the SMI universe. }
            \end{subtable}
	        
	        \vspace{0.36cm}

         	\begin{subtable}[h!]{\textwidth}
                \centering
            {\footnotesize
            \begin{tabular}{ c | c | c | c | c | c| c | c} 
	    	    Opt. Task & \multicolumn{3}{|c|}{$\log{V_T}$ without Reg. }& \multicolumn{2}{|c|}{$\log{V_T}$ with Reg. $\beta_{1\%}$ }&\multicolumn{2}{|c}{$\log{V_T}$ with Reg. $\beta_{5\%}$}\\ 
	    	    Prop. TC & 0\% & 1\% & 5\% & 0\% & 1\% & 0\% & 5\% \\\hline
		         Log-Opt. & 0.0397 & 0.0331   & 0.0070   & 0.0396  & 0.0331 &0.0217  & 0.0050  \\ 
		        MV $\lambda=1$ & 1.4602 & -1.3840  & -$\infty$ & 0.9378  & -0.0191 & 0.1334 & -0.0005\\
		        MV $\lambda=0.75$ & 1.1976 & -0.4161  & -11.9226 & 1.0101 &-0.1188  & 0.1381 & 0.0004\\
		        MV $\lambda=0.5$ & 0.8668& 0.1176 &-3.3104  &0.8157& 0.1504 & 0.1498 & 0.0006   \\
		        MV $\lambda=0.05$ &0.0991& 0.0752  & -0.0195  & 0.0970& 0.0749 & 0.0659 &0.0177  \\
    	    \end{tabular}}
            \subcaption{\label{tb:Results_SMI_RSIG} Log-relative wealth of the randomized-signature portfolios with and without transaction costs trading in the SMI universe. }
            \end{subtable}
	        
	        \vspace{0.36cm}

	        \begin{subtable}[h!]{\textwidth}
                \centering
                {\footnotesize
            \begin{tabular}{ c | c | c | c | c | c| c | c} 
	    	    Opt. Task & \multicolumn{3}{|c|}{$\log{V_T}$ without Reg. }& \multicolumn{2}{|c|}{$\log{V_T}$ with Reg. $\beta_{1\%}$}&\multicolumn{2}{|c}{$\log{V_T}$ with Reg.  $\beta_{5\%}$}\\ 
	    	    Prop. TC & 0\% & 1\% & 5\% & 0\% & 1\% & 0\% & 5\% \\\hline
		        Log-Opt. & 1.7442 & -2.5700  & -$\infty$  &0.9282   & 0.0402 & 0.2287  & -0.1564    \\ 
		        MV $\lambda=1$ & 0.3089 & 0.2657  & 0.0944 & 0.3089 &0.2657   & 0.2292 &0.0987 \\
		        MV $\lambda=0.75$ & 0.2342 & 0.2092 & 0.1101  & 0.2342 & 0.2092 & 0.2179 &0.1078\\
		        MV $\lambda=0.5$ & 0.1578  & 0.1460  &0.0990 & 0.1578  & 0.1460 & 0.1578 &0.0990   \\
		        MV $\lambda=0.05$ & 0.0161 & 0.0156  & 0.0137  & 0.0161 & 0.0156 & 0.0161 &0.0137\\

    	    \end{tabular}}
            \subcaption{\label{tb:Results_SPX} Log-relative wealth of the JL-signature portfolios with and without transaction costs trading in the S\&P500 universe. }
            \end{subtable}
	        
	        \vspace{0.36cm}
	        
            \begin{subtable}[h!]{\textwidth}
                \centering
                {\footnotesize
            \begin{tabular}{ c | c | c | c | c | c| c | c} 
	    	    Opt. Task & \multicolumn{3}{|c|}{$\log{V_T}$ without Reg. }& \multicolumn{2}{|c|}{$\log{V_T}$ with Reg. $\beta_{1\%}$}&\multicolumn{2}{|c}{$\log{V_T}$ with Reg.  $\beta_{5\%}$}\\ 
	    	    Prop. TC & 0\% & 1\% & 5\% & 0\% & 1\% & 0\% & 5\% \\\hline
		        Log-Opt. & 1.0296  & 0.1415  & -3.9870    & 0.7823  & 0.2411 & 0.1789  & -0.0820   \\ 
		        MV $\lambda=1$ & 0.0988 & 0.0903 & 0.0566  &  0.0988 & 0.0903 & 0.0988 & 0.0566 \\
		        MV $\lambda=0.75$ & 0.0743 & 0.0686  & 0.0455 & 0.0743 & 0.0686 &  0.0743 & 0.0455 \\
		        MV $\lambda=0.5$ & 0.0497 & 0.0462 &0.0319  & 0.0497 & 0.0462 & 0.0497 & 0.0319   \\
		        MV $\lambda=0.05$ &  0.0050 & 0.0047 & 0.0033  & 0.0050 & 0.0047  & 0.0050 & 0.0033\\

    	    \end{tabular}}
            \subcaption{\label{tb:Results_SPX_randomized} Log-relative wealth of the randomized-signature portfolios with and without transaction costs trading in the S\&P500 universe. }
            \end{subtable}
	 
	        \caption{ The first three columns show the performance with and without transaction costs of signature portfolios with no regularization for transaction costs. The next four column show the performance with and without transaction costs of portfolios trained with a regularization for transaction costs at the respective level. }
        \end{table}

    \begin{table}
    
	   \centering
	        \begin{subtable}[h!]{\textwidth}
                \centering
            {\footnotesize
            \begin{tabular}{ c | c | c | c | c | c| c | c| c | c} 
	    	    Portfolio & \multicolumn{3}{|c|}{Signature Portfolio }& \multicolumn{3}{|c|}{JL-Signature Portfolio }&\multicolumn{3}{|c}{Rand.-Signature Portfolio}\\ 
	    	    Prop. TC & 0\% & 1\% & 5\% & 0\% & 1\% & 5\%& 0\% & 1\% & 5\% \\\hline
                Universe & 0.6061 & 0.6061  & 0.6061   & 0.6061  & 0.6061 &0.6061  & 0.6061 & 0.6061& 0.6061  \\ 
                Equally-weighted &\textbf{0.7487} & \textbf{0.6287}   & 0.1482   & \textbf{0.7487} & \textbf{0.6287} &0.1482  &\textbf{ 0.7487} &\textbf{0.6287} & 0.1482 \\ 
		         Log-Opt. & 0.5786 & 0.4353   & 0.55   & \textbf{0.6735}  & 0.5459 &0.5389  & \textbf{0.7246} & \textbf{0.7052} &  \textbf{0.6239}\\ 
		        MV $\lambda=1$ & \textbf{1.282} & 0.5191  & 0.5998 & \textbf{1.263} & 0.4941 & \textbf{0.6144} & \textbf{1.2029}& 0.3374&0.5805\\
		        MV $\lambda=0.75$ & \textbf{1.3051} & 0.4052  & 0.6004 & \textbf{1.2873} &0.3081  & \textbf{0.6125} & \textbf{1.2237}& 0.2585&0.5812\\
		        MV $\lambda=0.5$ & \textbf{1.3396}& 0.4849 &0.5979 &\textbf{1.3226}& 0.4906 & 0.6016 & \textbf{1.2513}  &0.5202 & 0.5773\\
		        MV $\lambda=0.05$ &\textbf{0.946}& \textbf{0.8718}  & \textbf{0.7073} & \textbf{0.9211}& \textbf{0.8515} & \textbf{0.6726} &\textbf{0.8707} & \textbf{0.8028}& \textbf{0.6499}\\
    	    \end{tabular}}
            \subcaption{\label{tb:SR_SMI} Annualized Sharpe-ratios of the signature portfolios, universe portfolio and equally weighted portfolio with and without transaction costs trading in the SMI universe. Note that for the results with transaction costs, the corresponding regularization for transaction costs was included in the training. }
            \end{subtable}
	        
	        \vspace{0.36cm}
	        
	        \begin{subtable}[h!]{\textwidth}
                \centering
            {\footnotesize
            \begin{tabular}{ c | c | c | c | c | c| c } 
	    	    Portfolio & \multicolumn{3}{|c|}{JL-Signature Portfolio }&\multicolumn{3}{|c}{Rand.-Signature Portfolio}\\ 
	    	    Prop. TC & 0\% & 1\% & 5\%& 0\% & 1\% & 5\% \\\hline
                Universe & 0.8530 & 0.8530  & 0.8530   & 0.8530  & 0.8530 &0.8530    \\ 
                Equally-weighted &\textbf{0.9292} & 0.7466   & 0.0164   & \textbf{0.9292} & 0.7466 &0.0164  \\ 
		         Log-Opt. & \textbf{1.223} &  0.4901   & 0.3653   & \textbf{1.1904}  & 0.6777 &0.5330\\ 
		        MV $\lambda=1$ & \textbf{1.1817} & \textbf{1.1040}  & \textbf{0.8753} & \textbf{1.0077} & \textbf{0.9879} & \textbf{0.9087} \\
		        MV $\lambda=0.75$ & \textbf{1.1402} & \textbf{1.0905}  & \textbf{0.9048} & \textbf{0.9754} &\textbf{0.9614}  & \textbf{0.9054} \\
		        MV $\lambda=0.5$ & \textbf{1.0781}& \textbf{1.0521} &\textbf{0.9488} & \textbf{0.9391} & \textbf{0.9302} & \textbf{0.8945} \\
		        MV $\lambda=0.05$ &\textbf{0.8834}& \textbf{0.8821}  & \textbf{0.8772} & \textbf{0.8625}& \textbf{0.8616} & \textbf{0.8581} \\
    	    \end{tabular}}
            \subcaption{\label{tb:SR_SPX} Annualized Sharpe-ratios of the signature portfolios, universe portfolio and equally weighted portfolio with and without transaction costs trading in the  S\&P500 universe. Note that for the results with transaction costs, the corresponding regularization for transaction costs was included in the training. }
            \end{subtable}
            
            \caption{The tables show the annualized out-of-sample Sharpe-ratios of the trained signature portfolios as well as universe and equally-weighted portfolio in the SMI and SPX universe respectively. The numbers marked in bold are those that are higher than the Sharpe-ratio of the respective universe portfolio.}
        \end{table}

    \begin{table}
	        \centering
            {\footnotesize
	        \begin{tabular}{ c | c | c | c | c | c | c | c | c | c | c} 
	    	    Universe & \multicolumn{2}{|c|}{SMI} &\multicolumn{2}{|c|}{SMI (JL)} &\multicolumn{2}{|c|}{SMI (rand.)}  &  \multicolumn{2}{|c|}{S\&P500 (JL)} &  \multicolumn{2}{|c}{S\&P500 (rand.)}\\
	    	    Beta &  $\beta_{1\%}$ & $\beta_{5\%}$ & $\beta_{1\%}$ & $\beta_{5\%}$ & $\beta_{1\%}$ & $\beta_{5\%}$ & $\beta_{1\%}$& $\beta_{5\%}$ & $\beta_{1\%}$& $\beta_{5\%}$ \\\hline
		        Log-Opt. & 27.958  & 847.61 & 0.5& 619.60 & 0.4999& 960.77& 9.6437 & 97.160 & 2.9908 & 65.285 \\ 
		        MV $\lambda=1$ & 0.0879 & 1.1661 &0.0609 & 0.9611& 0.0193& 0.5510& $10^{-8} $ & 0.0059 & $10^{-8} $& $10^{-8} $ \\
		        MV $\lambda=0.75$ & 0.0240 & 0.8063 & 0.0101& 0.6598& 0.0040& 0.3692& $10^{-8} $ & 0.0012 & $10^{-8} $& $10^{-8} $\\
		        MV $\lambda=0.5$ & 0.0008  & 0.4537 &$10^{-8} $ & 0.3645& 0.0010& 0.1954 &$10^{-8} $ & $10^{-8} $ & $10^{-8} $& $10^{-8} $ \\
		        MV $\lambda=0.05$ & 0.0003 & 0.0095 &$10^{-8} $ &0.0076 & 0.0003& 0.0111& $10^{-8} $  & $10^{-8} $ & $10^{-8} $& $10^{-8} $\\

    	    \end{tabular}}
	        \caption{ \label{tb:beta} Optimal regularization parameters $\beta$ found during in-sample training. }
    \end{table}
        
    \begin{figure}[ht] 
		\centering
		
		\begin{subfigure}{1\columnwidth}
		\begin{subfigure}{0.48\columnwidth}
			\includegraphics[width=\linewidth]{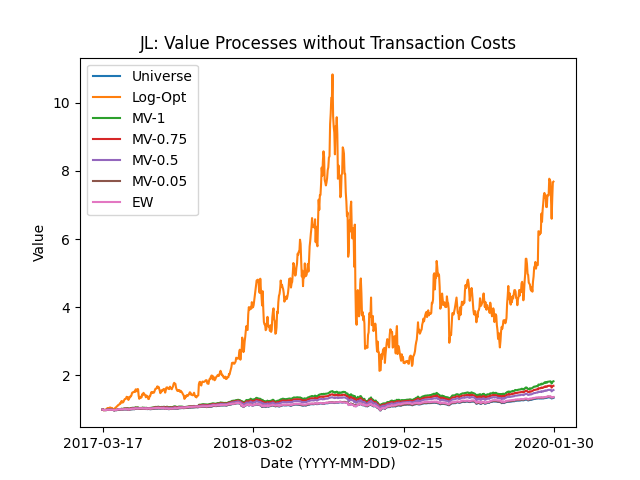}
		\end{subfigure}
		\begin{subfigure}{0.48\columnwidth}
			\includegraphics[width=\linewidth]{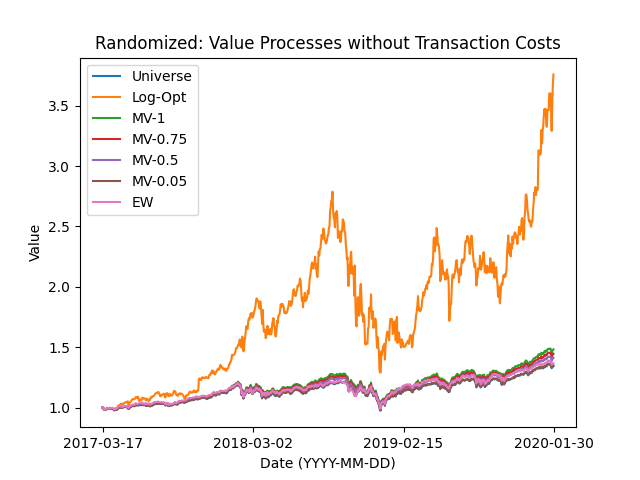}
		\end{subfigure}
		\caption{\label{fig:SPX_value_processes_noTC}Wealth process without transaction costs for JL-signature portfolios (left) and randomized-signature portfolios (right)}
		\end{subfigure}
		
		\begin{subfigure}{1\columnwidth}
		\begin{subfigure}{0.48\columnwidth}
			\includegraphics[width=\linewidth]{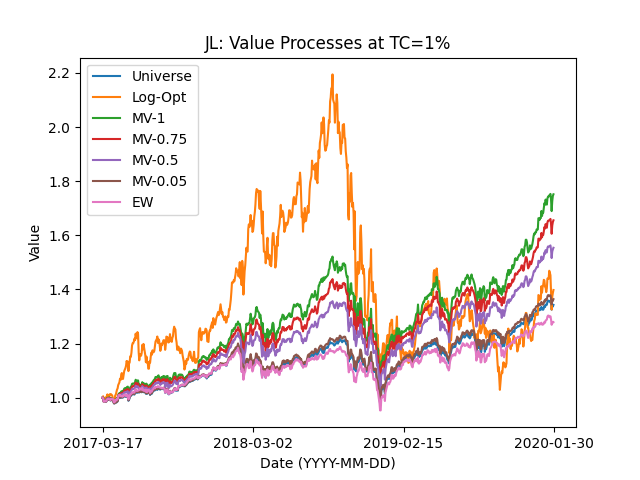}
		\end{subfigure}
		\begin{subfigure}{0.48\columnwidth}
			\includegraphics[width=\linewidth]{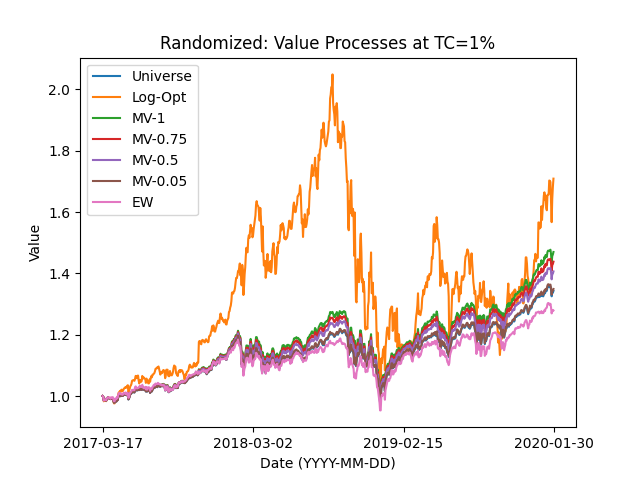}
		\end{subfigure}
		\caption{\label{fig:SPX_value_processes_TC1}Wealth process with 1\% proportional transaction costs for JL-signature portfolios (left) and randomized-signature portfolios (right)}
		\end{subfigure}
		
		\begin{subfigure}{1\columnwidth}
		\begin{subfigure}{0.48\columnwidth}
			\includegraphics[width=\linewidth]{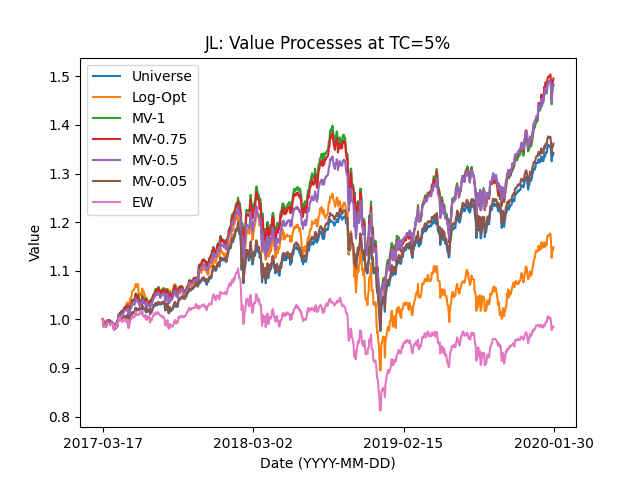}
		\end{subfigure}
		\begin{subfigure}{0.48\columnwidth}
			\includegraphics[width=\linewidth]{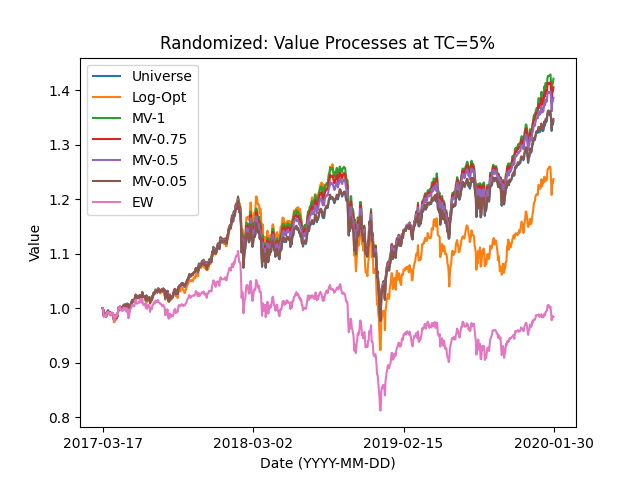}
		\end{subfigure}
		\caption{\label{fig:SPX_value_processes_TC5}Wealth process with 5\% proportional transaction costs for JL-signature portfolios (left) and randomized-signature portfolios (right)}
		\end{subfigure}
		
		\caption{ \label{fig:SPX_value_processes} S\&P500 wealth processes of the JL-signature portfolios (left), randomized-signature portfolios (right), universe portfolio and equally-weighted portfolio with and without transaction costs. Note that signature portfolios in the settings with transaction cost are those, where we included the regularization for transaction costs. }
	\end{figure}

\FloatBarrier

\newpage

\appendix \label{appendix}

\section{Proofs of Section~\ref{sec:Foundations}} \label{app:Proofs_Foundations}

Since the space of lifted stopped paths as introduced in Definition \ref{def:stoppedlifted} contains paths defined on different time-intervals, we need the following definition.
    
    \begin{definition}[Projection on the final value]
         We denote by $(\cdot)_{\odot}$ the projection to the value at the final time, i.e. for each $t\in[0,T]$ $$(x_{[0,t]})_{\odot} := x_t.$$
    \end{definition}
    
For this projection on the final value we have the following continuity result.

    \begin{lemma}\label{lem:lambda_continuity}
        The function 
        \begin{align*}
            \lambda : \,[0, T] \times \mathcal{G}_T^N(\varphi, x) &\rightarrow \Lambda_T^N(\varphi, x) \notag\\
            (t,\hat{\mathbb{X}}^N_{[0,T]}(\omega)) &\mapsto \hat{\mathbb{X}}^N_{[0,t]}(\omega)
        \end{align*}
        is continuous.
    \end{lemma}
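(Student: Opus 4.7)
The plan is to verify continuity of $\lambda$ directly from the definition of $d_\Lambda$. Suppose $(t_n, \hat{\mathbb{X}}^N_{[0,T]}(\omega_n)) \to (t, \hat{\mathbb{X}}^N_{[0,T]}(\omega))$ in $[0,T] \times \mathcal{G}_T^N(\varphi,x)$ and set $T_n := t \vee t_n$. Then
\begin{align*}
    d_\Lambda\!\left(\lambda(t_n, \hat{\mathbb{X}}^N_{[0,T]}(\omega_n)),\lambda(t, \hat{\mathbb{X}}^N_{[0,T]}(\omega))\right) = |t_n - t| + d_{2-var;[0,T_n]}\!\left(\widehat{\mathbb{X}^{t_n}}^N_{[0,T_n]}(\omega_n), \widehat{\mathbb{X}^{t}}^N_{[0,T_n]}(\omega)\right),
\end{align*}
and the first summand vanishes by assumption on $t_n$. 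For the second, I would insert the intermediate object $\widehat{\mathbb{X}^{t_n}}^N_{[0,T_n]}(\omega)$ and split via triangle inequality into a ``sample'' piece $A_n$ comparing $\omega_n$ with $\omega$ at the common stopping time $t_n$, and a ``stopping-time'' piece $B_n$ comparing $\omega$ at the two stopping times $t_n$ and $t$.

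For the sample piece $A_n$, note that $\hat{X^{t_n}}(\omega_n)$ and $\hat{X^{t_n}}(\omega)$ agree in structure with the unstopped $\hat{X}$ on $[0,t_n]$, while on $[t_n, T_n]$ only the $\varphi$-coordinate moves and the $X$-coordinate is frozen at $X_{t_n}$. Hence the stopped signatures coincide with the unstopped ones on $[0,t_n]$, so their $2$-variation distance on $[0,t_n]$ is bounded by $d_{2-var;[0,T]}(\hat{\mathbb{X}}^N(\omega_n), \hat{\mathbb{X}}^N(\omega))$, which tends to zero by hypothesis. On the residual sub-interval $[t_n, T_n]$, the signature differences reduce to expressions in $|\varphi(T_n)-\varphi(t_n)|$ and in the frozen values $X_{t_n}(\cdot)$, whose sample-wise difference is small because uniform convergence of the first-level signature is implied by $d_{2-var;[0,T]}$-convergence.

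For the stopping-time piece $B_n$, assume without loss of generality $t_n \le t$, so $T_n = t$. The two paths $\hat{X^{t_n}}(\omega)$ and $\hat{X^{t}}(\omega)$ coincide on $[0,t_n]$; on $[t_n, t]$ the former keeps $X$ frozen at $X_{t_n}$ while the latter lets $X$ evolve. Only partitions whose interior intersects $[t_n, t]$ give a non-zero contribution to $B_n$, and on such intervals the $k$-th level signature differences are controlled by $\sup_{s \in [t_n, t]} |X_s(\omega) - X_{t_n}(\omega)|$ together with $|\varphi(t) - \varphi(t_n)|$; both tend to $0$ as $n \to \infty$ by pathwise continuity of $X$ on $\bar{\Omega}$ and continuity of $\varphi$.

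The main obstacle I anticipate is piece $B_n$: quantitatively pinning down how the signature at levels $k=1,\dots,N$ of the evolving versus frozen path differ on the shrinking interval $[t_n, t]$ and aggregating these contributions in the $2/k$-variation sense. Standard Young-type estimates, already implicit in the continuity of Lyons' lift (Corollary~\ref{cor:lyons_lift}), together with $t_n \to t$ should close the argument; in fact, a natural economy is to first verify the lemma at level $N=2$ (where one works directly with the path and its Lévy area) and then invoke Lyons' lift to upgrade to arbitrary $N$.
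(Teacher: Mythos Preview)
Your approach is correct and follows essentially the same strategy as the paper: split via the triangle inequality into a ``sample'' piece (varying the path at a fixed time) and a ``stopping-time'' piece (varying the time for a fixed path). Two points of comparison are worth noting.

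First, the paper inserts the intermediate point $\hat{\mathbf{y}}_{[0,t]}$, i.e.\ the \emph{fixed} path restricted to the \emph{moving} time $t$, rather than your $\widehat{\mathbb{X}^{t_n}}^N_{[0,T_n]}(\omega)$. Because both arguments of $d_\Lambda$ then live on the same interval $[0,t]$, the sample piece collapses to $d_{2\text{-}var;[0,t]}(\hat{\mathbf{x}}_{[0,t]},\hat{\mathbf{y}}_{[0,t]}) \le d_{2\text{-}var;[0,T]}(\hat{\mathbf{x}},\hat{\mathbf{y}})$, and your residual-interval discussion for $A_n$ is avoided entirely.

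Second, for the stopping-time piece the paper bounds
\[
d_{2\text{-}var;[0,t\vee s]}\bigl(\widehat{\mathbf{y}^t},\widehat{\mathbf{y}^s}\bigr) \le \|\hat{\mathbf{y}}\|_{2\text{-}var;[t\wedge s,\,t\vee s]}
\]
and uses directly that the $2$-variation norm of a continuous rough path over a shrinking interval tends to zero. This handles all levels $k=1,\dots,N$ at once, whereas your sup-norm control $\sup_{s\in[t_n,t]}|X_s-X_{t_n}|$ is adequate at level~$1$ but does not by itself bound the higher-level iterated-integral differences in $2/k$-variation; this is precisely the gap you flagged as the main obstacle. Your proposed workaround via Lyons' lift would close it, but the paper's one-line $2$-variation bound is more economical.
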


\begin{proof}
    Take an arbitrary but fixed $(s, \hat{ \mathbb{Y}}^N_{[0,T]}(\omega_1))$. We show that $\lambda$ is continuous at $(s, \hat{ \mathbb{Y}}^N_{[0,T]}(\omega_2))$ but since this point is arbitrary, $\lambda$ is continuous everywhere. For simplicity, we write $\hat{\mathbf{x}}_{[0,t]}= \hat{\mathbb{X}}_{[0,t]}^N(\omega_2)$,  $\hat{\mathbf{y}}_{[0,t]}= \hat{ \mathbb{Y}}_{[0,t]}^N(\omega_1)$ and $\widehat{\mathbf{y}^s}_{[0,t]}= \widehat{\,  \mathbb{Y}^s}_{[0,t]}^N(\omega_1)$. For any $(t, \hat{\mathbf{x}}_{[0,T]})\in [0,T] \times \mathcal{G}_T^N$ it holds that  
    \begin{align*}
        &d_{\Lambda}\left( \lambda\left( t, \hat{\mathbf{x}}_{[0,T]} \right), \lambda\left( s, \hat{\mathbf{y}}_{[0,T]}  \right)\right) \\
        & \quad= d_{\Lambda}\left(  \hat{\mathbf{x}}_{[0,t]} ,  \hat{\mathbf{y}}_{[0,s]}\right) \\ &\quad \leq d_{\Lambda}\left(  \hat{\mathbf{x}}_{[0,t]},  \hat{\mathbf{y}}_{[0,t]} \right)+ d_{\Lambda}\left(  \hat{\mathbf{y}}_{[0,t]} ,  \hat{\mathbf{y}}_{[0,s]} \right)
        \\ &\quad =  d_{p-var;[0,t]}\left( \hat{\mathbf{x}}_{[0,t]},  \hat{\mathbf{y}}_{[0,t]} \right) + |t-s| + d_{p-var;[0,t\vee s]}\left( \widehat{\, \mathbf{y}^t}_{[0,t\vee s]},  \widehat{\mathbf{y}^s}_{[0,t\vee s]} \right)\\
        &\quad \leq d_{p-var;[0,t]}\left( \hat{\mathbf{x}}_{[0,t]},  \hat{\mathbf{y}}_{[0,t]} \right)+ |t-s| + \|\hat{\mathbf{y}} \|_{p-var; [t\wedge s, t \vee s]},
    \end{align*}
    where $\|\cdot \|_{p-var; [0,T]}$ is the norm induced by $d_{p-var;[0,T]}$.
Let now $\epsilon >0$
and choose $\delta>0$ such that 
\[
\delta +\|\hat{\mathbf{y}} \|_{p-var; [s-\delta, s+\delta]} < \epsilon.
\]    
 Note that this is possible because $\|\hat{\mathbf{y}} \|_{p-var; [s-\delta, s+\delta]} \to 0$ as $\delta \to 0$.
  
       Therefore, for all $(t, \hat{\mathbf{x}}_{[0,T]})\in [0,T] \times \mathcal{G}_T^N$ (equipped with the product topology) satisfying
    \begin{align*}
        &|t-s|+ d_{p-var;[0,T]}\left(\hat{\mathbf{x}}_{[0,T]},  \hat{\mathbf{y}}_{[0,T]} \right)< \delta,
          \end{align*}
          we have
    \begin{align*}      
        |t-s| + d_{p-var;[0,t]}\left( \hat{\mathbf{x}}_{[0,t]},  \hat{\mathbf{y}}_{[0,t]} \right) + \|\hat{\mathbf{y}} \|_{p-var; [t\wedge s, t \vee s]} 
         & < \delta + \|\hat{\mathbf{y}} \|_{p-var; [t\wedge s, t \vee s]} \\ 
       &  \leq \delta +\|\hat{\mathbf{y}} \|_{p-var; [s-\delta, s+\delta]} \\
       & < \epsilon,
    \end{align*}
    which proves continuity of $\lambda$.

\end{proof}

The following lemma can be proved by means of Corollary \ref{cor:lyons_lift}.

    \begin{lemma}\label{lem:e_I_continuous}
        For each multiindex $I$ the map
     $$\Lambda_T^2(\varphi, x) \ni \hat{\mathbb{X}}^2_{[0,t]}(\omega) \mapsto \langle e_I, S^{|I|}(\hat{\mathbb{X}}^2_{[0,t]}(\omega))_{\odot} \rangle = \langle e_I,\hat{\mathbb{X}}^N_{t}(\omega) \rangle
     \in \mathbb{R} $$         
        is continuous on bounded sets. 
    \end{lemma}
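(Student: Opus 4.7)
The plan is to decompose the map as
$$
\hat{\mathbb{X}}^2_{[0,t]}(\omega)\xrightarrow{\;S^{|I|}\;}\widehat{\mathbb{X}^t}^{\,|I|}_{[0,t\vee s]}(\omega)\xrightarrow{\;(\cdot)_\odot\;}\widehat{\mathbb{X}^t}^{\,|I|}_{t\vee s}(\omega)\xrightarrow{\;\langle e_I,\cdot\rangle\;}\mathbb{R}
$$
and check continuity at each step. The main subtlety is that inputs live on the variable interval $[0,t]$; in order to work on the common interval $[0,t\vee s]$ we are forced to pass through the augmented stopped signal $\widehat{X^t}$, whose time coordinate continues to evolve past $t$. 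Throughout I will exploit that Lyons' lift $S^{|I|}$ is continuous on bounded sets in $2$-variation (Corollary~\ref{cor:lyons_lift}), and that on a bounded subset $B\subset\Lambda_T^2(\varphi,x)$ each fixed signature level is uniformly bounded.

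First I would fix a bounded set $B\subset\Lambda_T^2(\varphi,x)$ and two points $\hat{\mathbb{X}}^2_{[0,t_1]}(\omega_1),\hat{\mathbb{Y}}^2_{[0,t_2]}(\omega_2)\in B$ with $d_\Lambda$-distance smaller than $\delta$, assuming WLOG that $t_1\le t_2$. The definition of $d_\Lambda$ gives $|t_1-t_2|<\delta$ and $d_{2-var;[0,t_2]}\bigl(\widehat{\mathbb{X}^{t_1}}^{\,2}_{[0,t_2]},\widehat{\mathbb{Y}^{t_2}}^{\,2}_{[0,t_2]}\bigr)<\delta$. Applying Lyons' lift pushes this closeness up to level $|I|$: for any prescribed $\varepsilon'>0$ I can choose $\delta$ small enough that $d_{2-var;[0,t_2]}\bigl(\widehat{\mathbb{X}^{t_1}}^{\,|I|}_{[0,t_2]},\widehat{\mathbb{Y}^{t_2}}^{\,|I|}_{[0,t_2]}\bigr)<\varepsilon'$. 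Evaluating $d_{2-var}$ at level $|I|$ on the trivial partition $\{0,t_2\}$, and using that all lifted signatures start at $\mathbf{1}$, yields the endpoint bound $|\langle e_I,\widehat{\mathbb{X}^{t_1}}^{\,|I|}_{t_2}\rangle-\langle e_I,\widehat{\mathbb{Y}^{t_2}}^{\,|I|}_{t_2}\rangle|\le\varepsilon'$, and $\widehat{\mathbb{Y}^{t_2}}^{\,|I|}_{t_2}=\hat{\mathbb{Y}}^{|I|}_{t_2}$ by construction of stopping.

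The hard part will be to relate $\widehat{\mathbb{X}^{t_1}}^{\,|I|}_{t_2}$ back to the desired evaluation $\hat{\mathbb{X}}^{|I|}_{t_1}$, since the two differ by the signature contributions accumulated on $[t_1,t_2]$ where only the time coordinate evolves. Chen's relation yields $\widehat{\mathbb{X}^{t_1}}^{\,|I|}_{t_2}=\hat{\mathbb{X}}^{|I|}_{t_1}\otimes\widehat{\mathbb{X}^{t_1}}^{\,|I|}_{t_1,t_2}$, and because the $X$-component of $X^{t_1}$ is frozen on $[t_1,t_2]$ the increment $\widehat{\mathbb{X}^{t_1}}^{\,|I|}_{t_1,t_2}$ carries only iterated-time entries: $\langle e_J,\widehat{\mathbb{X}^{t_1}}^{\,|I|}_{t_1,t_2}\rangle$ equals $(\varphi(t_2)-\varphi(t_1))^{|J|}/|J|!$ when $J$ is purely in the time letter and vanishes otherwise. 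Expanding $\langle e_I,\cdot\rangle$ through this tensor product and cancelling the $I_2=\emptyset$ contribution $\langle e_I,\hat{\mathbb{X}}^{|I|}_{t_1}\rangle$ leaves a finite sum of products of the form $\langle e_{I_1},\hat{\mathbb{X}}^{|I|}_{t_1}\rangle\cdot(\varphi(t_2)-\varphi(t_1))^{|I_2|}/|I_2|!$ with $I_2$ non-empty. The first factor is bounded uniformly over $B$ (a $2$-variation bound controls every fixed signature level) while the second tends to $0$ with $\delta$ by continuity of $\varphi$.

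Combining the two estimates by the triangle inequality and shrinking $\delta$ one last time gives $|\langle e_I,\hat{\mathbb{X}}^{|I|}_{t_1}\rangle-\langle e_I,\hat{\mathbb{Y}}^{|I|}_{t_2}\rangle|<\varepsilon$, which is the claimed continuity on bounded sets. The only non-routine ingredient is the Chen-factorisation argument that quantifies the discrepancy between $\widehat{\mathbb{X}^{t_1}}^{\,|I|}_{t_2}$ and $\hat{\mathbb{X}}^{|I|}_{t_1}$; everything else reduces to bookkeeping with the continuity of $S^{|I|}$ and the uniform level-wise bounds on $B$.
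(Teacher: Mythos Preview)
Your proof is correct and follows the same two-step skeleton as the paper: first use continuity of Lyons' lift $S^{|I|}$ on bounded sets to control the ``spatial'' difference on the common interval $[0,t_1\vee t_2]$, then handle the residual coming from the time gap $|t_1-t_2|$. The difference lies in how the second step is carried out.

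The paper fixes a reference point $\hat{\mathbf{y}}_{[0,s]}$, inserts $\widehat{\mathbf{y}^{t\wedge s}}_{[0,t]}$ via the triangle inequality, and for the time-gap term simply invokes the (pointwise) continuity of the map $t\mapsto\langle e_I,\hat{\mathbb{Y}}^{|I|}_t\rangle$ for the fixed path $\hat{\mathbf{y}}$. You instead make this step quantitative: using Chen's relation you factor $\widehat{\mathbb{X}^{t_1}}^{\,|I|}_{t_2}=\hat{\mathbb{X}}^{|I|}_{t_1}\otimes\widehat{\mathbb{X}^{t_1}}^{\,|I|}_{t_1,t_2}$ and observe that the increment on $[t_1,t_2]$ carries only iterated-time entries of size $(\varphi(t_2)-\varphi(t_1))^{k}/k!$, which vanish with $\delta$ uniformly once the lower-level signature coordinates are bounded on $B$. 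This buys you a uniform modulus of continuity on $B$ (not just pointwise continuity) and makes the argument self-contained without appealing to continuity of the signature as a time-function. The paper's version is slightly shorter but less explicit; your version gives a cleaner treatment of the time-augmented stopped path and would transfer more readily if one wanted an explicit estimate.
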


\begin{proof}
Let us show the statement for an arbitrary but fixed multiindex~$I$.
For $N=2$ we apply the same notation as in the proof of Lemma~\ref{lem:lambda_continuity} above, i.e.  $\hat{\mathbf{x}}_{[0,t]}= \hat{\mathbb{X}}_{[0,t]}^2(\omega_x)$, etc.
We show continuity at some fixed  $\hat{\mathbf{y}}_{[0,s]}$ in some bounded set of $\Lambda_T^2(\varphi, x)$.
Recall that by Corollary \ref{cor:lyons_lift} 
$$ \hat{\mathbf{x}}_{[0,t]}\mapsto \langle e_I, S^{|I|}(\hat{\mathbf{x}}_{[0,t]}) \rangle $$
        is continuous on bounded sets of $\mathcal{G}_t^2(\varphi, x)$       
for each $t\in[0,T]$. 
Thus, for every $\epsilon > 0$ there exists some $\delta_1>0$ such that for all $ \hat{\mathbf{x}}_{[0, t]}$ satisfying 
\[
d_{p-var, [0, t]} (\hat{\mathbf{x}}_{[0, t]}, \hat{\mathbf{y}}_{[0,t]}) < \delta_1
\]
we have
$$\|\langle e_I, S^{|I|}(\hat{\mathbf{x}}_{[0,t]}) \rangle -\langle e_I, S^{|I|}(\hat{\mathbf{y}}_{[0,t]} )\rangle\|_{p-var,[0,t]} < \frac{\epsilon}{2}.
$$
Here, $\| \cdot \|_{p-var, [0,t]}$ denotes the $p$-variation distance on $\mathbb{R}$.
Moreover, 
the map $$[0,T] \ni t\mapsto \langle e_I, S^{|I|}(\hat{\mathbf{y}}_{[0,t]})_{\odot} \rangle \in \mathbb{R}$$
        is continuous for each fixed $\hat{\mathbf{y}}$. Hence, there exists some $\delta_2>0$ such that for all $t$ with $|t-s|< \delta_2$
        \[
   |\langle e_I, S^{|I|}(\hat{\mathbf{y}}_{[0,t]})_{\odot} \rangle -\langle e_I, S^{|I|}(\hat{\mathbf{y}}_{[0,s]})_{\odot} \rangle| < \frac{\epsilon}{2}.    
        \]
Choose now $\delta=\min(\delta_1, \delta_2)$ and let $\hat{\mathbf{x}}_{[0,t]}$ (in the bounded set) 
be such that
        \begin{align*}
        \delta > d_{\Lambda}(\hat{\mathbf{x}}_{[0,t]},\hat{\mathbf{y}}_{[0,s]})&= |t-s| +d_{p-var, [0,t \vee s]} (\widehat{\mathbf{x}^t}_{[0,t \vee s]}, \widehat{\mathbf{y}^s}_{[0,t \vee s]})\\ &\geq |t-s|+d_{p-var, [0, t]} (\hat{\mathbf{x}}_{[0, t]}, \widehat{\mathbf{y}^{t \wedge s}}_{[0,t]}).
           \end{align*} Then, by the above
            \begin{align*}
           | \langle e_I, S^{|I|}(\hat{\mathbf{x}}_{[0,t]})_{\odot} \rangle - \langle e_I, S^{|I|}(\hat{\mathbf{y}}_{[0,s]})_{\odot}  \rangle | \leq   \, & |\langle e_I, S^{|I|}(\hat{\mathbf{x}}_{[0,t]})_{\odot} \rangle -\langle e_I, S^{|I|}(\widehat{\mathbf{y}^{t \wedge s}}_{[0,t]})_{\odot} \rangle| \\& + |\langle e_I, S^{|I|}(\widehat{\mathbf{y}^{t \wedge s}}_{[0,t]})_{\odot} \rangle -\langle e_I, S^{|I|}(\hat{\mathbf{y}}_{[0,s]})_{\odot} \rangle| \\
        &\leq    \|\langle e_I, S^{|I|}(\hat{\mathbf{x}}_{[0,t]}) \rangle -\langle e_I, S^{|I|}(\widehat{\mathbf{y}^{t \wedge s}}_{[0,t]})\rangle\|_{p-var,[0,t]}\\
      &\quad +  |\langle e_I, S^{|I|}(\hat{\mathbf{y}}_{[0,t]})_{\odot} \rangle -\langle e_I, S^{|I|}(\hat{\mathbf{y}}_{[0,s]})_{\odot} \rangle|\\
      & <\epsilon,      
        \end{align*}
which proves the assertion.

    \end{proof}

We are now prepared to prove Theorem \ref{thm:UAT_sig}.

    \begin{proof}[Proof of Theorem~\ref{thm:UAT_sig}]
    Consider the function $\lambda$ as introduced  in Lemma \ref {lem:lambda_continuity}
        and note that for any compact subset $K \subset \mathcal{G}_T^2(\varphi, x)$ the set $$\mathcal{K}_{\Lambda}=\lambda([0,T]\times K) = \left\{\hat{\mathbb{X}}^2_{[0,t]}(\omega) \mid \, t \in [0,T] \textrm{ and }  \hat{\mathbb{X}}^2_{[0,T]}(\omega)\in K\right\}$$ is a compact subset of $\Lambda_T^2(\varphi, x)$ due to the continuity of $\lambda$ as proved in Lemma \ref {lem:lambda_continuity}. For each $\hat{\mathbb{X}}^2_{[0,t]}(\omega) \in \Lambda_T^2(\varphi, x)$ and $N \geq 2$ it holds that
        $S^N(\hat{\mathbb{X}}^2_{[0,t]}(\omega))_{\odot}= \hat{\mathbb{X}}^N_t(\omega)$ a.s. Hence, we can associate with every linear function on the signature $L$ a non-anticipative path-functional $f_L : \hat{\mathbb{X}}^2_{[0,t]}(\omega) \mapsto L(\hat{\mathbb{X}}_t(\omega))$. Moreover, for every linear function on the signature $L$ it holds that $f_L \in C(\Lambda_T^2(\varphi, x); \mathbb{R})$, which follows from Lemma~\ref{lem:e_I_continuous}. Let us denote by 
        \begin{align*}
            \mathfrak{F}_{\mathfrak{L}} := \left\{ f_L \mid f_L : \hat{\mathbb{X}}^2_{[0,t]}(\omega) \mapsto L(\hat{\mathbb{X}}_t(\omega)) \textrm{ for } \hat{\mathbb{X}}^2_{[0,t]}(\omega)\in \Lambda_T^2(\varphi, x), \, L \in \mathfrak{L} \right\}
        \end{align*}
        the set of path functionals induced by linear functions on the signature and by 
        \begin{align*}
            \mathfrak{F}_{\mathfrak{L}}\rvert_{\mathcal{K}_{\Lambda}} := \left\{ f_L \mid f_L : \hat{\mathbb{X}}^2_{[0,t]}(\omega) \mapsto L(\hat{\mathbb{X}}_t(\omega)) \textrm{ for } \hat{\mathbb{X}}^2_{[0,t]}(\omega)\in \mathcal{K}_{\Lambda}, \, L \in \mathfrak{L} \right\}
        \end{align*}
        its restriction to lifted paths in $\mathcal{K}_{\Lambda}$. We will now apply the Stone-Weierstrass Theorem to show that $\mathfrak{F}_{\mathfrak{L}}\rvert_{\mathcal{K}_{\Lambda}}$ is dense in $C(\mathcal{K}_{\Lambda}; \mathbb{R})$. To this end we need to show that $\mathfrak{F}_{\mathfrak{L}}\rvert_{\mathcal{K}_{\Lambda}}$ is a sub-algebra of $C(\mathcal{K}_{\Lambda}; \mathbb{R})$ which separates points and vanishes nowhere. The algebra property follows from the fact that for each $\hat{\mathbb{X}}^2_{[0,t]}(\omega) \in \Lambda_T^2(\varphi, x)$ and for any two $f_{L_1}, f_{L_2} \in \mathfrak{F_L}$ it holds that
        \begin{align*}
            &\left(f_{L_1}\cdot f_{L_2} \right)\left(\hat{\mathbb{X}}^2_{[0,t]}(\omega)\right) = f_{L_1}\left(\hat{\mathbb{X}}^2_{[0,t]}(\omega)\right)f_{L_2}\left(\hat{\mathbb{X}}^2_{[0,t]}(\omega)\right)\\&\quad = L_1 \left(\hat{\mathbb{X}}_t(\omega)\right) L_2\left(\hat{\mathbb{X}}_t(\omega)\right) = L\left(\hat{\mathbb{X}}_t(\omega)\right)=f_{L}\left(\hat{\mathbb{X}}^2_{[0,t]}(\omega)\right),
        \end{align*} 
        where the second last equality follows from the shuffle product property. We now show that $\mathfrak{F}_{\mathfrak{L}}$ separates points. To this end take $\hat{\mathbb{X}}^2_{[0,t]}(\omega_x), \hat{\mathbb{Y}}^2_{[0,s]}(\omega_y)  \in \Lambda_T^2(\varphi, x)$
        with $\hat{\mathbb{X}}^2_{[0,t]}(\omega_x)\neq\hat{\mathbb{Y}}^2_{[0,s]}(\omega_y)  $. If $t\neq s$, the two points are separated in the time-component, i.e. $\langle e_1, \hat{\mathbb{X}}_t(\omega_x)  \rangle \neq \langle e_1,\hat{\mathbb{Y}}_s(\omega_y) \rangle$. If $t=s$, the point-separation follows from Lemma~\ref{lem:uniqueness_sig}. The algebra vanishes nowhere because $f_{\emptyset}(\hat{\mathbb{X}}^2_{[0,t]}(\omega))= \langle e_{\emptyset}, \hat{\mathbb{X}}_t(\omega) \rangle \equiv 1$ for all $ \hat{\mathbb{X}}^2_{[0,t]}(\omega)\in \Lambda^2_T$, where $\emptyset$ denotes the empty word.  The desired density statement follows by the Stone-Weierstrass Theorem, i.e. for any continuous non-anticipative path-functional $f: \Lambda_T^2(\varphi,x) \rightarrow \mathbb{R}$ and for every $\epsilon> 0$ there exists a path-functional $f_L\in \mathfrak{F_L}$ such that for almost all $\omega \in \Omega$

        $$ \sup_{(t,  \hat{\mathbb{X}}^2_{[0,T]}(\omega)) \in [0,T] \times K} |f(\hat{\mathbb{X}}^2_{[0,t]}(\omega)) - f_L(\hat{\mathbb{X}}_t)(\omega)| < \epsilon,$$ 
        which is equivalent to the claim of Theorem~\ref{thm:UAT_sig}.
    \end{proof}

\begin{proof}[Proof of Lemma \ref{lem:weighted}]
We have to prove that $K_R= \psi^{-1}((0,R]) $ is compact for any $R>0$. Let us first note that the map $\lambda$ can be easily extended to  weakly geometric $\alpha$-H\"older rough paths and is continuous again by the exact same arguments as before. Moreover, note that the set $$K_R'= \left\{ \widehat{ \, \mathbf{X}^t}_{[0,T]} \Bigg\lvert \, \psi\left(\hat{\mathbf{X}}_{[0,t]}\right) \in [0,R) \right\}$$
        equipped with the $d_{\alpha',[0,T]}$-topology is for any $R>0$ a $\|\cdot\|_{\alpha, [0,T]}$-bounded subset of $\hat{C}^{\alpha}_T(\varphi,x)$ and hence compact for $0\leq \alpha'< \alpha$ due to the compact embedding of $\alpha$-H\"older spaces into $\alpha'$-H\"older spaces, see \cite[Theorem A.3]{Cuchiero_Schmocker_Teichmann_2023}. Moreover, $K_R=\lambda([0,T]\times K_R')$ and is therefore compact for any $R>0$ due to Tychonoff's theorem and the continuity of $\lambda$. 
\end{proof}

\begin{proof}[Proof of Theorem \ref{thm:GUAT}]
        
        Using Lemma \ref{lem:weighted} the proof is analogous to the proof of \cite[Theorem 5.4]{Cuchiero_Schmocker_Teichmann_2023}. In fact it even holds that 
        \begin{equation*}
        \sup_{\hat{\mathbf{X}}_{[0,t]} \in \,\bigcup_{t\in [0,T]} \hat{C}^{\alpha}_t(\varphi, x) } \frac{\lvert  f(\hat{\mathbf{X}}_{[0,t]}) - L(\hat{\mathbb{X}}_t)\rvert}{\psi(\hat{\mathbf{X}}_{[0,t]})} < \epsilon
        \end{equation*} 
        and since $\Lambda^{2}_T(\varphi, x) \subset \bigcup_{[0,t]} \hat{C}^{\alpha}_t(\varphi, x)$ for every $\frac{1}{3}< \alpha \frac{1}{2}$ the claim follows. 
    \end{proof}

	\section{Proofs of Section~\ref{sec:approx_sig_port}}\label{app:proofs_approx_sig_port}

The following lemma is needed in the proof of Theorem~\ref{thm:approx_go}.

    \begin{lemma}\label{lem:matrix_inv}
		Let $D$ be a (subset of a) metric space. Consider a matrix-valued function $M: D\ni x \mapsto M(x) $ whose coefficients are continuous functions from $D$ to $\mathbb{R}$ and which is non-singular for all $x\in D$, then the coefficients of its matrix-inverse $M^{-1}$ are again continuous functions. 
	\end{lemma}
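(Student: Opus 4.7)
The plan is to use the explicit formula for the inverse via Cramer's rule, namely
\[
M^{-1}(x) = \frac{1}{\det M(x)}\, \operatorname{adj}(M(x)),
\]
where $\operatorname{adj}(M(x))$ is the adjugate (transpose of the cofactor matrix) of $M(x)$. The advantage of this representation is that both the determinant and each entry of the adjugate are polynomials in the entries of $M(x)$.

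First, I would note that $x \mapsto \det M(x)$ is a continuous function from $D$ to $\mathbb{R}$, since the determinant is a polynomial (hence continuous) function of the matrix entries, and by assumption each entry $x \mapsto M_{ij}(x)$ is continuous. Composition of continuous functions then yields continuity of $\det M$. By the non-singularity hypothesis, $\det M(x) \neq 0$ for all $x \in D$, so $x \mapsto 1/\det M(x)$ is continuous on $D$.

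Next, I would observe that each entry $(\operatorname{adj}(M(x)))_{ij}$ is, up to sign, the determinant of an $(n-1)\times(n-1)$ submatrix of $M(x)$, and hence again a polynomial in the entries of $M(x)$. By the same argument as above, each entry of the adjugate is a continuous function of $x$. Finally, multiplying the continuous scalar $1/\det M(x)$ with the continuous entries $(\operatorname{adj}(M(x)))_{ij}$ yields continuous functions $x \mapsto (M^{-1}(x))_{ij}$, which proves the claim.

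There is no real obstacle here; the statement reduces to the elementary facts that polynomials are continuous, compositions and products of continuous real-valued functions are continuous, and reciprocals of continuous non-vanishing functions are continuous. The only thing to be careful about is to invoke the non-vanishing of $\det M(x)$ uniformly on $D$, which is ensured by the non-singularity assumption.
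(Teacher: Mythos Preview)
Your proof via Cramer's rule is correct and, in fact, cleaner than the paper's. The paper proceeds by induction on the matrix size: the $2\times 2$ case is done by hand, and the inductive step writes an $(n+1)\times(n+1)$ matrix in block form
\[
M = \begin{pmatrix} \tilde{M} & m^1 \\ m^2 & m^3 \end{pmatrix}
\]
and uses the Schur-complement inversion formula, invoking the inductive hypothesis for the continuity of $\tilde{M}^{-1}$. Your approach sidesteps this machinery entirely: since the adjugate entries and the determinant are polynomials in the matrix entries, continuity is immediate and no induction is needed. A further advantage of your argument is that it does not rely on invertibility of the leading principal block $\tilde{M}$, which the paper's block-inversion formula tacitly requires but which is not guaranteed by invertibility of $M$ alone; your route avoids this issue completely.
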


	\begin{proof}
	   Let us prove the statement by induction:  
	   \begin{itemize}
	       \item Let $M: D \rightarrow \mathbb{R}^{2\times2}$. Then the inverse of $$ M= \begin{pmatrix} a & b \\ c & d 
	       \end{pmatrix} $$
	       is given by $$ M^{-1}= \frac{1}{(ad-bc)}\begin{pmatrix} d & -b \\ -c & a
	       \end{pmatrix} $$
	       By assumption the determinant $(ad-bc)$ is non-zero, which implies continuity of the coefficients of $M^{-1}$. 
	       \item Assume the statement holds for an arbitrary but fixed $n\geq 2$.
	       \item Let $M: D \rightarrow \mathbb{R}^{(n+1)\times (n+1)}$. Then we can write $M$ in block matrix form $$ M = \begin{pmatrix} \tilde{M} & m^1 \\  m^2& {m^3}\end{pmatrix},$$
where $ \tilde{M}$ takes values in $ \mathbb{R}^{n\times n}$, $m^1, (m^2)^{\top}$ in  $\mathbb{R}^n$	and $m^3$ in $\mathbb{R}$.
       The inverse is then computed via	     $$ M^{-1} = \begin{pmatrix} \tilde{M}^{-1}+ \frac{1}{\Delta}\tilde{M}^{-1} m^1 \ m^2 \tilde{M}^{-1} & -\frac{1}{\Delta}\tilde{M}^{-1} m^1  \\  \frac{1}{\Delta} m^2 \tilde{M}^{-1} & \frac{1}{\Delta}\end{pmatrix}$$
	       where $\Delta=m^3-m^2 \tilde{M}^{-1} m^1 $ with 
	     values in $\mathbb{R}\setminus \{0\}$ by assumption on invertability. 
	    As (the components of) $\tilde{M}^{-1}, m^1, m^2$ and $m^3$  are all continuous functions,  continuity of the components of $M^{-1}$ follows as well.
	   \end{itemize}
	   
	\end{proof}

	\section{Proofs of Section~\ref{sec:optimizing_signature_port}}
	\label{app:optimizing_signature_port}
	\begin{proof}[Proof of Lemma~\ref{lem:rel_val_process}]
	    By Equation~\eqref{eq:rel_val_process_universe}, we know that 
	    \begin{align}
	        &\log\left( \frac{W^{\pi}_t}{W^{{\mathcal{U}}}_t}\right) = \sum_{i \in \mathcal{U}} \int_{t_0}^t \gamma_s^i f^i(s, {X}_{[0,s]}) d \mu_s^{\mathcal{U},i} \nonumber \\ &- \frac{1}{2} \sum_{i,j \in \mathcal{U}} \int_{t_0}^t \gamma_s^i \gamma_s^j f^i(s, {X}_{[0,s]}) f^j(s, {X}_{[0,s]}) d[\mu^{\mathcal{U},i},\mu^{\mathcal{U},j}]_s \nonumber \\& + \label{eq:zero_term1} \sum_{i \in \mathcal{U}} \int_{t_0}^t (1- \sum_{k=1}^d \lambda_s^kf^k(s, {X}_{[0,s]}) )d \mu_s^{\mathcal{U},i} \\ &- \frac{1}{2} \sum_{i,j \in \mathcal{U}} \label{eq:zero_term2} \int_{t_0}^t (1- \sum_{k=1}^d \lambda_s^kf^k(s, {X}_{[0,s]}) ) (1- \sum_{m=1}^d \lambda_s^mf^m(s, {X}_{[0,s]}) ) d[\mu^{\mathcal{U},i},\mu^{\mathcal{U},j}]_s,
	    \end{align}
	    for 
	    \begin{equation*}
	        \lambda_s^k = \begin{cases} \mu_s^{\mathcal{U},k} &\textrm{ if $\pi$ is of type $I$}  \\ 1 &\textrm{ if $\pi$ is of type $II$} 
	        \end{cases}.
	    \end{equation*}
	    Note that the terms in \eqref{eq:zero_term1} and \eqref{eq:zero_term2} are both zero due to  $\sum_{i\in\mathcal{U}} \mu_t^{\mathcal{U}, i} \equiv 1$. 
	        
	\end{proof}

	\section{Proofs of Section~\ref{sec:transcost}}\label{app:transcosts}
	
	\begin{proof}[Proof of Proposition~\ref{prop:transcosts}]
	    \begin{enumerate}
	
	        \item This case always holds for long-only portfolios and is discussed in~\cite{Ruf_Xie_19}. The argument in the long-only case is that $L(1)< R(1)$ and $L(0)> R(0)$, while both sides are obviously continuous and monotone. While the former statement still holds in the case with short-selling, the monotonicity is no longer true. However, one can show by a tedious case-by-case study that where $R'(\alpha)$ exists it is piece-wise constant and increasing. Therefore a unique solution for $\alpha^*\in[0,1]$ is still assured in this case. 
	        
	        \item Here the above argument does not hold anymore. This is because now $L(0)\leq R(0)$, while it still holds that $L(1)< R(1)$. Let us give three examples, one for each of the cases a) no solution b) no solution in $[0,1]$, c) no unique solution in $[0,1]$. 
	        \begin{enumerate}[label=(\alph*)]
	            \item \emph{no solution}: consider a market of two stocks and $c=0.05$, i.e. 5\% transaction costs. For the portfolio weights $$ \pi_{t^-} = (13.1, -12.1 )^{\mathsf{T}}, \hspace{1cm}\pi_t = (13, -12 )^{\mathsf{T}}$$
	            there does not exist a solution for $\alpha$.

	            \item \emph{no solution in $[0,1]$}: again for $c=0.05$ and a market of two stocks. For the portfolio weights $$ \pi_{t^-} = (11, -10 )^{\mathsf{T}}, \hspace{1cm}\pi_t = (10, -9 )^{\mathsf{T}}$$
	            the solution is $\alpha=-1$. 
	            
	            \item \emph{no unique solution in $[0,1]$}: consider $c=0.05$
	            and a market of three stocks with portfolio weights $$ \pi_{t^-} = (5,6, -10 )^{\mathsf{T}}, \hspace{1cm}\pi_t = (5.5, 6.5,-11 )^{\mathsf{T}}$$ 
	            then the solutions are $\alpha=0.3333$ and $\alpha=0.9535$ (numbers are rounded).
	            
	        \end{enumerate}
	    \end{enumerate}
	\end{proof}

\end{document}